\DeclareMathAlphabet      {\mathbfit}{OML}{cmm}{b}{it}
\let\bm=\mathbfit
\newtheorem{theorem}{Theorem}[chapter]
\newtheorem{lemma}[theorem]{Lemma}
\newtheorem{prop}[theorem]{Proposition}
\newtheorem{cor}[theorem]{Corollary}
\theoremstyle{definition}
\newtheorem{defn}[theorem]{Definition}
\newtheorem{example}[theorem]{Example}
\theoremstyle{remark}
\newtheorem{remark}[theorem]{Remark}
\numberwithin{section}{chapter}
\numberwithin{equation}{chapter}
\numberwithin{equation}{section}
\newcommand{\re}{\operatorname{Re}}
\newcommand{\im}{\operatorname{Im}}
\def\supp{\mathop{\rm supp}}
\let\text=\mbox
\let\ced=\c
\def\a{\alpha}
\def\b{\beta}
\def\c{\gamma}
\def\d{\delta}
\def\g{\lambda}
\def\o{\omega}
\def\q{\quad}
\def\s{\sigma}
\def\H{{\cal H}}
\def\M{{\cal M}}
\def\P{{\mathcal P}}
\def\eR{{\bf R}}
\def\eN{{\bf N}}
\def\Ze{{\bf Z}}
\def\Ce{{\bf C}}
\def\ty{\infty}
\def\e{\varepsilon}
\def\f{\varphi}
\def\:{{\penalty10000\hbox{\kern1mm\rm:\kern1mm}\penalty10000}}
\def\ov#1{\overline{#1}}
\def\di{\mathrm{d}}
\def\O{\Omega}
\def\pa{\partial}
\def\st{\subset}
\def\stq{\subseteq}
\def\q{\quad}
\def\bg{\begin}
\def\endeqnn{\end{eqnarray*}}
\def\bgeqn{\bg{eqnarray}}
\def\endeqn{\end{eqnarray}}
\def\bgeqq#1#2{\bgeqn\label{#1} #2\left\{\begin{array}{ll}}
\def\endeqq{\end{array}\right.\endeqn}
\def\abstract{\bgroup\leftskip=2\parindent\rightskip=2\parindent
        \noindent{\bf Abstract.\enspace}}
\def\endabstract{\par\egroup}
\def\udesno#1{\unskip\nobreak\hfil\penalty50\hskip1em\hbox{}
             \nobreak\hfil{#1\unskip\ignorespaces}
                 \parfillskip=\z@ \finalhyphendemerits=\z@\par
                 \parfillskip=0pt plus 1fil}
\def\cal{\mathcal}
\def\eR{\mathbb{R}}
\def\eN{\mathbb{N}}
\def\Ze{\mathbb{Z}}
\def\Qu{\mathbb{Q}}
\def\Ce{\mathbb{C}}
\def\res{\operatorname{res}}
\def\po{\mathcal{P}}
\def\I{\mathbbm{i}}
\def\E{\mathrm{e}}
\def\D{\mathrm{d}}
\def\qs{\q}
\def\upGamma{\Gamma}
\begin{document}

\keywords{Fractal set, fractal string, relative fractal drum (RFD), fractal zeta functions, relative distance zeta function, relative tube zeta function, geometric zeta function of a fractal string, relative Minkowski content, relative Minkowski measurability, relative upper box (or Minkowski) dimension, relative complex dimensions of an RFD, holomorphic and meromorphic functions, abscissa of absolute and meromorphic convergence, transcendentally $\infty$-quasiperiodic function, transcendentally $\infty$-quasiperiodic RFD, $a$-string of higher order.}

\mathclass{Primary 11M41, 28A12, 28A75, 28A80, 28B15,  30D10, 42B20, 44A05; 
Secondary 11M06, 30D30, 37C30, 37C45, 40A10, 44A10, 45Q05.}

\thanks{The work of Michel L.\ Lapidus was partially supported by the US National Science Foundation (NSF) under the research grants DMS-0707524 and DMS-1107750, as well as by the Institut des Hautes Etudes Scientifiques (IHES) in Paris/Bures-sur-Yvette, France, where the first author was a visiting professor in the Spring of 2012 while part of this work was completed. The research of Goran Radunovi\'c and Darko \v Zubrini\'c was supported by the Croatian Science Foundation under the project IP-2014-09-2285 and by the Franco-Croatian 
PHC-COGITO~project.}

\abbrevauthors{M.\ M.\ Lapidus, G.\ Radunovi\'c and D.\ \v Zubrini\'c}
\abbrevtitle{Zeta Functions and Complex Dimensions of RFDs}

\title{Zeta Functions and Complex Dimensions\\ of Relative Fractal Drums:\\ Theory, Examples and Applications}

\author{Michel L.\ Lapidus}
\address{University of California\\
Department of Mathematics\\
900 University Avenue\\
231 Surge Building\\
Riverside, CA 92521-0135\\
USA\\
E-mail: lapidus@math.ucr.edu}

\author{Goran Radunovi\'c}
\address{University of Zagreb\\
Faculty of Electrical Engineering and Computing\\
Department of Applied Mathematics\\
Unska 3\\ 
10000 Zagreb\\
Croatia\\
E-mail: goran.radunovic@fer.hr}

\author{Darko \v Zubrini\'c}
\address{University of Zagreb\\
Faculty of Electrical Engineering and Computing\\
Department of Applied Mathematics\\
Unska 3\\ 
10000 Zagreb\\
Croatia\\
E-mail: darko.zubrinic@fer.hr}

\maketitledis

\begin{abstract}
In 2009, the first author introduced a new class of zeta functions, called `distance zeta functions', associated with arbitrary
compact fractal subsets of Euclidean spaces of arbitrary dimension. It represents a natural, but nontrivial extension of 
the existing theory of `geometric zeta functions' of bounded fractal strings.
%(initiated also by the first author in the early 1990s).
In this work, we introduce the class of `relative fractal drums' (or RFDs), which contains the classes of bounded fractal strings and of compact fractal subsets
of Euclidean spaces as special cases.
Furthermore, the associated (relative) distance zeta functions of RFDs, extend %(in a suitable sense) 
the aforementioned classes of fractal zeta functions.
%This notion is very general and flexible, enabling us to view practically all of the previously studied aspects of
%the theory of fractal zeta functions from a unified perspective. %as well as to go well beyond the previous theory. 
%An unexpected novelty is that the upper box (or Minkowski) dimension associated with an RFD can also assume {\em negative} values (including $-\infty$), which can be interpreted as a flatness property of the RFD. 
The abscissa of (absolute) convergence of any relative fractal drum is equal to the relative box dimension of the RFD. We pay particular attention to the question of constructing meromorphic
extensions of the distance zeta functions of RFDs, as well as to the construction of transcendentally $\infty$-quasiperiodic RFDs. 
%(i.e., roughly, RFDs with
%infinitely many quasiperiods, all of which are algebraically independent). 
We also describe a class of RFDs, %(and, in particular, a new class of bounded sets), 
called {\em maximal hyperfractals}, such that the critical line of convergence consists solely of nonremovable singularities of the associated relative distance zeta functions. Finally, we also describe a class of Minkowski measurable RFDs which possess an infinite sequence of complex dimensions of arbitrary multiplicity $m\ge1$, and even an infinite sequence of essential singularities along the critical line.
\end{abstract}
\makeabstract

\tableofcontents

\chapter*{Glossary}

{\parindent=0pt \advance\parskip by 6pt %4.4pt

%$A_t\cap^*\Omega$, reduced intersection\dotfill\pageref{reduced*}

$a_k\sim b_k$ as $k\to\ty$, asymptotically equivalent sequences of complex numbers\dotfill\pageref{sim}

%$a_k\asymp b_k$ as $k\to\ty$, mutually asymptotically bounded sequences\dotfill\pageref{asymp}

%$\sum_{k=1}^\ty a_k\asymp\sum_{k=1}^\ty b_k$, simultaneous convergence or divergence of both series\dotfill\pageref{asymps}

$f\sim g$, equivalence of the DTI $f$ and the meromorphic function $g$\dotfill\pageref{equ}

%$f\overset{\hbox{\small asymp}}{\sim} g$, asymptotic equivalence of the meromorphic functions $f$ and $g$\dotfill\pageref{equ3}

%$A\setminus B$, the set of elements in $A$ that are not in $B$\dotfill\pageref{setminus}

%$\ov A$, closure of a subset $A$ in $\eR^N$\dotfill\pageref{ovAA}

$A_{\mathcal L}:=\{a_j:=\sum_{k=j}^\ty\ell_k:j\in\eN\}$, canonical geometric representation\newline 
\hbox{}\hskip1cm of a bounded fractal string $\mathcal L=(\ell_k)_{k\ge1}$\dotfill\pageref{A_L}

$A_t:=\{x\in\eR^N:d(x,A)<t\}$, $t$-neighborhood of a subset $A$ of $\eR^N$ ($t>0$)\dotfill\pageref{Atn}

%$A_{t,\d}:=A_\d\setminus \ov{A_t}$, the open shell around the set $A$ ($0<t<\d$)\dotfill\pageref{shell_ozn}

$A_N$, the inhomogeneous Sierpi\'nski $N$-gasket\dotfill\pageref{inh-gasket}

$(A_N,\O_N)$, the inhomogeneous Sierpi\'nski $N$-gasket RFD\dotfill\pageref{inh-gasketRFD}

%$\mathbf{am}(A)$, oscillatory amplitude of a bounded set $A$ in $\eR^N$\dotfill\pageref{oamplitude}

$(A,\Omega)$, relative fractal drum\dotfill\pageref{rfd}

$(A,\O)_M:=(A_M,\O\times(-1,1)^{M})$, embedding of the relative fractal\newline 
\hbox{}\hskip1cm   drum $(A,\O)$ of $\eR^N$ into $\eR^{N+M}$ ($M\in\eN$)\dotfill\pageref{invar_rel}

%$\mathfrak{a}$, the spectral operator\dotfill\pageref{a_frak}

%$\mathcal{A}$, positive elliptic differential operator (of order $2m$)\dotfill\pageref{A2m}

%$\mathcal{A}_{(E,F)}$, a space analogous to the classic Fock space\dotfill\pageref{Aplus}

$\mathrm{B}(a,b):=\int_0^{1}t^{a-1}(1-t)^{b-1}\di  t$, the Euler beta function\index{beta function, $B(a,b)$|textbf}\dotfill\pageref{beta_func}

$\mathrm{B}_x(a,b):=\int_0^{x}t^{a-1}(1-t)^{b-1}\di  t$, the incomplete beta function\index{beta function, $B(a,b)$!incomplete beta function, $\mathrm{B}_x(a,b)$|textbf}\dotfill\pageref{incbeta}

$B_r(a)$, open ball in $\eR^N$ (or in $\Ce$) of radius $r$ and with center at $a\in\eR^N$ (or in $\Ce$)\dotfill\pageref{spherer}

%$B_n^{(\sigma)}(x)$, the generalized Bernoulli polynomial\dotfill\pageref{d_d}

%$C^{(a)}$, generalized Cantor set with one parameter $a$\dotfill\pageref{res-cantor}

$C^{(m,a)}$, the generalized Cantor set with two parameters $m$ and $a$\dotfill\pageref{Cma}

%$\# A$, cardinality of the finite set $A$\dotfill\pageref{hash}

%$\Ce$, the field of complex numbers\dotfill\pageref{Ce}

%$C^m(\eR)$, space of $m$ times continuously differentiable functions on $\eR$\dotfill\pageref{nonmeasurable}

%$\operatorname{conv}\,S$, convex hull of a subset $S$ of $\eR^N$\dotfill\pageref{Lchirp}

%$\cave(A)=\cave(A,\d,r)$, the cave associated with a bounded set $A$\dotfill\pageref{caveA}

%$\cave(A,\O)=\cave(A,\O,r)$, the cave associated with a relative fractal drum\dotfill\pageref{caveAO}

%$\mathcal{D}(0,\d):=C_c^{\ty}(0,\delta)$, the space of infinitely differentiable (complex-\newline
%\hbox{}\hskip1cm valued) test functions with compact support contained in $(0,\d)$\dotfill\pageref{Dspace}

%$\mathcal{D}(0,+\ty):=C_c^{\ty}(0,+\ty)$, the space of infinitely differentiable (complex-\newline
%\hbox{}\hskip1cm valued) test functions with compact support contained in $(0,+\ty)$\dotfill\pageref{Dspace+}

%$\mathcal{D}'(0,\d)$, the space of Schwartz distributions on $\mathcal{D}(0,\d)$\dotfill\pageref{D_K}

%$\mathcal{D}'(0,+\ty)$, the space of Schwartz distributions on $\mathcal{D}(0,+\ty)$\dotfill\pageref{5.4.35.1/2}

$\underline\dim_BA$, $\ov\dim_BA$, $\dim_BA$, the box dimensions (Minkowski dimensions) of $A\subset\eR^N$\dotfill\pageref{dim}

$\underline\dim_B(A,\Omega)$, $\ov\dim_B(A,\Omega)$, $\dim_B(A,\Omega)$, the relative box dimensions of $(A,\O)$\dotfill\pageref{minkrel}

%$\underline{\dim}_{\rm av}A$, $\overline{\dim}_{\rm av}A$, $\dim_{\rm av}A$, average Minkowski dimensions of $A\subset\eR^N$\dotfill\pageref{avdim}

$\dim_{PC} A$, the set of principal complex dimensions of a bounded subset $A$ of $\eR^N$\dotfill\pageref{dimc} 

$\dim_{PC}(A,\O)$, the set of principal complex dimensions of the RFD $(A,\O)$\dotfill\pageref{dimcr}

$\dim_{PC}\mathcal {L}$, the set of principal complex dimensions\newline 
\hbox{}\hskip1cm of a fractal string $\mathcal {L}=(\ell_j)_{j\ge1}$\dotfill\pageref{dimcs}

%$\dim_{PC}^*\mathcal{L}$, the set of principal spectral complex dimensions of a fractal string\dotfill\pageref{scdim}

%$\dim_{PC}^* \O$, the set of principal spectral complex dimensions of an open set $\O$\dotfill\pageref{scdim}

%$\dim_{PC}^*(A,\O)$, the set of principal spectral complex dimensions of a \newline
%\hbox{}\hskip1cm relative fractal drum (RFD) $(A,\O)$\dotfill\pageref{scdim}

$d(x,A):=\inf\{|x-a|:a\in A\}$, Euclidean distance from $x$ to $A$ in $\eR^N$\dotfill\pageref{d(x,A)}

%$d_H(A,B):=\inf\{\e>0:A\stq B_\e,\,\, V\stq A_\e\}$, Hausdorff distance between two\newline
%\hbox{}\hskip1cm  bounded subsets $A$ and $B$ of $\eR^N$\dotfill\pageref{Hm}

$D(f)$, the abscissa of (absolute) convergence of the Dirichlet series or integral\dotfill\pageref{Dabs}
%(or more\newline
%\hbox{}\hskip1cm generally, of the Cahen function) $f$\dotfill\pageref{abscissa}, \pageref{abscissa_f} 

%$D_{\rm cond}(f)$, abscissa of conditional convergence of $f$\dotfill\pageref{Dcond}

$D_{\rm hol}(f)$, the abscissa of holomorphic continuation of $f$\dotfill\pageref{Dhol}

$D_{\rm mer}(f)$, the abscissa of meromorphic continuation of $f$\dotfill\pageref{Dmer}

%$\mathcal{D}(0,\d):=C_c^{\ty}(0,\delta)$, the space of infinitely differentiable (complex\newline
%\hbox{}\hskip1cm valued) test functions with compact support contained in $(0,\d)$\dotfill\pageref{D_K}

$\pa\O$, the boundary of a subset $\O$ of $\eR^N$\dotfill\pageref{boundary_of_a_set}

$\mathcal{D}_{\rm qp}$, the family of quasiperiodic relative fractal drums\dotfill\pageref{transcendental_drum}

$\mathcal{D}_{\rm aqp}$, the family of algebraically quasiperiodic relative fractal drums\dotfill\pageref{transcendental_drum}

$\mathcal{D}_{\rm tqp}$, the family of transcendentally quasiperiodic relative fractal drums\dotfill\pageref{transcendental_drum}

DTI, Dirichlet-type integral\dotfill\pageref{DTI3}

%$\mathfrak{D}=\mathfrak{D}_{\mathfrak{S}}$, the set of scaling complex dimensions\dotfill\pageref{matfrD}

%$\bigoplus$, vector space direct sum\dotfill\pageref{Aplus}

%$\E\approx 2.718$, base of the natural logarithm\dotfill\pageref{nat_log}

$|E|=|E|_N$, the $N$-dimensional Lebesgue measure of a measurable set $E\st\eR^N$\dotfill\pageref{Nmeasure}

%EDTI, extended Dirichlet-type integral\dotfill\pageref{edti}

$\mathbf{e}(m)$, the  exponent sequence of an integer $m\ge2$\dotfill\pageref{es}

%$f[\omega]_m$, $m\in\Ze$, the $m$-th coefficient in the Laurent expansion of $f$ around $\omega$\dotfill\pageref{mthcoeff}

%$\epi(f)$, epigraph of a real-valued function $f$\dotfill\pageref{epi}

%$\mathcal{E}_{(E,\f)}$, the class of tamed EDTIs associated with $(E,\f)$\dotfill\pageref{Tdti}

%$\mathcal{E}_{(E,\f)}^{\,\cdot n}$, the space of products of $n$ elements of $\mathcal{E}_{(E,\f)}$\dotfill\pageref{dotn}

$\operatorname{fl}(A,\O)$, the flatness of a relative fractal drum $(A,\O)$\dotfill\pageref{flatness}

%$\mathcal{F}_{\rm qp}$, the set of quasiperiodic functions\dotfill\pageref{quasip}

%$\mathcal{F}_{\rm tqp}$, the set of transcendentally quasiperiodic functions\dotfill\pageref{quasip}

%$\mathcal{F}_{\rm aqp}$, the set of algebraically quasiperiodic functions\dotfill\pageref{quasip}

%$\f^{\otimes n}$, $n$-fold tensor product of $\f$ by itself\dotfill\pageref{otimesn}

%$\varphi_a(t):=\frac{1}{a}\varphi\left(\frac{t}{a}\right)$, for a test function $\varphi$ and $a>0$\dotfill\pageref{varphi_a}

$\upGamma(t):=\int_0^{+\ty} x^{t-1}\E^{-x}\,\D x$, the gamma function\index{gamma function, $\upGamma(t)$|textbf}\dotfill\pageref{gammaf}

$h$, gauge function\dotfill\pageref{gauge}

$H^D(A)$, the $D$-dimensional Hausdorff measure of the set $A$\dotfill\pageref{MHD}

%$\Ha_c:=L^2(\eR,\E^{-2ct}\D t)$, weighted Hilbert space\dotfill\pageref{Ha_c}

%$\chi_{A}$, the characteristic (or indicator) function of the set $A$\dotfill\pageref{characteristic_function}

%$\mathrm{Hol}(\O)$, the space of holomorphic functions on an open set $\O\stq\Ce$\dotfill\pageref{Hol}

%$\mathcal{H}(f):=\{\re s>D_{\rm hol}(f)\}$, the half-plane of holomorphic continuation of $f$\dotfill\pageref{holconv}

%$\hyp_+(f)$, positive hypograph of a real function $f$\dotfill\pageref{hyp}

$\I=\sqrt{-1}$, the imaginary unit\dotfill\pageref{sqrt-1}

IFS, iterated function system\dotfill\pageref{IFS_label}

%$\inf S:=\inf_{\tau\in\eR}S(\tau)$, infimum of the screen $S$\dotfill\pageref{g_screen}

%$\operatorname{\rm(ISP)}_D$, inverse spectral problem for fractal strings of Minkowski dimension $D$\dotfill\pageref{ispD}

%$\mathcal{K}(0,\d)$, a larger space of test functions that contains $\mathcal{D}(0,\d)$\dotfill\pageref{D_K}

%$\mathcal{K}(0,\delta)$, a subspace of $\mathcal{D}(0,\d)$\dotfill\pageref{Kspace}

%$\mathcal{K}(0,+\ty)$, a subspace of $\mathcal{D}(0,+\ty)$\dotfill\pageref{Kspace+}

%$\mathcal{K}'(0,\d)$, the space of Schwartz distributions on the open interval $(0,\d)$\dotfill\pageref{D_K}

%$\mathcal{K}'(0,+\ty)$ the space of Schwartz distributions on the open interval $(0,+\ty)$\dotfill\pageref{5.4.35.1/2}

%$\kappa$, the languidity exponent\dotfill\pageref{languid}

%$\kappa_d$, the $d$-languidity exponent\dotfill\pageref{d_lang}

%$\log x$, the natural logarithm of $x$\dotfill\pageref{zeta'}

%$\log_a x=\log x\,/\,\log a$, the logarithm of $x$ with base $a$\, ($y=\log_a x$\, $\Leftrightarrow$\, $x=a^y$)\dotfill\pageref{log_ax}

$\mathcal{L}=(\ell_j)_{j=1}^\ty$, a fractal string with lengths $\ell_j$\dotfill\pageref{ellj}

$(\ell_j)_{j=1}^\ty$, sequence of lengths of a fractal string $\mathcal{L}$ written in nonincreasing\newline
\hbox{}\hskip1cm order\dotfill\pageref{ellj}

%$(l_j)_{j=1}^\ty$, sequence of distinct lengths of a fractal string $\mathcal{L}$ written in (strictly)\newline
%\hbox{}\hskip1cm decreasing order (and with multiplicities $b_j$)\dotfill\pageref{drseries}

$\log_a x$, the logarithm of $x>0$ with base $a>0$; $y=\log_a x$\, $\Leftrightarrow$\, $x=a^y$\dotfill\pageref{logax}

$\log x:=\log_{\E}x$, the natural logarithm of $x$; $y=\log x$\, $\Leftrightarrow$\, $x=\E^y$\dotfill\pageref{logax}

%$L^{\ty)}(\Omega):=\cap_{p>1}L^p(\O)$,  limit $L^\ty$-space\dotfill\pageref{limitspace}

%$L^{p)}(\Omega):=\cap_{q<p}L^{q}(\Omega)$,  limit $L^p$-space\dotfill\pageref{limitspacep}

$\ell^\ty(\eR)$, the Banach space of bounded sequences of real numbers $(\tau_j)_{j\ge1}$\dotfill\pageref{banach_bdd}

$\mathcal {L}_1\otimes\mathcal {L}_2$, the tensor product of two bounded fractal strings\dotfill\pageref{otimes}

%$\mathcal {L}_1\sqcup\mathcal {L}_2$, union of two bounded fractal strings\dotfill\pageref{union_two_L}

%$\mathcal{L}_{\rm qp}$, the family of quasiperiodic bounded fractal strings\dotfill\pageref{Lqp}

%$\mathcal{L}_{\rm aqp}$ the family of algebraically quasiperiodic bounded fractal strings\dotfill\pageref{Lqp}

%$\mathcal{L}_{\rm tqp}$, the family of transcendentally quasiperiodic bounded fractal strings\dotfill\pageref{Lqp}

%$\{\mathfrak{L}\sigma\}(s):=\int_0^{+\ty}\E^{-st}\sigma(t)\di t$, the Laplace transform of the function $\sigma$\dotfill\pageref{laplace}

$\M_*^r(A)$ and $\M^{*r}(A)$, the lower and upper $r$-dimensional Minkowski contents\newline
\hbox{}\hskip1cm of a bounded set $A\subset\eR^N$, where $r\ge0$\dotfill\pageref{mink}

$\M^D(A):=\lim_{t\to0^+}\frac{|A_t|}{t^{N-D}}$, the $D$-dimensional Minkowski content of a Minkowski\newline
\hbox{}\hskip1cm measurable bounded set $A\subseteq\eR^N$\dotfill\pageref{minkc}

$\M_*^r(A,\Omega)$ and $\M^{*r}(A,\Omega)$, the lower and upper relative $r$-dimensional Minkowski\newline
\hbox{}\hskip1cm contents of the relative fractal drum $(A,\O)$, where $r\in\eR$\dotfill\pageref{minkrel}

$\M_*^D(A,\Omega,h)$, $\M^{*D}(A,\Omega,h)$, the gauge relative lower and upper Minkowski 
\newline
\hbox{}\hskip1cm contents of $(A,\O)$ (with respect to the gauge function $h$)\dotfill\pageref{gauge_content}

$\M^D(A,\O):=\lim_{t\to0^+}\frac{|A_t\cap\O|}{t^{N-D}}$, the $D$-dimensional Minkowski content of a Minkowski\newline
\hbox{}\hskip1cm measurable relative fractal drum $(A,\O)$ in $\eR^N$\dotfill\pageref{minkc}

%$\tilde{\mathcal {M}}^D(A)$, $\tilde{\mathcal {M}}^{*s}(A)$, $\tilde{\mathcal {M}}_*^s(A)$, average Minkowski contents of $A\subseteq\eR^N$\dotfill\pageref{averaged}, \pageref{avM}

%$\operatorname{Mer}_{\max}(f)$, domain of the maximal meromorphic extension of $f$\dotfill\pageref{maxmer}

$\operatorname{Mer}\,(f):=\{\re s>D_{\rm mer}(f)\}$, the half-plane of meromorphic continuation of $f$\dotfill\pageref{Dmer}

%$\{\mathfrak{M}f\}(s):=\int_0^{+\ty}t^{s-1}f(t)\di t$, the Mellin transform of the function $f$\dotfill\pageref{mell_trans_def}

%$\mu_{\mathcal A}'(x)\,\D x$, Browder--G\aa rding measure\dotfill\pageref{bgmeasure}

%$N_b(x)$, box-counting function of a bounded set $A\st\eR^N$\dotfill\pageref{N_b}

%$N_g(x)$, geometric counting function of a fractal string\dotfill\pageref{N_g}

%$N_\nu(x)$, spectral (or eigenvalue) counting function of a (relative) fractal drum\dotfill\pageref{N_nu}

$\eN:=\{1,\ 2,\ 3,\ldots\}$, the set of positive integers\dotfill\pageref{eN}

$\eN_0:=\eN\cup\{0\}=\{0,\ 1,\ 2,\ 3,\ldots\}$, the set of nonnegative integers\dotfill\pageref{eN}

$(\eN_0)_c^{\ty}$, the set of all sequences $\mathbf{e}$ with components in $\eN_0=\eN\cup\{0\}$, such that\newline
\hbox{}\hskip1cm all but at most finitely many components are equal to zero\dotfill\pageref{supp}

%$\binom nk:=\frac{n(n-1)\dots(n-k+1)}{k!}$ for $k\in\{1,2,\dots,n\}$, $\binom n0:=1$, binomial coeffcients\dotfill\pageref{resSN-1}

$\binom{|\a|}{\a_1,\a_2,\ldots,\a_n}:=\frac{|\a|!}{\a_1!\,\a_2!\cdots\a_n!}$, with $|\a|:=\a_1+\cdots+\a_n$, multinomial coefficient\dotfill\pageref{multinom_l}

%OSC, the open set condition\dotfill\pageref{OSC_label}

%$O(t^{(\beta}):=\bigcap_{\e>0} O(t^{\b+\e})$, limit big $O(t^{\beta})$ as $t\to+\ty$\dotfill\pageref{ext_riemann}

%$O(t^{\a)}):=\bigcap_{\e>0} O(t^{\a-\e})$, limit big $O(t^{\a})$ as $t\to0^+$\dotfill\pageref{big_O}

%$O(t^{0)}):=\bigcap_{\e>0} O(t^{-\e})$, functions of slow growth tending to $+\ty$ as $t\to0^+$\dotfill\pageref{O0}

$\omega_N:=|B_1(0)|_N$, the $N$-dimensional Lebesgue measure of the unit ball in $\eR^N$\dotfill\pageref{omega_N}

$\O_{\rm can}=\O_{{\rm can},\mathcal L}$, canonical geometric realization of a fractal string $\mathcal L$\dotfill\pageref{Oak}

%$\O_A:=\O\setminus\ov A$, the open set associated with a relative fractal drum $(A,\O)$\dotfill\pageref{dirichlet_ep}

%${\rm or}(A,\O)$, order of the relative fractal drum $(A,\O)$\dotfill\pageref{orderAO}

$\otimes$, tensor product of a base relative fractal drum and a fractal string\dotfill\pageref{tensor1}

%$\odot\colon\eR[x]\times \mathrm{RFD}_{\Lambda}(\eR^N)\to\mathrm{RFD}_{\Lambda}(\eR^N)$, action of formal series on RFDs  \dotfill\pageref{action_poly}

$\mathbf{p}$, the oscillatory period of a lattice self-similar set (or string or spray  \newline
\hbox{}\hskip1cm or RFD)\dotfill\pageref{operiod}

%$\mathbf{p}(A,\rho)$, oscillatory period of a bounded set $A$ in $\eR^N$, with respect to $\rho$\dotfill\pageref{operiod}

$\mathcal {P}(f)$, the set of poles of a meromorphic function $f$\dotfill\pageref{pof}

$\mathcal {P}(f,W)$, the set of poles of a meromorphic function $f$ contained in the \newline
\hbox{}\hskip1cm interior of the set $W\subseteq\Ce$\dotfill\pageref{pof}

$\mathcal {P}_c(f)$, the set of poles of a (tamed) Dirichlet-type integral (i.e., DTI) $f$ on \newline
\hbox{}\hskip1cm the critical line\dotfill\pageref{pof}

%$\Pi(f):=\{\re s>D(f)\}$, the half-plane of (absolute) convergence of the\newline
%\hbox{}\hskip1cm Dirichlet series  or integral $f$\dotfill\pageref{abscissa_of_convergence}

%$\Qu$, the field of rational numbers\dotfill\pageref{Qu}

%$\ov\Qu$, the field of algebraic numbers\dotfill\pageref{ovQ}

%$\eR$, the field of real numbers\dotfill\pageref{eR}

%$\res(f,\omega)$, residue of the meromorphic function $f$ at the point $\omega$\dotfill\pageref{reziduum}

RFD, relative fractal drum $(A,\O)$ in $\eR^N$\dotfill\pageref{drum}

%$\mathrm{RFD}_{\Lambda}(\eR^N)$, the familly of all $\Lambda$-sprayable RFDs in $\eR^N$\dotfill\pageref{RFDL}

$S_N$, the classic $N$-dimensional Sierpi\'nski gasket\dotfill\pageref{classicNgasket}

$\bm S$, screen\dotfill\pageref{screen}

%$S_{|n}$, truncated screen\dotfill\pageref{trunc_screen_window}

%$\|S\|_{\mathrm{Lip}}$, the Lipschitz constant of the Lipschitz-continuous function $S$\dotfill\pageref{g_screen}

%$(s)_k:=\frac{\mathrm{\Gamma}(s+k)}{\mathrm{\Gamma}(s)}$, the Pochammer symbol for $k\in\Ze$\dotfill\pageref{pochamer_gamma}

%$\mathcal{S}_{\rm qp}$, the family of quasiperiodic sets\dotfill\pageref{quasiperiodicdef}

%$\mathcal{S}_{\rm aqp}$, the family of algebraically quasiperiodic sets\dotfill\pageref{quasiperiodicdef}

%$\mathcal{S}_{\rm tqp}$, the family of transcendentally quasiperiodic sets\dotfill\pageref{quasiperiodicdef}

$\operatorname{Spray}(\Omega_0,(\lambda_j)_{j\ge1},(a_j)_{\ge1})$, a relative fractal spray in $\eR^N$\dotfill\pageref{spray}

%$\sup S:=\sup_{\tau\in\eR}S(\tau)$, supremum of the screen $S$\dotfill\pageref{g_screen}

$\supp\mathbf{e}$, the support of a sequence $\mathbf{e}\in(\eN_0)_c^\ty$\dotfill\pageref{supp}

$\supp m$, the support of an integer $m\ge2$\dotfill\pageref{supp}

%$\ov s:=x-{\I}y$, complex conjugate of a complex number $s=x+{\I}y$\dotfill\pageref{compl_conj}

%\thispagestyle{empty}

%$\s(A,\O)$, spectrum of a relative fractal drum $(A,\O)$ in $\eR^N$\dotfill\pageref{spectrum}

%$\mathcal{S}_{(E,\f),\rho}^{(I)}$, the class of SEDTIs of type I\dotfill\pageref{SEDTIsI}

%$\mathcal{S}_{(E,\f),\rho}^{(II)}$, the class of SEDTIs of type II\dotfill\pageref{SEDTIsII}

$\bigsqcup_{j=1}^\ty\mathcal {L}_j$, the union of a countable family of fractal strings\dotfill\pageref{union_s}

$\bigsqcup_{j\in J}(A_j,\O_j)$, the disjoint union of a countable family of relative fractal \newline
\hbox{}\hskip1cm drums\dotfill\pageref{sqcup}

%$\bigsqcup_{j\in J}\lambda_j(A,\O)$, countable disjoint union of isometric images of copies of the\newline
%\hbox{}\hskip1cm  relative fractal drum $(A,\O)$ scaled by the factors $\lambda_j$\dotfill\pageref{dis_union_RFD}

%$V(t)=V_A(t):=|A_t|$, volume of the $t$-neighborhood of the bounded set $A\st\eR^N$\dotfill\pageref{V_tube}

%$V(t)=V_{A,\O}(t):=|A_t\cap\O|$, volume of the $t$-neighborhood of the relative\newline
%\hbox{}\hskip1cm fractal drum $(A,\O)$\dotfill\pageref{V_tube}

%$V^{[k]}(t)=V^{[k]}_{A}(t)$, the $k$-th primitive function of $V_{A}(t)$ for $k\geq 0$\dotfill\pageref{V_tube}

%$V^{[k]}(t)=V^{[k]}_{A,\O}(t)$, the $k$-th primitive function of $V_{A,\O}(t)$ for $k\geq 0$\dotfill\pageref{V_tube}

%$\mathcal{V}=\mathcal{V}_{A,\O}$, the regular distribution generated by $V_{A,\O}(t)$\dotfill\pageref{Vdist}

%$\mathcal{V}^{[k]}=\mathcal{V}^{[k]}_{A,\O}$, the regular distribution generated by the $k$-th primitive function of \newline
%\hbox{}\hskip1cm  $V_{A,\O}(t)$ for $k\geq 0$, or the $|k|$-th distributional derivative of $\mathcal{V}_{A,\O}$\newline
%\hbox{}\hskip1cm   for $k<0$\dotfill\pageref{Vdist}

$\bm W$, window\dotfill\pageref{window}

%$W_{|n}$, truncated window\dotfill\pageref{trunc_screen_window}

%$\lfloor x\rfloor=[x]$, the floor (or the integer part) of $x$; that is, the greatest integer less\newline
%\hbox{}\hskip1cm than or equal to $x$\dotfill\pageref{fractional}

%$\lceil x\rceil$, the ceiling of $x$; that is, the smallest integer greater than or equal to $x$\dotfill\pageref{ceil}

%$\{x\}:=x-\lfloor x\rfloor$, the fractional part of $x$\dotfill\pageref{fractional}

%$\Ze:=\{\ldots,\ -2,\ -1,\ 0,\ 1,\ 2,\ \ldots\}$, the ring of integers\dotfill\pageref{Ze}

%$\zeta(s)=\zeta_R(s):=\sum_{j=1}^\ty j^{-s}$, the Riemann zeta function\dotfill\pageref{riemannz}

$\zeta_{\mathcal {L}}(s):=\sum_{j=1}^\ty(\ell_j)^s$, the geometric zeta function of a fractal string $\mathcal {L}$\dotfill\pageref{string}

$\zeta_A(s):=\int_{A_\d}d(x,A)^{s-N}\D x$, the distance zeta function of a bounded\newline
\hbox{}\hskip1cm subset $A$ of $\eR^N$\dotfill\pageref{z}

$\tilde\zeta_A(s):=\int_0^\d t^{s-N-1}|A_t|\,\D t$, the tube zeta function of a bounded subset  \newline
\hbox{}\hskip1cm  $A$ of $\eR^N$\dotfill\pageref{tildz}

$\zeta_A(s,\Omega)=\zeta_{A,\O}(s):=\int_{\O}d(x,A)^{s-N}\,\D x$, the relative distance zeta function of \newline 
\hbox{}\hskip1cm a relative fractal drum $(A,\O)$\dotfill\pageref{rel_dist_zeta}

$\tilde\zeta_A(s,\Omega)=\tilde\zeta_{A,\O}(s):=\int_0^\d t^{s-N-1}|A_t\cap\O|\,\D t$, the relative tube zeta function of\newline 
\hbox{}\hskip1cm a relative fractal drum $(A,\O)$\dotfill\pageref{rel_tube_zeta}

%$\breve{\zeta}_A(s,\O)$, the relative shell zeta function of a relative fractal drum $(A,\O)$\dotfill\pageref{shell_zeta}

%${\zeta}^{\mathfrak{M}}_{A}(s,\O)$, the relative Mellin zeta function of a relative fractal drum $(A,\O)$\dotfill\pageref{mellin_zeta_1}

$\zeta_{A,\O}^{\mathfrak{M}}(s):=\int_0^{+\ty}t^{s-N-1}|A_t\cap\O|\,\di t$, the Mellin zeta function of a relative\newline 
\hbox{}\hskip1cm  fractal drum $(A,\O)$\dotfill\pageref{mellinz}

%$\zeta_{A,\O}^*$, the spectral zeta function of a relative fractal drum $(A,\O)$\dotfill\pageref{spectral_zf}

%$\zeta_{\O_0}^*$, the spectral zeta function of a bounded open subset $\O_0$ of $\eR^N$\dotfill\pageref{szfs}

%$\zeta_{\mathcal L}^*=\zeta_{\nu,\mathcal L}$, the spectral zeta function of a fractal string $\mathcal{L}$\dotfill\pageref{szf_L}

%$\zeta_A(\,\cdot\,,\pa)$, the surface zeta function of a bounded subset $A$ of $\eR^N$\dotfill\pageref{surface_zf}

%$\zeta_A(\,\cdot\,,w)$, the weighted distance zeta function of a bounded set $A$, with weight $w$\dotfill\pageref{weighted_z}

%$\zeta_A(\,\cdot\,,\mu)$, the distance zeta function of a bounded set $A$, with Borel measure $\mu$\dotfill\pageref{weighted_z}

%$\zeta_{E,\f,\mu}(s):=\int_E\f(x)^s\D\mu(x)$, Dirichlet-type integral (DTI)\dotfill\pageref{DTI3} 

$\zeta_{\mathfrak{S}}(s)$, the scaling zeta function (of a fractal spray or of a self-similar RFD)\dotfill\pageref{scalingzf}

%$\zeta_{O,\mathrm{out}}(s):=\zeta_{K,K^c}(s)$, the relative distance zeta function corresponding\newline
%\hbox{}\hskip1cm  to the outer neighborhoods of $K:=\ov{O}$, where $O$ is an admissible\newline
%\hbox{}\hskip1cm  open set for the given IFS\dotfill\pageref{5.5.106.1/2..}  

%\thispagestyle{myheadings}
}

%-----------------------------------------------------------------------------
% Beginning of preface.tex
%-----------------------------------------------------------------------------
%
% AMS-LaTeX 1.2 sample file for a monograph, based on amsbook.cls.
% This is a data file input by chapter.tex.
%%%%%%%%%%%%%%%%%%%%%%%%%%%%%%%%%%%%%%%%%%%%%%%%%%%%%%%%%%%%%%%%%%%%%%%%

\chapter*{Preface}

The purpose of this work is to develop the theory of complex dimensions for arbitrary compact subsets $A$ of Euclidean spaces $\eR^N$, of arbitrary dimension $N\ge1$. To this end, in 2009, the first author has introduced a new class of zeta functions, called {\em distance zeta functions} $\zeta_A$ of fractal sets $A$, the poles of which (after $\zeta_A$ has been suitably meromorphically extended) are defined as the {\em complex dimensions} of $A$. This notion establishes an important bridge between the geometry of fractal sets, Number Theory and Complex Analysis. 
\medskip

The development of the higher-dimensional theory of complex dimensions of fractal sets has led us to the discovery of the {\em tube zeta functions} $\tilde\zeta_A$ of fractal sets, which are not only a valuable technical tool, but a natural companion of the distance zeta functions $\zeta_A$. These two fractal zeta functions are connected by a simple functional equation, which shows that, in this generality, the theory of complex dimensions can be developed indifferently from the point of view of the distance or else of the tube zeta functions. Both the distance and tube zeta functions enable us to extend in a nontrivial way the existing theory of geometric zeta functions $\zeta_{\mathcal L}$ of bounded fractal strings ${\mathcal L}$. An even broader perspective is achieved by introducing the so-called {\em relative fractal drums} (RFDs) $(A,\O)$ in Euclidean spaces, which extend the notions of bounded fractal sets in $\eR^N$, as well as of bounded fractal strings. The associated {\em relative fractal zeta functions} $\zeta_{A,\O}$ enable us to consider the theory of fractal zeta functions from a unified perspective. An unexpected novelty is that a relative fractal drum $(A,\O)$ can have a (naturally defined) {\em Minkowski $($or box$)$ dimension} $\dim_B(A,\O)$ of {\em negative} value (and even of value $-\ty$), or more generally, that its principal complex dimensions (i.e., the poles of $\zeta_{A,\O}$ on the critical line $\{\re s=D\}$, where $D=\ov{\dim}_B(A,\O)$ is the upper Minkowski dimension of $(A,\O)$) can have negative real parts. 
\medskip

The residue of a fractal zeta function, computed at the value $D$ of the abscissa of (absolute) convergence of the zeta function (i.e., at the Minkowski dimension), is very closely related to the {\em Minkowski content} of the corresponding bounded set or RFD.
Furthermore, we also study the quasiperiodicity of relative fractal drums, by using a classical result from (transcendental) analytic number theory, due to Alan Baker. Roughly, for any given positive integer $n$, it is possible to construct a fractal set with $n$ algebraically independent quasiperiods; as a result, we obtain a {\em transcendentally $n$-quasiperiodic set}. Moreover, we can even construct transcendentally $\ty$-quasiperiodic sets, i.e., fractal sets with infinitely many algebraically independent quasiperiods.
\smallskip

Towards the end of this article, special emphasis is given to the construction of fractal sets $A$ which have principal complex dimensions (i.e., the poles of the distance zeta function $\zeta_A$ with real part equal to $D=\ov{\dim}_BA$) of any given multiplicity~$m\ge2$ and even, with `infinite multiplicity' $m=\ty$; i.e., in this case, the principal complex dimensions of $A$ are, in fact, essential singularities of its distance zeta function $\zeta_A$.
\smallskip
\newpage

Finally, we also construct fractal sets $A$ in $\eR^N$, which we call {\em maximal hyperfractals}, such that the corresponding distance zeta function has the entire critical line of (absolute) convergence $\{\re s=D\}$ as the set of its nonremovable singularities.

We conclude this paper by a discussion of the notion of ``fractality'', formulated in terms of the present higher-dimensional theory of complex dimensions.
Furthermore, we illustrate this discussion by means of an RFD suitably associated with the Cantor graph (or the ``devil's staircase'').   
\bigskip

\hbox to\hsize{\hfill August 18, 2016}
\medskip

\hbox to\hsize{Riverside, California, USA and Paris, France\hfill {\em Michel L.\ Lapidus}}

\hbox to \hsize{Zagreb, Croatia\hfill {\em Goran Radunovi\'c} and {\em Darko \v Zubrini\'c}}

%-----------------------------------------------------------------------------
% End of preface.tex
%-----------------------------------------------------------------------------

\mainmatter

\index{RFD|see{relative fractal drum}}
\index{DTI|see{Dirichlet-type integral (DTI)}}
\index{negative box dimension|see{flatness of a relative fractal drum}}
\index{zeta function|see{fractal zeta function}}
\index{dimension|seealso{fractal dimension}}
\index{Hausdorff dimension|seealso{fractal dimension}}
\index{nondegenerate RFD|see{Minkowski nondegenerate RFD}}
\index{devil's staircase|see{Cantor graph (full)}}

%\part{This is a Part Title Sample}

%\chapter{AMS Monograph Series Sample}

\overfullrule=0pt

\chapter{Introduction}\label{intro}

\section{The development of the idea of dimension: From integers to complex numbers}

The development of the mathematical ideas behind the concept of {\em dimension} started in the 19th century, with the need to precisely define some basic notions like the `line' and `surface'. Its history can be very roughly subdivided into the following three parts, all of them deeply interlaced: the history of integer dimensions, fractal dimensions, and complex dimensions.

\subsection{Integer dimensions}\index{dimension!history of|(} Until the beginning of the 20th century, the notion of `dimension' has been in use exclusively as a {\em nonnegative integer}. It was rigourously defined in the 19th century, first for linear objects and then, for manifolds; i.e., in the area of linear algebra (where it was defined as the number of elements of any base of a given linear space), as well as in differential and algebraic geometry. Soon, several other integer dimensional quantities have been introduced, in order to study arbitrary subsets of Euclidean spaces (and, more generally, of topological spaces). These basic dimensional quantities are known as the small inductive dimension (Menger--Urysohn), the large inductive dimension (Brouwer--\v Cech) and the covering dimension (\v Cech--Lebesgue). A history of the extremely complex subject of integer dimensions appearing in Topology is given in the survey article \cite{cr}.

\subsection{Fractal dimensions}\index{fractal dimension!history of|textbf} The foundations of the theory of {\em fractal dimensions} have already been laid out in the 1920s, in the works of Minkowski, Hausdorff, Besicovich and Bouligand, by introducing (suitably defined) dimensions which can assume {\em noninteger} (more specifically, nonnegative real) values, in order to better understand the geometric properties of very general subsets of Euclidean spaces. These developments resulted in the {\em Hausdorff dimension} and the {\em Minkowski dimension} or the {\em Minkowski--Bouligand dimension} (also called the {\em box dimension}, the notion that we adopt in this paper), which have become essential tools of modern {\em Fractal Geometry}. 

Many distinguished scholars contributed in various ways to popularizing and developing these ideas, 
and thereby, in particular, to the introduction of the seemingly counterintuitive
concept of fractal dimension; there are too many of them to name them all here. (See, for example, \cite[Chapter XI]{Man} or [Fal1].) In addition, the methods of Fractal Geometry are today extremely developed and frequently used in various specialized scientific fields, both from the theoretical and applied points of view. 
%The relevance of this branch of Mathematics can be seen, in particular, in its numerous applications, ranging from the technical and natural sciences to finance and economics, and even to the visual arts. Its present vitality is demonstrated by the existence of several professional research journals dedicated exclusively to the study of problems emerging from Fractal Geometry. 
An overview of the early history of Fractal Geometry and the development of its main ideas can be found in \cite{Man}. (See also \cite{lap7}.)

\subsection{Complex dimensions} The idea of introducing {\em complex dimensions} (more specifically, of {\em complex numbers} as dimensions) as a quantification of the inner (oscillatory) geometric properties of objects called {\em bounded fractal strings} $\mathcal L$, has been proposed in the beginning of the 1990s by the first author of this paper, based in part on earlier work in [{Lap1--3}, {LapPo1--2}, {LapMa1--2}].
Very roughly, bounded fractal strings can be identified with certain bounded subsets of the real line. In order to define the complex dimensions of a given bounded fractal string $\mathcal L$, one has to assign to it the corresponding (geometric) {\em zeta function} $\zeta_{\mathcal L}$. The `complex dimensions' of bounded fractal strings are then defined as the poles of a suitable meromorphic extension of the geometric zeta function in question. The development of the main ideas and results behind the mathematical theory of complex dimensions of fractal strings can be found in \cite{lapidusfrank12}. %r (See also [{Lap-vFr1--2}].)
%For example, if the bounded fractal string is given by $\mathcal L:=(j^{-2})_{j\ge1}$, then the corresponding geometric zeta function coincides with $\zeta_R(2s)$, where $\zeta_R$ is the classical Riemann zeta function, and the complex dimensions of $\mathcal L$ are all contained in the vertical strip $$

It is natural to ask the following question: Is it possible to define the `complex dimensions' for {\em any} (nonempty) bounded subset $A$ of Euclidean space?
In other words, is there a natural zeta function $\zeta_A$, such that its poles can be considered as the `complex dimensions' of a given set $A$ (assuming that a suitable meromorphic extension of $\zeta_A$ is possible)?
The answer to this question has been obtained by the first author in 2009, by introducing a class of {\em distance zeta functions} $\zeta_A$, as we call them in this work.
%in unexpected way; namely, the idea emerged during a lecture of the third author on seemingly totally unrelated area, dealing with questions of nonreagularity of weak solutions of some elliptic boundary vale problems. 

As the result of a collaboration between the authors of this paper, initiated by the first author in 2009, it soon became clear that the notion of `complex dimensions' can be introduced not only for bounded subsets of Euclidean spaces, but even for much more general geometric objects, denoted by $(A,\O)$, which we call {\em relative fractal drums} (RFDs). (An example of relative fractal drum is given by $(\pa\O,\O$), where $\O$ is a bounded open subset of some Euclidean space $\eR^N$, with $N\ge1$, while $\pa\O$ is its topological boundary. The special case when $N=1$ precisely corresponds to a bounded fractal string.) The study of the {\em complex dimensions of relative fractal drums} is the main goal of the present paper. The flexibility of this notion, as well as of the corresponding notion of relative distance zeta function $\zeta_{A,\O}$, has enabled us to view the existing theory of complex dimensions of fractal strings (and their generalizations to fractal sprays) from a unified perspective and to extend it beyond recognition. An unexpected novelty was the possibility 
for the relative Minkowski dimensions of some classes of relative fractal drums to take negative values (including the value $-\ty$).

We should mention that the set of complex dimensions of a given RFD (and, in particular, of a given bounded set of $\eR^N$) is always a discrete subset of $\Ce$, and hence, consists of a (finite or countable) sequence of complex numbers, with finite multiplicities (as poles of the corresponding fractal zeta functions).
In the future, we may extend this notion to also include the possible essential singularities of the corresponding fractal zeta functions.
Indeed, in this paper, we construct fractal sets and RFDs whose fractal zeta functions have infinitely many essential singularities.

The theory of complex dimensions of relative fractal drums, developed in this work, provides a useful bridge between Fractal Geometry, Number Theory and Complex Analysis.
%This theory is extensively developed in [{LapRa\v Zu1--8}], as well as in the present memoir, but nevertheless is far from being completed. 
It has brought to light numerous interesting questions and new challenging problems for further research; see, especially, \cite[Chapter 6]{fzf} and \cite[\S8]{brezish}.\index{dimension!history of|)}

\section{Relative fractal drums and their distance zeta functions}

%\subsection{Overview} 
In 2009, the first author has introduced a new class of zeta functions $\zeta_A$, called `distance zeta functions',\index{distance zeta function!of a set $A$, $\zeta_A$|textbf}\index{fractal zeta function!distance zeta function, $\zeta_A$|textbf} associated with arbitrary compact subsets $A$
of a given Euclidean space $\eR^N$ of arbitrary dimension $N$.
%, such that their closure $\ov A$ is of $N$-dimensional Lebesgue measure equal to zero. 
More specifically, the {\em distance zeta function} $\zeta_A$ of a bounded set $A\subset\eR^N$ is defined by
\begin{equation}\label{z}
\zeta_A(s):=\int_{A_\d}d(x,A)^{s-N} \D x,
\end{equation}
for all $s\in\Ce$ with $\re s$ sufficiently large, where $\d$ is a fixed positive real number
and $A_\d$ is the Euclidean $\d$-neighborhood\label{Atn} of $A$. (Here, $d(x,A):=\inf\{|x-a|:a\in A\}$ denotes the Euclidean distance\label{d(x,A)} from $x$
 to $A$ and the integral is understood in the sense of Lebesgue and is therefore absolutely convergent.) These new fractal zeta functions have been studied in [LapRa\v Zu2,3], as well as in the research monograph \cite[Chapter 2]{fzf}.

We extend the class of distance zeta functions from the family of compact subset of $\eR^N$ to a new class of objects that we call `relative fractal drums'
(RFDs) in $\eR^N$ (still for any $N\ge1$);
see Definition \ref{zeta_r} below. 
This enables us to provide a unified approach to the study of fractal zeta functions. An unexpected novelty is 
that RFDs may have an upper box (or Minkowski) dimension (defined by \eqref{dimrelu} below), which is {\em negative}, or is even equal to $-\ty$; see Proposition~\ref{ndim} and Corollary~\ref{flat} below.

\begin{defn}\label{rfd}
Let $\O$ be an open subset of $\eR^N$, possibly unbounded, but of finite $N$-dimensional Lebesgue measure.
Assume that $A$ is a subset of $\eR^N$ and
\begin{equation}\label{delta}
\mbox{%$|\ov A\cap\O|_N=0$ and 
there exists $\d>0$ such that $\O\stq A_\d$.}
\end{equation}
 We then say that the ordered pair $(A,\O)$ is a {\em relative fractal drum}\label{drum}\index{relative fractal drum, RFD|textbf} (or an RFD, in short) in $\eR^N$.
\end{defn}

We stress that when working with an RFD $(A,\O)$, we always assume that both $A$ and $\O$ are nonempty.

Relative fractal drums represent a natural extension of the following classes of objects, simultaneously: 
\medskip

($a$) The class $\mathtt{STR}_b$ of (nonempty) {\em bounded fractal strings}\label{bounded fractal string}\index{fractal string, $\mathcal L$|textbf} $\mathcal {L}:=(\ell_j)_{j\in\eN}$; indeed, for any given bounded fractal string\footnote{A bounded fractal string $\mathcal {L}:=(\ell_j)_{j\in\eN}$ is defined as a nonincreasing sequence of positive real numbers $(\ell_j)_{j\in\eN}$ such that $\sum_{j=1}^\ty\ell_j<\ty$; see \cite{lapidusfrank12}.}\label{ellj} $\mathcal {L}:=(\ell_j)_{j\in\eN}$, we can define a disjoint union 
\begin{equation}
\O:=\bigcup_{j=1}^\ty I_j
\end{equation} 
of open intervals $I_j$ such that $|I_j|=\ell_j$ for each $j\ge1$, and $A:=\pa\O$; then $\mathcal {L}$ can be identified with any such RFD $(\pa\O,\O)$; the set $\O$ is referred to as a
{\em geometric realization\index{fractal string, $\mathcal L$!geometric realization, $\O$|textbf}\index{geometric realization $\O$ of a bounded fractal string $\mathcal L$|textbf} of the bounded fractal string $\mathcal {L}$}; using this identification, we can write $\mathcal {L}=(\pa\O,\O)$; 
it is sometimes convenient to deal with the {\em canonical geometric representation of $\mathcal L$},\index{canonical geometric representation $A_{\mathcal L}$ of a bounded fractal string $\mathcal L$|textbf}\index{fractal string, $\mathcal L$!canonical geometric representation, $A_{\mathcal L}$|textbf} 
defined by
\begin{equation}\label{A_L}
A_{\mathcal L}:=\Big\{a_j:=\sum_{k\ge j}\ell_k:j\in\eN\Big\},
\end{equation}
and in this case, the set 
\begin{equation}
\O_{\rm can,\mathcal L}:=(0,\ell_1)\setminus A_{\mathcal L}=\bigcup_{j=1}^\ty(a_k,a_{k+1})
\end{equation}
is obviously a geometric realization of $\mathcal L$, which we call the {\em canonical geometric realization\label{Oak} of $\mathcal L$};

\medskip

($b$) The class $\mathtt{COM}(\eR^N)$ of {\em %negligeable 
compact fractal sets} $A$ in Euclidean space $\eR^N$, by identifying $A$ with the corresponding RFD $(A,A_\d)$, for any fixed $\d>0$.
%\smallskip

%($b$) of the class of (ordinary) {\em fractal drums} $(\pa\O,\O)$, where $\O$ is a bounded open subset of $\eR^N$. 

\medskip
Moreover, denoting by $\mathtt{RFD}(\eR^N)$ the family of all RFDs in $\eR^N$, we have the following  natural inclusions, for any $N\ge1$:
\begin{equation}
\mathtt{STR}_b\st\mathtt{COM}(\eR^N)\st\mathtt{RFD}(\eR^N).
\end{equation}
Here, any bounded fractal string $\mathcal {L}:=(\ell_j)_{j\in\eN}$ can be identified with the compact set $\ov A\st\eR$, where $A:=\{a_j=\sum_{k=j}^{\ty}\ell_k:j\in\eN\}\subset\eR$.

\medskip

We now introduce the main definition of this paper.

\begin{defn}\label{zeta_r} Let $(A,\O)$ be a relative fractal drum (or an RFD) in $\eR^N$.
The {\em distance zeta function $\zeta_{A,\O}$ of the relative fractal drum $(A,\Omega)$}\index{relative distance zeta function, $\zeta_{A,\O}$|textbf}\index{fractal zeta function!relative distance zeta function, $\zeta_{A,\O}$|textbf}\index{distance zeta function!of a relative fractal drum $(A,\O)$, $\zeta_{A,\O}$|textbf} (or the {\em relative distance 
zeta function} of the RFD $(A,\O)$)
 is defined by
\begin{equation}\label{rel_dist_zeta}
\zeta_{A,\O}(s):=\int_\Omega d(x,A)^{s-N}\D x,
\end{equation}
for all $s\in\Ce$ with $\re s$ sufficiently large. %In \cite{fzf}, we have used the notation $\zeta_A(s,\O)$ (and, to a lesser extent, $\zeta_{A,\O}(s)$) instead of $\zeta_{A,\O}(s)$.
\end{defn}

The family of relative distance zeta functions represents a natural extension of the following classes of fractal zeta functions:
\medskip 

($a$) The class of {\em geometric zeta functions}\label{string}\index{geometric zeta function, $\zeta_{\mathcal L}$|textbf}\index{fractal zeta function!geometric zeta function, $\zeta_{\mathcal L}$|textbf} $\zeta_{\mathcal L}$, associated with bounded fractal strings $\mathcal {L}:=(\ell_j)_{j\in\eN}$ 
and defined (for all $s\in\Ce$ with $\re s$ sufficiently large) by 
\begin{equation}\label{geomz}
\zeta_{\mathcal L}(s):=\sum_{j=1}^\ty\ell_j^s
\end{equation} 
(it has been extensively studied in the research monograph \cite{lapidusfrank12},
by the first author and van Frankenhuijsen, as well as in the relevant references therein); more precisely, we show that
\begin{equation}\label{geomr}
\zeta_{\mathcal L}(s)=s2^{s-1}\zeta_{\pa\O,\O}(s) 
\end{equation}
for all $s\in\Ce$ with $\re s$ sufficiently large, where $\O$ is any geometric realization of $\mathcal L$,
described in $(a)$, appearing immediately after Definition \ref{rfd}.

\medskip

($b$) The class of {\em distance zeta functions} $\zeta_A$ associated with compact fractal subsets $A$ of Euclidean spaces, defined by \eqref{z}.
\smallskip

%(c) of the class of `spectral zeta functions' of {\em fractal drums} $(\pa\O,\O)$, where $\O$ is a bounded open subset of $\eR^N$,
%introduced by the first author; see \cite{lapidusfrank} and the relevant references therein. 
\bigskip

%Let us provide a short survey of the memoir.

We point out that some of the results of this article (especially in \S\ref{hols}) can be viewed in the context of very general convolution-type integrals
of the form 
\begin{equation}\label{H}
H(s)=\int_E f(s,t)\,\D\mu(t),
\end{equation}
about which we cite the following well-known result. We shall need it in the proofs of Theorem \ref{mero_ext_N+1} and Proposition \ref{F_mero} below.

\begin{theorem}\label{Hh}
Let $V$ be an open set in $\Ce$ $($or even in $\Ce^n$$)$.
Furthermore, let $(E,{\mathcal B}(E),\mu)$ be a measure space, where $E$ is a locally compact metrizable space, 
${\mathcal B}(E)$ is the Borel $\s$-algebra of $E$, and $\mu$ is a positive or complex $($local, i.e., locally bounded$)$ measure, with associated total variation measure denoted by $|\mu|$.
Assume that a function $f:V\times E\to\Ce$ is given, satisfying the following three conditions$:$  
\medskip

$(1)$ $f(\,\cdot\,,t)$ is holomorphic for $|\mu|$-a.e.\ $t\in E$,
\medskip

$(2)$ $f(s,\,\cdot\,)$ is $\mu$-measurable for all $s\in V$, and
\medskip

$(3)$ a suitable growth property is fulfilled by $f$$:$ for every compact subset $K$ of $V$, there exists $g_K\in L^1(|\mu|)$
such that $|f(s,t)|\le g_K(t)$, for all  $s\in V$ and $|\mu|$-a.e. $t\in K$.
%\footnote{Here, $|\mu|$ is the {\em total variation measure} corresponding to $\mu$. See Definition \ref{tvm} in Appendix~A.}
\medskip

\noindent Then, the function $H$ defined by $($\ref{H}$)$ is holomorphic on $V$.
Moreover, one can interchange the derivative and the integral. $($The problem of complex differentiating under the integral sign is discussed, for example, in {\rm \cite{Mattn}}.$)$ More precisely, for every $s\in V$ and every $k\in\eN$, we have
\begin{equation}
F^{(k)}(s) =\int_E\frac{\pa^k}{\pa s^k} f(s,t)\,\D\mu(t).
\end{equation}
\end{theorem}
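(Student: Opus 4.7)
The plan is to prove holomorphy of $H$ via the classical combination of Morera's theorem and Fubini's theorem, then obtain the formula for derivatives by Cauchy's integral formula. The three hypotheses are precisely tailored to enable this strategy.

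First, I would verify that $H$ is well-defined and continuous on $V$. Fix $s_0 \in V$ and choose a compact neighborhood $K$ of $s_0$ in $V$ (possible since $V$ is open in $\mathbb{C}^n$). By hypothesis (3), there is $g_K \in L^1(|\mu|)$ with $|f(s,t)| \le g_K(t)$ for all $s \in K$ and $|\mu|$-a.e.\ $t$, so $H(s)$ is absolutely convergent for $s \in K$. By hypothesis (1), for any sequence $s_n \to s_0$ in $K$, $f(s_n,t) \to f(s_0,t)$ for $|\mu|$-a.e.\ $t$, and the dominated convergence theorem yields $H(s_n) \to H(s_0)$, giving continuity.

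Next, for the Morera step, fix any closed triangle $\Delta$ whose union with its interior lies in $V$ (we may assume $V$ is connected by working locally, or restrict to a disk contained in $V$). Let $K := \partial\Delta$, which is compact. Then
\begin{equation*}
\oint_{\partial\Delta} H(s)\,\mathrm{d}s \;=\; \oint_{\partial\Delta}\!\int_E f(s,t)\,\mathrm{d}\mu(t)\,\mathrm{d}s,
\end{equation*}
and since $|f(s,t)| \le g_K(t)$ on $\partial\Delta \times E$ (with $\partial\Delta$ of finite length and $g_K \in L^1(|\mu|)$), Fubini's theorem applies and we may swap the order of integration. By hypothesis (1), $f(\cdot,t)$ is holomorphic for $|\mu|$-a.e.\ $t$, so by Cauchy's theorem the inner contour integral vanishes for those $t$, whence $\oint_{\partial\Delta} H(s)\,\mathrm{d}s = 0$. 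Morera's theorem (applied one variable at a time in the several-variables case, invoking Hartogs' theorem to assemble joint holomorphy from separate holomorphy and local boundedness) gives that $H$ is holomorphic on $V$.

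Finally, for the formula for $H^{(k)}$, fix $s_0 \in V$ and choose $r>0$ so that the closed disk $\overline{B_r(s_0)} \subset V$. By Cauchy's integral formula applied to $H$,
\begin{equation*}
H^{(k)}(s_0) \;=\; \frac{k!}{2\pi \mathbbm{i}}\oint_{|s-s_0|=r}\frac{H(s)}{(s-s_0)^{k+1}}\,\mathrm{d}s.
\end{equation*}
Substituting the definition of $H$ and invoking Fubini once more (justified by $|f(s,t)|/|s-s_0|^{k+1} \le g_K(t)/r^{k+1}$ on the circle, where $K$ is that circle), we exchange the order of integration to obtain
\begin{equation*}
H^{(k)}(s_0) \;=\; \int_E \frac{k!}{2\pi \mathbbm{i}}\oint_{|s-s_0|=r}\frac{f(s,t)}{(s-s_0)^{k+1}}\,\mathrm{d}s\,\mathrm{d}\mu(t) \;=\; \int_E \frac{\partial^k f}{\partial s^k}(s_0,t)\,\mathrm{d}\mu(t),
\end{equation*}
using Cauchy's formula again for $|\mu|$-a.e.\ $t$.

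The only mild obstacle is justifying the Fubini swaps, which requires invoking hypothesis (3) on each compact set (triangle boundary, resp.\ circle) on which one integrates; the dominating function $g_K$ is exactly what makes this legitimate. In the several-complex-variables setting, the extra subtlety is to combine Morera in each variable with Osgood/Hartogs to conclude joint holomorphy, but local boundedness provided by (3) supplies this automatically.
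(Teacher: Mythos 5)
The paper supplies no proof of Theorem~\ref{Hh}: it is recorded as a standard fact with a pointer to Mattner's note~\cite{Mattn} and invoked later in the proofs of Theorem~\ref{mero_ext_N+1} and Proposition~\ref{F_mero}, so there is no in-text argument to compare against and your proposal must be judged on its own. It is correct. You run the classical Morera--Fubini argument: hypothesis~$(3)$, read in the evidently intended sense (with $s$ ranging over the compact $K\subset V$ and $t$ over $E$), gives absolute integrability and local domination; dominated convergence gives continuity of $H$; Fubini over $\partial\Delta\times E$ together with Goursat's theorem applied to $f(\,\cdot\,,t)$ for $|\mu|$-a.e.\ $t$ makes the triangle integral vanish; Morera gives holomorphy; and a second Fubini applied to Cauchy's integral formula on a small circle yields the derivative formula. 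For $\Ce^n$ you rightly reduce to one variable at a time and appeal to Osgood's lemma, whose local boundedness hypothesis is precisely what $(3)$ supplies.

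Two loose ends deserve explicit mention in a final write-up. The Fubini steps require joint measurability of $f$ with respect to $|\mu|\otimes(\text{arc length})$ on the product; this follows from, but is not literally, $(1)$--$(2)$, so you should record the standard Carath\'eodory fact that a function continuous in $s$ for a.e.\ $t$ and measurable in $t$ for each $s$, with $s$ in a separable metric space, is jointly measurable. You have also silently corrected two slips in the printed statement---$F^{(k)}$ should read $H^{(k)}$, and the quantifiers in $(3)$ inadvertently swap the roles of $s$ and $t$---which is appropriate but worth flagging. For comparison, the other standard route, arguably the one closest to~\cite{Mattn} and to the spirit of Remark~\ref{condition3}, dispenses with Morera and Fubini entirely: Cauchy's inequality applied to $f(\,\cdot\,,t)$ on a small disk in $V$ bounds $|\partial_s f(s,t)|$ by $C\,g_K(t)$ on a slightly smaller concentric compact; the mean value inequality then dominates the difference quotients $\bigl(f(s+h,t)-f(s,t)\bigr)/h$ uniformly for small $h$; and scalar dominated convergence passes $h\to0$ under the integral, with iteration giving every $H^{(k)}$. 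That proof avoids product-measure machinery, which is the practical reason the local integrability condition $(3')$ is often the one checked in applications, but both routes are rigorous.
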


\begin{remark}\label{condition3}
According to \cite{Mattn} and as is well known, if conditions $(1)$ and $(2)$ from Theorem~\ref{Hh} are satisfied, then condition $(3)$ is equivalent to the following condition, which is generally slightly easier to verify in practice:

\medskip

$(3')$ $\int_E|f(\,\cdot\,,t)|\di|\mu|(t)$ is locally bounded; that is, for each fixed $s_0\in V$, there exists $\delta>0$ such that
\begin{equation}
\sup_{s\in V,|s-s_0|<\delta}\int_E|f(s,t)|\di|\mu|(t)<\ty.
\end{equation}
In other words, we can replace condition $(3)$ with condition $(3')$ in the statement of Theorem~\ref{Hh}.
(This is the case because the notion of holomorphicity is {\em local}.)
\end{remark}

\section{Overview of the main results}

We note that the notion of {\em complex dimensions of a relative fractal drum} (RFD), necessary for a clearer understanding of this overview, is introduced in Definition \ref{cdim} below. The definitions of the relative Minkowski content and of the relative box (or, more accurately, Minkowski) dimension can be found in Equations \eqref{minkrel} and \eqref{dimrell} below, respectively. 
\medskip

{\bf Overview of Chapter \ref{rfds}.} 
%provides some basic properties of the distance zeta functions of relative fractal drums (RFDs). 
The main result of \S\ref{hols} is contained in parts $(a)$ and $(b)$ of Theorem \ref{an_rel}, according to which the abscissa of (absolute) convergence $D(\zeta_{A,\O})$ of the distance zeta function $\zeta_{A,\O}$ of any RFD $(A,\O)$ is equal to the upper box (i.e., Minkowski) dimension $\ov\dim_B(A,\O)$ of the RFD. Part $(c)$ of the same theorem provides some mild conditions under which the value of $D:=\dim_B(A,\O)$ (assuming that $\dim_B(A,\O)$ exists) is a singularity of the relative distance zeta function $\zeta_{A,\O}$, and therefore also coincides with the abscissa of holomorphic continuation of the distance zeta function of the RFD. 

Theorem \ref{pole1rel} shows that for a nondegenerate RFD $(A,\O)$ and provided $\zeta_{A,\O}$ possesses a meromophic extension to an open connected neighborhood of $D:=\dim_BA<N$, the residue of $\zeta_{A,\O}$ evaluated at $D$ always lies (up to the multiplicative constant $(N-D)$) between the lower and the upper $D$-dimensional Minkowski contents of the RFD. In particular, if the RFD is Minkowski measurable, then the residue $\res(\zeta_{A,\O},D)$ is, up to the same multiplicative constant, equal to the $D$-dimensional Minkowski content of the RFD.

In Proposition \ref{conem} of \S\ref{cone}, we show that if there is at least one point $a\in\ov A\cap\ov\O$ at which the RFD $(A,\O)$ satisfies a suitable {\em cone property} with respect to $\O$ (see Definition \ref{conep}), then $D(\zeta_{A,\O})\ge0$. In general, however, the value of $D(\zeta_{A,\O})$ (i.e., of the upper relative box dimension $\ov\dim_B(A,\O)$) can take on any prescribed negative value (see Proposition \ref{ndim}), and even the value $-\ty$ (see Corollary \ref{flat}). 
Here we stress that the phenomenon of negative box dimensions (including $-\ty$)  for relative fractal drums of the form $(\pa\O,\O)$ and $(\{a\},\O)$, with $a\in\pa\O$, have been studied independently by Tricot in \cite{Tri2}, where the notion of {\em inner box dimension of the boundary $\pa\O$} and of the point $a$ with respect to $\O$ is used.

In \S\ref{scaling_property}, dealing with the {\em scaling property of the relative distance zeta functions}, the main result is stated in Theorem \ref{scaling}. It has important applications to the study of self-similar sprays (or tilings).
Corollary \ref{10.151/2} states an interesting scaling property of the residues of relative distance zeta functions, evaluated at their simple poles; see Equation \eqref{10.91/4}.
The countable additivity of the relative distance zeta function with respect to a disjoint union of RFDs (a notion introduced in Definition \ref{union}) is established in Theorem \ref{unionz}.

In \S\ref{rtzf}, we introduce the notion of the {\em relative tube zeta function} (see Equation \eqref{rel_tube_zeta}), which is
 closely related to the relative distance zeta function (see the functional equation \eqref{rel_equality} connecting these two fractal zeta functions).
Equation \eqref{resdt} in Proposition \ref{resrfd} connects the residues
(evaluated at any visible complex dimension) of the relative tube and distance functions.
In Example \ref{torus}, we calculate (via a direct computation) the complex dimensions of the torus RFD.
Much more generally,
in Proposition \ref{preach}, by using Federer's tube formula\index{Federer's tube formula} established in [Fed1], we calculate the distance zeta function (and the complex dimensions) of the boundary of any compact set $C$ of {\em positive reach} (and, in particular, of any compact convex subset of $\eR^N$); as a special case, we obtain a similar result for a smooth compact submanifold of $\eR^N$ (thereby significantly extending the results of the aforementioned Example \ref{torus}).

The important problem of the existence and construction of meromorphic extensions of some classes of relative (tube and distance) zeta functions is studied in \S\ref{rzfe}.
It is treated in Theorem \ref{rel_measurable} for a class of Minkowski measurable RFDs,
and in Theorem \ref{rel_nonmeasurable} for a class of Minkowski nonmeasurable (but Minkowski nondegenerate) RFDs. Naturally, even though the two classes of examples dealt with here are of interest in their own right and in the applications, additional results should be obtained along these lines in the future developments of the theory.

The main result of \S\ref{ty_qp_rfd} is stated in Theorem \ref{qp} and deals with the construction of $\ty$-{\em quasiperiodic relative fractal drums}, a notion introduced in Definition \ref{quasi_periodic_t}.
Its proof makes an essential use of suitable families of {\em generalized Cantor sets} $C^{(m,a)}$ with two parameters $m$ and $a$, introduced in Definition \ref{Cma}; some of the properties of these Cantor sets are listed in Proposition \ref{Cmap}. 
 Theorem \ref{qp} can be considered as a fractal set-theoretic interpretation of Baker's theorem from transcendental number theory (see Theorem~\ref{baker0}). It provides an explicit construction of a transcendentally $\ty$-quasiperiodic relative fractal drum. In particular, this RFD possesses 
infinitely many algebraically incommensurable quasiperiods.  
In Definition \ref{hyperfractal}, we also introduce the new notions of {\em hyperfractal RFDs}, as well as of {\em strong hyperfractals} and of {\em maximal hyperfractals}. It turns out that the relative fractal drums constructed in Theorem \ref{qp} are not only $\ty$-quasiperiodic, but are also maximally hyperfractal. Accordingly, the critical line $\{\re s=D\}$, where $D:=\ov\dim_B(A,\O)$, consists solely of nonremovable singularities of the associated fractal zeta function; a fortiori, the distance and tube zeta functions of the RFD cannot be meromorphically extended beyond this vertical line.

The {\em scaling property of relative tube zeta functions} is provided in
Proposition \ref{scalingtr}. This result is analogous to the one obtained for relative distance zeta functions in Theorem~\ref{scaling}.
\medskip

{\bf Overview of Chapter \ref{sec_embed}.} This chapter deals with the problem of {\em embeddings
of RFDs} into higher-dimensional Euclidean spaces.
Theorem \ref{c_dim_inv} of \S\ref{emb_b}
shows that the notion
of complex dimensions of fractal sets does not depend on the dimension of the ambient space.
In Theorem \ref{c_dim_inv_rel} of
\S\ref{emb_rfd}, an analogous result is obtained for general RFDs.
An important role in the accompanying computations is played by the gamma function, the Euler beta function, as well as by the Mellin zeta function of an RFD, introduced in Equation \eqref{mellinz}. In Example \ref{c_dust}, we apply these results in order to calculate the complex dimensions of the Cantor dust.
\medskip

{\bf Overview of Chapter \ref{zfsprays}.} In this chapter, we study {\em relative fractal sprays} in $\eR^N$, introduced in
 Definition \ref{dsprays} of \S\ref{dsprays}. 
The main result is given in
Theorem \ref{sprayz}, which deals with the distance zeta function of relative fractal sprays.

In \S\ref{relative_other}, we study the relative Sierpi\'nski sprays and their complex dimensions.
Example \ref{6.15} deals with the relative Sierpi\'nski gasket, 
while Example \ref{Ngasket} deals with the inhomogeneous Sierpi\'nski $N$-gasket RFDs, for any $N\ge2$.
Furthermore, Example \ref{sierpinski_carpetr} deals with
the relative Sierpi\'nski carpet, while
Example \ref{carpetN} deals with the Sierpi\'nski $N$-carpet, for any $N\ge2$. Interesting new phenomena occur in this context, which are discussed throughout \S\ref{relative_other}.

In Definition \ref{ss_spray} of \S\ref{goran}, we recall (and extend to RFDs) the notion of
{\em self-similar sprays} (or {\em tilings}), defined by a
suitable {\em ratio list} of finitely many real numbers in $(0,1)$.

Theorem \ref{ss_spray_zeta} provides an explicit form for the distance zeta function of a self-similar spray, which can be found in Equation \eqref{ss_spray_form}. The results obtained here are illustrated by the new examples of the $1/2$-square fractal and of
the $1/3$-square fractal, discussed in Example \ref{kvadrat0.5} and in Example \ref{kvadrat_0.33}, respectively.

In \S\ref{multiplicity}, we describe a constructive method for generating principal complex dimensions\footnote{The {\em principal complex dimensions} of an RFD are the poles of the associated fractal (i.e., distance or tube) zeta function with maximal real part $D$, where $D$ is both the abscissa of (absolute) convergence of the zeta function and the (relative) Minkowski dimension of the RFD; see Definition \ref{cdim} and part $(b)$ of Theorem \ref{an_rel} below.} of relative fractal drums of any prescribed multiplicity $m\ge2$, including infinite multiplicity.
(The latter case when $m=\ty$ corresponds to essential singularities of the associated fractal zeta function.)
In Example \ref{Lmloga}, we provide the construction of the
{\em $m$-th order $a$-string}, while we define the {\em $m$-th order Cantor string} in Equation \eqref{Lm}.
In Examples \ref{tensorr} and \ref{tensorm}, we construct Minkowski measurable RFDs which possess infinitely many principal complex dimensions of arbitrary multiplicity $m$, with $m\ge2$ and even with $m=\ty$ (i.e., corresponding to essential singularities). 
\medskip

{\bf Overview of Chapter \ref{fractality}.} This chapter is dedicated to the discussion of the notion of {\em fractality} (of RFDs), and its intimate relationship with the notion of the complex dimensions of RFDs.
In \S\ref{sub_rfd}, {\em fractal and subcritically fractal RFDs} are discussed.
These notions are illustrated in 
\S\ref{cg_rfd} in the case of the {\em Cantor graph RFD}.

\section{Notation}\label{notation} 

In the sequel, an important role is played by the definition of the upper and lower Minkowski contents of RFDs and of the
upper and lower box (or Minkowski) dimensions of RFDs. We shall follow the definitions introduced by the third author in \cite{rae},
but with an essential difference: the parameter $r$ appearing below can be {\em any} real number, and not just a nonnegative real number. (See Remark \ref{grfd} below.)
Hence, for a given parameter $r\in\eR$, we define the {\em $r$-dimensional upper and lower\index{Minkowski content of an RFD $(A,\O)$|textbf}\index{relative fractal drum, RFD!Minkowski content of|textbf} Minkowski contents} of an RFD $(A,\O)$ in $\eR^N$ as follows:\footnote{For a given measurable set $E\st\eR^N$, its $N$-dimensional Lebesgue measure is denoted by $|E|=|E|_N$.}\label{Nmeasure}
\begin{equation}\label{minkrel}
\mathcal{M}^{*r}(A,\Omega):=\limsup_{t\to0^+}\frac{|A_t\cap\Omega|}{t^{N-r}}, \q
\mathcal{M}_*^r(A,\Omega):=\liminf_{t\to0^+}\frac{|A_t\cap\Omega|}{t^{N-r}}.
\end{equation}
We also call them the {\em relative upper Minkowski content {\rm and}\index{relative Minkowski!content|textbf}\index{relative fractal drum, RFD!Minkowski content of|textbf} lower Minkowski content} of $(A,\O)$, respectively. They represent a natural extension of the corresponding notions of upper and lower Minkowski contents of bounded sets in $\eR^N$, introduced by 
Bouligand [Bou] and Hadwiger [Had], as well as used by
Federer in \cite{federer} and by many other researchers in a variety of works, including [Sta, Tri1, BroCar, Lap1, LapPo2, Fal1--2, HeLap, Lap-vFr3, \v Zu2, Wi, KeKom, Kom, RatWi, LapRa\v Zu1, HerLap1] and the relevant references therein.

As usual, we then define the {\em upper box dimension} of the RFD $(A,\O)$ by
\begin{equation}\label{dimrell}
\begin{aligned}
\ov\dim_B(A,\Omega)&:=\inf\{r\in\eR:\mathcal{M}^{*r}(A,\Omega)=0\}\\
&\phantom{:}=\sup\{r\in\eR:\mathcal{M}^{*r}(A,\Omega)=+\infty\},
\end{aligned}
\end{equation} 
as well as the {\em lower box dimension} of $(A,\O)$ by
\begin{equation}\label{dimrelu}
\begin{aligned}
\underline\dim_B(A,\Omega)&:=\inf\{r\in\eR:\mathcal{M}_*^r(A,\Omega)=0\}\\
&\phantom{:}=\sup\{r\in\eR:\mathcal{M}_*^r(A,\Omega)=+\infty\}.
\end{aligned}
\end{equation} 
We refer to $\ov\dim_B(A,\O)$ and $\underline\dim_B(A,\O)$ as the {\em relative upper and lower box $($or Minkowski$)$ dimension}\index{Minkowski (or box) dimension, $\dim_B(A,\O)$|textbf}\index{relative Minkowski!dimension, $\dim_B(A,\O)$|textbf}\index{fractal dimension!relative box (or Minkowski) dimension of an RFD, $\dim_B(A,\O)$|textbf}\index{relative Minkowski (or box)!dimension, $\dim_B(A,\O)$|textbf}\index{relative fractal drum, RFD!Minkowski (or box) dimension of|textbf} of $(A,\O)$, respectively.
The novelty here is that, contrary to the usual upper and lower box dimensions, the {\em relative} upper and lower Minkowski dimensions can attain negative values as well, and even the value $-\ty$. More specifically, it is easy to see that 
$$
-\ty\le\underline\dim_B(A,\Omega)\le \ov\dim_B(A,\Omega)\le N.
$$
If $\ov\dim_B(A,\Omega)=\underline\dim_B(A,\Omega)$, then the common value is denoted by $\dim_B(A,\Omega)$
and we call it the {\em box} $($or {\em Minkowski$)$ dimension of the} RFD $(A,\O)$, or just the {\em relative box $($i.e., Minkowski$)$ dimension}.\index{fractal dimension!relative box (or Minkowski) dimension of an RFD, $\dim_B(A,\O)$|textbf}\footnote{We caution the reader,  however, that unlike in the standard case of bounded subsets of $\eR^N$, the notion of {\em relative Minkowski dimension} of an RFD has not yet been given a suitable geometric interpretation in terms of ``box counting''.}

If there exists $D\in\eR$ such that $0<\mathcal{M}_*^D(A,\Omega)\le
\mathcal{M}^{*D}(A,\Omega)<\ty$, then we say that the RFD $(A,\O)$ is {\em Minkowski nondegenerate}.\index{Minkowski nondegenerate RFD|textbf}\index{relative fractal drum, RFD!Minkowski nondegenerate|textbf} Clearly, in this case we have that $D=\dim_B(A,\O)$.

If for some $r\in\eR$, we have $\mathcal{M}^{*r}(A,\Omega)=
\mathcal{M}_*^r(A,\Omega)$, the common value is denoted by $\mathcal{M}^r(A,\Omega)$. If for some $D\in\eR$, $\mathcal{M}^D(A,\Omega)$ exists and $\mathcal{M}^D(A,\Omega)\in(0,+\ty)$, then we say that the RFD $(A,\O)$ is {\em Minkowski measurable}.\index{Minkowski measurable RFD|textbf}\index{relative fractal drum, RFD!Minkowski measurable|textbf} Clearly, in this case, the dimension of $(A,\O)$ exists and we have $D=\dim_B(A,\O)$.

For example, if the sets $A$ and $\Omega$ are a positive distance apart $($i.e., $\inf\{|x-y|:x\in A,\,y\in\O\}>0$$)$, then 
 it is easy to see that $\dim_B(A,\Omega)=-\ty$. 
Indeed, since $|A_t\cap\Omega|=0$ for all sufficiently small $t>0$, we have that $\mathcal{M}^r(A,\Omega)=0$ for all $r\in\eR$.
 A class of nontrivial examples for which $-\ty<\dim_B(A,\O)<0$ can be found in Proposition \ref{ndim}.

In the case when $\O:=A_\d$, where $A$ is a {\em bounded} subset of $\eR^N$ and $\d$ is a fixed positive real number, we obtain the usual (nonrelative) values of {\em box $($or Minkowski$)$ dimensions},\label{dim}\index{fractal dimension!box (or Minkowski) dimension of a set, $\dim_BA$|textbf} i.e., $\ov\dim_BA:=\ov\dim_B(A,A_\d)$, $\underline\dim_BA:=\dim_B(A,A_\d)$, $\dim_BA:=\dim_B(A,A_\d)$, which are all nonnegative in this case, as well as the values of the usual Minkowski contents of $A$; that is, $\M^{r*}(A):=\M^{r*}(A,A_\d)$, $\M_*^r(A):=\M_*^r(A,A_\d)$,\label{mink}  $\M^r(A):=\M^r(A,A_\d)$,\label{minkc} for any $r\ge0$. (It is easy to see that these values do not depend on the choice of $\d>0$.) Consequently, as was stated in \S\ref{hols}, bounded subsets of $\eR^N$ are special cases of RFDs in $\eR^N$. More specifically, if $A$ is a bounded subset of $\eR^N$, then the associated RFD in $\eR^N$ is $(A,A_\d)$, for any given $\d>0$. This comment extends to the theory of complex dimensions of RFDs developed in this paper, which therefore includes the theory of complex dimensions developed in [LapRa\v Zu2,3].

\begin{remark}\label{grfd}
These definitions extend to a general RFD the definitions used in \cite{Lap1} for an ordinary fractal drum (i.e., a drum with fractal boundary) in the case of Dirichlet boundary conditions; see also [{Lap-vFr2}, \S12.5] and the relevant references therein, including [{Brocar}, {Lap2--3}]). We then have $(A,\O)=(\pa\O,\O)$, where $\O$ is a (nonempty) bounded open subset of $\eR^N$; it follows at once that $D:=D(\zeta_{\pa\O,\O})\ge0$. In fact, we always have that $D\in[N-1,N]$; see \cite{Lap1}. The special case when $N=1$ corresponds to bounded fractal strings, for which we must have that $D\in[0,1]$; see, for example, [{Lap1--3}, {LapPo1--2}, {LapMa1--2}, {HeLap}, {LapLu-vFr1--3}]. Other references related to fractal strings include [DubSep, ElLapMacRo, Fal2, Fr, HamLap, HerLap1--2, Kom, LalLap1--2, Lap4--6, LapLu, LapRa\v Zu1--8, LapRo, LapL\'eRo, LapRo\v Zu, L\'eMen, MorSep, MorSepVi1--2, Ol1--2, Ra1--2, RatWi, Tep1--2].
\end{remark}

\medskip

In the sequel, we use the following notation. Given $\a\in\eR\cup\pm\ty$, we denote, for example, the open right half-plane $\{s\in\Ce:\re s>\a\}$ by $\{\re s>\a\}$, with the obvious convention if $\a=\pm\ty$; namely, for $\a=+\ty$, we obtain the empty set and for $\a=+\ty$, we have all of $\Ce$. Moreover, if $\a\in\eR$, we denote the vertical line $\{s\in\Ce:\re s=\a\}$ by $\{\re s=\a\}$.
\smallskip

We also let $\eN:=\{1,2,3\dots\}$ and $\eN_0:=\eN\cup\{0\}=\{0,1,2,\dots\}$\label{eN}. 
%and $\Ze:=(-\eN)\cup\{0\}\cup\eN=\{\dots,-2,-1,0,1,2,\dots\}$.
The logarithm of a positive real number $x$ with base $a>0$ is denoted by $\log_a x$; i.e., $y=\log_a x$\, $\Leftrightarrow$\, $x=a^y$.\label{logax}
Furthermore,
$\log x:=\log_{\E}x$ is the natural logarithm of $x$; i.e., $y=\log x$\, $\Leftrightarrow$\, $x=\E^y$.

\medskip

Let $f(s):=\int_E\f(x)^s\D\mu(x)$ be a {\em tamed}\index{Dirichlet-type integral (DTI)!tamed|textbf}\index{tamed Dirichlet-type integral (tamed DTI)|textbf}\index{Dirichlet-type integral (DTI)!generalized|textbf} generalized Dirichlet-type integral (DTI),\label{DTI3} in the sense of \cite[Definition 2.12]{dtzf}; that is, $\f\ge0$ $|\mu|$-a.e.\ on $E$ and $\f$ is essentially $|\mu|$-bounded on $E$, where $|\mu|$ is the total variation of the local complex or positive measure on the locally compact space $E$. Then, we denote by $D(f)$ the {\em abscissa of $($absolute$)$ convergence}\label{Dabs}\index{abscissa of!absolute convergence of a DTI, $D(f)$|textbf} of $f$; i.e., $D(f)\in[-\ty,\ty])$ is the infimum of all $\a\in\eR$ such that $\f^\a$ (or, equivalently, $\f(x)^{\re s}$, with $\alpha:=\re s$) is $|\mu|$-integrable.
If $D(f)\in\eR$, the corresponding vertical line $\{\re s=D(f)\}$ in the complex plane is called the {\em critical line}\index{critical line of a Dirichlet-type integral|textbf}\index{Dirichlet-type integral (DTI)!} of $f$.
 Furthermore, we denote by $D_{\rm mer}(f)$ the {\em abscissa of meromorphic continuation}\label{Dmer}\index{abscissa of!meromorphic continuation, $D_{\rm mer}(f)$|textbf} of $f$ (i.e., $D_{\rm mer}(f)\in[-\ty,\ty]$ is the infimum of all $\a\in\eR$ such that $f$ possesses a meromorphic extension to the open right half-plane $\{\re s>\a\}$).
We define $D_{\rm hol}(f)$,\label{Dhol} the {\em abscissa of holomorphic continuation}\index{abscissa of!holomorphic continuation, $D_{\rm hol}(f)$|textbf} of $f$, in exactly the same way as for $D_{\rm mer}(f)$, except for ``meromorphic'' replaced by ``holomorphic''.\footnote{Note that $D_{\rm mer}(f)$ and $D_{\rm hol}(f)$ can be defined for any given meromorphic function $f$ on a domain $U$ of $\Ce$, whereas $D(f)$ is only well defined if $f$ is a tamed DTI. (See [{LapRa\v Zu2}] or \cite[Appendix A]{fzf}.)} In general, for any tamed DTI $f$, we have that
\begin{equation}\label{merhol}
-\ty\le D_{\rm mer}(f)\le D_{\rm hol}(f)\le D(f)\le+\ty.
\end{equation}
(See \cite[Theorem A.2]{fzf} for the next to last inequality, and \cite[Appendix A]{fzf} for the general theory of tamed DTIs.)
\medskip

In order to be able to define the key notions of complex dimensions and of principal complex dimensions (see Equation \eqref{dimc0} in Chapter \ref{rfds} below),
we assume that the function $f$ has the property that it can be extended to a meromorphic function defined on $G\stq\Ce$, 
where $G$ is an open and connected neighborhood of the {\em window}\label{window}\index{window $\bm W$|textbf} $\bm W$ defined by
\begin{equation}
\bm{W}=\{s\in\Ce: \re s\ge S(\im s)\}.
\end{equation}
Here, the function $S:\eR\to(-\ty,D(\zeta_A)]$, called the {\em screen},\label{screen}\index{screen $\bm S$|textbf} is assumed to be Lipschitz continuous.
Note that if $f:=\zeta_{A,\O}$, then the closed set $\bm W$ contains the {\em
critical line}\index{critical line} $\{\re s=D(\zeta_{A,\O})\}$;\label{cr_line_w}\index{critical line!of (absolute) convergence, $\{\re s=D(f)\}$, of a Dirichlet-type integral $f$ (or, in short, the `critical line')}
in fact, it also contains the closed half-plane $\{\re s\ge D(\zeta_{A,\O})\}$. The boundary $\pa{\bm W}$ of the window is also called the {\em screen} and is denoted by~$\bm S$; it is the graph of the function $S$, with the horizontal and vertical axes interchanged. More specifically, we have that
\begin{equation}
{\bm S}=\{S(\tau)+\I \tau:\tau\in\eR\}.
\end{equation}
\medskip

\begin{defn}[Complex dimensions of an RFD]\label{cdim} The set of poles of $f$ located in a window $\bm W$ containing the critical line $\{\re s=D(f)\}$ is denoted by $\po(f,W)$. When the window $\bm W$ is known, or when $\bm{W}:=\Ce$, we often use the shorter notation $\po(f)$\label{pof} instead. 
%If $E$ is a Borel subset of $\eR^N$ and $A$ a subset of $\eR^N$, the poles of $f(s)$ are called $(A,E,\mu)$-{\em complex dimensions} of the ordered triple $(A,E,\mu)$, provided $f$ can be meromorphically extended to an open connected subset containing the closed right half-plane $\{\re s\ge D(f)\}$. 
If $f:=\zeta_{A,\O}$ and $\zeta_{A,\O}$ can be meromorphically extended to a connected open subset containing the closed right half-plane $\{\re s\ge D(\zeta_{A,\O})\}$, the multiset of poles (i.e., we also take the multiplicities of the poles into account) is called the multiset of ({\em visible}) {\em complex dimensions}\index{complex dimensions of an RFD $(A,\O)$|textbf}\index{fractal dimension!complex dimensions of an RFD $(A,\O)$|textbf} of $(A,\O)$. The multiset of complex dimensions on the critical line of $\zeta_{A,\O}$ is called the multiset of {\em principal complex dimensions}\label{dimcr}\index{complex dimensions of an RFD $(A,\O)$!principal complex dimensions, $\dim_{PC}(A,\O)$|textbf}\index{principal complex dimensions, $\dim_{PC}(A,\O)$|textbf}\index{fractal dimension!principal complex dimensions of an RFD, $\dim_{PC}(A,\O)$|textbf} of $(A,\O)$. This multiset is independent of the choice of $\d$, as well as of the meromorphic extension of $\zeta$.
We note that analogous definitions will be used in the case of the relative tube zeta function $\tilde\zeta_{A,\O}$, introduced in Equation \eqref{rel_tube_zeta} below, instead of the relative distance zeta function $\zeta_{A,\O}$. As we shall see, provided $\ov\dim_B(A,\O)<N$, the resulting multiset of principal complex dimensions (resp., of complex dimensions) will be the same for either $\zeta_{A,\O}$ or $\tilde\zeta_{A,\O}$. This will follow from the functional equation connecting $\zeta_{A,\O}$ and $\tilde\zeta_{A,\O}$.
\end{defn}
\medskip

If $\O$ is a given subset of $\eR^N$, its closure and boundary\label{boundary_of_a_set} are denoted by $\ov\O$ and $\pa\O$, respectively. 

For two given sequences of positive numbers $(a_k)_{k\ge1}$ and $(b_k)_{k\ge1}$, we write $a_k\sim b_k$\label{sim}
as $k\to \ty$ if $\lim_{k\to\ty}a_k/b_k=1$. Analogously, if $a(\,\cdot\,)$ and  $b(\,\cdot\,)$ are two real-valued functions defined on an open interval $(0,t_0)$, we write $a(t)\sim b(t)$
as $t\to0^{+}$ if $\lim_{t\to0^{+}}a(t)/b(t)=1$.

We shall also need the relation $\sim$ between Dirichlet-type integral functions (DTIs) and meromorphic functions (that is, $f\sim g$, where $f$ is a DTI and $g$ is a meromorphic function) in the sense of \cite[Definition 2.22]{dtzf}, which we now briefly recall.

\begin{defn}\label{equ}
Let $f$ and $g$ be tamed Dirichlet-type integrals, both admitting a (necessarily unique) meromorphic extension to an open connected subset $U$ of $\Ce$ which contains
the closed right half-plane $\{\re s\ge D(f)\}$. Then, the function $f$ is said to be {\em equivalent}\index{equivalence $f\sim g$|textbf} to $g$, and we write $f\sim g$,
if $D(f)=D(g)$ (and this common value is a real number) and furthermore, the sets of poles of $f$ and $g$, located on the common critical line $\{\re s=D(f)\}$, coincide. Here, the multiplicities of the poles should be taken into account. In other words, we view the set of principal poles ${\mathcal{P}}_c(f)$ of $f$ as a multiset.
More succinctly,
\begin{equation}\label{equ2}
f\sim g\quad\overset{\mbox{\tiny def.}}\Longleftrightarrow\quad D(f)=D(g)\,\,(\in\eR)\q \mathrm{and}\q \po_c(f)=\po_c(g).
\end{equation}
%Furthermore, if $h$ is a meromorphic function defined on the same set $G$ as above, such that $D_{\rm hol}(g)=D_{\rm hol}(h)$ and $\po_c(g)=\po_c(h)$, we shall also write $g\sim h$.
%It is clear that then also $f\sim h$ in the sense of \eqref{equ2}.\footnote{In the eventual sequence $f\sim g\sim h$, the first equivalence is always taken in the sense of \eqref{equ2}, while all the other have the meaning of equivalence in the second sense in Definition \ref{equ}, formulated just below \eqref{equ2}.}
\end{defn}

\chapter{Basic properties of relative distance and tube zeta functions}\label{rfds}

\section{Holomorphicity of relative distance zeta functions}\label{hols}

In the sequel, we denote by $D(\zeta_{A,\O})$ the abscissa of (absolute) convergence of a given relative distance zeta function $\zeta_{A,\O}$. It is clear that $D(\zeta_{A,\O})\in[-\ty,N]$.
Recall from the discussion in Chapter \ref{intro} that an analogous definition can be introduced for much more general, tamed Dirichlet-type integrals, introduced in \cite{dtzf}, as well as in the monograph \cite[especially in Appendix A]{fzf}.

Some of the basic properties of distance zeta functions of RFDs are listed in the following theorem.

\begin{theorem}\label{an_rel}
Let $(A,\O)$ be a relative fractal drum in $\eR^N$. Then the following properties hold$:$
\bigskip 

$($a$)$ The relative distance zeta function $\zeta_{A,\O}$ is holomorphic in the half-plane 
$$
\{\re s>\overline{\dim}_B(A,\Omega)\},
$$ and for those same values of $s$, we have
$$
\zeta'_{A,\O}(s)=\int_{\Omega}d(x,A)^{s-N}\log d(x,A)\, \D x.
$$
\bigskip

$(b)$ The lower bound on the $($absolute$)$ convergence region $\{\re s>\overline{\dim}_B(A,\Omega)\}$ of the relative distance zeta function $\zeta_{A,\O}$ is optimal. In other words, 
\begin{equation}\label{rel_absc}
D(\zeta_{A,\O})=\overline{\dim}_B(A,\Omega).
\end{equation}
\bigskip

$(c)$ If $D:=\dim_B(A,\Omega)$ exists, $D<N$ and $\mathcal{M}_*^D(A,\Omega)>0$, then $\zeta_{A,\O}(s)\to+\infty$ as $s\in\eR$ converges to $D$ from the right. Hence, under these assumptions, we have that\footnote{The abscissa of holomorphic continuation, denoted by $D_{\rm hol}(\zeta_{A,\O})$, is defined in the discussion preceding Equation \eqref{merhol} above.}
\begin{equation}
D(\zeta_{A,\O})=D_{\rm hol}(\zeta_{A,\O})=\dim_B(A,\O).
\end{equation}
\end{theorem}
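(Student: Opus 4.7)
The overall strategy rests on a functional equation relating $\zeta_{A,\O}$ to the tube integral of $(A,\O)$. For each $x\in\O\stq A_\d$ I would rewrite $d(x,A)^{s-N}=\d^{s-N}+(N-s)\int_{d(x,A)}^{\d}t^{s-N-1}\D t$; integrating over $\O$ and applying Fubini yields
\begin{equation*}
\zeta_{A,\O}(s)=\d^{s-N}|\O|+(N-s)\int_0^{\d}t^{s-N-1}|A_t\cap\O|\D t.
\end{equation*}
This identity reduces every statement in the theorem about $\zeta_{A,\O}$ to the corresponding statement about the one-dimensional tube integral $I(s):=\int_0^{\d}t^{s-N-1}|A_t\cap\O|\D t$.

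For part $(a)$, fix $\a>\ov\dim_B(A,\O)$ and choose $r\in(\ov\dim_B(A,\O),\a)$. By \eqref{dimrell}, $\M^{*r}(A,\O)=0$, so $|A_t\cap\O|\le C\,t^{N-r}$ for all small $t$ (the trivial bound $|\O|$ handles larger $t$). Then $I(\a)\le\text{const}\cdot\int_0^{\d}t^{\a-r-1}\D t<\ty$, giving absolute convergence on $\{\re s>\ov\dim_B(A,\O)\}$. Holomorphicity and the differentiation formula for $\zeta'_{A,\O}$ follow from Theorem \ref{Hh} applied to $f(s,x):=d(x,A)^{s-N}$: condition $(1)$ holds because $z\mapsto d(x,A)^z$ is entire for a.e.\ $x\in\O$ (those with $d(x,A)>0$), condition $(2)$ is clear, and condition $(3')$ of Remark \ref{condition3} is verified by running the same estimate with $r$ strictly between $\ov\dim_B(A,\O)$ and the real parts of $s$ in a small neighborhood of an arbitrary base point in the half-plane.

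For part $(b)$, I need the reverse inequality $D(\zeta_{A,\O})\ge\ov\dim_B(A,\O)$. I would argue by contradiction: suppose $\a:=D(\zeta_{A,\O})<\ov\dim_B(A,\O)$ and pick $r\in(\a,\ov\dim_B(A,\O))$. Then by \eqref{dimrell}, $\M^{*r}(A,\O)=+\ty$, so there is a sequence $t_n\to0^+$ with $|A_{t_n}\cap\O|\ge n\,t_n^{N-r}$; after thinning, I may assume $2t_{n+1}\le t_n$, making the intervals $[t_n,2t_n]$ pairwise disjoint subsets of $(0,\d)$. Using monotonicity of $t\mapsto|A_t\cap\O|$ and the elementary identity $\int_{t_n}^{2t_n}t^{\a-N-1}\D t=c\,t_n^{\a-N}$ with $c:=(1-2^{\a-N})/(N-\a)>0$ (positive since $\a<\ov\dim_B(A,\O)\le N$), I get
\begin{equation*}
\int_{t_n}^{2t_n}t^{\a-N-1}|A_t\cap\O|\D t\ge c\,n\,t_n^{\a-r}.
\end{equation*}
Since $\a-r<0$, the right-hand side tends to $+\ty$; summing over $n$ forces $I(\a)=+\ty$, and the functional equation then gives $\zeta_{A,\O}(\a)=+\ty$, contradicting the choice of $\a$.

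For part $(c)$, the hypothesis $\M_*^D(A,\O)>0$ provides constants $c>0$ and $t_0\in(0,\d)$ with $|A_t\cap\O|\ge c\,t^{N-D}$ for all $t\in(0,t_0)$. For real $s>D$ this gives $I(s)\ge c\int_0^{t_0}t^{s-D-1}\D t=c\,t_0^{s-D}/(s-D)\to+\ty$ as $s\to D^+$. Since $D<N$, the prefactor $(N-s)$ in the functional equation is positive and bounded away from $0$ on a right neighborhood of $D$, and $\d^{s-N}|\O|$ remains bounded, so $\zeta_{A,\O}(s)\to+\ty$, showing $D$ is a nonremovable singularity of $\zeta_{A,\O}$. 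Hence $D_{\rm hol}(\zeta_{A,\O})\ge D$, and combined with the chain \eqref{merhol} and part $(b)$, all three abscissae coincide. The main technical point I expect to require care is the subsequence-thinning step in part $(b)$: one must simultaneously guarantee disjointness of the intervals, preserve the $\M^{*r}=+\ty$ witnessing lower bound on $|A_{t_n}\cap\O|$, and correctly handle the sign of $\a-N$ in the elementary integral to keep $c$ positive. The remaining ingredients (Fubini, Theorem \ref{Hh}, and the definitions of the relative Minkowski contents) are essentially bookkeeping.
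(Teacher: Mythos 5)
Your overall strategy---reducing $\zeta_{A,\O}$ to the tube integral $I(s):=\int_0^\d t^{s-N-1}|A_t\cap\O|\,\D t$ via the Mellin-type identity and then estimating in terms of the Minkowski-content definitions---is essentially the route the paper traces back to the bounded-set case of \cite[Theorem~2.5]{dtzf}; indeed the paper notes that the key ingredient for part~$(a)$ is the Harvey--Polking-type implication $\c<N-\ov\dim_B(A,\O)\implies\int_\O d(x,A)^{-\c}\,\D x<\ty$, which your estimate on $I(\a)$ supplies (with $\c=N-\a$). Parts~$(a)$ and~$(c)$ are sound.

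However, there is a genuine gap in part~$(b)$. You set $\a:=D(\zeta_{A,\O})$, show $I(\a)=+\ty$, and infer that $\zeta_{A,\O}(\a)=+\ty$ ``contradicts the choice of $\a$.'' It does not: a tamed Dirichlet-type integral may well diverge at its own abscissa of convergence---$D(f)$ is defined as the \emph{infimum} of the reals $\a$ for which $\f^\a$ is $|\mu|$-integrable, and only for $\re s>D(f)$ is finiteness guaranteed; the Riemann series $\sum n^{-s}$, which diverges at $s=1=D$, is the basic example. So as written the argument produces no contradiction. The repair is minor: assume for contradiction that $D(\zeta_{A,\O})<\ov\dim_B(A,\O)$, then choose real numbers $D(\zeta_{A,\O})<\a<r<\ov\dim_B(A,\O)$. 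With $\a$ strictly above the abscissa, $\zeta_{A,\O}(\a)<\ty$, while your estimate---the thinning, the disjointness of the intervals $[t_n,2t_n]$, and the sign of the constant $c$ are all handled correctly---yields $I(\a)=+\ty$ and hence $\zeta_{A,\O}(\a)=+\ty$ since $\a<N$, a true contradiction. Equivalently, and cleaner: your computation already shows directly that $\zeta_{A,\O}(\a)=+\ty$ for \emph{every} real $\a<\ov\dim_B(A,\O)$, which gives $D(\zeta_{A,\O})\ge\ov\dim_B(A,\O)$ with no indirect argument at all.
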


We omit the proof since it 
follows the same steps as
 in the case when $\O:=A_\d$ (that is, as in the case of a bounded set $A$); see \cite[Theorem 2.5]{dtzf}.
In the proof of part ($a$) of Theorem \ref{an_rel}, we need the following result. {\em For any relative fractal drum $(A,\O)$ in $\eR^N$, we have that}
\begin{equation}
\c<N-\ov\dim_B(A,\O)\q\implies\q\int_{\O}d(x,A)^{-\c}\D x<\ty.
\end{equation}
In the case when $\O=A_\d$, where $\d$ is a positive real number, this implication reduces to the Harvey--Polking result (see \cite{acta}) since in that case, $\ov\dim_B(A,A_\d)=\ov\dim_BA$.
We note that the technical condition \eqref{delta} on the RFD $(A,\O)$ from Definition \ref{rfd} is needed in order that the integrals appearing during the computation of $\zeta_{A,\O}$ are well defined for $\re s$ large enough.

It is clear that the function $\zeta_{A,\O}$ is a tamed generalized Dirichlet-type integral in the sense of \cite[Definition 2.12]{dtzf}. If the RFD $(A,\O)$ is such that the corresponding relative zeta function $\zeta_{A,\O}$ can be meromorphically extended to an open, connected window $\bm W$ containing the critical line $\{\re s=D(\zeta_{A,\O})\}$, then the poles of $\zeta_{A,\O}$ located on the critical line are called the {\em principal complex dimensions}\index{principal complex dimensions, $\dim_{PC}(A,\O)$} of the RFD $(A,\O)$. The corresponding multiset of complex dimensions of the RFD $(A,\O)$ is denoted by $\dim_{PC}(A,\O)$. In other words,
\begin{equation}\label{dimc0}
\dim_{PC}(A,\O):=\po(\zeta_{A,\O}, W)\cap\{\re s=D(\zeta_{A,\O})\}.
\end{equation}
It is easy to see that the multiset $\dim_{PC}(A,\O)$ does not depend on the choice of window $\bm W$.
If $A$ is a {\em bounded} subset of $\eR^N$ and $\d$ a fixed positive real number, the multiset of {\em principal complex dimensions of $A$}\index{fractal dimension!principal complex dimensions of a set, $\dim_{PC}A$|textbf} is defined by 
\begin{equation}\label{dimc}
\dim_{PC}A:=\dim_{PC}(A,A_\d), 
\end{equation}
and this multiset does not depend on the choice of $\d>0$. 

Analogously, for any bounded fractal string $\mathcal L$ for which $D:=\ov\dim_B\mathcal L>0$, we define the multiset of principal complex dimensions of $\mathcal L$ by
\begin{equation}\label{dimcs}
\dim_{PC}{\mathcal L}:=\dim_{PC}(\pa\O,\O), 
\end{equation}
where $\O$ is any geometric realization of the fractal string $\mathcal L$; see Equation \eqref{geomr} which connects the standard geometric zeta function $\zeta_{\mathcal L}$ with the relative distance zeta function $\zeta_{\pa\O,\O}$. It is clear that the multiset $\dim_{PC}(\pa\O,\O)$ does not depend on the choice of the geometric realization $\O$, because the same is true for the relative distance zeta function $\zeta_{\pa\O,\O}$.

\medskip

In light of \cite[Theorem~3.3]{dtzf}, we have the following result.

\begin{theorem}\label{pole1rel}
Assume that $(A,\O)$ is a Minkowski nondegenerate RFD in $\eR^N$,
that is, $0<\M_*^D(A,\O)\le\M^{*D}(A,\O)<\ty$ $($in particular, $\dim_B(A,\O)=D)$,
and $D<N$.
If $\zeta_{A,\O}$ can be meromorphically continued to a connected open neighborhood of $s=D$,
then $D$ is necessarily a simple pole of $\zeta_{A,\O}$, and
\begin{equation}\label{res_M}
(N-D)\M_*^D(A,\O)\le\res(\zeta_{A,\O},D)\le(N-D)\M^{*D}(A,\O).
\end{equation}
Furthermore, if $(A,\O)$ is Minkowski measurable, then
\begin{equation}
\label{pole1minkg1=1}
\res(\zeta_{A,\O}, D)=(N-D)\M^D(A,\O).
\end{equation}
\end{theorem}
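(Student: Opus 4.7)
The plan is to reduce both claims to the analogous ones for the relative tube zeta function $\tilde\zeta_{A,\O}$ via the functional equation connecting $\zeta_{A,\O}$ to $\tilde\zeta_{A,\O}$. For any $x\in\O\subseteq A_\d$, one has the elementary identity
\[
d(x,A)^{s-N}\;=\;\d^{s-N}+(N-s)\int_{d(x,A)}^{\d}t^{s-N-1}\,\D t.
\]
Integrating over $\O$ and swapping the order of integration via Fubini's theorem (which is justified for $\re s>D$ by part $(a)$ of Theorem~\ref{an_rel}), I would obtain
\[
\zeta_{A,\O}(s)\;=\;\d^{s-N}\,|A_\d\cap\O|\;+\;(N-s)\,\tilde\zeta_{A,\O}(s).
\]
Since the first summand is entire and $D<N$, this shows that $s=D$ is a pole of $\zeta_{A,\O}$ iff it is a pole of $\tilde\zeta_{A,\O}$, and at any such pole $\res(\zeta_{A,\O},D)=(N-D)\res(\tilde\zeta_{A,\O},D)$. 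Thus the analysis reduces to the tube side.

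The key analytic step is a Tauberian-type sandwich for $\tilde\zeta_{A,\O}$ at its abscissa of convergence. Writing
\[
(s-D)\,\tilde\zeta_{A,\O}(s)\;=\;(s-D)\int_0^{\d}t^{s-D-1}\,\frac{|A_t\cap\O|}{t^{N-D}}\,\D t
\]
and using the identity $(s-D)\int_0^{t_0}t^{s-D-1}\,\D t=t_0^{s-D}\to 1$ as $s\to D^+$, I would carry out the usual $\e$-split: fix $\e>0$, choose $t_0>0$ so small that the integrand $f(t):=|A_t\cap\O|/t^{N-D}$ lies within $\e$ of its $\liminf$ and $\limsup$ on $(0,t_0)$, and observe that the remaining piece on $(t_0,\d)$ contributes $(\d^{s-D}-t_0^{s-D})$ times a bounded factor, hence tends to $0$ as $s\to D^+$. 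Letting $\e\to 0^+$ then yields
\[
\M_*^D(A,\O)\;\le\;\liminf_{s\to D^+}(s-D)\,\tilde\zeta_{A,\O}(s)\;\le\;\limsup_{s\to D^+}(s-D)\,\tilde\zeta_{A,\O}(s)\;\le\;\M^{*D}(A,\O),
\]
where the limits are taken along $s\in\eR$ approaching $D$ from the right.

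From this sandwich the theorem drops out. The upper bound $\M^{*D}(A,\O)<\ty$ shows that $(s-D)\tilde\zeta_{A,\O}(s)$ remains bounded as $s\to D$, so the meromorphic continuation of $\tilde\zeta_{A,\O}$ has at worst a simple pole at $D$; the lower bound $\M_*^D(A,\O)>0$, combined with part $(c)$ of Theorem~\ref{an_rel}, forces $\zeta_{A,\O}(s)\to+\ty$ as $s\to D^+$ along the reals, ruling out a removable singularity. Hence $D$ is a simple pole, the $\liminf$ and $\limsup$ above coincide with $\res(\tilde\zeta_{A,\O},D)$, and multiplying by $(N-D)$ gives exactly \eqref{res_M}; in the Minkowski measurable case the sandwich collapses to an equality, yielding \eqref{pole1minkg1=1}.

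The main obstacle I expect is the Tauberian sandwich itself: since $t\mapsto|A_t\cap\O|/t^{N-D}$ is in general neither monotonic nor continuous, no classical Hardy--Littlewood Tauberian theorem applies directly, and the $\e$-split estimate has to be carried out uniformly in $s$ as $s\to D^+$ along the real axis. A smaller subtlety is verifying that the meromorphic continuation of $\zeta_{A,\O}$ near $D$ transfers unambiguously to $\tilde\zeta_{A,\O}$, which is precisely what the entirety of the correction term $\d^{s-N}|A_\d\cap\O|$ in the functional equation guarantees.
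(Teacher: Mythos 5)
Your proof is correct and follows the route the paper itself indicates: the theorem is attributed to the companion paper's Theorem~3.3, whose argument is precisely the reduction via the functional equation \eqref{rel_equality} to the tube zeta function $\tilde\zeta_{A,\O}$, followed by the $\e$-split sandwich you carry out showing $\M_*^D(A,\O)\le\liminf_{s\to D^+}(s-D)\tilde\zeta_{A,\O}(s)\le\limsup_{s\to D^+}(s-D)\tilde\zeta_{A,\O}(s)\le\M^{*D}(A,\O)$. One small terminological caveat: the sandwich step is an \emph{Abelian} rather than Tauberian argument — you infer the behavior of the Mellin-type transform from known asymptotics of $t\mapsto|A_t\cap\O|/t^{N-D}$, not the converse — which is exactly why no Hardy--Littlewood Tauberian theorem, and in particular no monotonicity of that quotient, is needed, so the ``main obstacle'' you flag is not actually an obstacle.
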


\medskip

In the following example, we compute the relative distance zeta function of an open ball in $\eR^N$ with respect to its boundary.

\begin{example}\label{spherer}
Let $\O:=B_R(0)$ be the open ball in $\eR^N$ of radius $R$ and let $A=\pa\O$ be the boundary of $\O$, i.e., the $(N-1)$-dimensional sphere of radius $R$. Then, introducing the new variable $\rho=R-r$ and letting $\o_N:=|B_1(0)|_N$,\label{omega_N} the $N$-dimensional Lebesgue measure of the unit ball in $\eR^N$, we have that
$$
\begin{aligned}
\zeta_{A,\O}(s)&=N\o_N\int_0^R(R-r)^{s-N}r^{N-1}\D r =N\o_N\int_0^R\rho^{s-N}(R-\rho)^{N-1}\D \rho\\
&=N\o_N\int_0^R\rho^{s-N}\sum_{k=0}^{N-1}(-1)^k\binom{N-1}{k}R^{N-1-k}\rho^k\D \rho\\
&=N\o_N R^s\sum_{k=0}^{N-1}\binom{N-1}{k}\frac{(-1)^k}{s-(N-k-1)}\\
&=N\o_N R^s\sum_{j=0}^{N-1}\binom{N-1}{j}\frac{(-1)^{N-j-1}}{s-j},
\end{aligned}
$$
for all $s\in\Ce$ with $\re s>N-1$.
It follows that $\zeta_{A,\O}$ can be meromorphically extended to the whole complex plane and is given by
\begin{equation}\label{AON}
\zeta_{A,\O}(s)=N\o_N R^s\sum_{j=0}^{N-1}\binom{N-1}{j}\frac{(-1)^{N-j-1}}{s-j},
\end{equation}
for all $s\in\Ce$.

Therefore, we have that
\begin{equation}\label{dimBAON-1}
\begin{gathered}
\dim_B(A,\O)=D(\zeta_{A,\O})=N-1,\\
\po(\zeta_{A,\O})=\{0,1,\dots,N-1\}\q\mbox{and}\q\dim_{PC} (A,\O)=\{N-1\}.
\end{gathered}
\end{equation}
Furthermore,
\begin{equation}\label{resBN-1}
\res(\zeta_{A,\O},j)=(-1)^{N-j-1}N\o_N\binom {N-1}j R^j,
\end{equation}
for $j=0,1,\dots,N-1$.
 As a special case of \eqref{resBN-1}, for $j=D:=N-1$  we obtain that
\begin{equation}\label{resAN-1}
\res(\zeta_{A,\O},D)=N\o_NR^{N-1}=\M^D(A,\O)=
(N-D)\M^D(A,\O).
\end{equation}
(This is a very special case of Equation \eqref{pole1minkg1=1} appearing in Theorem \ref{pole1rel} above.) The second to last equality in \eqref{resAN-1}
follows from the following direct computation (with $D:=N-1$):
\begin{equation}
\begin{aligned}
\M^D(A,\O)&=\lim_{t\to0^+}\frac{|A_t\cap\O|}{t^{N-D}}\\
&=\lim_{t\to0^+}\frac{\o_NR^N-\o_N(R-t)^N}t=N\o_N R^{N-1}.
\end{aligned}
\end{equation}
Furthermore, recall that $\H^D(A)=\H^{N-1}(\pa B_R(0))=N\o_N R^{N-1}$, where $\H^{N-1}$ is the $(N-1)$-dimensional Hausdorff measure.\label{MHD}\index{Hausdorff measure} Hence, $\M^D(A,\O)=\H^D(A)$.
%In particular, the relative fractal drum $(A,\O)$ is Minkowski measurable and
%\begin{equation}
%\M^D(A,\O)=H^D(A).
%\end{equation}
%Equation \eqref{resAN-1} is a special case of Equation \eqref{m!} in the case when $m=0$ in Theorem \ref{rel_measurable} on page \pageref{rel_measurable}; see also Equation \eqref{rel_measurable}.
\end{example}

\begin{remark}\label{recon}
We note that the usual notions of distance and tube zeta functions, $\zeta_A$ and $\tilde\zeta_A$, associated with a bounded subset $A$ of $\eR^N$, %(see [{LapRa\v Zu1--3}]), 
can be recovered by considering the RFD $(A,A_\d)$, for some $\d>0$:
\begin{equation}\label{tildz}
\begin{gathered}
\zeta_A(s)=\zeta_{A,A_\d}(s)=\int_{A_\d}d(x,A)^{s-N}\,\D x,\\
\tilde\zeta_A(s)=\tilde\zeta_{A,A_\d}(s)=
\int_0^\d t^{s-N-1}|A_t|\,\D t.
\end{gathered}
\end{equation}
Here, $\tilde{\zeta}_{A,A_\d}$ is the relative tube zeta function of the RFD $(A,A_\d)$, as defined in Equation \eqref{rel_tube_zeta} of \S\ref{rtzf} below.
\end{remark}

\section{Cone property of relative fractal drums}\label{cone}
We introduce the cone property of a relative fractal drum $(A,\Omega)$ at a prescribed point, in order to show that the abscissa of convergence $D(\zeta_{A,\O})$
of the associated relative zeta function $\zeta_{A,\O}$ be nonnegative. The main result of this section is stated in Proposition~\ref{conem}. We also construct a class of nontrivial RFDs for which the relative box dimension is 
an arbitrary negative number (see Proposition~\ref{ndim}) or even equal to $-\ty$ (see Corollary \ref{flat} and Remark \ref{flatA}, along with part $(a)$ of Proposition \ref{conem}). 
%Here we stress that the possibility of negative box dimensions (including $-\ty$) for relative fractal drums of the form $(\pa\O,\O)$ and $(\{a\},\O)$, with $a\in\pa\O$, have already been noticed and studied by Tricot in \cite{tri2}, where the notion of {\em inner box dimension of the boundary $\pa\O$} and of the point $a$ with respect to $\O$ is used.

\begin{defn}
Let $B_r(a)$ be a given ball in $\eR^N$, of radius $r$ and center $a$. Let $\pa B$ be the boundary of the ball, which is an $(N-1)$-dimensional sphere,
and assume that $G$ is a closed connected subset contained in a hemisphere of $\pa B$. Intuitively, $G$ is a disk-like subset (`calotte') of a hemisphere
contained in the sphere $\pa B$. We assume that $G$ is open with respect to the relative topology of $\pa B$. The {\em cone\index{cone in $\eR^N$|textbf} $K=K_r(a,G)$ with vertex at $a$}, and of radius $r$, is defined as the interior of the convex hull of the union of $\{a\}$ and $G$.
\end{defn}

\begin{defn}\label{conep}
Let $(A,\Omega)$ be a relative fractal drum in $\eR^N$ such that $\ov A\cap\ov\Omega\ne\emptyset$. We say that the {\em relative fractal drum $(A,\Omega)$ has the cone 
property\index{cone property of an RFD $(A,\O)$|textbf}
at a point} $a\in\ov A\cap\ov\Omega$ if there exists $r>0$ such that $\Omega$ contains a cone $K_r(a,G)$ with vertex at $a$ (and of radius $r$). 
\end{defn}

\begin{remark}\label{3.1.21}
If $a\in\ov A\cap\Omega$ (hence, $a$ is an inner point of $\Omega$), then
the cone property of the relative fractal drum $(A,\Omega)$ is obviously satisfied at this point. So, the cone property is actually interesting
only on the boundary of $\Omega$, that is, at $a\in\ov A\cap\pa\Omega$.
\end{remark}
\smallskip

%\begin{remark}
%It is easy to construct a domain $\Omega$ such that its boundary $\pa\Omega$ (more precisely, the relative fractal drum $(\pa\O,\O)$) does not satisfy the cone property at any of its points. For example, it suffices to consider a bounded domain $\O$ in the plane the boundary $\pa\O$ of which is locally representable as the graph of the Weierstrass function.
%Furthermore, we note that the relative fractal drum $(A,\O)$ described in Example~\ref{3.1.23} just below does not satisfy the cone property.
%\end{remark}

\begin{example}\label{3.1.23}
Given $\alpha>0$, let $(A,\Omega_\alpha)$ be the relative fractal drum in $\eR^2$ defined by $A=\{(0,0)\}$ and 
$\Omega_\alpha=\{(x,y)\in\eR^2:0<y<x^\alpha,\,\,x\in(0,1)\}$.
If $0<\alpha\le1$,  then the cone property of $(A,\Omega)$ is fulfilled at $a=(0,0)$, whereas for $\alpha>1$ it is not satisfied (at $a=(0,0)$).
Using these domains, we can construct a one-parameter family of RFDs with negative relative box dimension; see Proposition~\ref{ndim} below.
\end{example}

In order to prove Proposition~\ref{conem} below, 
we first need an auxiliary result.

\begin{lemma}\label{conel}
Assume that $K=K_r(a,G)$ is an open cone in $\eR^N$ with vertex at $a$ $($and of radius $r>0)$, and $f\in L^1(0,r)$ is a nonnegative function. Then there exists a positive integer $m$, depending only on $N$ and on the opening angle of the cone,  such that
\begin{equation}\label{constm}
\int_{B_r(a)}f(|x-a|)\,\D x\le m\int_K f(|x-a|)\,\D x.
\end{equation}
\end{lemma}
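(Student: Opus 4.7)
The plan is to reduce both integrals to a common one-dimensional radial integral via spherical coordinates centred at the vertex $a$, and then extract the constant $m$ from the ratio of the angular measures.

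First, I would set up the radial parametrization of the cone. Writing each point of $G\st\pa B_r(a)$ as $a+r\theta$ with $\theta\in S^{N-1}$, let $\Theta\st S^{N-1}$ be the set of such directions. From the definition of $K_r(a,G)$ as the interior of the convex hull of $\{a\}\cup G$, a point lies in $K$ (up to boundary, which is irrelevant for Lebesgue integration) precisely when it has the form $a+\rho\theta$ with $\rho\in(0,r)$ and $\theta\in\Theta$. Because $G$ is by assumption open in the relative topology of $\pa B_r(a)$, the angular set $\Theta$ has strictly positive $(N-1)$-dimensional Hausdorff (surface) measure, which I denote by $\s(\Theta)>0$.

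Next, I would apply the change of variables $x=a+\rho\theta$ (with Jacobian $\rho^{N-1}$) on both sides. Since $f\ge0$ and $\rho^{N-1}\le r^{N-1}$, the integrand $f(\rho)\rho^{N-1}$ lies in $L^1(0,r)$ whenever $f\in L^1(0,r)$, so Fubini applies and yields
\[
\int_{B_r(a)}f(|x-a|)\,\D x=\s(S^{N-1})\int_0^r f(\rho)\rho^{N-1}\,\D\rho,
\]
\[
\int_{K}f(|x-a|)\,\D x=\s(\Theta)\int_0^r f(\rho)\rho^{N-1}\,\D\rho.
\]
Dividing and choosing any positive integer $m$ with $m\ge \s(S^{N-1})/\s(\Theta)$ (for instance, $m:=\lceil N\o_N/\s(\Theta)\rceil$, since $\s(S^{N-1})=N\o_N$) immediately gives the inequality \eqref{constm}. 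This $m$ depends only on $N$ and on the solid angle $\s(\Theta)$ of the cone, i.e.\ on $N$ and on the opening angle, as claimed.

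There is no real obstacle: the argument is a straightforward computation in spherical coordinates. The only minor points that deserve a line of justification are $(i)$ the clean radial description of $K$ extracted from the convex-hull definition, and $(ii)$ the positivity of $\s(\Theta)$, which follows from the hypothesis that $G$ is relatively open inside a hemisphere of $\pa B_r(a)$.
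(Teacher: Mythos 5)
Your argument is correct, and it takes a genuinely different route from the paper's. The paper's proof is a covering argument: by compactness of $\pa B$, finitely many rigid-motion copies $G_1,\dots,G_m$ of $G$ cover the sphere, the corresponding cones $K_i$ cover $B_r(a)$, and since $\int_{K_i}f(|x-a|)\,\D x$ is the same for every $i$ (by congruence), one concludes $\int_{B_r(a)}\le\sum_i\int_{K_i}=m\int_K$. There $m$ is simply a covering number and no coordinate computation is needed. You instead pass to polar coordinates at $a$, factor both integrals through the common one-dimensional quantity $\int_0^r f(\rho)\rho^{N-1}\,\D\rho$, and read off $m$ as $\lceil\s(S^{N-1})/\s(\Theta)\rceil$. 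Both are correct and the constants depend only on $N$ and the opening angle, but yours is more explicit and quantitative, while the paper's is coordinate-free. One small over-claim in your write-up: you assert that $K$ is \emph{exactly} the sector $\{a+\rho\theta:\rho\in(0,r),\ \theta\in\Theta\}$, but since $K$ is the interior of a convex hull, its angular projection could in principle be strictly larger than $\Theta$ unless $G$ is a genuine (geodesically convex) spherical cap. Fortunately only one inclusion is needed: the open sector over the relative interior of $\Theta$ lies inside $K$, which gives $\int_K f(|x-a|)\,\D x\ge\s(\mathrm{int}\,\Theta)\int_0^r f(\rho)\rho^{N-1}\,\D\rho$, and since $\mathrm{int}\,\Theta$ has the same positive surface measure as $\Theta$ the rest of your computation goes through unchanged. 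So there is no gap, just a word of caution about the ``precisely.''
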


\begin{proof}
Since the sphere $\pa B$ is compact, there exist finitely many calottes $G_1,\dots,G_m$ contained in the sphere, which are all congruent to $G$ (that is, each $G_i$ can be obtained from $G$
by a rigid motion, for $i=1,\dots,m$), and which cover $\pa B$. Let $K_i=K_r(a,G_i)$, with $i=1,\dots,m$, be the corresponding cones with vertex at $a$.
It is clear that the value of
\begin{equation}
\int_{K_i}f(|x-a|)\,\D x
\end{equation}
does not depend on $i$. Since $B_r(a)=\cup_{i=1}^m K_i$, we then have
\begin{equation}
\int_{B_r(a)}f(|x-a|)\,\D x\le\sum_{i=1}^m\int_{K_i}f(|x-a|)\,\D x=m\int_{K}f(|x-a|)\,\D x,
\end{equation}
as desired.
\end{proof}

\begin{prop}\label{conem}
Let $(A,\Omega)$ be a relative fractal drum in $\eR^N$. Then$:$

%\begin{enumerate}

\medskip

$(a)$ If the sets $A$ and $\Omega$ are a positive distance apart $($i.e., if $d(A,\O)>0)$, 
then $D(\zeta_{A,\O})=-\ty$; that is, $\zeta_{A,\O}$ is an entire function. Furthermore, $\dim_B(A,\Omega)=-\ty$. 

\medskip

$(b)$ Assume that there exists at least one point $a\in\ov A\cap\ov\Omega$ at which the  relative fractal drum $(A,\Omega)$ satisfies the cone property. 
Then $D(\zeta_{A,\O})\ge0$.
%\end{enumerate}
\end{prop}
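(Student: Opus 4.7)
For part $(a)$, I would exploit the uniform lower bound $d(x,A)\ge\eta:=d(A,\O)>0$ available for all $x\in\O$. Since $(A,\O)$ is an RFD, there is $\d>0$ with $\O\stq A_\d$, so $d(x,A)<\d$ on $\O$; consequently, for every $s\in\Ce$,
\begin{equation*}
|d(x,A)^{s-N}|=d(x,A)^{\re s-N}\le\max(\eta^{\re s-N},\d^{\re s-N}),
\end{equation*}
and since $|\O|<\ty$, the integrand is dominated by a constant integrable function on each compact subset of the $s$-plane. Applying Theorem \ref{Hh} (with $E=\O$, $\mu=$ Lebesgue measure, $f(s,x)=d(x,A)^{s-N}$) yields that $\zeta_{A,\O}$ is entire, so $D(\zeta_{A,\O})=-\ty$. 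For the dimension claim, I observe that $A_t\cap\O=\emptyset$ whenever $t<\eta$, so $|A_t\cap\O|=0$ for all sufficiently small $t>0$; hence $\M^{*r}(A,\O)=0$ for every $r\in\eR$, and the definition \eqref{dimrell} gives $\ov\dim_B(A,\O)=-\ty$, and a fortiori $\dim_B(A,\O)=-\ty$. (This is already recorded immediately after \eqref{dimrelu}.)

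For part $(b)$, the idea is to bound $\zeta_{A,\O}(\s)$ from below for real $\s$ and show divergence for all $\s\le 0$. Since $a\in\ov A$, there is a sequence $a_n\in A$ with $a_n\to a$, so $d(x,A)\le|x-a_n|$ for every $x\in\O$; letting $n\to\ty$ gives the pointwise inequality
\begin{equation*}
d(x,A)\le|x-a|\qquad\text{for all }x\in\O.
\end{equation*}
For any real $\s<N$ the exponent $\s-N$ is negative, and so $d(x,A)^{\s-N}\ge|x-a|^{\s-N}$ on $\O$.

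The cone property gives a cone $K:=K_r(a,G)\st\O$, and restricting the integral to $K$ yields
\begin{equation*}
\zeta_{A,\O}(\s)\ge\int_K|x-a|^{\s-N}\,\D x.
\end{equation*}
I then apply Lemma \ref{conel} with $f(\rho)=\rho^{\s-N}$ (which is nonnegative and locally integrable away from $0$, hence can be handled by a truncation argument near $\rho=0$ or directly by monotone convergence since the inequality \eqref{constm} is valid in $[0,+\ty]$) to obtain
\begin{equation*}
\int_K|x-a|^{\s-N}\,\D x\ge\frac{1}{m}\int_{B_r(a)}|x-a|^{\s-N}\,\D x=\frac{N\o_N}{m}\int_0^r\rho^{\s-1}\,\D\rho.
\end{equation*}
The last integral equals $+\ty$ for every $\s\le 0$ (and is finite for $\s>0$), so $\zeta_{A,\O}(\s)=+\ty$ for all real $\s\le0$, which forces $D(\zeta_{A,\O})\ge 0$, as desired.

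The only subtlety I anticipate is the application of Lemma \ref{conel} to the (nonintegrable) function $f(\rho)=\rho^{\s-N}$ for $\s\le 0$: the clean way is to replace $f$ by the truncation $f_\e(\rho):=\rho^{\s-N}\mathbbm{1}_{\{\rho\ge\e\}}$, apply \eqref{constm} to each $f_\e$, and then let $\e\downarrow 0$ using monotone convergence on both sides. Apart from this routine truncation, the argument is a direct comparison, and no deeper tool is needed.
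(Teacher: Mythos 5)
Your proof is correct, and its core is the same as the paper's: part $(a)$ rests on $A_t\cap\O=\emptyset$ for small $t$ (for the Minkowski contents) and on the fact that the distance function is pinched between two positive constants on $\O$ (for entireness; the paper phrases this as an excision $\zeta_{A,\O}-\zeta_{A,A_r\cap\O}$ being entire while $\zeta_{A,A_r\cap\O}\equiv 0$, but the domination you give is a direct and equally valid route to the same $L^1$ estimate required by Theorem~\ref{Hh}). In part $(b)$ you and the paper both restrict to the cone $K\st\O$, use $d(x,A)\le|x-a|$ (correctly justified from $a\in\ov A$ — the parenthetical ``since $a\in\O$'' in the printed proof appears to be a slip, and your sequence argument is the right one), and then invoke Lemma~\ref{conel} to pass from the cone to a full ball. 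Where you diverge is the final step: the paper argues by contradiction, assuming $D(\zeta_{A,\O})<0$, invoking Theorem~\ref{an_rel}$(a)$ to get continuity of $\zeta_{A,\O}$ at $s=0$, establishing the lower bound $\frac{N\o_N}{m}\frac{r^s}{s}$ for $s\in(0,N)$, and letting $s\to0^+$ to reach a contradiction. You instead show directly that the Lebesgue integral $\int_\O d(x,A)^{\s-N}\,\D x$ equals $+\ty$ for every real $\s\le0$, which by the very definition of the abscissa of absolute convergence of a tamed DTI forces $D(\zeta_{A,\O})\ge0$. Your route is slightly more elementary (no appeal to the holomorphicity statement), and your explicit truncation remark — replacing $f(\rho)=\rho^{\s-N}$ by $f_\e=f\mathbbm{1}_{\{\rho\ge\e\}}$, applying \eqref{constm}, and passing $\e\downarrow 0$ by monotone convergence — is exactly the care needed to apply Lemma~\ref{conel} when $f\notin L^1$, a point the paper sidesteps by only working in the range $s\in(0,N)$. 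Both arguments are sound; yours trades a continuity/contradiction step for a direct divergence computation.
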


\begin{proof}
$(a)$ For $r>0$ small enough such that $r< d(A,\Omega)$, where $d(A,\Omega)$ is the distance between $A$ and $\Omega$, we have $A_r\cap\Omega=\emptyset$; so that $\zeta_{A,A_r\cap\Omega}(s)= 0$ for all $s\in\Ce$. Therefore, $D(\zeta_{A,A_r\cap\Omega})=-\ty$. Since $\zeta_{A,\O}(s)-\zeta_{A,A_r\cap\O}(s)$ is an entire function, we conclude that we also have that $D(\zeta_{A,\O})=-\ty$. 
Since $|A_\e\cap\Omega|=0$ for all sufficiently small $\e>0$, we have $\mathcal{M}^r(A,\Omega)=0$ for all $r\in\eR$, and therefore, $\dim_B(A,\Omega)=-\ty$.

\bigskip

$(b)$ Let us reason by contradiction and therefore assume that $D(\zeta_{A,\O})<0$.  In particular, $\zeta_{A,\O}(s)$ is continuous at $s=0$ (because it must then be holomorphic at $s=0$, according to part $(a)$ of Theorem \ref{an_rel} above). By hypothesis, there exists an open cone
$K=K_r(a,G)$, such that $K\subseteq\Omega$.  Using the inequality $d(x,A)\le|x-a|$ (valid for all $x\in\eR^N$ since $a\in\O$) and Lemma~\ref{conel}, we deduce that for any real number $s\in (0,N)$,
$$
\begin{aligned}
\zeta_{A,\O}(s)&\ge\zeta_{A,K}(s)=\int_K d(x,A)^{s-N}\D x\ge\int_K |x-a|^{s-N}\D x\\
&\ge\frac1m\int_{B_r(a)} |x-a|^{s-N}\D x=\frac{N\omega_N}mr^ss^{-1},
\end{aligned}
$$
where $m$ is the positive constant appearing in Equation \eqref{constm} of Lemma~\ref{conel}.
This implies that $\zeta_{A,\O}(s)\to+\ty$ as $s\to0^+$, $s\in\eR$, which contradicts the holomorphicity (or simply, the continuity) of $\zeta_{A,\O}(s)$ at $s=0$.
\end{proof}

The cone condition can be replaced by a much weaker condition, as we will now explain in the following proposition. 

\begin{prop}
Let $(r_k)_{k\ge0}$ be a decreasing sequence of positive real numbers,
converging to zero. We define a subset of the cone $K_r(a,G)$, as follows$:$
\begin{equation}
K_r(a,G,(r_k)_{k\ge0})=\big\{x\in K_r(a,G): |x-a|\in\bigcup_{k=0}^\ty(r_{2k},r_{2k+1})\big\}.
\end{equation}
If we assume that the sequence $(r_k)_{k\ge1}$ is such that
\begin{equation}\label{coner}
\sum_{k=0}^\ty(-1)^kr_k^s\to L>0\quad\mbox{{\rm as}\qs$s\to0^+$, $s\in\eR$,}
\end{equation} 
then the conclusion of Proposition~\ref{conem}$($b$)$ still holds, with the cone condition involving $K:=K(a,G)$ replaced by the above modified cone 
condition, involving the set 
$K':=K_r(a,G,(r_k)_{k\ge0})$ contained in $K$. 
\end{prop}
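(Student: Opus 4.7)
The plan is to adapt the contradiction argument of Proposition~\ref{conem}$(b)$, by replacing the full cone $K$ with the thinned-out set $K'$ and evaluating the corresponding radial integral using the alternating series hypothesis \eqref{coner}.

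Suppose, to the contrary, that $D(\zeta_{A,\O})<0$. Then by part~$(a)$ of Theorem~\ref{an_rel}, $\zeta_{A,\O}$ is holomorphic on an open neighborhood of $s=0$; in particular, $\zeta_{A,\O}(s)$ is continuous at $s=0$ and stays bounded as $s\to 0^+$ along the real axis. This is the property I will contradict.

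By the modified cone condition, $K'\st\O$, and since $a\in\ov A$ one has $d(x,A)\le|x-a|$ for every $x\in\eR^N$. Hence, for every real $s\in(0,N)$,
$$
\zeta_{A,\O}(s)\ge\int_{K'}d(x,A)^{s-N}\,\D x\ge\int_{K'}|x-a|^{s-N}\,\D x.
$$
Passing to polar coordinates centered at $a$, let $\s>0$ denote the $(N-1)$-dimensional surface measure of the projection of $G$ onto the unit sphere, so that
$$
\int_{K_r(a,G)}f(|x-a|)\,\D x=\s\int_0^r f(\rho)\rho^{N-1}\,\D\rho
$$
for every nonnegative measurable $f$. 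Restricting the radial variable to $\bigcup_{k\ge0}(r_{2k+1},r_{2k})$ (the standard-order reading of the intervals in the hypothesis) and applying Tonelli's theorem to the nonnegative integrand, for $s>0$ I would obtain
$$
\int_{K'}|x-a|^{s-N}\,\D x=\s\sum_{k=0}^\ty\int_{r_{2k+1}}^{r_{2k}}\rho^{s-1}\,\D\rho=\frac{\s}{s}\sum_{j=0}^\ty(-1)^j r_j^s,
$$
where the last identity comes from direct integration of $\rho^{s-1}$ on each interval followed by regrouping; the resulting alternating series converges by the Leibniz test, since $(r_j^s)$ is decreasing and tends to $0$ for any $s>0$.

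By the hypothesis \eqref{coner}, $\sum_{j=0}^\ty(-1)^jr_j^s\to L>0$ as $s\to 0^+$, so for all sufficiently small $s>0$ we have $\zeta_{A,\O}(s)\ge\s L/(2s)$, which tends to $+\ty$ as $s\to 0^+$. This contradicts the continuity of $\zeta_{A,\O}$ at $s=0$ and finishes the argument. The only step requiring any genuine care is the polar-coordinate expansion when $K'$ is a countable union of annular cone-sectors; this is handled by Tonelli's theorem since the integrand is nonnegative, so no new ingredient beyond those of Proposition~\ref{conem}$(b)$ is needed.
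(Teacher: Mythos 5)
Your proof is essentially correct and follows the same contradiction strategy as the paper: assume $D(\zeta_{A,\O})<0$, invoke part $(a)$ of Theorem~\ref{an_rel} for continuity at $s=0$, bound $\zeta_{A,\O}(s)$ from below by a radial integral over $K'$, evaluate it as $\frac{c}{s}\sum_{k}(-1)^kr_k^s$ and appeal to hypothesis \eqref{coner} for divergence as $s\to0^+$. Both you and the paper silently read the radial intervals in the natural order $(r_{2k+1},r_{2k})$ rather than the empty $(r_{2k},r_{2k+1})$ that appears verbatim in the statement.

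The one real point of difference is the device used to evaluate the radial integral. The paper reuses Lemma~\ref{conel}, observing that the same congruent-calotte covering argument gives
$\int_{K'}|x-a|^{s-N}\,\D x\ge\frac1m\sum_{k}\int_{B_{r_{2k}}(a)\setminus B_{r_{2k+1}}(a)}|x-a|^{s-N}\,\D x
=\frac{N\omega_N}{ms}\sum_k(-1)^kr_k^s.$
You instead write the exact polar identity $\int_{K_r(a,G)}f(|x-a|)\,\D x=\s\int_0^r f(\rho)\rho^{N-1}\,\D\rho$. That identity is not literally correct, because $K_r(a,G)$ is defined as the interior of the convex hull of $\{a\}\cup G$, not a true radial cone: for directions $\omega$ near the rim of the calotte, the maximal radius $R(\omega)$ at which $a+\rho\omega$ stays inside $K$ drops strictly below $r$, so the outer limit of integration is direction-dependent. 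Since you only need a lower bound, this is easily repaired (restrict to a genuine radial subcone inside $K$ with positive solid angle and some $r''\in(0,r]$, replacing $\s$ and a few of the top intervals), or simply adopt the paper's Lemma~\ref{conel} route. Either way the same conclusion $\zeta_{A,\O}(s)\ge\frac{c}{s}\sum_k(-1)^kr_k^s\to+\infty$ follows and your contradiction stands. So: same strategy, one slightly different (and slightly imprecise, though fixable) technical device for the radial estimate.
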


\begin{proof}
In order to establish this claim, it suffices to use a procedure analogous to the one used in the proof of Proposition~\ref{conem}:
$$
\begin{aligned}
\zeta_{A,\O}(s)&\ge \int_{K'} |x-a|^{s-N}\D x\ge
\frac1m\sum_{k=0}^\ty\int_{B_{r_{2k}}(a)\setminus B_{r_{2k+1}}(s)} |x-a|^{s-N}\D x\\
&=  \frac{N\omega_N}ms^{-1}\sum_{k=0}^\ty (r_{2k}^s-r_{2k+1}^s)=\frac{N\omega_N}ms^{-1}\sum_{k=0}^\ty(-1)^kr_k^s.
\end{aligned} 
$$
For example, if $r_k=2^{-k}$, then condition (\ref{coner}) is fulfilled since
$$
\sum_{k=0}^\ty(-1)^kr_k^s
=\sum_{k=0}^\ty(-1)^k2^{-ks}=
\frac1{1+2^{-s}}\to\frac12\quad\mbox{as\qs$s\to0^+$, $s\in\eR$.}
$$
This concludes the proof of the proposition.
\end{proof}

%\begin{example}\label{disj2}
%Let $\Omega$ be the same as in the preceding example, and define $A=\{(x,y)\in (-1,0)\times\eR:y=|x|^{-\alpha}\}$, where $\alpha\in(0,1)$ is fixed.
%Here, we also have that $\dim_B(A,\Omega)$ exists and
%$$
%\dim_B(A,\Omega)=D(\zeta_{A,\O})=1+\alpha.
%$$ 
%Note that now, the sets $\ov A$ and $\ov\Omega$ are disjoint.
%\end{example}

\setlength{\unitlength}{0.8cm}
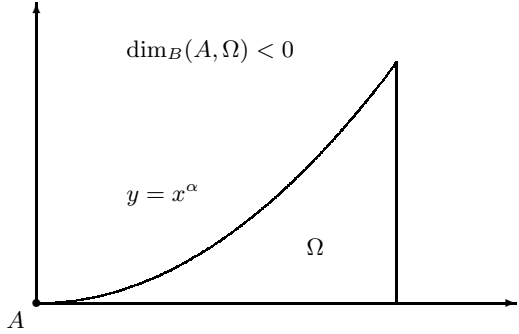
\begin{figure}[t]
\begin{center}
\begin{picture}(8,5)
\put(0,0){\vector(1,0){8}}
\put(0,0){\vector(0,1){5}}
\qbezier(0,0)(3,0)(6,4)
\put(6,0){\line(0,1){4}}
\put(0,0){\circle*{0.13}}
\put(-0.5,-0.4){\small$A$}
\put(1.5,1.7){\small$y=x^{\a}$}
\put(4.5,0.8){\small$\O$}
\put(1.5,4.1){\small$\dim_B(A,\O)<0$}
\end{picture}%
\end{center}
\vskip2mm
\caption{\small A relative fractal drum $(A,\O)$ with negative box dimension $\dim_B(A,\O)=1-\a<0$ (here $\a>1$), due to the `flatness' of the open set $\O$ at $A$; see Proposition \ref{ndim}. This provides a further illustration of {\em the drop in dimension phenomenon} (for relative box dimensions).}
\label{ndim_figure}\index{drop of dimension phenomenon}
\end{figure}

The following proposition (building on Example~\ref{3.1.23} above) shows that the box dimension of a relative fractal drum can be negative, and even take on any prescribed negative value; see Figure \ref{ndim_figure}.

\begin{prop}\label{ndim}
Let $A=\{(0,0)\}$ and 
\begin{equation}
\Omega=\{(x,y)\in\eR^2:0<y<x^\alpha,\,\,x\in(0,1)\},
\end{equation} 
where $\alpha>1$.
$($See Figure \ref{ndim_figure}.$)$
Then the relative fractal drum $(A,\O)$ has a negative box dimension. More specifically, $\dim_B(A,\O)$ exists, the relative fractal drum $(A,\O)$ is Minkowski measurable and
\begin{equation}
\begin{gathered}
\dim_B(A,\Omega)=D(\zeta_{A,\O})=1-\alpha<0,\\
\mathcal{M}^{1-\alpha}(A,\Omega)=\frac1{1+\alpha},\\
D_{\rm mer}(\zeta_{A,\O})\le3(1-\alpha).
\end{gathered}
\end{equation}
Furthermore, $s=1-\alpha$ is a simple pole of $\zeta_{A,\O}$.
\end{prop}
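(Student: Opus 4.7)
The plan is to compute $\zeta_{A,\Omega}$ essentially explicitly and independently compute $|A_t\cap\Omega|$ as $t\to 0^+$, then read off all four conclusions.

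Since $A=\{(0,0)\}$, we have $d((x,y),A)=\sqrt{x^2+y^2}$, so
\begin{equation*}
\zeta_{A,\Omega}(s)=\int_0^1\int_0^{x^\alpha}(x^2+y^2)^{(s-2)/2}\,\D y\,\D x.
\end{equation*}
Substituting $y=xt$ in the inner integral gives
\begin{equation*}
\zeta_{A,\Omega}(s)=\int_0^1 x^{s-1}\,F(x,s)\,\D x,\qquad F(x,s):=\int_0^{x^{\alpha-1}}(1+t^2)^{(s-2)/2}\,\D t.
\end{equation*}
For $x\in(0,1)$ the upper limit $x^{\alpha-1}$ lies in $(0,1)$ (since $\alpha>1$), so I can expand $(1+t^2)^{(s-2)/2}$ by a Taylor expansion with remainder of any order $M$. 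Integrating term by term,
\begin{equation*}
F(x,s)=\sum_{k=0}^{M}\binom{(s-2)/2}{k}\frac{x^{(\alpha-1)(2k+1)}}{2k+1}+R_M(x,s),
\end{equation*}
where $R_M(x,s)=O(x^{(\alpha-1)(2M+3)})$ locally uniformly in $s$. Integrating against $x^{s-1}$ on $(0,1)$ yields
\begin{equation*}
\zeta_{A,\Omega}(s)=\sum_{k=0}^{M}\binom{(s-2)/2}{k}\frac{1}{(2k+1)\bigl(s+(\alpha-1)(2k+1)\bigr)}+H_M(s),
\end{equation*}
with $H_M$ holomorphic on $\{\re s>(2M+3)(1-\alpha)\}$. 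Taking $M=0$ gives $\zeta_{A,\Omega}(s)=\dfrac{1}{s+\alpha-1}+H_0(s)$ with $H_0$ holomorphic on $\{\re s>3(1-\alpha)\}$, which immediately yields $D_{\rm mer}(\zeta_{A,\Omega})\le 3(1-\alpha)$ and that $s=1-\alpha$ is a simple pole with $\res(\zeta_{A,\Omega},1-\alpha)=1$. (Letting $M\to\infty$ in fact gives the entire meromorphic continuation, with simple poles at $(2k+1)(1-\alpha)$, $k\ge 0$.)

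For the Minkowski content I compute $|A_t\cap\Omega|$ directly for small $t>0$. Writing $x=t\xi$, the two constraints $y<x^\alpha$ and $y<\sqrt{t^2-x^2}$ become $v<t^{\alpha-1}\xi^\alpha$ and $v<\sqrt{1-\xi^2}$. The unique crossover $\xi_t\in(0,1)$ satisfies $\xi_t=1-O(t^{2\alpha-2})$, so
\begin{equation*}
|A_t\cap\Omega|=\int_0^{t\xi_t}x^\alpha\,\D x+\int_{t\xi_t}^{t}\sqrt{t^2-x^2}\,\D x=\frac{t^{\alpha+1}}{\alpha+1}+O(t^{3\alpha-1}),
\end{equation*}
because the remaining circular-segment piece is $O(t^2\eta_t^{3/2})=O(t^{3\alpha-1})$, and $3\alpha-1>\alpha+1$ precisely when $\alpha>1$. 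Hence $(A,\Omega)$ is Minkowski measurable, $\dim_B(A,\Omega)=1-\alpha$, and $\mathcal M^{1-\alpha}(A,\Omega)=1/(1+\alpha)$. The equality $D(\zeta_{A,\Omega})=1-\alpha$ then follows from part~(b) of Theorem~\ref{an_rel}, and the residue computation $1=(2-(1-\alpha))\cdot\frac{1}{1+\alpha}$ is consistent with Theorem~\ref{pole1rel}.

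The only delicate step is justifying the term-by-term integration and the uniform remainder bound in the expansion of $F(x,s)$; the rest is essentially one-dimensional calculus. I would handle this by fixing a compact $K\subset\Ce$ and noting that for $t\in[0,1/2]$ the Taylor remainder $R_M$ satisfies $|R_M(t,s)|\le C_M(K)\,t^{2(M+1)}$, which makes the $x$-integral absolutely convergent on $\{\re s>(2M+3)(1-\alpha)\}\setminus\{\text{poles already extracted}\}$ and allows application of Theorem~\ref{Hh} to conclude holomorphy of $H_M$.
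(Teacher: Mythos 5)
Your argument is correct, and it takes a genuinely different route from the paper. The paper never computes $\zeta_{A,\O}$ directly: it estimates the tube function by sandwiching $|A_{\e}\cap\O|$ between $\frac{1}{\alpha+1}x(\e)^{\alpha+1}$ and $\frac{1}{\alpha+1}\e^{\alpha+1}$, where $x(\e)$ is the abscissa of the intersection of $\partial B_\e(0)$ with the curve $y=x^\alpha$ (so $x(\e)^2+x(\e)^{2\alpha}=\e^2$), shows $x(\e)/\e\to1$ with error $O(\e^{2\alpha-2})$ by the binomial expansion of $(1+x(\e)^{2\alpha-2})^{-(\alpha+1)/2}$, and then feeds $|A_\e\cap\O|=\e^{2-D}\big((\alpha+1)^{-1}+O(\e^{2\alpha-2})\big)$ into Theorem~\ref{rel_measurable} (with $h\equiv 1$, $m=0$, exponent $2\alpha-2$) and Theorem~\ref{an_rel}(b). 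You instead compute the zeta function head-on via the substitution $y=xt$, which decouples the variables into a Mellin-type integral $\int_0^1 x^{s-1}F(x,s)\,\di x$, and a Taylor expansion of $(1+t^2)^{(s-2)/2}$ produces the pole structure directly. Your approach yields a \emph{stronger} conclusion than the proposition states---a full meromorphic continuation to $\Ce$ with simple poles at $(2k+1)(1-\alpha)$---whereas the paper's method only sees down to $3(1-\alpha)$. The tube-function estimate you do at the end is essentially the same as the paper's sandwich argument and is needed to get the Minkowski content (the residue value $1$ alone does not give $\M^D$ without first establishing Minkowski measurability, so you correctly do both). One small point to tighten: you bound the Taylor remainder of $(1+t^2)^{(s-2)/2}$ uniformly for $t\in[0,1/2]$, but in $F(x,s)$ the variable $t$ runs up to $x^{\alpha-1}$, which approaches $1$ as $x\to1$. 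This is harmless, since the $x$-integral is only singular near $x=0$ and there $x^{\alpha-1}<1/2$; for $x$ in a fixed neighborhood of $1$ one simply bounds $R_M(x,s)$ by a constant depending on the compact set $K$. You would need to say this explicitly (split the $x$-integral at, say, $x=(1/2)^{1/(\alpha-1)}$) before invoking Theorem~\ref{Hh}, but this is routine and does not affect the conclusion.
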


\begin{proof}
First note that $A_\e=B_\e((0,0))$. Therefore, for every $\e>0$, we have
$$
|A_\e\cap\Omega|\le\int_0^\e x^\alpha \D x=\frac{\e^{\alpha+1}}{\alpha+1}.
$$
If we choose a point $(x(\e),y(\e))$ such that
$$(x(\e),y(\e))\in\pa (A_\e)\cap\{(x,y):y=x^\alpha,\,\,x\in(0,1)\},$$
then the following equation holds:
\begin{equation}\label{ae}
x(\e)^2+x(\e)^{2\alpha}=\e^2.
\end{equation}
It is clear that
$$
|A_\e\cap\Omega|\ge\int_0^{x(\e)} x^\alpha \D x=\frac{x(\e)^{\alpha+1}}{\alpha+1}.
$$
Letting $D:=1-\alpha$, we conclude that
\begin{equation}\label{Malpha}
\frac1{\alpha+1}\Big(\frac{x(\e)}\e\Big)^{\alpha+1}\le\frac{|A_\e\cap\Omega|}{\e^{2-D}}\le\frac1{\alpha+1},\q\mbox{for all\qs$\e>0$}.
\end{equation}
We deduce from (\ref{ae}) that $x(\e)\sim\e$ as $\e\to0^+$, since
\begin{equation}\label{xee1}
\frac{x(\e)}{\e}=(1+x(\e)^{2(\alpha-1)})^{-1/2}\to1\quad\mbox{as\qs$\e\to0^+$;}
\end{equation}
therefore, (\ref{Malpha}) implies that $\dim_B(A,\Omega)=D$
and $\mathcal{M}^D(A,\Omega)=1/(\alpha+1)$. 

Using (\ref{Malpha}) again, we have that
\begin{equation}\label{ae2}
0\le f(\e):=\frac1{\alpha+1}-\frac{|A_\e\cap\Omega|}{\e^{2-D}}\le \frac1{\alpha+1}\Big(1-\Big(\frac{x(\e)}{\e}\Big)^{\alpha+1}\Big).
\end{equation}
Using (\ref{xee1})
and the binomial expansion,  we conclude that
$$
\Big(\frac{x(\e)}{\e}\Big)^{\alpha+1}=1-\frac{\alpha+1}2x(\e)^{2\alpha-2}+o(x(\e)^{2\alpha-2})\quad\mbox{as\qs$\e\to0^+$.}
$$
Hence, we deduce from (\ref{ae2}) that
$$
f(\e)=O(x(\e^{2\alpha-2}))=O(\e^{2\alpha-2})\quad\mbox{as\qs$\e\to0^+$.}
$$
Since $|A_\e\cap\Omega|=\e^{2-D}((\alpha+1)^{-1}+f(\e))$, 
%using Theorem~\ref{measurable} (adjusted to the case of RFDs, see 
%Theorem~\ref{rel_measurable}), 
we conclude that 
$$
D_{\rm mer}(\zeta_{A,\O})\le D-(2\alpha-2)=3(1-\alpha).
$$ 
Furthermore, $s=D$ is a simple pole.

Finally, we note that the equality $D(\zeta_{A,\O})=D$ follows from (\ref{rel_absc}).
\end{proof}

%\begin{lemma}\label{dimd} 
%Assume that $(A,\Omega)$ is a fractal drum in $\eR^N$.
%If $\Omega'$ is an open subset of $\eR^N$ such that 
%$$
%A_{\delta_1}\cap\Omega\stq\Omega'\stq A_{\delta_2}\cap\Omega
%$$
%for some positive numbers $\delta_1$ and $\delta_2$, $\delta_1<\delta_2$, then
%\begin{equation}\label{Omega'}
%\underline\dim_B(A,\Omega)=\underline\dim_B(A,\Omega')\q\mbox{$\operatorname{and}$}\q
%\ov\dim_B(A,\Omega)=\ov\dim_B(A,\Omega').
%\end{equation}
%\end{lemma}

%\begin{proof}
%If $\delta$ is any positive number, then clearly,
%\begin{equation}
%\underline\dim_B(A,\Omega)=\underline\dim_B(A,A_\delta\cap\Omega).
%\end{equation}
%Indeed, we have $|A_\e\cap\Omega|=|A_\e\cap(A_\delta\cap\Omega)|$, for all $\e\in(0,\delta)$,
%which implies that $\mathcal{M}_*^r(A,\Omega)=\mathcal{M}_*^r(A,A_\delta\cap\Omega)$ for all $r\in\eR$. Therefore, in light of~\eqref{dimrel2},
%$$
%\begin{aligned}
%\underline\dim_B(A,\Omega)&=\underline\dim_B(A,A_{\delta_1}\cap\Omega)\le
%\underline\dim_B(A,\Omega')\\
%&\le
%\underline\dim_B(A,A_{\delta_2}\cap\Omega)=\underline\dim_B(A,\Omega).
%\end{aligned}
%$$
%This proves the first equality in Equation (\ref{Omega'}). The proof of the second equality is entirely analogous.
%\end{proof}

In the following lemma, we show that for any $\d>0$, the sets of principal complex dimensions of the RFDs $(A,\O)$ and $(A,A_\d\cap\O)$ coincide. 

\begin{lemma}\label{dimd} 
Assume that $(A,\O)$ is a relative fractal drum in $\eR^N$. Then, for any $\d>0$, we have 
\begin{equation}\label{excision}
\zeta_{A,\O}\sim\zeta_{A,A_\d\cap\O},
\end{equation}
where the equivalence relation $\sim$ is given in Definition \ref{equ}.
In particular, 
\begin{equation}
\dim_{PC}(A,\O)=\dim_{PC}(A,A_\d\cap\O)
\end{equation}
 and therefore, 
\begin{equation}
\ov\dim_B(A,\O)=\ov\dim_B(A,A_\d\cap\O).
\end{equation}
Here, $A_\d$, the $\d$-neighborhood of $A$, can be taken with respect to any norm on $\eR^N$.
$($This extra freedom will be used in Corollary \ref{flat} just below.$)$
\end{lemma}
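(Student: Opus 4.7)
\textbf{Proof plan for Lemma \ref{dimd}.}

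The plan is to decompose $\O$ into the inner part $A_\d \cap \O$ and the outer part $\O \setminus A_\d$, and show that the contribution of the outer part is an entire function of $s$, so it cannot affect either the abscissa of convergence or the location of any pole.

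First I would write, for $\re s$ sufficiently large,
\begin{equation*}
\zeta_{A,\O}(s) \;=\; \zeta_{A,\,A_\d\cap\O}(s) \;+\; \zeta_{A,\,\O\setminus A_\d}(s),
\end{equation*}
which is immediate from the additivity of the Lebesgue integral over the disjoint union $\O = (A_\d\cap\O)\sqcup(\O\setminus A_\d)$. The key step is then to establish that $\zeta_{A,\,\O\setminus A_\d}$ is entire. For this, I would exploit two bounds on $d(x,A)$ when $x\in\O\setminus A_\d$: from below, $d(x,A)\ge\d$ by definition of $A_\d$; from above, since $(A,\O)$ is an RFD, there exists $\d_0>0$ (the one supplied by Definition \ref{rfd}) such that $\O\stq A_{\d_0}$, hence $d(x,A)<\d_0$ on $\O$. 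Therefore, for $s$ in any compact $K\st\Ce$,
\begin{equation*}
\bigl|d(x,A)^{s-N}\bigr| \;\le\; \max\!\bigl(\d^{\,\re s - N},\, \d_0^{\,\re s - N}\bigr) \;\le\; C_K
\end{equation*}
uniformly in $x\in\O\setminus A_\d$, for some constant $C_K$ depending only on $K$ (and on $\d$, $\d_0$, $N$). Since $|\O\setminus A_\d|\le|\O|<\ty$, the constant function $C_K$ is integrable over $\O\setminus A_\d$, and Theorem \ref{Hh} (applied with $E:=\O\setminus A_\d$, $\mu$ the Lebesgue measure, $f(s,x):=d(x,A)^{s-N}$, and $V:=\Ce$) gives that $\zeta_{A,\,\O\setminus A_\d}$ is holomorphic on all of $\Ce$.

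Consequently, $\zeta_{A,\O}-\zeta_{A,\,A_\d\cap\O}$ extends to an entire function, so the two meromorphic functions $\zeta_{A,\O}$ and $\zeta_{A,\,A_\d\cap\O}$ have exactly the same poles (with the same multiplicities) wherever they can be meromorphically extended, and exactly the same abscissa of (absolute) convergence. In particular, their principal pole sets coincide, establishing $\zeta_{A,\O}\sim\zeta_{A,\,A_\d\cap\O}$ in the sense of Definition \ref{equ}, and hence $\dim_{PC}(A,\O)=\dim_{PC}(A,\,A_\d\cap\O)$. The equality $\ov\dim_B(A,\O)=\ov\dim_B(A,\,A_\d\cap\O)$ then follows from the identification $D(\zeta_{A,\O})=\ov\dim_B(A,\O)$ provided by part $(b)$ of Theorem \ref{an_rel}, applied to both RFDs.

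Finally, to justify that $A_\d$ may be taken with respect to \emph{any} norm on $\eR^N$: any two norms $\|\cdot\|_1$ and $\|\cdot\|_2$ on $\eR^N$ are bi-Lipschitz equivalent, so the corresponding $\d$-neighborhoods of $A$ satisfy an inclusion of the form $A_\d^{(1)}\st A_{c\d}^{(2)}\st A_{c^2\d}^{(1)}$ for some $c>0$; applying the argument above with the pair of thresholds $(\d,c\d)$ (respectively $(c\d,c^2\d)$) yields the asserted norm-independence. The main (minor) obstacle is really just the verification of the uniform bound on $|d(x,A)^{s-N}|$ for \emph{both} large and small $\re s$; this is where the finite-measure assumption on $\O$ together with the RFD inclusion $\O\stq A_{\d_0}$ plays a crucial role, since without the upper bound $d(x,A)<\d_0$ the integrand could blow up as $\re s\to+\ty$ even though $\O$ has finite measure.
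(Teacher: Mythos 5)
Your proof is correct and takes essentially the same route as the paper: decompose $\O$ into $A_\d\cap\O$ and $\O\setminus A_\d$, use the two-sided bound $\d\le d(x,A)<\d_0$ (from Definition \ref{rfd}) together with $|\O|<\ty$ to conclude the difference of zeta functions is entire, and pass the norm-independence through the equivalence of norms on $\eR^N$. The paper reaches the same conclusion with the same key observations, merely stating the entirety of the difference without invoking Theorem \ref{Hh} explicitly.
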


\begin{proof} Recall that according to the definition of a relative fractal drum $(A,\O)$, there exists $\d_1>0$ such that $d(x,A)<\d_1$ for all $x\in\O$; see Definition \ref{rfd}.
On the other hand, we have that $d(x,A)>\d$ for all $x\in \O\setminus A_\d$.\label{setminus} Therefore, 
%using Theorem \ref{an2} with $\f(x):=d(x,A)$ and $\D\mu(x):=d(x,A)^{-N}\D x$, 
we conclude that the difference
$$
\zeta_{A,\O}(s)-\zeta_{A,A_\d\cap\O}(s)=\int_{\O\setminus A_\d}d(x,A)^{s-N}\D x
$$
defines an entire function. This proves the desired equivalence in \eqref{excision}. The remaining claims of the lemma follow immediately from this equivalence. Finally, the fact that any norm on $\eR^N$ can be chosen to define $A_\d$ follows from the equivalence of all the norms on $\eR^N$.
\end{proof}

The following result provides an example of a nontrivial relative fractal drum $(A,\Omega)$ such that $\dim_B(A,\Omega)=-\ty$.
It suffices to construct a domain $\Omega$ of $\eR^2$ which is {\em flat}\index{flatness of a relative fractal drum} in a neighborhood of one of its boundary points.

\begin{cor}[{A maximally flat RFD}]\label{flat}
Let $A=\{(0,0)\}$ and 
\begin{equation}\label{OmegaA'}
\Omega'=\{(x,y)\in\eR^2:0<y<{\E}^{-1/x},\,\,0<x<1\}.
\end{equation}
Then $\dim_B(A,\Omega')$ exists and
\begin{equation}
\dim_B(A,\Omega')=D(\zeta_{A,\O'})=-\ty.
\end{equation}
\end{cor}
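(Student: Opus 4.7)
My plan is to reduce the statement to Proposition~\ref{ndim} by a monotonicity comparison, exploiting the fact that the function $x\mapsto\E^{-1/x}$ is flatter at $0$ than any monomial $x^\a$. First, I would fix an arbitrary $\a>1$ and set $\O_\a:=\{(x,y)\in\eR^2:0<y<x^\a,\,0<x<1\}$ as in Proposition~\ref{ndim}. Since $\E^{-1/x}=o(x^\a)$ as $x\to 0^+$, there exists $\d=\d(\a)\in(0,1)$ with $\E^{-1/x}<x^\a$ for every $x\in(0,\d)$. Because $A=\{(0,0)\}$, the $\d$-neighborhood $A_\d$ coincides with the open ball $B_\d(0)$, and the above pointwise inequality immediately forces the inclusion $\O'\cap A_\d\stq\O_\a$.

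Next, I would use the nonnegativity of the integrand defining $\zeta_{A,\cdot}(\s)$ for real $\s$ to promote this set inclusion to an inequality of zeta functions. For every $\s\in\eR$,
\begin{equation*}
\zeta_{A,\O'\cap A_\d}(\s)=\int_{\O'\cap A_\d}d(x,A)^{\s-2}\,\D x\le\int_{\O_\a}d(x,A)^{\s-2}\,\D x=\zeta_{A,\O_\a}(\s),
\end{equation*}
hence $D(\zeta_{A,\O'\cap A_\d})\le D(\zeta_{A,\O_\a})$. By the excision Lemma~\ref{dimd}, the left-hand abscissa equals $D(\zeta_{A,\O'})$, while Proposition~\ref{ndim} identifies the right-hand abscissa as $1-\a$. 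Thus
\begin{equation*}
D(\zeta_{A,\O'})\le 1-\a.
\end{equation*}

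Since $\a>1$ was arbitrary, letting $\a\to+\ty$ yields $D(\zeta_{A,\O'})=-\ty$. Theorem~\ref{an_rel}$(b)$ then gives $\ov\dim_B(A,\O')=D(\zeta_{A,\O'})=-\ty$; since the lower box dimension is bounded above by the upper, we conclude that $\dim_B(A,\O')$ exists and equals $-\ty$, as claimed. I do not expect any real obstacles here: the only potentially subtle point is making sure that the comparison is carried out on a neighborhood of $A$ small enough for the inclusion $\O'\cap A_\d\stq\O_\a$ to hold, which is exactly what the freedom granted by Lemma~\ref{dimd} is designed for.
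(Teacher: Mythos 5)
Your proof is correct and takes essentially the same route as the paper: fix $\a>1$, use $\E^{-1/x}=o(x^\a)$ as $x\to 0^+$ to obtain a local inclusion of $\O'$ in $\O_\a$, invoke Lemma~\ref{dimd} and Proposition~\ref{ndim} to deduce $D(\zeta_{A,\O'})\le 1-\a$, and then let $\a\to+\ty$. The only cosmetic difference is that you compare abscissae of convergence and convert to box dimension via Theorem~\ref{an_rel}$(b)$, whereas the paper argues directly at the level of $\ov\dim_B$ and uses the $\ell^\ty$-ball in Lemma~\ref{dimd} so that the truncation of $\O'$ is exactly $\O'\cap A_\d$; both are valid instances of the same argument.
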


\begin{proof}
%Let $\Omega_\a=\{(x,y)\in\eR^2:0<y<x^\alpha\,\,\}$. %It is well known that
Let us fix $\alpha>1$. Then, by l'Hospital's rule, we have that
$$
\lim_{x\to0^+}\frac{\E^{-1/x}}{x^\a}=\lim_{t\to+\ty}\frac{t^\a}{\E^t}=0. 
$$
 Hence, there exists $\delta=\delta(\alpha)>0$
such that $0<\E^{-1/x}<x^\a$ for all $x\in(0,\d)$; that is,
$$
\Omega_{\d(\a)}'\st\Omega_{\d(\alpha)},
$$
where
$$
\Omega'_{\delta(\alpha)}:=\{(x,y)\in\eR^2:0<y<{\E}^{-1/x},\,\,0<x<\delta(\alpha)\,\}
$$
and
$$
\Omega_{\d(\a)}:=\{(x,y)\in\eR^2:0<y<x^\alpha,\,\,0<x<\delta(\alpha)\,\}.
$$
Using Lemma~\ref{dimd}, with $\O'$ instead of $\O$ and with the $\ell^\ty$-norm on $\eR^2$ instead of the usual Euclidean norm (note that $\Omega'_{\delta(\alpha)}=\O'\cap B_{\d(\a)}(0)$,
where $B_\d(0):=\{(x,y)\in\eR^2:|(x,y)|_{\ty}<\d\}$ and $|(x,y)|_\ty:= \max\{|x|,|y|\}$), along with Proposition~\ref{ndim}, we see that 
$$
\ov\dim_B(A,\Omega')=\ov\dim_B(A,\Omega'_{\delta(\alpha)})\le\dim_B(A,\Omega_{\d(\alpha)})=1-\alpha.
$$
The claim follows by letting $\alpha\to+\ty$, since then, we have that
$$
-\ty\le\underline\dim_B(A,\Omega')\le\ov\dim_B(A,\Omega')=-\ty. 
$$
We conclude, as desired, that $\dim_B(A,\O)$ exists and is equal to $-\ty$.
\end{proof}

\begin{figure}[t]
\begin{center}
\includegraphics[width=7cm]{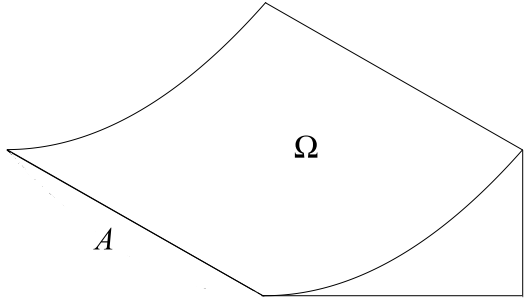}
\caption{\small A relative fractal drum $(A,\Omega)$ with infinite flatness, as described in Remark \ref{flatA}.
In other words, $\O$ has infinite flatness near $A$; equivalently, $\dim_B(A,\O)=-\ty$, which provides an even more dramatic illustration of {\em the drop in dimension phenomenon}\index{drop of dimension phenomenon|textbf} (for relative box dimensions).}
\label{flat_A}
\end{center}
\end{figure}

\begin{remark}\label{flatA} ({\em Flatness and `infinitely sharp blade'}).\index{flatness of a relative fractal drum|textbf}
It is easy to see that Corollary~\ref{flat} can be significantly generalized. For example, it suffices to assume that $a$ is a point
on the boundary of $\Omega$ such that the {\em flatness property of $A$ $($at $a$$)$ relative to} $\Omega$ holds. This can even be formulated in terms of subsets $A$ of the boundary of $\O$.
We can imagine a bounded open set $\Omega\st\eR^3$ with a Lipschitz boundary $\pa\O$, except on a subset $A\st\pa\O$, which may be a line segment, 
near which $\Omega$ is flat; see Figure \ref{flat_A}. A simple construction of such a set is $\Omega=\Omega'\times(0,1)$, where $\Omega'$ is given as in Corollary~\ref{flat}, and $A=\{(0,0)\}\times(0,1)$; see Equation (\ref{OmegaA'}).
Note that this domain is not Lipschitz near the points of $A$, and not even H\"olderian. The {\em flatness of a relative fractal drum}\label{flatness} $(A,\Omega)$ can be defined by
$$
\operatorname{fl}(A,\O)=\left(\ov\dim_B(A,\O)\right)^-,
$$
where $(r)^{-}:=\max\{0,-r\}$ is the negative part of a real number $r$. {\em We say that the flatness of $(A,\O)$ is nontrivial if $\operatorname{fl}(A,\O)>0$, that is, if $\ov\dim_B(A,\O)<0$}.
In the example mentioned just above, we have a relative fractal drum $(A,\Omega)$ with infinite flatness, i.e., with $\operatorname{fl}(A,\O)=+\ty$.
Intuitively, it can be viewed as an `ax' with an `infinitely sharp' blade.
\end{remark}

\section{Scaling property of relative distance zeta functions}\label{scaling_property}
We start this section with the following result, which shows that if $(A,\Omega)$ is a given relative fractal drum, then for any 
 $\lambda>0$, the zeta function $\zeta_{\g A,\g\O}(s)$ of the scaled relative fractal drum $\g(A,\O):=(\lambda A,\lambda\Omega)$ is equal to the zeta function $\zeta_{A,\O}(s)$
 of $(A,\Omega)$ multiplied by~$\lambda^s$. 

\begin{theorem}[{\rm Scaling property of relative distance zeta functions}]\label{scaling}\index{scaling property!of the relative distance zeta functions|textbf}
Let $\zeta_{A,\O}(s)$ be the relative distance zeta function of an RFD $(A,\O)$. Then, for any positive real number $\lambda$, we have that $D(\zeta_{\g A,\g \O)})=D(\zeta_{A,\O})=\overline{\dim}_B(A,\O)$ and
\begin{equation}\label{zeta_scaled}
\zeta_{\g A,\g\O}(s)=\lambda^s\zeta_{A,\O}(s),
\end{equation}
for all $s\in\Ce$ with $\re s>\overline{\dim}_B(A,\O)$ and any $\g>0$. $($See also Corollary~\ref{10.151/2} below for a more general statement.$)$
\end{theorem}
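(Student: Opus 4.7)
The plan is to establish the identity \eqref{zeta_scaled} by a direct change of variables in the defining integral, and then deduce the equality of abscissae of convergence as an immediate consequence.

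First I would start from the definition
\[
\zeta_{\lambda A,\lambda\Omega}(s)=\int_{\lambda\Omega}d(x,\lambda A)^{s-N}\,\D x
\]
and perform the substitution $x=\lambda y$, so that $\D x=\lambda^N\D y$ and $y$ ranges over $\Omega$. The key geometric observation is the scaling behavior of the Euclidean distance: for any $y\in\eR^N$,
\[
d(\lambda y,\lambda A)=\inf\{|\lambda y-\lambda a|:a\in A\}=\lambda\inf\{|y-a|:a\in A\}=\lambda\,d(y,A).
\]
Substituting this into the integral gives
\[
\zeta_{\lambda A,\lambda\Omega}(s)=\int_{\Omega}\bigl(\lambda\,d(y,A)\bigr)^{s-N}\lambda^N\,\D y=\lambda^{s-N}\lambda^N\int_{\Omega}d(y,A)^{s-N}\,\D y=\lambda^s\zeta_{A,\Omega}(s),
\]
as required. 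To make this rigorous we must ensure $\re s$ is large enough that both integrals converge absolutely; but once we know $\re s>\ov{\dim}_B(A,\O)$ ensures absolute convergence of the right-hand side (by part~$(a)$ of Theorem~\ref{an_rel}), the change of variables is valid for such $s$.

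For the abscissa statement, note that the scaling factor $\lambda^s$ is an entire, nowhere-vanishing function of $s$, and hence the Dirichlet-type integrals $\zeta_{\lambda A,\lambda\Omega}$ and $\zeta_{A,\Omega}$ have the same domain of absolute convergence; therefore $D(\zeta_{\lambda A,\lambda\Omega})=D(\zeta_{A,\Omega})$. Alternatively, one may argue geometrically that $\ov{\dim}_B(\lambda A,\lambda\Omega)=\ov{\dim}_B(A,\Omega)$, since $|(\lambda A)_t\cap\lambda\Omega|=\lambda^N|A_{t/\lambda}\cap\Omega|$ and therefore $\M^{*r}(\lambda A,\lambda\Omega)=\lambda^r\M^{*r}(A,\Omega)$, from which the invariance of $\ov{\dim}_B$ under scaling is immediate. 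Combining either argument with part~$(b)$ of Theorem~\ref{an_rel} gives the chain of equalities $D(\zeta_{\lambda A,\lambda\Omega})=D(\zeta_{A,\Omega})=\ov{\dim}_B(A,\O)$.

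I expect no serious obstacle: the proof is essentially a one-line change of variables, and the only thing that needs mild care is justifying absolute convergence of both sides simultaneously so that Fubini/substitution applies without issue. This verification is standard and rests on the already-established Theorem~\ref{an_rel}.
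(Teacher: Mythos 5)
Your proof is correct, and the core change-of-variables computation establishing $\zeta_{\lambda A,\lambda\Omega}(s)=\lambda^s\zeta_{A,\Omega}(s)$ is exactly the paper's argument (same substitution $x=\lambda y$, same use of $d(\lambda y,\lambda A)=\lambda\,d(y,A)$, same invocation of Theorem~\ref{an_rel}). Where you diverge is in concluding $D(\zeta_{\lambda A,\lambda\Omega})=D(\zeta_{A,\Omega})$: the paper first notes that \eqref{zeta_scaled} shows $\zeta_{\lambda A,\lambda\Omega}$ is holomorphic on $\{\re s>\ov{\dim}_B(A,\O)\}$, hence $D(\zeta_{\lambda A,\lambda\O})\le D(\zeta_{A,\O})$, and then obtains the reverse inequality by the symmetry trick of replacing $\lambda$ with $\lambda^{-1}$ (applied to the scaled drum). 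You instead argue directly---either by observing that the absolute integrals themselves scale by the finite nonzero factor $\lambda^{\re s}$ so that the domains of absolute convergence coincide, or via the geometric scaling identity $\M^{*r}(\lambda A,\lambda\O)=\lambda^r\M^{*r}(A,\O)$, which is precisely Lemma~\ref{lAO}$(a)$ proved later in the paper. All three routes are valid and of comparable effort; the paper's symmetry argument has the minor advantage of staying entirely within the complex-analytic framework of Theorem~\ref{an_rel}, while your geometric route makes the dimension invariance under scaling manifest without any reference to the zeta function at all. One phrasing caution: ``$\lambda^s$ is nonvanishing so the two DTIs have the same domain of absolute convergence'' should be read as shorthand for the change-of-variables identity on the \emph{absolute} integrals (namely $\int_{\lambda\O}|d(x,\lambda A)^{s-N}|\,\D x=\lambda^{\re s}\int_\O|d(y,A)^{s-N}|\,\D y$), since the identity \eqref{zeta_scaled} as proved is a priori only known where both sides already converge.
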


\begin{proof}
The claim is established by introducing a new variable $y=x/\lambda$, and by noting that $d(\lambda y,\lambda A)=\lambda\,d(y,A)$, for any $y\in\eR^N$ (which
is an easy consequence of the homogeneity of the Euclidean norm). Indeed, in light of part $(b)$ of Theorem~\ref{an_rel}, for any $s\in\Ce$ with $\re s>\overline{\dim}_B(A,\O)=D(\zeta_{A,\O})$, we have successively:
$$
\begin{aligned}
\zeta_{\g A,\g\O}(s)&=\int_{\lambda\Omega}d(x,\g A)^{s-N}\D x\\
&=\int_{\Omega}d(\lambda y,\lambda A)^{s-N}\lambda^N\D y\\
&=\lambda^s\int_{\Omega}d(y,A)^{s-N}\D y=\lambda^s\zeta_{A,\O}(s).
\end{aligned}
$$
It follows that \eqref{zeta_scaled} holds and $\zeta_{\lambda A,\g\O}$ is holomorphic for $\re s>\overline{\dim}_B(A,\O)$. Since $D(\zeta_{A,\O})=\overline{\dim}_B(A,\O)$ (by part $(b)$ of Theorem~\ref{an_rel}), we deduce that $D(\zeta_{\lambda A,\lambda\O})\leq D(\zeta_{A,\O})$, for every $\lambda>0$. But then, replacing $\lambda$ by its reciprocal $\lambda^{-1}$ in this last inequality, we obtain the reverse inequality (more specifically, we replace $(A,\O)$ by $(\lambda^{-1}A,\lambda^{-1}\O)$ to deduce that for every $\lambda>0$, $D(\zeta_{A,\O})\leq D(\zeta_{\lambda^{-1}A,\lambda^{-1}\O})$; we then substitute $\lambda^{-1}$ for $\lambda$ in this last inequality in order to obtain the desired reversed inequality: for every $\lambda>0$, $D(\zeta_{A,\O})\leq D(\zeta_{\lambda A,\lambda \O})$), and hence, we conclude that
$$
\ov{\dim}_B(A,\O)=D(\zeta_{A,\O})=D(\zeta_{\lambda A,\lambda\O}),
$$
for all $\lambda>0$, as desired.
\end{proof}

%\begin{remark}
We note that if $\mathcal{L}=(\ell_j)_{j\ge1}$ is a fractal string, and $\lambda$ is a positive constant, then for the scaled string $\lambda\mathcal{L}:=(\lambda \ell_j)_{j\ge1}$,
the corresponding claim in Theorem~\ref{scaling} is trivial: $\zeta_{\lambda\mathcal{L}}(s)=\lambda^s\zeta_{\mathcal{L}}(s)$, for every $\lambda>0.$
Indeed, by definition of the geometric zeta function of a fractal string (see Equation \eqref{geomz}), we have\index{scaling property!of the geometric zeta functions|textbf}
$$
\zeta_{\lambda\mathcal{L}}(s)=\sum_{j=1}^{\ty}(\lambda \ell_j)^s=\lambda^s\sum_{j=1}^{\ty}\ell_{j}^s=\lambda^s\zeta_{\mathcal{L}}(s),
$$
for $\re s>D(\zeta_{\mathcal{L}})$.
(The same argument as above then shows that $D(\zeta_{\mathcal{L}})=D(\zeta_{\lambda\mathcal{L}})$.)
Then, by analytic (i.e., meromorphic) continuation, the same identity continues to hold in any domain to which $\zeta_{\mathcal{L}}$ can be meromorphically extended to the left of the 
critical line $\{\re s=D(\zeta_{\mathcal{L}})\}$. 
%\end{remark}

\bigskip

The following result supplements Theorem~\ref{scaling} in several different and significant ways.

\begin{cor}\label{10.151/2}
Fix $\lambda>0$. Assume that $\zeta_{A,\O}$ admits a meromorphic continuation to some open connected neighborhood $U$ of the open half-plane $\{\re s>\ov{\dim}_B(A,\O)\}$. Then, so does $\zeta_{\g A,\g\O}$ and the identity \eqref{zeta_scaled} continues to hold for every $s\in U$ which is not a pole of $\zeta_{A,\O}$ $($and hence, of $\zeta_{\g A,\g\O}$ as well$)$.

Moreover, if we assume, for simplicity, that $\omega$ is a simple pole of $\zeta_{A,\O}$ $($and hence also, of $\zeta_{\g A,\g\O}$$)$, then the following identity holds$:$
\begin{equation}\label{10.91/4}
\res(\zeta_{\g(A,\O)},\omega)=\lambda^\omega\res(\zeta_{A,\O},\omega).
\end{equation}
If $s$ is a multiple pole, then an analogous statement can be made about the principal parts $($instead of the residues$)$ of the zeta functions involved, as the reader can easily verify.
\end{cor}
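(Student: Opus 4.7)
The plan is to bootstrap from the half-plane identity in Theorem~\ref{scaling} using the fact that $s\mapsto\lambda^s$ is an entire, zero-free function, so multiplication by it preserves meromorphic structure exactly. More precisely, by Theorem~\ref{scaling}, on the open half-plane $H:=\{\re s>\ov{\dim}_B(A,\O)\}$ we already have the identity $\zeta_{\lambda A,\lambda\O}(s)=\lambda^s\zeta_{A,\O}(s)$. Under the hypothesis that $\zeta_{A,\O}$ extends meromorphically to the connected open set $U\supseteq H$, the function $s\mapsto\lambda^s\zeta_{A,\O}(s)$ is a well-defined meromorphic function on $U$ (entire times meromorphic). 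Since this function agrees with the holomorphic function $\zeta_{\lambda A,\lambda\O}$ on the nonempty open set $H\subseteq U$, the identity theorem for meromorphic functions on the connected open set $U$ yields a unique meromorphic extension of $\zeta_{\lambda A,\lambda\O}$ to $U$ which satisfies \eqref{zeta_scaled} at every point $s\in U$ that is not a pole of $\zeta_{A,\O}$. Because $\lambda^s$ has neither zeros nor poles in $\Ce$, the pole sets of $\zeta_{A,\O}$ and $\zeta_{\lambda A,\lambda\O}$ coincide in $U$, with the same multiplicities.

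For the residue identity at a simple pole $\omega$, I would compute directly from the definition:
\begin{equation*}
\res(\zeta_{\lambda A,\lambda\O},\omega)=\lim_{s\to\omega}(s-\omega)\,\lambda^s\,\zeta_{A,\O}(s)=\lambda^\omega\lim_{s\to\omega}(s-\omega)\zeta_{A,\O}(s)=\lambda^\omega\res(\zeta_{A,\O},\omega),
\end{equation*}
using the continuity of $s\mapsto\lambda^s$ at $s=\omega$. This proves \eqref{10.91/4}.

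For the parenthetical claim about multiple poles, I would expand both factors in Laurent series about $\omega$. Writing $\zeta_{A,\O}(s)=\sum_{k\ge -m}c_k(s-\omega)^k$ for the Laurent expansion at a pole of order $m\ge 1$, and using the Taylor expansion
\begin{equation*}
\lambda^s=\lambda^\omega\lambda^{s-\omega}=\lambda^\omega\sum_{j=0}^{\infty}\frac{(\log\lambda)^j}{j!}(s-\omega)^j,
\end{equation*}
the Cauchy product yields the Laurent expansion of $\zeta_{\lambda A,\lambda\O}(s)$ at $\omega$, from which one reads off that the principal part of $\zeta_{\lambda A,\lambda\O}$ at $\omega$ is the Cauchy convolution of $\lambda^\omega\sum_{j\ge 0}\frac{(\log\lambda)^j}{j!}(s-\omega)^j$ with the principal part $\sum_{k=-m}^{-1}c_k(s-\omega)^k$ of $\zeta_{A,\O}$.

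There is no real obstacle here; the only thing to be mindful of is the appeal to the identity theorem on the connected open set $U$ (so that agreement on the open subset $H$ forces the meromorphic extension of $\zeta_{\lambda A,\lambda\O}$ to be given globally on $U$ by $\lambda^s\zeta_{A,\O}(s)$), and the remark that equality of meromorphic functions automatically transfers equality of pole orders, residues, and principal parts.
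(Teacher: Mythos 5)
Your argument is correct and follows essentially the same route as the paper: both proofs start from the half-plane identity of Theorem~\ref{scaling}, observe that $s\mapsto\lambda^s$ is entire and nowhere zero, continue the identity to $U$ by uniqueness of analytic continuation, and obtain \eqref{10.91/4} by multiplying by $(s-\omega)$ and passing to the limit. The only difference is cosmetic: you make the appeal to the identity theorem explicit and supply the Cauchy-product Laurent computation for the multiple-pole case, both of which the paper leaves to the reader.
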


\begin{proof}
The fact that $\zeta_{\g A,\g\O}$ is holomorphic at a given point $s\in U$ if $\zeta_{A,\O}$ is holomorphic at $s$ (for example, if $\re s>\ov{\dim}_B(A,\O)$), follows from \eqref{zeta_scaled} and the equality $D(\zeta_{\g A,\g\O})=D(\zeta_{A,\O})=\ov{\dim}_B(A,\O)$.
An analogous statement is true if ``holomorphic" is replaced with ``meromorphic".
More specifically, by analytic continuation 
of~\eqref{zeta_scaled}, $\zeta_{\g A,\g\O}$ is meromorphic in an open connected set $U$ (containing the critical line $\{\re s=\ov\dim_B(A,\O)\}$) if and only if $\zeta_{A,\O}$ is meromorphic in $U$, and then, clearly, identity~\eqref{zeta_scaled} continues to hold for every $s\in U$ which is not a pole of $\zeta_{A,\O}$ (and hence also, of $\zeta_{\g A,\g\O}$).
Therefore, the first part of the corollary is established.

Next, assume that $\omega$ is a simple pole of $\zeta_{A,\O}$. Then, in light of \eqref{zeta_scaled} and the discussion in the previous paragraph, we have that for all $s$ in a punctured neighborhood of $\omega$ (contained in $U$ but not containing any other pole of $\zeta_{A,\O}$),
\begin{equation}\label{10.191/2}
(s-\omega)\zeta_{\g(A,\O)}(s)=\lambda^s\left((s-\omega)\zeta_{A,\O}(s)\right).
\end{equation}
The fact that \eqref{10.91/4} holds now follows by letting $s\to\omega$, $s\neq\omega$ in \eqref{10.191/2}.
Indeed, we then have
$$
\res(\zeta_{A,\O},\omega)=\lim_{s\to\omega}(s-\omega)\zeta_{A,\O}(s),
$$
and similarly for $\res(\zeta_{\g(A,\O)},\omega)$.
\end{proof}

%\begin{remark}\label{10.153/4}
%This important scaling property of distance zeta functions of RFDs, established in Theorem~\ref{scaling} and Corollary~\ref{10.151/2}, is analogous to the well-known 
%scaling property of Hausdorff\index[index]{Hausdorff!measure}\index[aindex]{Hausdorff, Felix} measure in Euclidean space (see, e.g., \cite{Fal2}), but note that in the spirit of the theory of complex fractal dimensions, it now holds for all {\em complex} values of $s$ (rather than just for the Hausdorff fractal dimension in the case of Hausdorff measure). (See, in addition, identity~\eqref{10.91/4} of Corollary~\ref{10.151/2} where a corresponding scaling property also holds for the complex fractal dimensions themselves, at the level of the residues.)
%\end{remark}

\medskip

The scaling property of relative zeta functions (established in Theorem~\ref{scaling} and Corollary~\ref{10.151/2}) motivates us to introduce the notion of relative
fractal spray, which is very close
to (but not identical with) the usual notion of fractal spray introduced by the first author 
and Carl Pomerance in \cite{lapiduspom3} (see \cite{lapidusfrank12} and the references therein, including [LapPe2--3, LapPeWi1--2, Pe, PeWi, DemDenKo\"U, DemKo\"O\"U]).
First, we define the operation of union of (disjoint) families of RFDs (Definition~\ref{union}).

\begin{defn}\label{union}\index{disjoint union!of RFDs, $\sqcup_{j=1}^\ty(A_j,\Omega_j)$}
Let $(A_j,\Omega_j)_{j\ge1}$ be a countable family of relative fractal drums in $\eR^N$, such that the corresponding family of open sets $(\Omega_j)_{j\ge1}$
is disjoint (i.e., $\O_j\cap\O_k=\emptyset$ for $j\neq k$), $A_j\subseteq\Omega_j$ for each $j\in\eN$, and the set $\Omega:=\cup_{j=1}^\ty\Omega_j$
is of finite $N$-dimensional Lebesgue measure (but may be unbounded).
Then, the {\em union of the $(${\rm finite or countable}$)$ family of relative fractal drums} $(A_j,\O_j)$ $(j\geq 1)$ is the relative fractal drum $(A,\Omega)$, where $A:=\cup_{j=1}^\ty A_j$ and $\O:=\cup_{j=1}^{\ty}\O_j$. We write 
\begin{equation}\label{sqcup}
(A,\Omega)=\bigsqcup_{j=1}^\ty(A_j,\Omega_j).
\end{equation}
\end{defn}

It is easy to derive the following countable additivity property of the distance zeta functions.

\begin{theorem}\label{unionz}
Assume that $(A_j,\Omega_j)_{j\ge1}$ is a  finite or countable family of RFDs satisfying the conditions of Definition~\ref{union}, and
let $(A,\Omega)$ be its union $($in the sense of Equation \eqref{sqcup} appearing in Definition~\ref{union}$)$.
Furthermore, assume that the following condition is fulfilled$:$
\begin{equation}\label{dj}
\mathrm{For\ any\ }j\in\eN\mathrm{\ and\ }x\in\Omega_j,\mathrm{\ we\ have\ that\ }d(x,A)=d(x,A_j).
\end{equation}
Then, for $\re s>\ov{\dim}_B(A,\O)$, 
\begin{equation}\label{10.11}
\zeta_{A,\O}(s)=\sum_{j=1}^\ty\zeta_{A_j,\O_j}(s).
\end{equation}
Condition \eqref{dj} is satisfied, for example, if for every $j\in\eN$, $A_j$ is equal to the boundary of $\Omega_j$ in $\eR^N$ $($that is, $A_j:=\pa\O_j$$)$.
\end{theorem}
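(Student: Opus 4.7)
The plan is to directly unpack the definition of $\zeta_{A,\Omega}$ and exploit two ingredients: the $\sigma$-additivity of the Lebesgue integral along the disjoint decomposition $\Omega=\bigsqcup_{j\ge 1}\Omega_j$, and the pointwise identity $d(x,A)=d(x,A_j)$ on $\Omega_j$ supplied by hypothesis \eqref{dj}. Concretely, for any $s\in\Ce$ for which the integral is absolutely convergent,
$$
\zeta_{A,\O}(s)=\int_\O d(x,A)^{s-N}\,\D x=\sum_{j=1}^\ty\int_{\O_j}d(x,A)^{s-N}\,\D x=\sum_{j=1}^\ty\int_{\O_j}d(x,A_j)^{s-N}\,\D x=\sum_{j=1}^\ty\zeta_{A_j,\O_j}(s),
$$
which is exactly \eqref{10.11}. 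The first equality uses Definition \ref{zeta_r}; the second uses that $\{\O_j\}$ is a countable measurable partition of $\O$; the third is \eqref{dj}; the fourth is again Definition \ref{zeta_r}.

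The one thing to justify carefully is the interchange of summation and integration in the second equality, for complex (not just real) $s$. This I would handle by Tonelli's theorem applied to the nonnegative, jointly measurable function $(j,x)\mapsto\mathbf 1_{\O_j}(x)\,d(x,A)^{\re s-N}$ on $\eN\times\O$ (with counting measure on $\eN$ and Lebesgue measure on $\O$): by part $(a)$ of Theorem~\ref{an_rel}, the total integral $\int_\O d(x,A)^{\re s-N}\,\D x$ is finite whenever $\re s>\ov\dim_B(A,\O)$, so Tonelli delivers absolute summability of $\sum_j\int_{\O_j}|d(x,A)^{s-N}|\,\D x$, whence Fubini legitimizes the swap for the complex-valued integrand as well. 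Note also that all the absolute convergence abscissae $D(\zeta_{A_j,\O_j})$ are $\le D(\zeta_{A,\O})=\ov\dim_B(A,\O)$ (by monotonicity of the integral in $\O_j\stq\O$ and part $(b)$ of Theorem~\ref{an_rel}), so each summand on the right side of \eqref{10.11} makes sense in the claimed half-plane.

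It remains to verify the last sentence of the statement: if $A_j=\pa\O_j$ for each $j\ge 1$, then \eqref{dj} holds. The inequality $d(x,A)\le d(x,A_j)$ is immediate from $A_j\stq A$. For the reverse, fix $j$ and $x\in\O_j$; I claim that every $y\in A_k$ with $k\ne j$ satisfies $y\notin\O_j$. Indeed, if $y\in\O_j$ (which is open) and $y\in\pa\O_k$, then every neighborhood of $y$ would meet $\O_k$, producing points of $\O_j\cap\O_k$ and contradicting disjointness. Consequently, the segment $[x,y]\st\eR^N$ starts inside the open set $\O_j$ and ends outside, so by continuity it must cross $\pa\O_j=A_j$; therefore $|x-y|\ge d(x,A_j)$. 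Taking the infimum over all $y\in\bigcup_{k\ge 1}A_k=A$ yields $d(x,A)\ge d(x,A_j)$, proving \eqref{dj}.

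The only mild obstacle is the topological observation in the last paragraph — easy, but worth spelling out since it is the sole place where the openness of the $\O_j$ (and not merely their measurability) is used. Everything else is bookkeeping with Tonelli.
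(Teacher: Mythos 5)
Your argument is correct and follows the same route as the paper: split $\O$ into the disjoint $\O_j$'s, apply hypothesis \eqref{dj} on each piece, and justify the sum–integral interchange for complex $s$ by first handling real $s$ (Tonelli) and then invoking Fubini/countable additivity for the complex integrand — the paper phrases this last step as countable additivity of the local complex Borel measure $\D\gamma(x):=d(x,A)^{s-N}\D x$, which is the same mechanism. One small bonus of your write-up is that you also prove the final sentence (that $A_j=\pa\O_j$ implies \eqref{dj}) via the segment-crossing argument, whereas the paper states it without proof.
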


\begin{proof}
The claim follows from the following computation, which is valid for $\re s>\ov{\dim}_B(A,\O)$:
\begin{equation}\label{zeta_sum}
\begin{aligned}
\zeta_{A,\O}(s)&=\int_{\Omega}d(x,A)^{s-N}\D x=\sum_{j=1}^\ty \int_{\Omega_j}d(x,A)^{s-N}\D x\\
&=\sum_{j=1}^\ty \int_{\Omega_j}d(x,A_j)^{s-N}\D x=\sum_{j=1}^\ty \zeta_{A_j,\O_j}(s).
\end{aligned}
\end{equation}

More specifically, clearly, \eqref{zeta_sum} holds for every real number $s$ such that $s>\ov{\dim}_B(A,\O)\geq D(\zeta_{A,\O})$. Therefore, for such a value of $s$,
$$
\zeta_{A,\O_j}(s)=\int_{\O_j}d(x,A)^{s-N}\D x\leq\int_{\O}d(x,A)^{s-N}\D x=\zeta_{A,\O}(s)<\infty,
$$
for every $j\geq 1$. Hence,
\begin{equation}\label{10.121/2}
\sup_{j\geq 1}\,\{D(\zeta_{A,\O_j})\}\leq D(\zeta_{A,\O})\leq\ov{\dim}_B(A,\O),
\end{equation}
from which \eqref{zeta_sum} now follows for all $s\in\Ce$ with $\re s>\ov{\dim}_B(A,\O)$, in light of the countable additivity of the local complex Borel measure (and hence, locally bounded measure) on $\O$, given by $\D\gamma(x):=d(x,A)^{s-N}\D x$.
(Note that according to the hypothesis of Definition~\ref{union}, we have $|\O|<\ty$, so that $\D\gamma$ is indeed a local complex Borel measure; see, e.g.,~\cite{folland} or~\cite{rudin}, along with [{DolFri}], [{JohLap}], [{JohLapNi}] and \cite[Appendix A]{fzf} for the notion of a local measure.)
\end{proof}

\begin{remark}\label{10.171/2}
In the statement of Theorem~\ref{unionz}, the numerical series on the right-hand side of \eqref{10.11} converges absolutely (and hence, converges also in $\Ce$) for all $s\in\Ce$ such that $\re s>\ov{\dim}_B(A,\O)$. In particular, for every real number $s$ such that $s>\ov{\dim}_B(A,\O)$, it is a convergent series of positive terms (i.e., it has a finite sum). It remains to be investigated whether (and under which hypotheses) Equation~\eqref{10.11} continues to hold for all $s\in\Ce$ in a common domain of meromorphicity of the zeta functions $\zeta_{A,\O}$ and $\zeta_{A,\O_j}$ for $j\geq 1$ (away from the poles). At the poles, an analogous question could be raised for the corresponding residues (assuming, for simplicity, that the poles are simple). 
\end{remark}

\section{Relative tube zeta functions}\label{rtzf}

We begin this section by introducing the {\em relative 
tube\index{relative tube zeta function, $\tilde\zeta_{A,\O}$|textbf}\index{fractal zeta function!relative tube zeta function, $\tilde\zeta_{A,\O}$|textbf} zeta function}\index{fractal zeta function!relative tube zeta function, $\tilde\zeta_{A,\O}$|textbf} associated with the relative fractal drum $(A,\Omega)$ in $\eR^N$. It is defined by
\begin{equation}\label{rel_tube_zeta}
\tilde\zeta_{A,\Omega}(s):=\int_0^\delta t^{s-N-1}|A_t\cap\Omega|\,\D t,
\end{equation}
for all $s\in\Ce$ with $\re s$ sufficiently large, where $\delta>0$ is fixed. As we see, $\tilde{\zeta}_{A,\O}$ involves the 
{\em relative tube function}\index{relative tube function of $(A,\O)$, $t\mapsto\vert A_t\cap\O\vert$|textbf} 
$t\mapsto|A_t\cap\Omega|$. 
As was noted in Remark \ref{recon}, if $\O:=A_\d$ with $A\subset\eR^N$ bounded, we recover the tube zeta function of the set $A$; that is, $\tilde\zeta_A(s):=\int_0^\delta t^{s-N-1}|A_t|\,\D t$, for all $s\in\Ce$ with $\re s$ sufficiently large.

The abscissa of convergence of the relative tube zeta function $\tilde\zeta_{A,\Omega}$ is given by $D(\tilde\zeta_{A,\Omega})=\ov\dim_B(A,\Omega)$.
This follows from the following fundamental identity (or {\em functional equation}), which connects the relative tube zeta function $\tilde\zeta_{A,\O}$ and the relative distance zeta function $\zeta_{A,\Omega}$,
defined by (\ref{rel_dist_zeta}):
\begin{equation}\label{rel_equality}
\zeta_{A,A_\delta\cap\Omega}(s)= \delta^{s-N}|A_\delta\cap\Omega|+(N-s)\tilde\zeta_{A,\Omega}(s),
\end{equation} 
for all $s\in\Ce$ such that $\re s>\ov\dim_B(A,\O)$.
%It holds under the assumption that $\ov\dim_B(A,\Omega)<N$, and 
Its proof is based on the the known identity
\begin{equation}
\int_{A_\delta\cap\Omega}d(x,A)^{-\gamma}\,\D x=\delta^{-\gamma}|A_\delta\cap\Omega|+\gamma\int_0^\delta t^{-\gamma-1}|A_t\cap\Omega|\,\D t,
\end{equation}
where $\gamma>0$; see \cite[Theorem~2.9(a)]{mink}, or a more general form provided in \cite[Lemma~3.1]{rae}. 
As a special case, when $\O:=A_\d$ with $A\subset\eR^N$ bounded, Equation \eqref{rel_equality} reduces to 
\begin{equation}\label{equality}
\zeta_{A}(s)= \delta^{s-N}|A_\delta|+(N-s)\tilde\zeta_{A}(s),
\end{equation} 
for all $s\in\Ce$ such that $\re s>\ov\dim_BA$, which has been obtained in \cite{dtzf}.

\medskip

The following proposition connects the residues of the relative tube and distance functions. 

\begin{prop}\label{resrfd} Assume that $(A,\O)$ is an RFD in $\eR^N$.
Let $U$ be a connected open subset of $\Ce$ to which the relative distance zeta function $\zeta_{A,\Omega}$ can be meromorphically extended, and such that it contains the critical line $\{\re s=D(\zeta_{A,\O})\}$. 
Then the relative tube function $\tilde\zeta_{A,\Omega}$ can be meromorphically extended to $U$ as well. 
Furthermore, if $\o\in U$ is a simple pole of $\zeta_{A,\Omega}$, then it is also a simple pole of
$\tilde\zeta_{A,\Omega}$ and we have that
\begin{equation}\label{resdt}
\res(\zeta_{A,\Omega},\o)=(N-\o)\cdot\res(\tilde\zeta_{A,\Omega},\o).
\end{equation}
Moreover, the functional equation \eqref{rel_equality} continues to hold for all $s\in U$.

The proposition also holds if we interchange the relative distance function and relative tube function in the above statement.
\end{prop}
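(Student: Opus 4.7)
The plan is to leverage the functional equation \eqref{rel_equality} together with the excision Lemma~\ref{dimd}, and to pass from the identity that holds a priori on the half-plane $\{\re s>\ov\dim_B(A,\O)\}$ to the larger set $U$ by analytic (meromorphic) continuation.

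First I would observe that, by Lemma~\ref{dimd}, the difference
\[
\zeta_{A,\O}(s)-\zeta_{A,A_\delta\cap\O}(s)=\int_{\O\setminus A_\delta}d(x,A)^{s-N}\,\D x
\]
is an entire function of $s$, since on $\O\setminus A_\delta$ one has $d(x,A)\ge\delta>0$ and the domain has finite Lebesgue measure. In particular, $\zeta_{A,A_\delta\cap\O}$ admits a meromorphic extension to $U$ with exactly the same poles and principal parts as $\zeta_{A,\O}$ inside $U$. Solving the functional equation \eqref{rel_equality} for the tube zeta function yields, on the half-plane $\{\re s>\ov\dim_B(A,\O)\}$,
\[
\tilde\zeta_{A,\O}(s)=\frac{\zeta_{A,A_\delta\cap\O}(s)-\delta^{s-N}\,|A_\delta\cap\O|}{N-s}.
\]
The right-hand side is meromorphic on $U$ (the numerator is meromorphic on $U$ and $s\mapsto\delta^{s-N}$ is entire, while $1/(N-s)$ is meromorphic on $\Ce$ with a simple pole at $s=N$), so it furnishes the desired meromorphic extension of $\tilde\zeta_{A,\O}$ to $U$. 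By the identity principle for meromorphic functions, the functional equation \eqref{rel_equality} then persists on all of $U$.

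For the residue formula, let $\omega\in U$ be a simple pole of $\zeta_{A,\O}$; note that necessarily $\omega\neq N$, since the entire function $s\mapsto\delta^{s-N}|A_\delta\cap\O|$ is regular at $s=N$, while $\zeta_{A,A_\delta\cap\O}(N)=|A_\delta\cap\O|$ (a routine check: the integrand is $1$ at $s=N$), so no pole of $\zeta_{A,\O}$ can sit at $N$. Multiplying the functional equation by $(s-\omega)$ and letting $s\to\omega$, the entire terms contribute nothing, yielding
\[
\res(\tilde\zeta_{A,\O},\omega)=\frac{\res(\zeta_{A,A_\delta\cap\O},\omega)}{N-\omega}=\frac{\res(\zeta_{A,\O},\omega)}{N-\omega},
\]
which is \eqref{resdt}, and shows that $\omega$ is a simple pole of $\tilde\zeta_{A,\O}$ provided the residue is nonzero (this in turn follows from the excision lemma, since the residues of $\zeta_{A,\O}$ and $\zeta_{A,A_\delta\cap\O}$ at $\omega$ coincide and $\omega$ was assumed a genuine simple pole of the former).

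For the reversed direction, I would run the same argument with the roles swapped: if $\tilde\zeta_{A,\O}$ extends meromorphically to $U$, the identity
\[
\zeta_{A,A_\delta\cap\O}(s)=\delta^{s-N}|A_\delta\cap\O|+(N-s)\,\tilde\zeta_{A,\O}(s)
\]
defines a meromorphic extension of $\zeta_{A,A_\delta\cap\O}$ (hence, by excision, of $\zeta_{A,\O}$) to $U$, and the simple-pole/residue statement follows analogously. The main technical subtlety I anticipate is the bookkeeping at $s=N$: one must check that the factor $(N-s)$ neither conceals a genuine pole nor spuriously introduces one, which is handled by the explicit evaluation $\zeta_{A,A_\delta\cap\O}(N)=|A_\delta\cap\O|$ mentioned above, ensuring a removable singularity at $s=N$ in the expression for $\tilde\zeta_{A,\O}$.
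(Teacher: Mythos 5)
Your proposal is correct and follows essentially the same route as the paper: the paper's own proof reduces to the case of the RFD $(A,A_\delta\cap\O)$ by noting that $\zeta_{A,\O}(s)-\zeta_{A,A_\delta\cap\O}(s)=\zeta_{A,\O\setminus A_\delta\cap\O}(s)$ is entire (bounded integrand over a set of finite measure), and then invokes the functional equation \eqref{rel_equality}. You simply unpack that last step — solving \eqref{rel_equality} for $\tilde\zeta_{A,\O}$, invoking analytic continuation, multiplying by $(s-\omega)$ and passing to the limit — and take care of the minor bookkeeping at $s=N$, which the paper leaves implicit.
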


\begin{proof}
Since the difference $\zeta_{A,\Omega}(s)-\zeta_{A,A_\delta\cap\Omega}(s)=\zeta_{A,\O\setminus A_\delta\cap\Omega}(s)$ is an entire function (note that $\d<d(x,A)<c$, where $c:=\sup_{x\in A} d(x,A)<\ty$; see property \eqref{delta} in Definition \ref{rfd} of an RFD), it suffices to prove the proposition in the case
of the RFD $(A,A_\d\cap\O)$ instead of $(A,\O)$. The claim now follows from the functional equation \eqref{rel_equality}. This concludes the proof.
\end{proof}

\begin{example}[Torus relative fractal drum]\label{torus}
Let $\O$ be an open solid torus in $\eR^3$ defined by two radii $r$ and $R$, where $0<r<R<\ty$, and let $A:=\pa\O$ be its topological boundary. In order to compute the tube zeta function of the {\em torus RFD}\index{torus RFD|textbf} $(A,\O)$, we first compute its tube function. Let $\d\in(0,r)$ be fixed. Using Cavalieri's principle, we have that 
\begin{equation}\label{tubet}
|A_t\cap\O|_3=2\pi R\left(r^2-(r-t)^2\right)=2\pi R(2rt-t^2), 
\end{equation}
for all $t\in(0,\d)$, from which it follows that
\begin{equation}
\tilde\zeta_{A,\O}(s):=\int_0^\d t^{s-4}|A_t\cap\O|_3\,\D t=2\pi R\Big(2r\frac{\d^{s-2}}{s-2}-\frac{\d^{s-1}}{s-1}\Big),
\end{equation}
for all $s\in\Ce$ such that $\re s>2$. The right-hand side defines a meromorphic function on the entire complex plane, so that, using the principle of analytic continuation, $\tilde\zeta_{A,\O}$ can be (uniquely) meromorphically extended to the whole of $\Ce$. 
In particular, we see that the multiset of complex dimensions of the torus RFD $(A,\O)$ is given by $\po(A,\O)=\{1,2\}$. Each of the complex dimensions $1$ and $2$ is simple. In particular, we have that
\begin{equation}
\dim_{PC}(A,\O)=\{2\}\q\mbox{and}\q\res(\tilde\zeta_{A,\O},2)=4\pi Rr.
\end{equation}
Also, $\ov\dim_B(A,\O)=D(\tilde\zeta_A)=2$.
From Equation \eqref{resM} appearing in Theorem \ref{rel_measurable} below, we conclude that the $2$-dimensional Minkowski content of the torus RFD $(A,\O)$ is given by
\begin{equation}
\M^{2}(A,\O)=4\pi Rr.
\end{equation}
Since $|A_t|_3=2\pi R\left((r+t)^2-(r-t)^2\right)$, we can also easily compute the `ordinary' tube zeta function $\tilde\zeta_A$ of the torus surface $A$ in $\eR^3$:
\begin{equation}\label{torusz}
\tilde\zeta_A(s)=8\pi Rr\frac{\d^{s-2}}{s-2},
\end{equation}
for all $s\in\Ce$. In particular, 
$
\res(\tilde\zeta_A,2)=8\pi Rr
$.
Using Equations \eqref{rel_equality} and \eqref{equality}, we deduce from \eqref{torusz} the corresponding expressions for the distance zeta functions, which are valid for all $s\in\Ce$:
\begin{equation}
\zeta_{A,\O}(s)=2\pi R\Big(2r\frac{\d^{s-1}}{s-2}-\frac2{s-1}\Big),
\q
\zeta_A(s)=8\pi Rr\frac{\d^{s-2}}{s-2}.
\end{equation}
Also, 
$$
{\mathcal P}(\zeta_{A,\O})={\mathcal P}(\tilde\zeta_{A,\O})=\{1,2\}
$$
and
$$
{\mathcal P}_c(\zeta_{A,\O})={\mathcal P}_c(\tilde\zeta_{A,\O})=\{2\}
$$
(with each pole $1$ and $2$ being simple) and
$$
\ov\dim_B(A,\O)=D(\zeta_{A,\O})=D(\tilde\zeta_{A,\O})=2.
$$
Furthermore, we see that $\res(\zeta_{A,\O},2)=4\pi Rr$ and $\res(\zeta_A,2)=8\pi Rr$, in agreement with \eqref{resdt} in Proposition \ref{resrfd} above.
\end{example}

\medskip

One can easily extend the example of the $2$-torus to any (smooth) closed submanifold of $\eR^N$ 
(and, in particular, of course, to the $n$-torus, with $n\ge2$). This can be done by using Federer's tube formula\index{Federer's tube formula} \cite{Fed1} for sets of positive reach, which extends and unifies Weyl's tube formula \cite{Wey}
for (proper) smooth submanifolds of $\eR^N$ and Steiner's formula (obtained by Steiner \cite{Stein} and his successors) for compact convex subsets of $\eR^N$. The global form of Federer's tube formula\index{Federer's tube formula} expresses the volume of $t$-neighborhoods of a (compact) set of positive reach\footnote{A closed subset $C$ of $\eR^N$ is said to be of {\em positive reach} if there exists $\d_0>0$ such that every point $x\in \eR^N$ within a distance less than $\d_0$ from $C$ has a unique metric projection onto $C$; see \cite{Fed1}. 
The {\em reach} of $C$ is defined as the supremum of all such positive numbers $\d_0$.
Clearly, every closed convex subset of $\eR^N$ is of infinite (and hence, positive) reach.}\index{positive reach of a closed set|textbf}\index{reach of a closed set|textbf} $A\st\eR^N$ as a polynomial of degree at most $N$ in $t$, whose coefficients are (essentially) the so-called {\em Federer's curvatures}\index{Federer's curvatures|textbf} and which generalize Weyl's curvatures\index{Weyl's curvatures} in \cite{Wey} (see \cite{BergGos} for an exposition) and Steiner's curvatures in \cite{Stein}
(see \cite[Chapter 4]{Schn1} for a detailed exposition) in the case of submanifolds and compact convex sets, respectively.
%We also draw the attention of the reader to the related notions of `fractal curvatures' and `fractal curvature measures' for fractal sets, introduced by Winter in~[Wi].
We also draw the attention of the reader to the related notions of `fractal curvatures' and `fractal curvature measures' for fractal sets, introduced by Winter in [Wi] and Winter and Zahle in [WiZ\"a1--3]; for information on closely related topics in integral geometry and on tube formulas, see also, for example, [Stein], [Mink], [Wey],  [Bla], [Fed1], [KlRot], [Z\"a1--3], [Schn1--2], [HugLasWeil], [LapPe3], [LapPeWi1], [LapRa\v Zu1] and [LapRa\v Zu4--6], along with the many relevant references therein.

In the present context, for a compact set of positive reach $C\st\eR^N$, it is easy to deduce from the tube formula in \cite{Fed1} an explicit expression for $\tilde\zeta_A(s)$, as follows (with $A:=\pa C$).\footnote{Relative versions are also possible, for example for the RFD $(A,\overset {\circ}C)$, where $\overset {\circ}C$ is the interior of $C$, assumed to be nonempty. Under appropriate assumptions, the associated expression for $\tilde\zeta_A$ may take a slightly different form (because the corresponding tube formula would be of pluriphase type in the sense of [LapPe2--3,LapPeWi1], that is, piecewise polynomial of degree $\le N-1$) but Equation \eqref{Ptildez} would remain valid in this case.}

\begin{prop}\label{preach}
Let $A=\pa C$ be the boundary of a $($nonempty$)$ compact set $C$ of positive reach in $\eR^N$. Then, for any $\d>0$ sufficiently small $($and less than the reach of $A$, in particular$)$, we have that 
\begin{equation}\index{Federer's tube formula}
\tilde\zeta_A(s):=\tilde\zeta_A(s;\d)=\sum_{k=0}^{N-1}c_k\frac{\d^{s-k}}{s-k},
\end{equation}
where $|A_t|=\sum_{k=0}^{N-1}c_kt^{N-k}$ for all $t\in(0,\d)$ and the coefficients $c_k$ are the $($normalized$)$ 
Federer curvatures. $($From the functional equation \eqref{rel_equality} above, one then deduces at once a corresponding explicit expression for the distance zeta function $\zeta_A(s):=\zeta_A(s;\d)$.$)$

Hence, $\dim_BA$ exists and 
\begin{equation}
D:=D(\tilde\zeta_A)=D(\zeta_A)=\dim_BA=\max\{k\in\{0,1,\dots,N-1\}:c_k\ne0\}
\end{equation}
and $($since $D\le N-1<N$$)$,
\begin{equation}\label{Ptildez}
\P:=\P(\tilde\zeta_A)=\P(\zeta_A)\stq\{0,1,\dots,N-1\}.
\end{equation}
In fact,
\begin{equation}
\P=\big\{k\in\{0,1,\dots,N-1\}:c_k\ne0\big\}\stq\{k_0,\dots,D\},
\end{equation}
where $k_0:=\min\big\{k\in\{0,1,\dots,D\}:c_k\ne0\big\}$.
Furthermore, each of the complex dimensions of $A$ is simple. 

Finally, if $C$ is such that its affine hull is all of $\eR^N$ $($which is the case when the interior of $C$ is nonempty and, in particular, if $C$ is a convex body$)$, then $D=N-1$, while if $C$ is a $($smooth$)$ submanifold with boundary the closed $d$-dimensional smooth submanifold $A:=\pa C$ $($with $0\le d\le N-1$$)$, then $D=d$.\footnote{One could also work with a closed (i.e., boundaryless) $d$-dimensional submanifold of $\eR^N$, with $0\le d\le N-1$.}
\end{prop}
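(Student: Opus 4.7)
The plan is to reduce everything to Federer's tube formula for compact sets of positive reach \cite{Fed1}. For $t$ smaller than the reach of $C$, that formula yields a polynomial expression for $|C_t \setminus C|$ of degree $\le N$ in $t$, and an analogous polynomial expression for the inner tube $|C \setminus C_{-t}|$ follows (e.g.\ by applying Federer's theory to the closure of the complement of $C$, or from Steiner's formula when $C$ is convex). Combining these two contributions, using $A_t \cap C = C \setminus C_{-t}$ and $A_t \setminus C = C_t \setminus C$, will produce the asserted polynomial identity
\[
|A_t| \;=\; \sum_{k=0}^{N-1} c_k\, t^{N-k}, \qquad 0 < t < \d,
\]
with $\d$ less than the reach of $A$, the normalized Federer curvatures $c_k$ being the sums of the outer and inner contributions at level $k$. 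The absence of a constant term is automatic, since $|A|_N = 0$.

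Next I would substitute this polynomial into the definition \eqref{rel_tube_zeta} of $\tilde\zeta_A$ and integrate termwise. For $\re s > N-1$ (so that every individual integral converges at $0$), this gives
\[
\tilde\zeta_A(s;\d) \;=\; \sum_{k=0}^{N-1} c_k \int_0^\d t^{s-k-1}\,\D t \;=\; \sum_{k=0}^{N-1} c_k\, \frac{\d^{s-k}}{s-k}.
\]
The right-hand side is a rational function of $s$ whose only possible singularities are simple poles in $\{0,1,\ldots,N-1\}$; it therefore automatically supplies a meromorphic extension of $\tilde\zeta_A$ to all of $\Ce$, whose pole set is exactly $\{k \in \{0,1,\ldots,N-1\}: c_k \ne 0\}$. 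Reading off the largest such $k$ identifies $D(\tilde\zeta_A)$ with $D$, and by part $(b)$ of Theorem~\ref{an_rel} (transferred from $\zeta_A$ to $\tilde\zeta_A$ via the functional equation \eqref{equality}) this common value is $\dim_B A$. The parallel conclusions for $\zeta_A$ — same pole set, same simplicity, same abscissa, and the residue relation $\res(\zeta_A,\omega) = (N-\omega)\res(\tilde\zeta_A,\omega)$ — are then read off directly from \eqref{equality}, because the remaining summand $\d^{s-N}|A_\d|$ is entire.

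For the final dimension identifications I would quote the corresponding classical predecessors of Federer's formula. If the affine hull of $C$ is all of $\eR^N$, Steiner's formula for convex bodies (or Federer's formula in the general positive-reach case) has top coefficient equal, up to a positive normalization, to $\H^{N-1}(\pa C) > 0$, so that $c_{N-1} \ne 0$ and hence $D = N-1$. If $A$ is a smooth closed $d$-dimensional submanifold, Weyl's tube formula \cite{Wey} shows that the smallest power of $t$ appearing in $|A_t|$ is $t^{N-d}$ with a strictly positive leading coefficient, so that $c_d \ne 0$ while $c_k = 0$ for $k > d$, giving $D = d$. The only genuine obstacle in this scheme is the bookkeeping required to pass from Federer's formula, which is stated for $C$ itself (really, for the outer tube $C_t \setminus C$), to the polynomial identity for the full tube $|A_t|$ around $A = \pa C$; the cleanest route is to treat the outer and inner tubes separately via two positive-reach tube formulas and add. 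Once that polynomial identity is in hand, every remaining step is a direct computation.
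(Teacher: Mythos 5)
The paper states no proof of this proposition: the preceding sentence says only that the result ``is easy to deduce from the tube formula in [Fed1],'' so your argument is the only detailed one available, and from the polynomial tube identity onward it is correct. Once $|A_t|=\sum_{k=0}^{N-1}c_kt^{N-k}$ is in hand on $(0,\d)$, substituting into \eqref{rel_tube_zeta}, integrating termwise for $\re s>N-1$, reading off the meromorphic continuation from the closed rational form, and transferring to $\zeta_A$ via \eqref{equality} (which only adjoins the entire function $\d^{s-N}|A_\d|$ and multiplies by $N-s$) all go through exactly as you say, as do the terminal identifications $D=N-1$ via Steiner and $D=d$ via Weyl.

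The genuine gap is precisely the step you flag yourself as ``the only genuine obstacle.'' Your fix for the inner tube $|C\setminus C_{-t}|$ --- apply Federer's theory to $\overline{\eR^N\setminus C}$ --- does not work as stated: $\overline{C^c}$ need not have positive reach even when $C$ does (its reach is governed by the inward geometry of $\pa C$ and is not controlled by the reach of $C$), and it is unbounded, so Federer's global formula would anyway have to be localized and then the contribution near $\pa C$ isolated. Indeed, the footnote to the proposition warns that the one-sided RFD $(A,\overset{\circ}{C})$ typically has only a \emph{pluriphase} (piecewise polynomial) tube function, which is a warning that the inner tube is subtler than the outer one. The reading the proposition actually has in mind is signalled by the parenthetical ``less than the reach of $A$'': one applies Federer's tube formula directly to the compact set $A=\pa C$ (implicitly assumed to be of positive reach itself), obtaining $|A_t|=\sum_{k=0}^{N}c_kt^{N-k}$ for $t<\operatorname{reach}(A)$ in one stroke, with the $k=N$ term absent because $|A|_N=0$, and no inner/outer split is needed at all. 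Your split is still the right way to relate the $c_k$ to the curvatures of $C$ and hence to see that $c_{N-1}\ne0$ when $C$ is full-dimensional, but as a route to the \emph{globally} polynomial identity it replaces a hypothesis the statement already makes about $A$ with an unverified positive-reach hypothesis about $\overline{C^c}$.
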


\medskip

For the $2$-torus $A$, we have $N=3$, $D=2$ (since the Euler characteristic of $A$ is equal to zero), $c_2\ne0$,\footnote{Note that $c_2$ is just proportional to the area of the $2$-torus, with the proportionality constant being a standard positive constant.} $c_1\ne0$, and hence, $c_0=0$, $k_0=1$ and $\P=\{1,2\}$, as was found in Example \ref{torus} via a direct computation.

We note that much more general tube formulas called %(as in [{Lap-vFr2--3}] and [{LapPeWi2--3}], [{LapPeWi1--2}])
``fractal tube formulas''\index{fractal tube formula} are obtained in \cite{cras2} (as well as in \cite[Chapter 5]{fzf}, see also \cite{cras1})
for arbitrary bounded sets (and even more generally, RFDs) in $\eR^N$, under mild growth assumptions on the associated fractal zeta functions.

\section{Meromorphic extensions of relative zeta functions}\label{rzfe}

We shall use the following assumption on the asymptotics of the relative tube function
$t\mapsto|A_t\cap\Omega|$:
\begin{equation}\label{h}
|A_t\cap \Omega|=t^{N-D}h(t)(\mathcal{M}+O(t^{\alpha}))\quad\mbox{as\qs$t\to0^+$,}
\end{equation}
where $\mathcal{M}>0$, $\alpha>0$ and $D\le N$ are given in advance. Here, we assume that the function $h(t)$ is positive and
has a sufficiently slow growth near the origin, in the sense that for any $c>0$, $h(t)=O(t^c)$ as $t\to0^+$.
Typical examples of such functions are $h(t)=(\log t^{-1})^m$, $m\ge1$, or more generally, 
$$
h(t)=\big(\underbrace{\log\ldots\log}_{n}(t^{-1})\big)^m
$$ 
(the $m$-th power of the $n$-th iterated logarithm of $t^{-1}$, for $n\ge1$), and in these cases we obviously have $\mathcal{M}^D(A,\Omega)=+\ty$.
For this and other examples, see \cite{lapidushe}.
The function $t\mapsto t^Dh(t)^{-1}$ is usually called a {\em gauge function},\label{gauge}\index{gauge function $h$ of an RFD $(A,\O)$} but for the sake of simplicity, we shall instead use this name only for the function $h(t)$.

Assuming that a relative fractal drum $(A,\Omega)$ in $\eR^N$ is such that $D=\dim_B(A,\Omega)$ exists, and $\mathcal{M}_*^D(A,\Omega)=0$ or $+\ty$ (or $\mathcal{M}^{*D}(A,\Omega)=0$ or $+\ty$), it makes sense to define as follows a new class of relative lower and upper Minkowski contents of $(A,\Omega)$, associated with a suitably chosen gauge function $h(t)$:
\begin{equation}\label{gauge_content}
\begin{aligned}
\mathcal{M}_*^D(A,\Omega,h)&=\liminf_{t\to0^+}\frac{|A_t\cap \Omega|}{t^{N-D}h(t)},\\
\mathcal{M}^{*D}(A,\Omega,h)&=\limsup_{t\to0^+}\frac{|A_t\cap \Omega|}{t^{N-D}h(t)}.
\end{aligned}
\end{equation}
The aim is to find an {\em explicit} gauge function so that these two contents are in $(0,+\ty)$, and the functions $r\mapsto\mathcal{M}_*^r(A,\Omega,h)$
and $r\mapsto \mathcal{M}^{*r}(A,\Omega,h)$, $r\in\eR$, defined exactly as in (\ref{gauge_content}), except for $D$ replaced with $r$, have a jump from $+\ty$ to $0$ when $r$ 
crosses the value of $D$.
In this generality, the above contents are called {\em gauge relative 
Minkowski contents}\label{gauge_relative}\index{Minkowski content of an RFD $(A,\O)$!relative|textbf}
(with respect to $h$).

If for some gauge function $h$, say, we have that $\mathcal{M}^D(A,\Omega,h)\in(0,+\ty)$ (which means, as usual, that $\mathcal{M}_*^D(A,\Omega,h)=\mathcal{M}^{*D}(A,\Omega,h)$ and that this common value, denoted by $\mathcal{M}^D(A,\Omega,h)$, lies in $(0,+\ty)$), we say (as in \cite{lapidushe}) that the fractal drum $(A,\Omega)$ is {\em $h$-Minkowski 
measurable}.\index{h-Minkowski@$h$-Minkowski measurable RFD $(A,\O)$}

In what follows, we denote the Laurent 
expansion\index{Laurent expansion} of a meromorphic extension (assumed to exist) 
of the relative tube zeta function $\tilde\zeta_{A,\Omega}$ to an open, connected neighborhood of $s=D$ (more specifically, an open punctured disk centered at $s=D$) by
\begin{equation}\label{laurent}
\tilde\zeta_{A,\O}(s)=\sum_{j=-\ty}^\ty c_j(s-D)^j,
\end{equation}
where, of course, $c_j=0$ for all $j\ll 0$ (that is, there exists $j_0\in\Ze$ such that $c_j=0$ for all $j<j_0$).

Let us first introduce some notation.
Given a $T$-periodic function $G:\eR\to\eR$, we denote by $G_0$ its truncation to $[0,T]$; that is
\begin{equation}
G_0(\tau)=
\begin{cases}
G(t)& \mbox{if $\tau\in[0,T]$}\\
0& \mbox{if $\tau\notin[0,T]$},
\end{cases}
\end{equation}
 while the Fourier transform\index{Fourier transform|textbf} of $G_0$
is denoted by $\hat G_0$:
\begin{equation}\label{fourier}
\hat G_0(t)=\int_{-\ty}^{+\ty}\E^{-2\pi \I \,t\tau}G_0(\tau)\,\D\tau=\int_0^T\E^{-2\pi \I \,t\tau}G(\tau)\,\D\tau,
\end{equation}
where $\I:=\sqrt{-1}$\label{sqrt-1} is the imaginary unit.

The following theorem shows that, in order to obtain a meromorphic extension of the zeta function to the left of the abscissa of convergence,
it is important to have some information about the second term in the asymptotic expansion of the relative tube function $t\mapsto|A_\delta\cap\Omega|$ near $t=0$. 
We stress that the presence (in Theorem \ref{rel_measurable}) of the gauge function $h(t):=(\log t^{-1})^m$ is closely related to the multiplicity of the principal complex dimension $D$, which is equal to $m+1$.
Theorem \ref{rel_measurable} extends \cite[Theorem 4.24]{mezf} to the general setting of RFDs.

Observe that since the case when $m=0$ is allowed in Theorem \ref{rel_measurable} just below, that theorem enables us to deal, in particular, with the usual class of Minkowski measurable RFDs (for which the gauge function $h$ is trivial, i.e., satisfies $h(t)\equiv 1$).

\begin{theorem}[Minkowski measurable RFDs]\label{rel_measurable}
Let $(A,\Omega)$ be a relative fractal drum in $\eR^N$ such that {\rm({\ref{h}})} holds for some $D\le N$, $\mathcal{M}>0$, $\alpha>0$ and with $h(t):=(\log t^{-1})^m$ for all $t\in(0,1)$, where $m$ is a nonnegative
integer. Then the relative fractal drum $(A,\Omega)$ is $h$-Minkowski measurable, $\dim_B(A,\Omega)=D$,
and $\mathcal{M}^D(A,\Omega,h)=\mathcal{M}$. Furthermore, the relative tube zeta function $\tilde\zeta_{A,\O}$  has for abscissa of convergence
$D(\tilde\zeta_{A,\Omega})=D$, and it possesses a $($necessarily unique$)$ meromorphic extension $($at least$)$ to the open right half-plane $\{\re s>D-\alpha\}$; that is, the abscissa of meromorphic continuation
$D_{\rm mer}(\tilde\zeta_{A,\Omega})$ of $\tilde\zeta_{A,\Omega}$ can be estimated as follows$:$
\begin{equation}
D_{\rm mer}(\tilde\zeta_{A,\Omega})\le D-\a.
\end{equation}
Moreover, $s=D$ is the unique pole in this half-plane, and it is of order $m+1$.
In addition, the coefficients of the Laurent series expansion\index{Laurent expansion} \eqref{laurent} corresponding to the principal part of $\tilde\zeta_{A,\O}$ at $s=D$ are given by 
\begin{equation}\label{m!}
\begin{aligned}
c_{-m-1}&=m!\mathcal{M},\\
c_{-m}&=\dots=c_{-1}=0\q\mbox{$($\rm provided $m\ge1$.$)$}
\end{aligned}
\end{equation}
If $m=0$, then $D$ is a simple pole of $\tilde\zeta_{A,\Omega}$ and we have that
\begin{equation}\label{resM}
\res(\tilde\zeta_{A,\O},D)=\mathcal{M}.
\end{equation}
\end{theorem}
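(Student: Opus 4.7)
The strategy is to substitute the asymptotic expansion \eqref{h} directly into the defining integral of the relative tube zeta function and to isolate the principal singularity by an explicit change of variable. First, the $h$-Minkowski measurability and the identity $\dim_B(A,\O)=D$ follow directly from \eqref{h}: since $h(t)=(\log t^{-1})^m$ satisfies $h(t)=O(t^{-c})$ for every $c>0$, dividing $|A_t\cap\O|$ by $t^{N-r}$ gives a limit of $+\ty$ when $r<D$ (from $t^{r-D}h(t)\to+\ty$) and of $0$ when $r>D$ (from $t^{r-D}h(t)\to 0$ upon choosing $0<c<r-D$). Dividing instead by $t^{N-D}h(t)$ gives $\mathcal{M}+O(t^\a)\to\mathcal{M}$. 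The equality $D(\tilde\zeta_{A,\O})=\ov\dim_B(A,\O)=D$ has already been recorded just after Equation \eqref{rel_tube_zeta}.

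Next, after possibly shrinking $\d>0$ so that \eqref{h} holds uniformly on $(0,\d)$, write
\begin{equation*}
\tilde\zeta_{A,\O}(s)=\mathcal{M}\int_0^\d t^{s-D-1}(\log t^{-1})^m\,\D t+R(s),
\end{equation*}
where
\begin{equation*}
R(s):=\int_0^\d t^{s-N-1}\bigl(|A_t\cap\O|-\mathcal{M}\,t^{N-D}(\log t^{-1})^m\bigr)\,\D t.
\end{equation*}
The hypothesis \eqref{h} gives a pointwise bound $\bigl||A_t\cap\O|-\mathcal{M}\,t^{N-D}h(t)\bigr|\le C\,t^{N-D+\a}h(t)$ on $(0,\d)$, so the integrand of $R(s)$ is dominated in absolute value by $C\,t^{\re s-D+\a-1}(\log t^{-1})^m$. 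Since $(\log t^{-1})^m$ is subpolynomial at $0$, this is integrable for every $\re s>D-\a$, and local uniform integrability on compact subsets of $\{\re s>D-\a\}$ allows one to invoke Theorem~\ref{Hh} to conclude that $R$ is holomorphic on $\{\re s>D-\a\}$.

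For the main integral, perform the substitution $u=\log t^{-1}$ (so $t=\E^{-u}$, $\D t=-\E^{-u}\D u$), obtaining, with $a:=\log\d^{-1}$,
\begin{equation*}
\int_0^\d t^{s-D-1}(\log t^{-1})^m\,\D t=\int_a^{+\ty}u^m\E^{-(s-D)u}\,\D u.
\end{equation*}
For $\re s>D$ the right-hand side equals
\begin{equation*}
\int_0^{+\ty}u^m\E^{-(s-D)u}\,\D u-\int_0^a u^m\E^{-(s-D)u}\,\D u=\frac{m!}{(s-D)^{m+1}}-P(s),
\end{equation*}
where $P(s):=\int_0^a u^m\E^{-(s-D)u}\,\D u$ is entire in $s$ (by Theorem~\ref{Hh} applied to a compact interval of integration). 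By the identity principle, this representation persists meromorphically for all $s\in\Ce$.

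Combining the two parts yields
\begin{equation*}
\tilde\zeta_{A,\O}(s)=\frac{\mathcal{M}\,m!}{(s-D)^{m+1}}+\bigl[R(s)-\mathcal{M}\,P(s)\bigr],
\end{equation*}
where the bracketed term is holomorphic on $\{\re s>D-\a\}$. This immediately provides the required meromorphic extension, shows that $s=D$ is the unique pole in this half-plane and that it is of order exactly $m+1$, and identifies the Laurent coefficients $c_{-m-1}=m!\mathcal{M}$ and $c_{-m}=\cdots=c_{-1}=0$, together with $\res(\tilde\zeta_{A,\O},D)=\mathcal{M}$ in the case $m=0$. The main technical obstacle is ensuring holomorphicity of $R$ on the stated half-plane, which amounts to a careful application of the dominated-integrability criterion of Theorem~\ref{Hh} using the fact that the logarithmic gauge is subpolynomial at the origin; everything else is a direct computation.
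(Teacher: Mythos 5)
Your argument is correct, and it differs from the paper's treatment in one notable place: the handling of the main integral $\int_0^{\d}t^{s-D-1}(\log t^{-1})^m\,\D t$. After the substitution $u=\log t^{-1}$, the paper integrates by parts to derive the recursion
\begin{equation*}
z_m(s)=\frac{1}{s-D}\Bigl((\log\d^{-1})^m\,\d^{s-D}+m\,z_{m-1}(s)\Bigr),
\end{equation*}
and then invokes the observation that the coefficients $c_j$ with $j<0$ of the Laurent expansion are independent of $\d$ (changing $\d$ adds an entire function), which allows the authors to set $\d=1$ and collapse the recursion to $z_m(s)=m!/(s-D)^{m+1}$. You instead split the tail integral as
\begin{equation*}
\int_a^{+\ty}u^m\E^{-(s-D)u}\,\D u=\int_0^{+\ty}u^m\E^{-(s-D)u}\,\D u-\int_0^a u^m\E^{-(s-D)u}\,\D u=\frac{m!}{(s-D)^{m+1}}-P(s),
\end{equation*}
with $P$ visibly entire because of the compact interval of integration. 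This avoids both the recursion and the ``$\d$-independence'' argument; the entire correction $P(s)$ plays the role of the $\d$-dependent boundary terms explicitly. The two routes produce exactly the same principal part at $s=D$ and the same meromorphic extension; yours is slightly more direct, while the paper's recursion makes the dependence on $m$ a bit more visible. Everything else (the $h$-Minkowski measurability, the identity $\dim_B(A,\O)=D$, the control of the error term $R(s)$ via the subpolynomial growth of the logarithmic gauge and Theorem~\ref{Hh}) matches the paper's treatment. One cosmetic point: the ``shrinking $\d$'' step only needs $\d\le 1$ so that $\log t^{-1}\ge 0$ on $(0,\d)$, since \eqref{h} is an asymptotic relation that already holds on a neighborhood of $0$.
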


\begin{proof}
Let us set
\begin{equation}\label{zeta12}
\begin{aligned}
\zeta_1(s)&=\mathcal{M} z_m(s),\quad z_m(s)=\int_0^\delta t^{s-D-1}(\log t^{-1})^m \D t,\\ 
\zeta_2(s)&=\int_0^\delta t^{s-N-1}(\log t^{-1})^m \, O(t^{N-D+\alpha})\,\D t.
\end{aligned}
\end{equation}
Since $\tilde\zeta_{A,\O}(s)=\zeta_1(s)+\zeta_2(s)$, we can proceed as follows.
It is easy to see that for each $\e>0$, we have $(\log t^{-1})^m=O(t^{-\e})$ as $t\to0^+$;
hence, 
$$
|\zeta_2(s)|\le\int_0^\delta O(t^{\re s-1-D+(\alpha-\e)})\,\D t.
$$
 Then, 
since the integral is well defined for all $s\in\Ce$ with $\re s>D-(\a-\e)$,
%in the same way as in the proof
%of \cite[Theorem 4.21]{mezf},
%of Theorem~\ref{measurable}, 
we deduce that $D(\zeta_2)\le D-(\alpha-\e)$. Letting $\e\to0^+$, we obtain
the desired inequality: $D(\zeta_2)\le D-\alpha$.

By means of the change of variable $\tau=\log t^{-1}$ (for $0<t\leq\delta$), it is easy to see that
\begin{equation}
z_m(s)=\int_{\log\delta^{-1}}^\ty {\E}^{-\tau(s-D)}\tau^m\D\tau.
\end{equation}
Integration by parts yields the following recursion relation, where we have to assume (at first) that $\re s>D$:
\begin{equation}\label{z_m}
z_m(s)=\frac1{s-D}\left((\log \delta^{-1})^m\delta^{s-D}+m z_{m-1}(s)\right),\quad \textrm{for}\ m\ge1,
\end{equation}
and $z_0(s)=(s-D)^{-1}\delta^{s-D}$.
Since $D(\zeta_2)\le D-\alpha$, it is clear that the coefficients $c_j$, $j<0$, of the Laurent\index{Laurent expansion} series expansion~\eqref{laurent} of $\tilde\zeta_{A,\O}(s)=\zeta_1(s)+\zeta_2(s)$ in a connected open neighborhood of $s=D$ do not depend on $\delta>0$. Indeed, changing the value of $\delta>0$ to $\delta_1>0$ in (\ref{rel_tube_zeta}) is equivalent to adding $\int_{\delta}^{\delta_1}t^{s-N-1}|A_t\cap\Omega|\,\D t$, which is an entire function of~$s$. Therefore,
without loss of generality, we may take $\delta=1$ in (\ref{z_m}):
\begin{equation}
z_m(s)=\frac{m}{s-D} z_{m-1}(s)=\dots=\frac{m!}{(s-D)^m}z_0(s)=\frac{m!}{(s-D)^{m+1}}.
\end{equation}
In this way, we obtain that
\begin{equation}
\zeta_1(s)=\frac{m!}{(s-D)^{m+1}}\mathcal{M},
\end{equation}
and we can meromorphically continue $\zeta_1$ from the half-plane $\{\re s>D\}$ to the entire complex plane.
The claim then follows from the equality $\tilde\zeta_{A,\O}(s)=\zeta_1(s)+\zeta_2(s)$.
\end{proof}

A large class of examples of RFDs satisfying condition \eqref{h}, involving power logarithmic gauge functions, can be found in Example \ref{Lmloga} of \S\ref{multiplicity}
below, based on \cite[Theorem 5.4]{mm}.
% instead of Tm 6.13 cras2 
(In fact, \cite[Theorem 5.4]{mm} can be understood as a partial converse of the above result, Theorem \ref{rel_measurable}.) 
These RFDs are constructed by using consecutive tensor products of a suitable bounded fractal string $\mathcal L$, i.e., by an iterated spraying of $\mathcal L$;
see \cite{cras2} for details. A nontrivial class of examples is already obtained when
$\mathcal L$ is the ternary Cantor string. 
A similar comment can be made about the analogous condition \eqref{rel_G} appearing in the folowing theorem. 

\begin{theorem}[Minkowski nonmeasurable RFDs]\label{rel_nonmeasurable}
Let $(A,\Omega)$ be a relative fractal drum in $\eR^N$ such that there exist $D\le N$, a nonconstant periodic function $G:\eR\to\eR$ with minimal period $T>0$,
and a nonnegative integer $m$, satisfying
\begin{equation}\label{rel_G}
|A_t\cap\Omega|=t^{N-D}(\log t^{-1})^m\left(G(\log t^{-1})+O(t^\alpha)\right)\quad\mbox{\rm as\qs$t\to0^+$.}
\end{equation}
 Then $\dim_B(A,\Omega)$ exists and $\dim_B(A,\Omega)=D$, $G$ is continuous, and 
$$
\mathcal{M}_*^D(A,\Omega,h)=\min G,\quad \mathcal{M}^{*D}(A,\Omega,h)=\max G,
$$
where $h(t):=(\log t^{-1})^m$ for all $t\in(0,1)$.
Furthermore, the tube zeta function $\tilde\zeta_{A,\O}$ has for abscissa of convergence $D(\tilde\zeta_{A,\O})=D$, and it possesses a $($necessarily unique$)$
meromorphic extension
{\rm({\it at least})} to the open right half-plane $\{\re s>D-\alpha\}$; that is,
\begin{equation}
D_{\rm mer}(\tilde\zeta_{A,\O})\le D-\a.
\end{equation}
Moreover, all of its poles 
located in this half-plane are of order $m+1$, and the set of poles $\mathcal{P}(\tilde \zeta_{A,\O})$ is contained in the vertical line $\{\re s=D\}$.
More precisely,
\begin{equation}\label{rel_Dpoles}
\begin{aligned}
\mathcal{P}(\tilde \zeta_{A,\O}))&=\mathcal{P}_c(\tilde\zeta_{A,\O})\\
&=\left\{s_k=D+\frac{2\pi}Tk{\I}\in\Ce:\hat G_0\left(\frac kT\right)\ne0,\,\,k\in\Ze\right\},
\end{aligned}
\end{equation}
where $s_0=D\in \mathcal{P}(\tilde \zeta_{A,\O})$ and $\hat G_0$ is the Fourier transform\index{Fourier transform} of $G_0$ $($as given by $(\ref{fourier})$$)$. 
The nonreal poles come in complex conjugate pairs; that is, for each $k\geq 1$, if $s_k$ is a pole, then $s_{-k}$ is a pole as well.

In addition, for any given $k\in\Ze$, if $\tilde\zeta_{A,\O}(s)=\sum_{j=-\ty}^{\ty}c_j^{(k)}(s-s_k)^j$ is the Laurent 
expansion\index{Laurent expansion} of the tube zeta function in a connected open neighborhood of $s=s_k$, then
\begin{equation}
\begin{aligned}
c_j^{(k)}&=0\quad\mbox{\rm for\qs$j<0$,\q$j\ne -m-1$,}\\
c_{-m-1}^{(k)}&=\frac{m!}T\hat G_0\left(\frac kT\right),
\end{aligned}
\end{equation}
where $G_0$ is the restriction of $G$ to the interval $[0,T]$,
and $\hat G_0$ is given by \eqref{fourier}, as above. Also,
\begin{equation}
|c_{-m-1}^{(k)}|\le\frac{m!}T\int_0^T G(\tau)\,\D\tau,\quad\lim_{k\to\ty}c_{-m-1}^{(k)}=0.
\end{equation}
%and if $G\in C^m(\eR)$ $($where, $C^m(\eR)$ denotes the space of $m$ times continuously differentiable functions on $\eR$$)$ then there exists $C_m>0$ such that for all $k\in\Ze$,
%\begin{equation}
%|c_{-m-1}^{(k)}|\le C_m k^{-m}.
%\end{equation}
In particular, for $k=0$, that is, for $s_0=D$, we have
\begin{equation}
\begin{aligned}
c_{-m-1}^{(0)}&=\frac{m!}T\int_0^T G(\tau)\,\D\tau,\\
m!\,\mathcal{M}_*^D(A,\Omega,h)&<c_{-m-1}^{(0)}<m!\,\mathcal{M}^{*D}(A,\Omega,h).
\end{aligned}
\end{equation}
If $m=0$ $($i.e., $h(t)=1$ for all $t\in(0,1)$$)$, then $D$ is a simple pole of $\tilde\zeta_{A,\Omega}$ and we have that
\begin{equation}\label{av}
\res(\tilde\zeta_{A,\O},D)=\frac1T\int_0^T G(\tau)\,\D\tau=\widetilde{\mathcal M}
\end{equation}
and
\begin{equation}
\mathcal{M}_*^D(A,\Omega)<\res(\tilde\zeta_{A,\O},D) <\mathcal{M}^{*D}(A,\Omega),
\end{equation}
where $\widetilde{\mathcal M}=\widetilde{\mathcal M}^{D}(A,\O)$ denotes the average Minkowski content of $(A,\O)$.
$($See Remark \ref{average} below.$)$
\end{theorem}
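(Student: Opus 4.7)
The strategy parallels Theorem~\ref{rel_measurable}, with the constant $\mathcal{M}$ replaced by $G(\log t^{-1})$. I would write $\tilde\zeta_{A,\O}=\zeta_1+\zeta_2$, where
\[
\zeta_1(s):=\int_0^\d t^{s-D-1}(\log t^{-1})^m G(\log t^{-1})\,\di t
\]
and $\zeta_2$ absorbs the $O(t^\a)$ remainder. Exactly as in the proof of Theorem~\ref{rel_measurable}, $\zeta_2$ is holomorphic on $\{\re s>D-\a\}$, so all the singular behaviour is encoded in $\zeta_1$. The preliminary assertions---continuity of $G$, the identity $\dim_B(A,\O)=D$, and the values $\M_*^D(A,\O,h)=\min G$ and $\M^{*D}(A,\O,h)=\max G$---I would extract directly from the asymptotic hypothesis \eqref{rel_G}, using the continuity of $t\mapsto|A_t\cap\O|$ together with the periodicity of $G$ (which propagates continuity from a neighborhood of $+\infty$ back to all of $\eR$). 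The identity $D(\tilde\zeta_{A,\O})=D$ is then inherited from part $(b)$ of Theorem~\ref{an_rel} via the functional equation \eqref{rel_equality}.

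For the meromorphic extension of $\zeta_1$, I would set $\d=1$ (changing $\d$ only adds an entire function) and substitute $\tau=\log t^{-1}$ to obtain $\zeta_1(s)=\int_0^\infty \E^{-\tau(s-D)}\tau^m G(\tau)\,\di\tau$. Partitioning $\tau=jT+\sigma$ with $j\in\eN_0$ and $\sigma\in[0,T)$, and expanding $(jT+\sigma)^m$ by the binomial theorem, yields the \emph{finite} decomposition
\[
\zeta_1(s)=\sum_{p=0}^{m}\binom{m}{p}T^{p}\,S_p(s)\,I_p(s),\qquad I_p(s):=\int_0^T \E^{-\sigma(s-D)}\sigma^{m-p}G(\sigma)\,\di\sigma,
\]
where each $I_p$ is entire and
\[
S_p(s):=\sum_{j=0}^\infty j^p \E^{-jT(s-D)}=\left(-\frac{1}{T}\frac{d}{ds}\right)^{\!p}\frac{1}{1-\E^{-T(s-D)}}.
\]
The function $1/(1-\E^{-T(s-D)})$ extends meromorphically to $\Ce$ with only simple poles at the points $s_k:=D+2\pi\I k/T$, each of residue $1/T$; a direct induction then gives $S_p(s)=\frac{p!}{T^{p+1}(s-s_k)^{p+1}}+(\text{holomorphic near }s_k)$. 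Consequently, $\zeta_1$, and hence $\tilde\zeta_{A,\O}$, extends meromorphically with poles only at the $s_k$ and of order at most $m+1$.

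Extracting the Laurent coefficients at $s_k$ rests on two facts: $I_m(s_k)=\hat G_0(k/T)$, and the identity $I_p^{(\ell)}(s)=(-1)^\ell I_{p-\ell}(s)$ for $0\le\ell\le p$, obtained by differentiating under the integral sign. Only $p=m$ contributes to the coefficient of $(s-s_k)^{-m-1}$, giving
\[
c_{-m-1}^{(k)}=T^{m}\cdot\frac{m!}{T^{m+1}}\cdot\hat G_0(k/T)=\frac{m!}{T}\hat G_0(k/T),
\]
so that $s_k$ is a pole precisely when $\hat G_0(k/T)\ne 0$, and the pole then has order exactly $m+1$. For each intermediate index $1\le j\le m$, collecting the contributions of the pairs $(p,\ell)$ with $p+1-\ell=j$ and invoking the derivative identity reduces $c_{-j}^{(k)}$ to a multiple of $I_{j-1}(s_k)\sum_{q=0}^{m-j+1}\binom{m-j+1}{q}(-1)^q=(1-1)^{m-j+1}=0$. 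This algebraic cancellation, powered by $I_p^{(\ell)}=(-1)^\ell I_{p-\ell}$, is the main technical obstacle; once it is in place, the remaining assertions are routine---the estimate $|c_{-m-1}^{(k)}|\le\frac{m!}{T}\int_0^T G(\tau)\,\di\tau$ uses only $G\ge0$ and the triangle inequality for Fourier coefficients; the decay $c_{-m-1}^{(k)}\to 0$ as $|k|\to\infty$ is the Riemann--Lebesgue lemma; the conjugation symmetry of the poles follows from $G$ being real-valued; and for $m=0$ the residue identity and the strict inequalities involving $\widetilde{\mathcal M}$ are immediate from the nonconstancy and continuity of $G$.
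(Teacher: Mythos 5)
Your proof is correct, and it takes a genuinely different route from the paper's. The paper's argument defines $z_m(s)=\int_0^\delta t^{s-D-1}(\log t^{-1})^m G(\log t^{-1})\,\D t$, observes that $z_m=(-1)^m z_0^{(m)}$, and then simply differentiates $m$ times the known Laurent series of $z_0$ around each $s_k$ (a simple pole with residue $\frac{1}{T}\hat G_0(k/T)$, imported from a cited companion result); the vanishing of $c_{-j}^{(k)}$ for $1\le j\le m$ is then automatic, because $\bigl(\tfrac{a_{-1}}{s-s_k}\bigr)^{(m)}$ produces only a $(s-s_k)^{-m-1}$ term and nothing between. You, by contrast, partition the variable $\tau=\log t^{-1}$ into period blocks, expand $(jT+\sigma)^m$ binomially, and realize $\zeta_1=\sum_{p=0}^m\binom{m}{p}T^p S_p I_p$ with each $I_p$ entire and each $S_p=\bigl(-\tfrac1T\tfrac{\D}{\D s}\bigr)^p\bigl(1-\E^{-T(s-D)}\bigr)^{-1}$. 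The price of this decomposition is that the intermediate Laurent coefficients no longer vanish term by term: one must collect contributions from every $p\ge j-1$ and use the antidifferentiation identity $I_p^{(\ell)}=(-1)^\ell I_{p-\ell}$ to reduce $c_{-j}^{(k)}$ to a multiple of the alternating binomial sum $(1-1)^{m-j+1}$, which vanishes exactly because $j\le m$. This combinatorial cancellation is delicate but correct, and its advantage is self-containment: you derive the key residue $\frac1T\hat G_0(k/T)$ directly from $S_0 I_0(s_k)$ rather than treating it as a boxed input. In short, the paper buys economy by an operator identity reducing to $m=0$; you pay a combinatorial cost for a more explicit, from-scratch computation; both yield identical Laurent data at every $s_k$.
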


\begin{proof}
For $m\in\eN_0$, let us define $z_m$ by
\begin{equation}
z_m(s)=\int_0^\delta t^{s-D-1}(\log t^{-1})^m G(\log t^{-1})\,\D t.\nonumber
\end{equation}
The function $z_0(s)$ is the exact counterpart of $\zeta_1(s)$ from the proof of \cite[Theorem 4.24]{mezf},
%Theorem~\ref{nonmeasurable}, 
with $|A_t|$ changed to $|A_t\cap\Omega|$ and where, much as in that proof, $\tilde\zeta_{A,\O}=\zeta_1+\zeta_2$ and $\zeta_2$ is an entire function.
It is easy to see that $z_m(s)=(-1)^m z_0^{(m)}(s)$; therefore, the functions $z_m(s)$ and $z_0(s)$ have the same meromorphic extension, and the same sets of poles.
This proves that $\tilde\zeta_{A,\O}(s)$ can be meromorphically extended from $\{\re s>D\}$ to the half-plane $\{\re s>D-\alpha\}$. The set of poles (complex dimensions) of the relative
zeta function, belonging to this half-plane, is given by
\begin{equation}\nonumber
\begin{aligned}
\mathcal{P}(\tilde \zeta_{A,\O})&=\mathcal{P}(z_m)=\mathcal{P}(z_0)\\
&=\left\{s_k=D+\frac{2\pi}Tk{\I}\in\Ce:\hat G_0\left(\frac kT\right)\ne0,\,\,k\in\Ze\right\}.
\end{aligned}
\end{equation}
Each of these poles is simple. Furthermore, if
\begin{equation}
z_0(s)=\sum_{j=-1}^\ty a_j^{(k)}(s-s_k)^j\nonumber
\end{equation}
is the Laurent\index{Laurent expansion} series of $z_0(s)$ in a neighborhood of $s=s_k$, then
\begin{equation}
z_0^{(m)}(s)= (-1)^m m!\,a_{-1}^{(k)}(s-s_k)^{-m-1}+\sum_{j=0}^\ty \frac{(m+j)!}{j!}a_{m+j}^{(k)}(s-s_k)^j.\nonumber
\end{equation}
Hence,
\begin{equation}
c_{-m-1}^{(k)}=m!a_{-1}^{(k)}=m!\,\frac1T\hat G_0\left(\frac kT\right),\nonumber
\end{equation}
where, in the last equality, we have used \cite[Eq.\ (4.32)]{mezf}.
%(\ref{res_fourier}). 
The remaining claims are proved much as the corresponding ones
made in \cite[Theorem 4.24]{mezf}.
%Theorem~\ref{nonmeasurable}.
\end{proof}

\begin{remark}\label{average}
In Equation \eqref{av},  $\widetilde{\mathcal M}=\widetilde{\mathcal M}^{D}(A,\O)$, the {\em average Minkowski content}\index{average Minkowski content of an RFD $(A,\O)$|textbf}\index{Minkowski content of an RFD $(A,\O)$!average|textbf} of $(A,\O)$, is defined as the multiplicative Ces\`aro average of 
$t^{-(N-D)}|A_t\cap\O|$:
\begin{equation}
\widetilde{\mathcal M}^{D}(A,\O):=\lim_{\tau\to+\ty}\frac1{\log\tau}\int_{1/\tau}^1 \frac{|A_t\cap\O|}{t^{N-D}}\,\frac{\D t}t,
\end{equation}
provided the limit exists in $[0,+\ty]$.
(See Equation \eqref{h} and compare with \cite[Definition 8.29, Eq.\ (8.55)]{lapidusfrank12}.)
\end{remark}

\begin{remark}\label{meas12}
In light of the functional equation \eqref{rel_equality} connecting $\zeta_{A,\O}$ and $\tilde\zeta_{A,\O}$, Theorems \ref{rel_measurable} and \ref{rel_nonmeasurable} also hold for relative {\em distance} zeta functions (instead of relative tube zeta functions), provided $D<N$, and in that case, all of the expressions for the residues and the Laurent coefficients must be multiplied by $N-D$
\end{remark}

\section{Construction of $\ty$-quasiperiodic relative fractal drums}\label{ty_qp_rfd}

Our construction of quasi\-pe\-ri\-o\-dic RFDs (see Definition \ref{quasi_periodic_t} below) given in this section is based on a certain two-parameter family of generalized Cantor sets, which we now describe.

\begin{defn}\label{Cma}
The generalized Cantor sets $C^{(m,a)}$ are determined by an integer $m\ge2$ and a positive real number $a$ such that $ma<1$.
In the first step of the analog of Cantor's construction of the standard ternary Cantor set, we start with $m$ equidistant, closed intervals in $[0,1]$ of length $a$, with $m-1$ `holes', each of length $(1-ma)/(m-1)$. In the second step, we continue by scaling by the factor $a$ each of the $m$ intervals of length $a$; and so on, ad infinitum.
The  $($two-parameter$)$ {\em generalized Cantor set}\index{generalized Cantor set, $C^{(m,a)}$|textbf} $C^{(m,a)}$ is defined as the intersection of the decreasing sequence of compact sets constructed in this way.
It is easy to check that $C^{(m,a)}$ is a perfect, uncountable compact subset of $\eR$.
(Recall that a {\em perfect set}\index{perfect set|textbf} is a closed set without any isolated points.) Furthermore, $C^{(m,a)}$ is also self-similar.

In order to avoid any possible confusion, we note that the generalized Cantor sets introduced here are different from the generalized Cantor strings introduced and studied in \cite[Chapter 10]{lapidusfrank12}.
With our present notation, the classic ternary Cantor set is obtained as $C^{(2,1/3)}$.
\end{defn}

We note that the box dimension of $C^{(m,a)}$ exists and is equal to its Hausdorff dimension, as well as its similarity dimension (here, $\log_{1/a}m$).
 The proof of this fact in the case of the classic Cantor set can be found in \cite{falc} and is due to Moran [Mora] (in the present case when $N=1$); see also \cite{hutchinson}.
For any pair $(m,a)$ as above, this follows from a general result in \cite{hutchinson} (described in [{Fal1}, Theorem 9.3]) because $C^{(m,a)}$ is a self-similar set satisfying the open set condition. (See also [{Mora}].)

It can be shown that the generalized Cantor sets $C^{(m,a)}$ have the following properties. 
Apart from the proof of \eqref{zetaCma}, the proof of the next proposition is similar to that for the standard Cantor set (see \cite[Eq.\ (1.11)]{lapidusfrank12}).

\begin{prop}\label{Cmap}
 If $C^{(m,a)}\subset\eR$ is the generalized Cantor set introduced in Definition~\ref{Cma}, 
where $m$ is an integer, $m\ge2$, and $a\in(0,1/m)$,
then
\begin{equation}\label{2.1.1}
D:=\dim_B C^{(m,a)}=D(\zeta_A)=\log_{1/a}m.
\end{equation}
Furthermore, the tube formula associated with $C^{(m,a)}$ is given by
\begin{equation}\label{Cmat}
|C^{(m,a)}_t|=t^{1-D}G(\log t^{-1})
\end{equation}
for all $t\in(0,\frac{1-ma}{2(m-1)})$, where $G=G(\tau)$ is a nonconstant periodic function, with minimal period $T=\log (1/a)$,  and is  defined by 
\begin{equation}\label{Gtau}
G(\tau)=c^{D-1}(ma)^{g\left(\frac{\tau-c}{T}\right)}+2\,c^Dm^{g\left(\frac{\tau-c}{T}\right)}.
\end{equation}
Here, $c=\frac{1-ma}{2(m-1)}$, and $g:\eR\to[0,+\ty)$ is the $1$-periodic function defined by $g(x)=1-x$ for $x\in(0,1]$.

Moreover,
\begin{equation}
\begin{aligned}
\mathcal{M}_*^D(C^{(m,a)})&=\min G=\frac1D\left(\frac{2D}{1-D}\right)^{1-D},\\
\mathcal{M}^{*D}(C^{(m,a)})&=\max G=\left(\frac{1-ma}{2(m-1)}\right)^{D-1}\frac{m(1-a)}{m-1}.
\end{aligned}
\end{equation}

Finally, if we assume that $\delta\ge\frac{1-ma}{2(m-1)}$, then the distance zeta function of $A:=C^{(m,a)}$ is given by
\begin{equation}\label{zetaCma}
\zeta_A(s):=\int_{-\delta}^{1+\delta}d(x,A)^{s-2}\D x=\left(\frac{1-ma}{2(m-1)}\right)^{s-1}\frac{1-ma}{s(1-ma^s)}+\frac{2\delta^s}s.
\end{equation}
As a result, $\zeta_A(s)$ admits a meromorphic continuation to all of $\Ce$, given by the last expression in $(\ref{zetaCma})$. In particular, the set of poles of $\zeta_A(s)$ $($in $\Ce)$ and the residue of $\zeta_A(s)$ at $s=D$ are respectively given by
\begin{equation}\label{2.1.6}
\begin{aligned}
\po(\zeta_A)&=(D+\mathbf p{\I}\Ze)\cup\{0\}\q\mathrm{and}\\ 
\res(\zeta_A,D)&=\frac{1-ma}{DT}\left(\frac{1-ma}{2(m-1)}\right)^{D-1},
\end{aligned}
\end{equation}
where $\mathbf p=2\pi/T$  is the oscillatory period\label{operiod}\index{oscillatory period!of the generalized Cantor set $C^{(m,a)}$} $($in the sense of {\rm\cite{lapidusfrank12}}$)$.
Furthermore, 
$$
D=\frac{\log m}{2\pi}\,\mathbf{p},
$$
and both $\mathbf p\to0^+$ and $D\to0^+$ as $a\to0^+$. In particular, $\po(\zeta_A)$ converges to the imaginary axis in the Hausdorff metric,\index{Hausdorff metric} as $a\to0^+$.
 Finally, each pole in $\po(\zeta_A)$ is simple.
\end{prop}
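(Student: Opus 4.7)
The plan is as follows. Since $C^{(m,a)}$ is a self-similar set generated by the IFS of $m$ affine contractions on $[0,1]$ with common ratio $a$, and the $m$ first-level intervals of length $a$ are pairwise disjoint (so the open set condition holds), a standard application of Moran's theorem (see \cite{hutchinson}, \cite{falc}) shows that its box, Hausdorff and similarity dimensions all coincide with the unique positive solution $D$ of $m\,a^D=1$, i.e.\ $D=\log_{1/a}m$; the equality $D(\zeta_A)=\overline{\dim}_B C^{(m,a)}$ then comes from part~$(b)$ of Theorem~\ref{an_rel}. For the tube formula, I count gaps by level: at stage $n\ge 1$ the construction produces exactly $m^{n-1}(m-1)$ open gaps, each of length $\ell_n:=a^{n-1}(1-ma)/(m-1)=2ca^{n-1}$, with $c=(1-ma)/(2(m-1))$. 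Since $A$ contains both endpoints of every gap, the contribution of a gap of length $\ell$ to $|A_t|$ is $\min(2t,\ell)$, and the two exterior pieces $[-t,0)\cup(1,1+t]$ contribute $2t$. For $t\in(ca^k,ca^{k-1}]$, splitting the sum at $n=k$ gives two geometric series that collapse to $|A_t|=2t\,m^k+(ma)^k$. Writing $t=c\,a^{k-1+u}$ with $u\in[0,1)$ and using $a^D=1/m$, a direct calculation yields $|A_t|/t^{1-D}=2c^D m^{1-u}+c^{D-1}(ma)^{1-u}$, which depends only on the fractional part of $(\log t^{-1})/T$ and hence defines the claimed $T$-periodic function $G$ with $T=\log(1/a)$. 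The identity $2c(m-1)=1-ma$ ensures that the two formulas match at $t=ca^k$, so $G$ is continuous.

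To compute $\min G$ and $\max G$, I study $f(u):=2c^D m^{1-u}+c^{D-1}(ma)^{1-u}$ on $[0,1]$. Because $\log m>0$ and $\log(ma)<0$, the equation $f'(u)=0$ has a unique interior solution $u^*$ characterized by $a^{1-u^*}=2cD/(1-D)$. Using $\log m=DT$ and $\log(1/(ma))=T(1-D)$, the critical point condition rewrites as $(ma)^{1-u^*}=2cD\,m^{1-u^*}/(1-D)$, whence $f(u^*)=2c^Dm^{1-u^*}/(1-D)$. Substituting $m^{1-u^*}=((1-D)/(2cD))^D$ gives $f(u^*)=D^{-1}(2D/(1-D))^{1-D}$, the stated value of $\mathcal{M}_*^D(C^{(m,a)})$. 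On the other hand, $f(0)=c^{D-1}m(2c+a)=c^{D-1}m(1-a)/(m-1)$ (using $2c+a=(1-a)/(m-1)$), yielding $\mathcal{M}^{*D}(C^{(m,a)})$. A sign inspection of $f'$ near $u=0$ confirms that $u^*$ is the minimum, so the maximum is achieved at the endpoint.

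The central new content is the explicit form~\eqref{zetaCma} of $\zeta_A$. Splitting the integral over $[-\delta,1+\delta]$ into the two exterior pieces and the countable family of open gaps, the exterior contributes $2\delta^s/s$, while each gap of length $\ell$ contributes $2\int_0^{\ell/2}y^{s-1}\,\D y=\ell^s/(s\,2^{s-1})$. Summing over levels,
\begin{equation*}
\zeta_A(s)=\frac{2\delta^s}{s}+\frac{1}{s\,2^{s-1}}\sum_{n\ge 1}m^{n-1}(m-1)\,\ell_n^{\,s}=\frac{2\delta^s}{s}+\frac{(1-ma)^s}{s\,2^{s-1}(m-1)^{s-1}(1-ma^s)},
\end{equation*}
where the geometric series converges for $\re s>D$ (since $ma^D=1$). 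Substituting $(1-ma)^s=2^sc^s(m-1)^s$ collapses this to the claimed form~\eqref{zetaCma}. The resulting rational-exponential expression is meromorphic on all of $\Ce$ and therefore furnishes the meromorphic continuation of $\zeta_A$.

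Its poles are the zeros of $s(1-ma^s)$: a simple pole at $s=0$ (not cancelled, as $1-m\ne 0$), together with the zeros of $1-ma^s$. Setting $a^s=1/m$ and using $a=\E^{-T}$ gives precisely $s=D+\mathbf{p}k\I$, $k\in\Ze$, with $\mathbf{p}:=2\pi/T$; these are simple because $(1-ma^s)'|_{s=D}=-ma^D\log a=T\ne 0$. L'H\^opital at $s=D$ then yields $\res(\zeta_A,D)=c^{D-1}(1-ma)/(DT)$. The identity $D=(\log m/2\pi)\mathbf{p}$ is immediate from $D=\log m/T$, and $\mathbf{p},D\to 0^+$ as $a\to 0^+$ follows from $T=\log(1/a)\to\infty$; the Hausdorff convergence of $\po(\zeta_A)$ to the imaginary axis on each bounded window is then clear. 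The main obstacle I anticipate is the bookkeeping in deriving~\eqref{zetaCma}: telescoping the doubly-indexed gap sum into a geometric series in $ma^s$, and consistently using the identity $2c(m-1)=1-ma$, which is simultaneously needed to match the tube-formula expressions at the transitions $t=ca^k$ and to produce the clean closed form of the residue at $s=D$.
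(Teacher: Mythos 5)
Your proof is correct, and it follows essentially the same route the paper has in mind: the paper explicitly states that ``apart from the proof of \eqref{zetaCma}, the proof \dots is similar to that for the standard Cantor set (see \cite[Eq.\ (1.11)]{lapidusfrank12})'', and your gap-count (there are $m^{n-1}(m-1)$ gaps of length $\ell_n=a^{n-1}(1-ma)/(m-1)=2ca^{n-1}$ at stage $n$), the collapse to $|A_t|=2tm^k+(ma)^k$ on $(ca^k,ca^{k-1}]$, the rewriting in terms of the fractional part of $(\log t^{-1})/T$, the computation of the gap and exterior contributions to the zeta integral, and the resulting geometric series in $ma^s$ are precisely that argument carried through for the two-parameter family. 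One remark worth making explicit: the exponent $s-2$ in the displayed definition $\zeta_A(s):=\int_{-\delta}^{1+\delta}d(x,A)^{s-2}\,\D x$ in \eqref{zetaCma} is a typo for $s-1$ (here $N=1$), and you correctly and silently used $s-1$---this is also forced by the closed form, whose pole set lies on $\{\re s=D\}$ rather than $\{\re s=D+1\}$; relatedly, the argument $(\tau-c)/T$ of $g$ in \eqref{Gtau} should involve an additive shift by $\log c$ rather than $c$, as your derivation $u=\{(\tau+\log c)/T\}$ makes clear. Two very minor polish suggestions: the identity at transitions, $2c(m-1)=1-ma$, is exactly what gives $f(0)=f(1)$, and once you establish $f''>0$ on $[0,1]$ (both exponential pieces contribute positive second derivatives), strict convexity together with $f(0)=f(1)$ automatically places the unique minimizer $u^*$ in the open interval $(0,1)$ and the maximum at the endpoints, making the ``sign inspection near $u=0$'' unnecessary.
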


In the sequel, we shall need the following important theorem from transcendental number theory, due to Baker \cite[Theorem~2.1]{baker}.

\begin{theorem}[{\rm Baker, \cite[Theorem~2.1]{baker}}]\index{Baker's theorem|textbf}
\label{baker0}
Let $n\in\eN$ with $n\geq 2$.
If $m_1,\dots, m_n$ are positive algebraic numbers such that $\log m_1,\dots,\log m_n$ are linearly independent over the rationals,
then 
$$
1,\log m_1,\dots,\log m_n
$$ 
are linearly independent over the field of all algebraic numbers $($or algebraically independent, in short$)$. In particular,
the numbers $\log m_1,\dots,\log m_n$ are transcendental, as well as their pairwise quotients.
\end{theorem}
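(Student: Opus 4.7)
The plan is to proceed by Baker's method in transcendence theory, which extends the Gelfond--Schneider technique from two logarithms to $n$ logarithms. The argument is by contradiction: assume there exist algebraic numbers $\beta_0,\beta_1,\dots,\beta_n$, not all zero, satisfying $\beta_0+\beta_1\log m_1+\cdots+\beta_n\log m_n=0$. The $\Qu$-linear independence of $\log m_1,\dots,\log m_n$ forces $\beta_0\neq 0$, so after dividing one obtains a relation of the shape $\log m_n=\alpha_0+\alpha_1\log m_1+\cdots+\alpha_{n-1}\log m_{n-1}$ with $\alpha_j$ algebraic. Equivalently, $m_n=\E^{\alpha_0}m_1^{\alpha_1}\cdots m_{n-1}^{\alpha_{n-1}}$, and the goal is to contradict this identity by quantitative means.

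First I would introduce large integer parameters $L,T,h$ (to be chosen later) and, using Siegel's lemma (a pigeonhole argument for systems of linear equations with integer coefficients), construct a nonzero polynomial
\begin{equation*}
P(y,x_1,\dots,x_{n-1})=\sum_{\lambda_0=0}^{L}\cdots\sum_{\lambda_{n-1}=0}^{L} p(\lambda_0,\dots,\lambda_{n-1})\,y^{\lambda_0}x_1^{\lambda_1}\cdots x_{n-1}^{\lambda_{n-1}}
\end{equation*}
with rational integer coefficients $p(\lambda_0,\dots,\lambda_{n-1})$ of controlled height, such that the associated entire function
\begin{equation*}
\Phi(z):=P\!\left(z,m_1^{z},\dots,m_{n-1}^{z}\right)
\end{equation*}
vanishes to order at least $T$ at every integer $z\in\{1,2,\dots,h\}$. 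Crucially, when one differentiates $\Phi$ and evaluates at an integer $\ell$, the resulting quantity lies in the number field $\Qu(\alpha_0,\dots,\alpha_{n-1},m_1,\dots,m_n)$ because the hypothesized linear relation lets one replace any occurrence of $m_n^\ell$ by a product of powers of $m_1,\dots,m_{n-1}$ and $\E^{\alpha_0}$; this is how the assumed dependence enters.

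The main and hardest step is the analytic extrapolation. Using the maximum modulus principle on a large disk together with a Schwarz-type lemma for entire functions of exponential type, one shows that if $\Phi$ already vanishes to high order at many integer points, then $|\Phi^{(k)}(\ell)|$ is extraordinarily small at other points $\ell$ as well. On the arithmetic side, $\Phi^{(k)}(\ell)$ is an algebraic number whose house and denominator admit upper bounds polynomial in the parameters, so Liouville's inequality forces it to be either zero or bounded below. Balancing these two bounds (the delicate choice of $L,T,h$ is the crux of the whole proof) one deduces that the vanishing order propagates, i.e.\ $\Phi$ vanishes on a strictly larger set than originally prescribed. Iterating the extrapolation a bounded number of times produces enough vanishing to violate a zero-estimate for exponential polynomials (Tijdeman's lemma or an induction on $n$), yielding the contradiction and proving the main statement.

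Finally, the \emph{in particular} consequences are immediate corollaries: transcendence of $\log m_j$ (for $m_j$ a positive algebraic number with $m_j\neq 1$) follows by applying the theorem with $n=1$, since an algebraic value $\log m_j=\beta$ would give the forbidden dependence $\beta\cdot 1-1\cdot\log m_j=0$; and transcendence of the quotient $\log m_i/\log m_j$ follows by applying the theorem with $n=2$, because an algebraic value $\alpha=\log m_i/\log m_j$ would give $\log m_i-\alpha\log m_j=0$, contradicting the algebraic independence of $1,\log m_i,\log m_j$ over $\overline{\Qu}$. The principal obstacle throughout is the extrapolation step, whose success depends entirely on the optimal balance between the combinatorial size of the auxiliary construction and the arithmetic heights of the algebraic numbers involved.
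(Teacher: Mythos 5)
The paper does not prove Theorem~\ref{baker0} at all; it simply cites Baker's monograph [Ba, Theorem~2.1] for this deep classical result. So there is no in-paper argument to compare against. Your outline of the standard auxiliary-function method (Siegel's lemma to build the polynomial $P$ and the entire function $\Phi$, analytic extrapolation via maximum-modulus and a Schwarz-type lemma, arithmetic lower bounds of Liouville type, and a final zero estimate for exponential polynomials) is an accurate top-level description of how Baker proceeds, and the parameter balance you emphasize is indeed where the difficulty concentrates.

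There is, however, a genuine gap at your very first reduction step. You claim that the $\Qu$-linear independence of $\log m_1,\dots,\log m_n$ forces $\beta_0\neq 0$ in a putative relation $\beta_0+\beta_1\log m_1+\cdots+\beta_n\log m_n=0$ with algebraic $\beta_j$ not all zero. This is false: if $\beta_0=0$ you are left with $\sum_{j=1}^n\beta_j\log m_j=0$, an algebraic (not rational) linear relation among the logarithms, and such a relation is not excluded by the $\Qu$-linear-independence hypothesis unless all $\beta_j$ happen to lie in $\Qu$. In fact, ruling out exactly this homogeneous possibility is the heart of Baker's theorem --- it is the statement that the $\log m_j$ themselves (not merely $1,\log m_1,\dots,\log m_n$) are $\ov{\Qu}$-linearly independent, and it requires its own treatment within the induction and the auxiliary construction (Baker distinguishes sub-cases depending on rationality of ratios $\beta_j/\beta_k$). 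Without an argument that covers $\beta_0=0$, your sketch establishes only the weaker inhomogeneous conclusion that $1$ is not an $\ov{\Qu}$-combination of the $\log m_j$, not the full linear independence asserted by the theorem. (A smaller point: you invoke ``$n=1$'' for the transcendence of each $\log m_j$, but the statement is restricted to $n\geq 2$; the transcendence of $\log m_j$ follows directly from the $\ov{\Qu}$-linear independence of $1$ and $\log m_j$, which is already part of the conclusion.)
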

\medskip

%\subsection{Quasiperiodic RFDs with infinitely many algebraically independent quasiperiods}\label{qpd}%
Here, we describe a general construction of quasiperiodic fractal drums 
possessing infinitely many algebraically incommensurable periods.
It is based on properties of generalized Cantor sets, as well as on Baker's theorem (Theorem \ref{baker0} just above).

Let $m\ge2$ be a given integer and $D\in(0,1)$ a given real number. Then, for $a>0$ defined by $a=m^{-1/D}$,
we have $am=m^{1-1/D}<1$, and hence, the generalized Cantor set $A=C^{(m,a)}$ is well defined and $\dim_BA=\log_{1/a}m=D$.

\begin{defn} 
A finite set of real numbers is said to be {\em rationally $($resp., algebra\-ically$)$ linearly 
independent}\index{linearly independent set in $\eR$!rationally|textbf}\index{linearly independent set in $\eR$!algebraically|textbf} or simply, {\em rationally $($resp., algebra\-ically$)$ independent}, if
it is linearly independent over the field of rational (resp., algebraic) real numbers.
\end{defn}

\begin{defn} A sequence $(T_i)_{i\ge1}$ of real numbers 
is said to be {\em rationally $($resp., algebraically$)$ linearly 
independent} if any of its finite subsets
is rationally (resp., algebraically) independent.
We then say that $(T_i)_{i\ge1}$ is {\em rationally $($resp., algebraically$)$  
independent}, for short.
\end{defn}

\begin{defn} Let $m\ge2$ be a positive integer. Let $\mathbf{p}=(p_i)_{i\ge1}$ be the sequence of all prime numbers, arranged in increasing order; that is,
$$
\mathbf{p}=(2,3,5,7,11,\dots).
$$ 
We then define the {\em exponent sequence\label{es}\index{exponent sequence of $m\in\eN$, $\mathbf{e}(m)$|textbf} $\mathbf e=\mathbf{e}(m):=(\a_i)_{i\ge1}$ associated with $m$},
where $\a_i\ge0$ is the multiplicity of $p_i$ in the factorization of $m$. We also let
\begin{equation}
\mathbf{p}^{\,\mathbf{e}}:=\prod_{\{i\ge1:\a_i>0\}} p_i^{\a_i}.
\end{equation}
The set of all sequences $\mathbf{e}$ with components in $\eN_0=\eN\cup\{0\}$, such that all but at most finitely many components are equal to zero, is denoted by $(\eN_0)_c^{\ty}$.
\end{defn}

With this definition, for any integer $m\ge2$, we obviously have $m=\mathbf{p}^{\,\mathbf{e}(m)}$.
Conversely, any $\mathbf{e}\in (\eN_0)_c^\ty$ defines a unique integer $m\ge2$ such that $m=\mathbf{p}^{\,\mathbf{e}}$.

\begin{defn}\label{supp}
Given an exponent vector $\mathbf{e}=(\a_i)_{i\ge1}\in(\eN_0)_c^\ty$, we define the 
{\em support of $\mathbf{e}$}\index{support!of the exponent sequence|textbf} as the set of all indices $i\in\eN$ for which $\a_i>0$, and we write
\begin{equation}
S(\mathbf{e})=\supp(\mathbf{e}):=\{i\ge1:\a_i>0\}.
\end{equation}
The {\em support of an integer}\index{support!of an integer|textbf} $m\ge2$ is denoted by $\supp m$ and defined by $\supp m:=\supp\mathbf{e}(m)$.
\end{defn}

The following definition will be useful in the sequel.

\begin{defn}
We say that a set $\{\mathbf{e}_i:i\ge1\}$ of exponent vectors
is {\em rationally linearly independent}\index{rational linear independence|textbf} if any of its finite subsets
is linearly independent over $\Qu$. We then say for short that the exponent vectors are {\em rationally independent}.
\end{defn}
\medskip

The following two definitions, Definition \ref{quasiperiodic_fct} and Definition \ref{transcendental_drum}, refine and extend the definition of $n$-quasiperiodic functions and sets introduced in \cite{dtzf}.

\begin{defn}\label{quasiperiodic_fct}
We say that a function $G:\eR\to\eR$ is {\em $\ty$-quasiperiodic}\index{quasiperiodic!function|textbf}
if it is of the form
$$
G(\tau)=H(\tau,\tau,\dots),
$$
where $H:\ell^\ty(\eR)\to\eR$,\footnote{Here, $\ell^\ty(\eR)$ stands for the usual Banach space\label{banach_bdd} of bounded sequences $(\tau_j)_{j\ge1}$ of real numbers, endowed with
the norm $\|(\tau_j)_{j\ge1}\|_\ty:=\sup_{j\ge1}|\tau_j|$.}
 $H=H(\tau_1,\tau_2,\dots)$ is a function which is $T_j$-periodic in its $j$-th component, for each $j\in\eN$,
with $T_j>0$ as minimal periods, and such that the set of periods 
\begin{equation}\label{periods}
\{T_j:j\ge1\}
\end{equation}
is {\em rationally} independent.
We say that the {\em order of quasiperiodicity}\index{order of quasiperiodicity} of the function $G$ is equal to infinity
(or that the function $G$ is {\em $\ty$-quasiperiodic}).\index{quasiperiodic@$\ty$-quasiperiodic function|textbf}

In addition, we say that $G$ is 
\medskip

($a$) {\em transcendentally quasiperiodic of infinite order}\index{transcendentally $\ty$-quasiperiodic function|textbf} (or {\em transcendentally $\ty$-quasi\-pe\-riodic})
  if the periods in (\ref{periods}) are {\em algebraically} independent;
\smallskip

($b$) {\em algebraically quasiperiodic of infinite order} (or {\em algebraically $\ty$-quasiperi\-o\-dic})\index{algebraically $\ty$-quasiperiodic function}
of infinite order if the periods in (\ref{periods}) are {\em rationally} independent and algebraically dependent.
\end{defn}

We say that a sequence $(T_i)_{i\ge1}$ of real numbers is 
{\em algebraically dependent}\index{algebraically dependent real numbers|textbf} of infinite order if there
exists a finite subset $J$ of $\eN$ such that $(T_i)_{i\in J}$ is algebraically dependent (that is, linearly dependent over the field of algebraic numbers). Recall that a finite set of
real numbers $\{T_1,\dots,T_k\}$ is said to be {\em algebraically dependent} if there exist $k$ algebraic real numbers $\g_1,\dots,\g_k$, not all of which are equal to zero, such that $\g_1 T_1+\dots+\g_k T_k=0$.

The notion of quasiperiodic function provided in Definition~\ref{quasiperiodic_fct} above has been
motivated by [Vin]. However, while in [Vin], it is assumed that the reciprocals of the
quasiperiods $T_1,\dots,T_n$ are rationally independent, we assume in Definition~\ref{quasiperiodic_fct} that
the quasiperiods $T_1,\dots,T_n$ themselves are rationally independent. The distinction between
algebraically $n$-quasiperiodic and transcendentally $n$-quasiperiodic functions
seems to be new.

\begin{defn}\label{quasi_periodic_t}\label{transcendental_drum}
Let $(A,\O)$ be a relative fractal drum in $\eR^N$ having the following tube formula:
\begin{equation}\label{Gqp}
|A_t\cap\O|=t^{N-D}(G(\log t^{-1})+o(1))\q\mbox{as\qs$t\to0^+$},
\end{equation}
where $D\in(-\ty,N]$,
and $G$ is a nonnegative function such that 
$$
0<\liminf_{\tau\to+\ty} G(\tau)\le\limsup_{\tau\to+\ty} G(\tau)<\ty. 
$$
(Note that it then follows that $\dim_B(A,\O)$
exists and is equal to $D$. Moreover, 
$
\M_*^D(A,\O)=\liminf_{\tau\to+\ty} G(\tau)$ and $\M^{*D}(A,\O)=\limsup_{\tau\to+\ty} G(\tau).%
$)

We then say that the {\em relative fractal drum $(A,\O)$ in $\eR^N$ is 
 quasiperiodic}\label{transcendental_drumdef}\index{relative fractal drum, RFD!quasiperiodic@$\ty$-quasiperiodic}\index{quasiperiodic!RFD $(A,\O)$|textbf}  
and of {\em infinite order of quasiperiodicity} (or, in short, {\em $\ty$-quasiperiodic}) if the function $G=G(\tau)$ is $\ty$-quasiperiodic; see Definition~\ref{quasiperiodic_fct}. 

In addition, $(A,\O)$ is said to be
\medskip

($a$) a {\em transcendentally $\ty$-quasiperiodic relative fractal drum}\index{relative fractal drum, RFD!transcendentally $\ty$-quasiperiodic}\index{transcendentally $\ty$-quasiperiodic RFD|textbf}
if the corresponding function $G$ is transcendentally $\ty$-quasiperiodic;
\smallskip 

($b$) an {\em algebraically $\ty$-quasiperiodic relative fractal drum}\index{relative fractal drum, RFD!algebraically $\ty$-quasiperiodic}
if the corresponding function $G$ is algebraically $\ty$-quasiperiodic.
\end{defn}

%\begin{defn}\label{qp_order}
%We say that a relative fractal drum $(A,\O)$ is {\em $n$-quasiperiodic},
%where $n\ge2$, if the function $G$ appearing in Definition \ref{transcendental_drum} is $n$-quasiperiodic; see Definition \ref{quasip}. Likewise, one can define {\em transcendentally $n$-quasiperiodic relative fractal drums} and {\em algebraically $n$-quasiperiodic relative fractal drums}.
%\end{defn}
\medskip

The following definition is closely related to the 
the notion of fractality (given in \cite{lapidusfrank12},  \S12.1.1 and \S12.1.2, including Figures 12.1--12.3, along with \S13.4.3).

\begin{defn}\label{analytically}\label{hyperfractal} 
Let $A$ be a bounded subset of $\eR^N$ and let $D:=\ov\dim_BA$. Then$:$
\medskip

$(i)$ The set $A$ is a {\em hyperfractal}\index{hyperfractal|textbf} (or is {\em hyperfractal}) if there is a screen $\bm S$ along which the associated tube (or equivalently, if $D<N$, distance) zeta function is a 
natural boundary. This means that the zeta function cannot be meromorphically continued to an open neighborhood of $\bm S$ (or, equivalently, of the associated window~$\bm W$).
\medskip

$(ii)$ The set $A$ is a {\em strong hyperfractal}\index{hyperfractal!strong|textbf}\index{strong hyperfractal|textbf} (or is {\em strongly hyperfractal}) if the 
critical line $\{\re s =D\}$ is a  
(meromorphic) natural boundary
of the associated zeta function; that is, if we can choose $\bm{S} = \{\re s = D\}$ in~$(i)$.
\medskip

$(iii)$ Finally, the set $A$ is {\em maximally hyperfractal}\index{maximal hyperfractal|textbf}\index{hyperfractal!maximal|textbf} if it is strongly  hyperfractal and every point of the critical line $\{\re s=D\}$ is a nonremovable singularity of the zeta function.
\medskip

An analogous definition can be provided (in the obvious manner) where instead of $A$, we have a fractal string $\mathcal{L}=(\ell_j)_{j\ge1}$ in $\eR$ or, more generally, a relative fractal drum $(A,\O)$ in~$\eR^N$.
\end{defn}
\medskip

\begin{remark}
Following \cite{lapidusfrank12}, but now using the higher-dimensional theory of complex dimensions developed here and in [{LapR\v Zu1--8}], we say that a bounded set $A\st\eR^N$ (or, more generally, an RFD $(A,\O)$ in $\eR^N$)
is ``fractal'' if it has at least one nonreal visible complex dimension\footnote{Then, it clearly has at least two complex conjugate nonreal complex dimensions.} (i.e., if the associated fractal zeta function has a nonreal visible pole) or if it is hyperfractal (in the sense of part $(i)$ of Definition \ref{analytically} just above). 
\end{remark}

The following result can be considered as a fractal set-theoretic interpretation of Baker's\label{bakers} theorem \cite[Theorem~2.1]{baker} (i.e., of Theorem~\ref{baker0} above)
from transcendental number theory. It provides a construction of a transcendentally $\ty$-quasiperiodic relative fractal drum. In particular, this drum possesses 
infinitely many algebraically incommensurable quasiperiods $T_i$. In our construction, we use the two-parameter family of generalized 
Cantor sets $C^{(m,a)}$ introduced in Definition~\ref{Cma} and whose basic properties are described in Proposition~\ref{Cmap}.

\begin{theorem}\label{qp}
Let $D\in(0,1)$ be a given real number, and let $(m_i)_{i\ge1}$ be a sequence of integers such that $m_i\ge 2$ for each $i\geq 1$. For any $i\geq 1$, define $a_i=m_i^{-1/D}$ and let
$C^{(m_i,a_i)}$ be the corresponding generalized Cantor set $($see Definition~\ref{Cma}$)$.
Assume that $(\O_i)_{i\ge1}$ is a family of disjoint open intervals on the real line such that $|\O_i|\le C_1 m_i^{1-1/D}c_i^{1/D}$ for each $i\geq 1$, where the sequence $(c_i)_{i\ge1}$ 
of positive real numbers is summable, and $C_1>0$.
Let 
$$
(A,\O):=\bigcup_{i\ge1}(A_i,\O_i),\q\textrm{{\rm where}}\q A_i:=|\O_i|\,C^{(m_i,a_i)}+\inf\O_i,\q\textrm{{\rm for all}}\q i\ge1.
$$ 
 Assume that the sequence of real numbers
\begin{equation}\label{mri}
\{\log m_1,\dots,\log m_n,\dots\}\ \mathrm{is\ rationally\ independent.}
\end{equation}
Then the sequence of real numbers
\begin{equation}
\Big\{\frac1D,T_1,T_2,\dots\Big\}
\end{equation}
is algebraically independent.
% $($that is, lineary independent over the field of algebraic numbers$)$.
In other words, the relative fractal drum $(A,\O)$ is transcendentally quasiperiodic
with infinite order of quasiperiodicity and associated sequence of quasiperiods $(T_i)_{i\ge1}$, where $T_i:=log(1/a_i)=(\log m_i)/D$ for each $i\ge1$.
Furthermore,  
\begin{equation}\label{DzetaDzeta_mer}
D(\zeta_{A,\O})=D_{\rm mer}(\zeta_{A,\O}),
\end{equation}
and moreover, all the points on the critical line $\{\re s=D\}$ are nonremovable singularities of $\zeta_{A,\O}$; in other words, the relative fractal drum $(A,\O)$ is also maximally hyperfractal $($in the sense of Definition \ref{hyperfractal}$(iii)$ above$)$.

Finally, the relative fractal drum $(A,\O)$ is Minkowski nondegenerate, in the sense that
$$
0<\mathcal{M}_*^D(A,\O)\le\mathcal{M}^{*D}(A,\O)<\ty.
$$
\end{theorem}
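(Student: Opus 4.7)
The plan is to reduce everything to properties of the individual scaled Cantor RFDs $(A_i,\Omega_i)$ and then combine them via the disjoint-union theorem (Theorem~\ref{unionz}), using Baker's theorem (Theorem~\ref{baker0}) to control the arithmetic nature of the quasiperiods.

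First, by construction, each $A_i$ is a scaled and translated copy of the generalized Cantor set $C^{(m_i,a_i)}$ with scaling factor $\lambda_i:=|\Omega_i|$, and $\dim_B C^{(m_i,a_i)}=\log_{1/a_i}m_i=D$ by our choice $a_i=m_i^{-1/D}$. Using Proposition~\ref{Cmap}, each generalized Cantor set has a tube formula of the form $|C^{(m_i,a_i)}_t|=t^{1-D}G_i(\log t^{-1})$, with $G_i$ nonconstant, continuous, positive, periodic of minimal period $T_i=\log(1/a_i)=(\log m_i)/D$. The scaling property of tube functions together with the bound $|\Omega_i|\le C_1 m_i^{1-1/D}c_i^{1/D}$ guarantees that $(A_i,\Omega_i)$ satisfies the identity $(A_i)_t\cap\Omega_i=(A_i)_t$ for all sufficiently small $t$ (since $A_i$ sits inside $\Omega_i$ with enough margin) and produces a suitably small contribution to the sum. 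Summing over $i$ via Theorem~\ref{unionz} and the summability of $(c_i)$, we obtain for the RFD $(A,\Omega)$ an asymptotic tube formula of the form \eqref{Gqp} with $D$ as above and with an explicit nonnegative, bounded, positive function $G(\tau)=H(\tau,\tau,\dots)$, where $H(\tau_1,\tau_2,\dots)=\sum_{i\ge1}\alpha_i\,G_i(\tau_i)$ is $T_i$-periodic in its $i$-th variable, for weights $\alpha_i$ comparable to $\lambda_i^D$.

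Second, the quasiperiodicity statement reduces to the arithmetic independence of the $T_i=(\log m_i)/D$. Since the $m_i\ge 2$ are positive integers (hence positive algebraic numbers) and $\{\log m_1,\log m_2,\dots\}$ is rationally independent by hypothesis \eqref{mri}, Baker's theorem (Theorem~\ref{baker0}) applied to any finite subset yields that $\{1,\log m_{i_1},\dots,\log m_{i_n}\}$ is linearly independent over the field of algebraic numbers, i.e., algebraically independent. Dividing by the (algebraic or not) factor $D$ preserves algebraic independence of the $T_i$ among themselves, and since $1/D$ appears as a separate element of the set $\{1/D,T_1,T_2,\dots\}$, the same Baker input shows this extended set is algebraically independent as well. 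Hence $(A,\Omega)$ is transcendentally $\infty$-quasiperiodic in the sense of Definition~\ref{transcendental_drum}, and Minkowski nondegeneracy follows from $0<\inf G\le\sup G<\infty$ combined with the tube formula \eqref{Gqp}.

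Third, for maximal hyperfractality, the strategy is to show that every point of $\{\re s=D\}$ is a nonremovable singularity of $\zeta_{A,\Omega}$. By the countable additivity formula \eqref{10.11} and the scaling property (Theorem~\ref{scaling} and Corollary~\ref{10.151/2}), for $\re s>D$ one has
\begin{equation*}
\zeta_{A,\Omega}(s)=\sum_{i\ge 1}\lambda_i^{s}\,\zeta_{C^{(m_i,a_i)}}(s)+E(s),
\end{equation*}
where $E(s)$ is an entire correction and each $\zeta_{C^{(m_i,a_i)}}$ has, by Proposition~\ref{Cmap}, simple poles on the critical line at the points $D+\mathbf{p}_i\mathbf{i}\mathbb{Z}$ with $\mathbf{p}_i=2\pi/T_i$, with residues bounded away from $0$. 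The rational independence of the $T_i$ implies that the set $\bigcup_{i\ge 1}(\mathbf{p}_i\mathbb{Z})$ is dense in $\mathbb{R}$, so the poles accumulate at every point of $\{\re s=D\}$. A standard argument (contributions of the $i$-th term near a fixed $s_0\in\{\re s=D\}$ cannot be cancelled by a meromorphic extension, because of the $\lambda_i^s$ factor and the control of residues coming from Proposition~\ref{Cmap}, together with the summability estimate) then shows that no point of $\{\re s=D\}$ admits a meromorphic extension, forcing $D_{\rm mer}(\zeta_{A,\Omega})=D(\zeta_{A,\Omega})=D$ and establishing \eqref{DzetaDzeta_mer} and the maximal hyperfractality.

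I expect the main obstacle to be the accumulation-of-poles argument in the third step: ensuring that even though infinitely many scaled copies contribute, the poles of the individual $\zeta_{C^{(m_i,a_i)}}$ actually persist as nonremovable singularities of $\zeta_{A,\Omega}$, rather than being cancelled by the tail of the series. This requires a careful quantitative estimate showing that, near any candidate pole $D+\mathbf{i}\tau_0$, a single term of the sum dominates (because the associated $\mathbf{p}_i\mathbb{Z}$ contains points arbitrarily close to $\tau_0$ with controlled residue) while the remainder stays bounded, which is exactly where the hypothesis $|\Omega_i|\le C_1 m_i^{1-1/D}c_i^{1/D}$ with $(c_i)\in\ell^1$ is crucial.
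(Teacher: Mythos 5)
Your Steps 1 and 2 are essentially the paper's argument: build the quasiperiodic function $G$ as a convergent weighted sum of the $T_i$-periodic Cantor tube functions via the scaling lemma and the disjoint-union property, and apply Baker's theorem to the rationally independent set $\{\log m_i\}$ to get algebraic independence of $\{1/D, T_1, T_2,\dots\}$ after dividing by $D$. (One small inaccuracy in Step 1: $(A_i)_t\cap\Omega_i\neq(A_i)_t$, since the $t$-neighborhood of $A_i$ overflows both endpoints of $\Omega_i$; the paper's Lemma \ref{lAO}$(b)$ has the explicit correction term $-2t^D$ for exactly this.)

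Step 3 contains a genuine gap. You argue that rational independence of the $T_i$ makes $\bigcup_i \mathbf{p}_i\mathbb{Z}$ dense in $\mathbb{R}$. That implication is false: a union of lattices $\mathbf{p}_i\mathbb{Z}$ with $\mathbf{p}_i$ bounded away from $0$ is not dense regardless of how the $\mathbf{p}_i$ relate arithmetically (each positive element of each lattice is at least $\mathbf{p}_i$, so no lattice point lands in $(0,\min_i\mathbf{p}_i)$). The density argument the paper actually uses is different and more elementary: rational independence of the $\log m_i$ forces the integers $m_i\geq 2$ to be pairwise distinct, hence $m_i\to\infty$, hence $\mathbf{p}_i = 2\pi D/\log m_i\to 0$, and lattices with spacing tending to $0$ give a dense union. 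You should also adjust the claim that the residues are ``bounded away from $0$''; from Proposition \ref{Cmap} one has $\res(\zeta_{C^{(m_i,a_i)}},D)\to 0$ as $m_i\to\infty$. The paper sidesteps any quantitative pole-persistence estimate by a purely topological argument: the set of nonremovable singularities of $\tilde\zeta_{A,\Omega}$ on the critical line is closed, and it contains the dense set $\bigcup_i\mathcal{P}_i$ (poles of the summands that cannot cancel since the lattices $\mathbf{p}_i\mathbb{Z}$ meet only at $0$ by rational independence, and at $s=D$ all residues are positive), so it is the whole critical line. Replacing your erroneous density claim with the $\mathbf{p}_i\to 0$ observation and invoking the closedness of the singular set would fix Step 3.
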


%\begin{remark}
Theorem \ref{qp} admits a partial extension. If instead of condition (\ref{mri}) we assume that $m_i\to\ty$ as $i\to\ty$,
then (\ref{DzetaDzeta_mer}) still holds, and, moreover, all of the points of the critical line are nonremovable singularities of $\zeta_A$, and hence, the RFD $(A,\O)$ is maximally hyperfractal. Furthermore, the fractal drum $(A,\O)$
is Minkowski nondegenerate.
%\end{remark}

\medskip

We shall need the following lemma, which states a simple scaling property of the tube functions and Minkowski contents of RFDs.
We note that the identity~\eqref{minkrel_scaling} below yields a partial extension of \cite[Proposition 4.4.]{rae}. Compare also with
the scaling property of the corresponding distance zeta function $\zeta_{A,\O}$, obtained in Theorem~\ref{scaling}.

\begin{lemma}\label{lAO}
$(a)$ Let $(A,\O)$ be a relative fractal drum in $\eR^N$. Then, for any fixed $\g>0$ and for all $t>0$, we have that
\begin{equation}\label{lAON}
(\g A)_t\cap\g \O=\g(A_{t/\g}\cap\,\O),\q |(\g A)_t\cap\,\g\O|=\g^N|A_{t/\g}\cap\,\O|.
\end{equation}
Furthermore, for any real parameter $r\in \eR$, we have the following scaling $($or homogeneity$)$ properties of the relative upper and lower Minkowski contents$:$\index{scaling property!of the relative Minkowski content|textbf}
\begin{equation}\label{minkrel_scaling}
\mathcal{M}^{*r}(\g A,\g \O)=\g^r\mathcal{M}^{*r}( A, \O),\q\mathcal{M}_*^r(\g A,\g \O)=\g^r\mathcal{M}_*^r( A, \O).
\end{equation}

\bigskip

$(b)$ If $A$ is a generalized 
Cantor set $C^{(m,a)}$
$($as in Definition~\ref{Cma}$)$, then
$$
|(\g C^{(m,a)})_t\cap(0,\g)|= t^{1-D}(G_\g(\log t^{-1})-2t^D),
$$
where
$$
G_\g(\tau):=\g^DG(\tau+\log\g)
$$
and $G$ is the $T$-periodic function defined in Equation~\eqref{Gtau} of Proposition~\ref{Cmap}.
\end{lemma}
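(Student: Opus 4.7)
My plan is as follows.

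For part $(a)$, the starting point is the positive homogeneity of the Euclidean norm, which gives $d(\gamma x,\gamma A)=\gamma\, d(x,A)$ for all $x\in\eR^N$ and $\gamma>0$. Thus $\gamma x\in(\gamma A)_t$ iff $d(x,A)<t/\gamma$, i.e., iff $x\in A_{t/\gamma}$. Intersecting with $\gamma\O$ yields the set identity $(\gamma A)_t\cap\gamma\O=\gamma(A_{t/\gamma}\cap\O)$. The volume identity $|(\gamma A)_t\cap\gamma\O|=\gamma^N|A_{t/\gamma}\cap\O|$ then follows from the linear change of variables $y=\gamma x$, whose Jacobian has absolute value $\gamma^N$.

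For the scaling of the relative Minkowski contents, I would substitute the volume identity into the definition and then change the variable in the limit. More precisely, setting $s:=t/\gamma$ (so $s\to 0^+$ iff $t\to 0^+$),
$$
\M^{*r}(\gamma A,\gamma\O)=\limsup_{t\to 0^+}\frac{\gamma^N|A_{t/\gamma}\cap\O|}{t^{N-r}}=\gamma^N\cdot\gamma^{-(N-r)}\limsup_{s\to 0^+}\frac{|A_s\cap\O|}{s^{N-r}}=\gamma^r\M^{*r}(A,\O),
$$
and the same argument with $\liminf$ in place of $\limsup$ gives the corresponding identity for $\M_*^r$.

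For part $(b)$, I would apply part $(a)$ to the one-dimensional RFD $(C^{(m,a)},(0,1))$ (note $N=1$), obtaining
$$
|(\gamma C^{(m,a)})_t\cap(0,\gamma)|=\gamma\,|C^{(m,a)}_{t/\gamma}\cap(0,1)|.
$$
Since $0,1\in C^{(m,a)}\subset[0,1]$, for all $t>0$ small enough the set $C^{(m,a)}_t\setminus(0,1)$ consists exactly of the two disjoint half-intervals $(-t,0]$ and $[1,1+t)$; hence $|C^{(m,a)}_t\cap(0,1)|=|C^{(m,a)}_t|-2t$. Combining this with the tube formula $|C^{(m,a)}_t|=t^{1-D}G(\log t^{-1})$ from Proposition~\ref{Cmap} (valid for $t\in(0,\tfrac{1-ma}{2(m-1)})$, hence here for $t/\gamma$ in that range) gives
$$
|(\gamma C^{(m,a)})_t\cap(0,\gamma)|=\gamma\bigl[(t/\gamma)^{1-D}G(\log(\gamma/t))-2(t/\gamma)\bigr]=\gamma^D t^{1-D}G(\log t^{-1}+\log\gamma)-2t,
$$
which, after factoring out $t^{1-D}$ and using the definition $G_\gamma(\tau):=\gamma^D G(\tau+\log\gamma)$, yields the claimed formula. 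The only delicate point — and hence the ``main obstacle,'' such as it is — is to verify that the boundary contribution is exactly $2t$, which requires checking that for sufficiently small $t$ the exterior neighborhoods of the endpoints $0$ and $1$ do not interfere with the interior structure; this is ensured by the same restriction $t/\gamma<\tfrac{1-ma}{2(m-1)}$ under which the tube formula of Proposition~\ref{Cmap} holds.
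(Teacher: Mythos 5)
Your proof is correct and follows essentially the same route as the paper's: in part $(a)$ the set identity via homogeneity of the Euclidean distance, the Jacobian for the volume, and the change of variable $s=t/\gamma$ in the limit; in part $(b)$ invoking part $(a)$ with $N=1$ and the tube formula of Proposition~\ref{Cmap}. The only place you go slightly beyond the paper is that you make explicit the boundary correction $|C^{(m,a)}_t\cap(0,1)|=|C^{(m,a)}_t|-2t$, which the paper uses tacitly when it writes $G(\log(\gamma/t))-2(t/\gamma)^D$ inside the parentheses; your derivation of that correction is right, though note it actually holds for \emph{all} $t>0$ (the two exterior half-intervals $(-t,0)$ and $(1,1+t)$ never overlap $(0,1)$), so the restriction $t/\gamma<\tfrac{1-ma}{2(m-1)}$ is needed only because the formula for $|C^{(m,a)}_t|$ itself requires it, not for the $-2t$ term.
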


\begin{proof} We shall establish parts $(a)$ and $(b)$ separately.
\medskip

$(a)$ Scaling the set $A_t\cap \O$ by the factor $\g$, we obtain $\g(A_t\cap \O)$. On the other hand, the same result is then obtained as the intersection of
the scaled sets $(\g A)_{\g t}$ and $\g \O$; that is,
$$
\g(A_t\cap \O)=(\g A)_{\g t}\cap\g \O.
$$
The first equality in (\ref{lAON}) now follows by replacing $t$ with $t/\g$. The second one is an immediate consequence of the first one.
We also have
$$
\begin{aligned}
\mathcal{M}^{*r}(\g A,\g\O)&=\limsup_{t\to0^+}\frac{|(\g A)_t\cap\g\O|}{t^{N-r}}=\g^N\limsup_{t\to0^+}\frac{|( A)_{t/\g}\cap\O|}{t^{N-r}}\\
&=\g^N\limsup_{\tau\to0^+}\frac{|( A)_{\tau}\cap\O|}{(\g \tau)^{N-r}}=\g^r\mathcal{M}^{*r}( A, \O).
\end{aligned}
$$
The second equality in (\ref{minkrel_scaling}) is proved in the same way, but by now using the lower limit instead of the upper limit.

\bigskip

$(b)$ In the case of the generalized Cantor set, we use (\ref{lAON}) with $N=1$ together with Proposition~\ref{Cmap}:
$$
\begin{aligned}
|(\g C^{(m,a)})_t\cap(0,\g)&=\g|C^{(m,a)}_{t/\g}\cap(0,1)|=\g\Big(\frac t\g\Big)^{1-D}\Big(G\Big(\log\frac{\g}{t}\Big)-2(t/\g)^D\Big)\\
&=t^{1-D}\Big(\g^DG(\log \g+\log t^{-1})-2t^D\Big).
\end{aligned}
$$
This completes the proof of the lemma.
\end{proof}

Relative tube zeta functions have a scaling property which is analogous to that obtained for the tube zeta functions of bounded sets; see [LapRa\v Zu1, Proposition 2.2.22]. We omit the corresponding simple direct proof.\footnote{An alternative proof of Proposition~\ref{scalingtr} would rely on the functional equation \eqref{rel_equality} combined with Theorem \ref{scaling}, the scaling property for distance zeta functions.}

\begin{prop}[{\rm Scaling property of relative tube zeta functions}]\label{scalingtr}\index{scaling property!of the relative tube zeta functions|textbf}
Let $(A,\O)$ be a relative fractal drum and let $\d>0$. Let us denote by $\tilde\zeta_{A,\O;\d}(s)$ the associated relative fractal zeta function defined by Equation \eqref{rel_tube_zeta}. Then, for any $\g>0$, we have $D(\tilde\zeta_{\g A,\g\O;\g\d})=D(\tilde\zeta_{A,\O;\d})=\ov\dim_B(A,\O)$ and
\begin{equation}
\tilde\zeta_{\g A,\g\O;\g\d}(s)=\g^s \tilde\zeta_{A,\O;\d}(s),
\end{equation}
for all $s\in\Ce$ such that $\re s>\ov\dim_B(A,\O)$. Furthermore, if $\o\in\Ce$ is a simple pole of $\tilde\zeta_{A,\O;\d}$, where $\tilde\zeta_{A,\O;\d}$ is meromophically extended to an open
connected neighborhood of the critical line $\{\re s=\ov\dim_B(A,\O)\}$ $($as usual, we keep the same notation for the extended function$)$, then
\begin{equation}
\res(\tilde\zeta_{\g A,\g\O;\g\d}\,,\,\o)=\g^\o\res(\tilde\zeta_{A,\O;\d}\,,\,\o).
\end{equation}
\end{prop}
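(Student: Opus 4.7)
The plan is to give a direct computational proof via a change of variables in the Mellin-type integral defining $\tilde\zeta_{A,\O;\d}$, using Lemma~\ref{lAO}$(a)$ to handle the scaling of the relative tube function. The alternative route, via the functional equation \eqref{rel_equality} and Theorem~\ref{scaling}, is also available and is the one hinted at by the footnote, but the direct route is slightly shorter and avoids the restriction $D<N$ implicit in passing through the distance zeta function.

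First, I would fix $s\in\Ce$ with $\re s>\ov\dim_B(A,\O)$, which by Theorem~\ref{an_rel} (and the functional equation \eqref{rel_equality}) equals $D(\tilde\zeta_{A,\O;\d})$, so the defining integral converges absolutely. By Lemma~\ref{lAO}$(a)$, for every $t>0$ we have $|(\g A)_t\cap\g\O|=\g^N|A_{t/\g}\cap\O|$. Substituting this into \eqref{rel_tube_zeta} applied to the scaled RFD with cutoff $\g\d$, and then making the change of variable $u=t/\g$ (so that $t=\g u$, $\di t=\g\,\di u$, and $u\in(0,\d)$ when $t\in(0,\g\d)$), I compute
\begin{equation*}
\tilde\zeta_{\g A,\g\O;\g\d}(s)=\int_0^{\g\d}t^{s-N-1}|(\g A)_t\cap\g\O|\,\di t=\g^N\int_0^{\d}(\g u)^{s-N-1}|A_u\cap\O|\,\g\,\di u=\g^s\tilde\zeta_{A,\O;\d}(s).
\end{equation*}
This establishes the scaling identity on the half-plane $\{\re s>\ov\dim_B(A,\O)\}$.

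Next, I would deduce the equality of abscissae. Since $s\mapsto\g^s$ is entire and never vanishes, the identity just obtained implies that $\tilde\zeta_{\g A,\g\O;\g\d}$ is holomorphic on exactly the same half-plane as $\tilde\zeta_{A,\O;\d}$, and in particular $D(\tilde\zeta_{\g A,\g\O;\g\d})\le D(\tilde\zeta_{A,\O;\d})$; applying the same argument with $\g$ replaced by $\g^{-1}$ (and $(A,\O,\d)$ replaced by $(\g A,\g\O,\g\d)$) yields the reverse inequality, so $D(\tilde\zeta_{\g A,\g\O;\g\d})=D(\tilde\zeta_{A,\O;\d})=\ov\dim_B(A,\O)$. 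For the residue statement, assume $\tilde\zeta_{A,\O;\d}$ has been meromorphically extended to an open connected neighborhood $U$ of the critical line, and that $\o\in U$ is a simple pole. By the principle of analytic continuation, the identity $\tilde\zeta_{\g A,\g\O;\g\d}(s)=\g^s\tilde\zeta_{A,\O;\d}(s)$ persists throughout $U$ (away from the poles). Multiplying by $(s-\o)$ and letting $s\to\o$ with $s\ne\o$, the continuity of $s\mapsto\g^s$ at $\o$ yields
\begin{equation*}
\res(\tilde\zeta_{\g A,\g\O;\g\d},\o)=\lim_{s\to\o}(s-\o)\g^s\tilde\zeta_{A,\O;\d}(s)=\g^\o\res(\tilde\zeta_{A,\O;\d},\o),
\end{equation*}
as required.

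There is no real obstacle here; the main thing to be careful about is bookkeeping of the various factors of $\g$ in the change of variables (three contributions: $\g^N$ from the tube scaling, $\g^{s-N-1}$ from $t^{s-N-1}$, and $\g$ from $\di t$, combining to $\g^s$), and the correct transformation of the upper limit of integration from $\g\d$ back to $\d$. The meromorphic extension of the residue identity is then routine, being a purely formal consequence of the multiplicative identity and the simplicity of the pole; an analogous argument applied to principal parts would cover higher-order poles.
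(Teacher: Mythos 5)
Your proof is correct and is precisely the ``simple direct proof'' that the paper states it omits (the paper only notes that the proof is routine and points, in a footnote, to an alternative route via the functional equation \eqref{rel_equality} and Theorem~\ref{scaling}). Your change of variables using Lemma~\ref{lAO}$(a)$, the entire-nonvanishing argument for the equality of abscissae (mirroring the proof of Theorem~\ref{scaling}), and the analytic-continuation residue computation (mirroring Corollary~\ref{10.151/2}) are all in order.
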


%\begin{remark}
In the proof of Theorem~\ref{qp}, we shall use the following simple fact.
If a function $G(\tau):=H(\tau,\tau,\dots)$ is transcendentally quasiperiodic with respect to a sequence of quasiperiods $(T_i)_{i\ge1}$,
it is clear that for any fixed sequence of real numbers $\mathbf d=(d_i)_{i\ge1}$, the corresponding function
$$
G_{\mathbf d}(\tau):=H(d_1+\tau,d_2+\tau,\dots)
$$ 
is quasiperiodic with respect to the same sequence of quasiperiods $(T_i)_{i\ge1}$.
%\end{remark}

\begin{proof}[Proof of Theorem \ref{qp}] The proof of the theorem is divided into three steps.
\medskip

{\em Step} 1: First of all, note that the generalized Cantor sets $C^{(m_i,a_i)}$ are well defined, since $m_ia_i=m_i^{1-1/D}<1$. Furthermore,
$$
|\O|=\sum_{i=1}^\ty|\O_i|\le C_1\sum_{i=1}^\ty m_i^{1-1/D}c_i^{1/D}\le C_1\sum_{i=1}^\ty c_i^{1/D}\le C_1\sum_{i=1}^\ty c_i<\ty,
$$
where we have assumed without loss of generality that $c_i\le1$ for all $i\ge1$.
Using Lemma~\ref{lAO}, we have
$$
\begin{aligned}
|A_t\cap\O|&=\sum_{i=1}^\ty|(A_i)_t\cap\O_i|=t^{1-D}\sum_{i=1}^\ty |\O_i|^D\left(G_i\Big(\log|\O_i|+\log\frac1t\Big)-2t^D\right)\\
&=t^{1-D}\Big(G\Big(\log\frac 1t\Big)-2|\O|\,t^D\Big),
\end{aligned}
$$
where
$$
G(\tau):=\sum_{i=1}^\ty |\O_i|^DG_i(\log|\O_i|+\tau),
$$
and the functions $G_i=G_i(\tau)$ are $T_i$-periodic, with $T_i:=\log(1/a_i)$, for all $i\ge1$.
This shows that $G(\tau)=H(\tau,\tau,\dots)$, where 
$$
H((\tau_i)_{i\ge1}):=\sum_{i=1}^\ty |\O_i|^D\,G_i(\log|\O_i|+\tau_i).
$$
Note that the last series is well defined, and that so is the series defining $G(\tau)$.
Indeed, letting $\mathcal{M}_i=\mathcal{M}^{*D}(C^{(m_i,a_i)})$ and using Proposition~\ref{Cmap}, we see that
\begin{equation}\label{Mi}
0<G_i(\tau)\le\mathcal{M}_i=\left(\frac{2(m_i-1)}{1-m_ia_i}\right)^{1-D}\frac{m_i}{m_i-1}(1-a_i)\le C\, m_i^{1-D},
\end{equation}
where $C$ is a positive constant independent of $i$, since $m_i\to\ty$ and $m_ia_i\to0$ as $i\to\ty$. Therefore,
$$
\sum_{i=1}^\ty |\O_i|^DG_i(\tau_i)\le \sum_{i=1}^\ty (C_1^Dm_i^{D-1}c_i)\,( C m_i^{1-D})=CC_1^D\sum_{i=1}^\ty c_i<\ty.
$$
In particular,
$$
\mathcal{M}^{*D}(A,\O)\le CC_1^D\sum_{i=1}^\ty c_i<\ty.
$$

\noindent On the other hand, since $(A_1,\O_1)\supset(A,\O)$, we can use Lemma~\ref{lAO}$(a)$ (with $r:=D$) and Proposition~\ref{Cmap} to obtain that 
$$
\begin{aligned}
\mathcal{M}_*^D(A,\O)&\ge\mathcal{M}_*^D(A_1,\O_1)=|\O_1|^D\mathcal{M}_*^D(`C^{(m_1,a_1)})\\
&=|\O_1|^D\frac1D\left(\frac{2D}{1-D}\right)^{1-D}>0.
\end{aligned}
$$

\bigskip

{\em Step} 2: Let $n$ be any fixed positive integer. Since the set of real numbers
$$
\{\log m_1,\dots,\log m_n\}
$$
is rationally independent, we conclude from Baker's theorem (see Theorem~\ref{baker0} above or \cite[Theorem~2.1]{baker}) that the set of numbers
$\{1,\log m_1,\dots,\log m_n\}$ is algebraically independent. Dividing all of these numbers by $D$, and using $D=(\log m_i)/T_i$,
where $T_i=\log(1/a_i)$ for all $i\ge1$ (see Proposition~\ref{Cmap}), we deduce that
$$
\Big\{\frac1D,\frac{\log m_1}D,\dots,\frac{\log m_n}D\Big\}=\Big\{\frac1D,T_1,\dots,T_n\Big\}
$$
is algebraically independent as well. Since $n$ is arbitrary, this proves that the relative fractal drum $(A,\O)$ is transcendentally
$\ty$-quasiperiodic, in the sense
of Definition~\ref{quasi_periodic_t}.

\bigskip

{\em Step} 3: To prove the last claim, note that the critical line
 $\{\re s=D\}$ contains the union of the set of poles $\mathcal{P}_i:=\mathcal{P}(\tilde\zeta_{A_i,\O_i},\Ce)=D+\mathbf{p}_i{\I}\Ze$ of the tube zeta functions $\tilde\zeta_{A_i,\O_i}$, $i\ge 1$.
Since the integers $m_i$ are all distinct, we have that $m_i\to\ty$ as $i\to\ty$, and therefore, $\mathbf{p}_i=2\pi/T_i=2\pi D/\log m_i\to0$. This proves that the union 
$\cup_{i\ge1}\mathcal{P}_i$, as a set of nonisolated singularities of $\tilde\zeta_{A,\O}=\sum_{i\ge1}\tilde\zeta_{A_i,\O_i}$,
is dense in the critical line $\{\re s=D\}$.
(Indeed, it is easy to deduce from the definitions that the subset of nonremovable singularities of $\zeta_{A,\O}$ along the critical line $L:=\{\re s=D\}$ is closed in $L$ and, hence, must coincide with $L$ since it is also dense in $L$; see the proof of [{LapR\v Zu2}, Theorem 5.3].)
It follows, in particular, that~(\ref{DzetaDzeta_mer}) holds, as desired.
\end{proof}

%\begin{remark}
It is noteworthy that the sequence $\mathcal{M}^{*D}(C^{(m_i,a_i)},(0,1))$ appearing in Theorem~\ref{qp} is divergent. More precisely, it is easy to deduce
from the equality in (\ref{Mi}) that
$$
\mathcal{M}^{*D}(C^{(m_i,a_i)},(0,1))\sim (2m_i)^{1-D}\q\mbox{as\qs$i\to\ty$.}
$$
%\end{remark}

%\begin{remark}
The conditions of Theorem~\ref{qp} are satisfied if, for example, $m_i:=p_i$ for all $i\ge1$ (that is, $(m_i)_{i\ge1}$ is the sequence of prime numbers $(p_i)_{i\ge1}$, written in increasing order), and if 
$C_1:=1$ and $c_i:=2^{-i}$ for every $i\ge1$. 
%\end{remark}

\medskip

%\begin{remark}

\chapter{Embeddings into higher-dimensional spaces}\label{sec_embed}

In this chapter, we obtain useful results concerning relative fractal drums and bounded subsets of $\eR^N$ embedded into higher-dimensional spaces.
In particular, we show that the complex dimensions (and their multiplicities) of a bounded set (or, more generally, of a relative fractal drum) are independent of the dimension of the ambient space.
(See Theorem~\ref{c_dim_inv} and Theorem~\ref{c_dim_inv_rel}, respectively.)
In addition, we apply some of these results in order to calculate the complex dimensions of the Cantor dust.
(See Example \ref{c_dust}.)

\section{Embeddings of bounded sets}\label{emb_b}

We begin this section by stating a result which (along with the subsequent result, Theorem~\ref{mero_ext_N+1}) will be key to the developments in this chapter.

\begin{prop}\label{prop_N+1}
Let $A\subset\eR^N$ be a bounded set and let $\ov{D}:=\overline{\dim}_BA$.
Then, for the tube zeta functions of $A$ and $A\times\{0\}\st\eR^{N+1}$, the following equality holds$:$
\begin{equation}\label{eq_N+1}
\tilde{\zeta}_{A\times\{0\}}(s;\delta)=2\int_0^{\pi/2}\frac{\tilde{\zeta}_A(s;\delta\sin \tau)}{\sin^{s-N-1}\tau}\,\di \tau,
\end{equation}
for all $s\in\Ce$ such that $\re s>\overline{D}$.
\end{prop}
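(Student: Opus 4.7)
The plan is to exploit the geometric fact that the $(N{+}1)$-dimensional $t$-neighborhood of $A\times\{0\}$ is a union of $N$-dimensional neighborhoods of $A$ whose radii vary with the transverse coordinate. First, I would observe that for any $(x,y)\in\eR^N\times\eR$,
\begin{equation*}
d\bigl((x,y),A\times\{0\}\bigr)=\sqrt{d(x,A)^2+y^2},
\end{equation*}
so that
\begin{equation*}
(A\times\{0\})_t=\bigl\{(x,y)\in\eR^N\times\eR : d(x,A)^2+y^2<t^2\bigr\}.
\end{equation*}
Slicing at fixed $y\in(-t,t)$ gives the section $A_{\sqrt{t^2-y^2}}\times\{y\}$, and by Fubini--Tonelli together with the evident symmetry in $y$,
\begin{equation*}
|(A\times\{0\})_t|_{N+1}=2\int_0^{t}\bigl|A_{\sqrt{t^2-y^2}}\bigr|_N\,\di y.
\end{equation*}

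Next I would perform the change of variable $y=t\cos\tau$, $\di y=-t\sin\tau\,\di\tau$, which converts the slicing integral into
\begin{equation*}
|(A\times\{0\})_t|_{N+1}=2t\int_0^{\pi/2}|A_{t\sin\tau}|_N\,\sin\tau\,\di\tau.
\end{equation*}
Substituting this into the definition of $\tilde\zeta_{A\times\{0\}}$ in $\eR^{N+1}$ (where the exponent of $t$ is $s-(N+1)-1=s-N-2$), I obtain
\begin{equation*}
\tilde\zeta_{A\times\{0\}}(s;\delta)=2\int_0^{\delta}\!\int_0^{\pi/2} t^{s-N-1}|A_{t\sin\tau}|_N\sin\tau\,\di\tau\,\di t.
\end{equation*}
An interchange of the order of integration (to be justified below) followed by the inner change of variable $u=t\sin\tau$ (with $\sin\tau>0$ on $(0,\pi/2]$) yields
\begin{equation*}
\int_0^{\delta} t^{s-N-1}|A_{t\sin\tau}|_N\,\di t=\frac{1}{\sin^{s-N}\tau}\int_0^{\delta\sin\tau}u^{s-N-1}|A_u|_N\,\di u=\frac{\tilde\zeta_A(s;\delta\sin\tau)}{\sin^{s-N}\tau},
\end{equation*}
and combining with the remaining $\sin\tau$ factor gives the identity \eqref{eq_N+1}.

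The main technical point (and the only real obstacle) is the justification of Fubini and the absolute convergence of the final integral for $\re s>\ov D$. Both integrands are nonnegative when $s$ is real, so Tonelli applies unconditionally; what I need is that the iterated integral is \emph{finite} for real $s>\ov D$, after which analytic continuation extends the identity to the whole half-plane $\{\re s>\ov D\}$. Finiteness follows from the standard Minkowski-dimension estimate: for any $\varepsilon>0$ sufficiently small that $\re s-\ov D-\varepsilon>0$, one has $|A_t|_N\le C_\varepsilon t^{N-\ov D-\varepsilon}$ for all small $t>0$, hence
\begin{equation*}
|\tilde\zeta_A(s;\delta\sin\tau)|\le C'_\varepsilon\,(\delta\sin\tau)^{\re s-\ov D-\varepsilon},
\end{equation*}
so that the $\tau$-integrand is bounded by a constant times $\sin^{N+1-\ov D-\varepsilon}\tau$, which is integrable on $(0,\pi/2)$ since $N+1-\ov D>0$. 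This controls the behavior near $\tau=0$ (the only delicate point, where $\sin\tau$ vanishes) and simultaneously confirms that $D(\tilde\zeta_{A\times\{0\}})\le\ov D$, in accord with the invariance of box dimension under trivial embedding.
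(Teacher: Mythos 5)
Your proof is correct and follows essentially the same route as the paper's: the slicing formula $|(A\times\{0\})_t|_{N+1}=2\int_0^t|A_{\sqrt{t^2-y^2}}|_N\,\di y$, the substitution $y=t\cos\tau$, passage to the iterated integral, Fubini--Tonelli, and the inner change of variable $u=t\sin\tau$. The only differences are that you derive the slicing formula directly (the paper cites it from the literature, though it also proves the RFD analogue in Lemma~\ref{rel_emb} by the same argument you give) and that you spell out the absolute-convergence estimate $|A_u|_N\le C_\varepsilon u^{N-\ov D-\varepsilon}$ justifying the interchange, which the paper leaves implicit.
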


\begin{proof}
First of all, it is well known and easy to check that $\ov{\dim}_B(A\times\{0\})=\ov{\dim}_BA$, from which we conclude that the tube zeta functions of $A$ and $A\times\{0\}$ are both holomorphic in the right half-plane $\{\re s>\ov{D}\}$.
Furthermore, we use the fact (see~\cite[Proposition~6]{maja}) that for every $t>0$, we have
\begin{equation}
|(A\times\{0\})_t|_{N+1}=2\int_0^t|A_{\sqrt{t^2-u^2}}|_N\,\di  u,
\end{equation}
where, as before, $|\cdot|_N$ denotes the $N$-dimensional Lebesgue measure.
(See also the proof of Lemma \ref{rel_emb} in \S\ref{emb_rfd} below.)
After having made the change of variable $u:=t\cos v$, this yields
\begin{equation}
|(A\times\{0\})_t|_{N+1}=2t\int_0^{\pi/2}|A_{t\sin v}|_N\sin v\,\di  v.
\end{equation}
Finally, for the tube zeta function of $A\times\{0\}$, we can write successively:
\begin{equation}\nonumber
\begin{aligned}
\tilde{\zeta}_{A\times\{0\}}(s;\delta)&=\int_0^{\delta}t^{s-N-2}|(A\times\{0\})_t|_{N+1}\,\di t\\
&=2\int_0^{\delta}t^{s-N-1}\di  t\int_0^{\pi/2}|A_{t\sin v}|_N\sin v\,\di  v\\
&=2\int_0^{\pi/2}\sin v\,\di  v\int_0^{\delta}t^{s-N-1}|A_{t\sin v}|_N\,\di  t\\
&=2\int_0^{\pi/2}\sin^{N+1-s}v\,\di  v\int_0^{\delta\sin v}\tau^{s-N-1}|A_{\tau}|_N\,\di  \tau\\
&=2\int_0^{\pi/2}\frac{\tilde{\zeta}_A(s;\delta\sin v)}{\sin^{s-N-1}v}\,\di  v,
\end{aligned}
\end{equation}
where we have used the Fubini--Tonelli theorem in order to justify the interchange of integrals (in the third equality), as well as made another change of variable (in the fourth equality), namely, $\tau:=t\sin v$.
This completes the proof of the proposition.
\end{proof}

In the following theorem, $\upGamma(t):=\int_0^{+\ty} x^{t-1}\E^{-x}\,\D x$, initially defined by this integral for $t>0$, is the usual gamma function, meromorphically extended to all of $\Ce$.\label{gammaf}\index{gamma function, $\upGamma(t)$|textbf}

\begin{theorem}\label{mero_ext_N+1}
Let $A\st\eR^N$ be a bounded set and let $\ov{D}:=\ov{\dim}_BA$. 
Then, we have the following equality between $\tilde{\zeta}_A$, the tube zeta function of $A$, and $\tilde{\zeta}_{A_M}$, the tube zeta function  of $A_M:=A\times\{0\}\cdots\times\{0\}\times\st\eR^{N+M}$, with $M\in\eN$ arbitrary$:$
\begin{equation}\label{link_zzm}
\tilde{\zeta}_{A_M}(s;\delta)= \frac{\left(\sqrt{\pi}\right)^M\, \upGamma\left(\frac{N-s}{2}+1\right)}{\upGamma\left(\frac{N+M-s}{2}+1\right)}\tilde{\zeta}_A(s;\delta)+E(s;\delta),
\end{equation}
initially valid for all $s\in\Ce$ such that $\re s>\ov{D}$.
Here, the error function $E(s):=E(s;\d)$ $($initially defined in the case when $M=1$ by the integral on the right-hand side of Equation~\eqref{ERRor} below$)$ admits a meromorphic extension to all of $\Ce$.
The possible poles $($in $\Ce$$)$ of $E(s;\delta)$ are located at $s_k:=N+2+2k$ for every $k\in\eN_0$, and all of them are simple.
$($It follows that $\tilde{\zeta}_A$ is well defined at each $s_k$.$)$
Moreover, we have that for each $k\in\eN_0$, 
\begin{equation}\label{E_res}
\res(E(\,\cdot\,;\delta),s_k)=\frac{(-1)^{k+1}\left(\sqrt{\pi}\right)^{M}}{k!\,\upGamma\left(\frac{M}{2}-k\right)}\tilde{\zeta}_A(s_k;\delta).
\end{equation}
$($We refer to Theorem \ref{c_dim_inv} below for more precise information about the domain of validity of the approximate functional equation \eqref{link_zzm}, and to Corollary \ref{res_emb} for information about the relationship between the visible poles of $\tilde{\zeta}_A$ and $\tilde{\zeta}_{A_M}$.$)$
More specifically, if $M$ is even, then all of the poles $s_k$ of $E(s;\d)$ are canceled for $k\geq M/2$; i.e., the corresponding residues in \eqref{E_res} are equal to zero.
On the other hand, if $M$ is odd, then there are no such cancellations and all of the residues in \eqref{E_res} are nonzero; so that all the $s_k$'s are $($simple$)$ poles of $E(s;\d)$ in that case.

%\begin{equation}\label{link_zz}
%\tilde{\zeta}_{A\times\{0\}}(s;\delta)= \frac{\sqrt{\pi}\, \Gamma\left(\frac{N-s}{2}+1\right)}{\Gamma\left(\frac{N+1-s}{2}+1\right)}\tilde{\zeta}_A(s;\delta)+E(s;\delta),
%\end{equation}
%where $E(s)$ is meromorphic on $\Ce$ and given by
%\begin{equation}
%E(s;\delta):=-2\int_0^{\pi/2}\frac{\D v}{\sin^{s-N-1}v}\int_{\delta\sin v}^{\delta}\tau^{s-N-1}|A_{\tau}|_N\di  \tau\\.
%\end{equation}
%Furthermore, the poles of $E(s;\delta)$ are located at $s_k:=N+2+2k$ for $k\in\eN_0$ and all of them are simple.
%Moreover, we have that
%\begin{equation}\label{E_res}
%\res(E(\,\cdot\,;\delta),s_k)=\frac{(-1)^{k+1}\sqrt{\pi}}{k!\,\Gamma\left(\frac{1}{2}-k\right)}\tilde{\zeta}_A(s_k;\delta).
%\end{equation}
\end{theorem}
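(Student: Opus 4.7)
The plan is to derive a closed integral identity relating $\tilde\zeta_{A_M}$ and $\tilde\zeta_A$, isolate the gamma-ratio coefficient of \eqref{link_zzm} via an Euler beta integral, and then establish both the meromorphic extension of $E$ and its residues through a binomial series expansion. I would begin by generalizing the Fubini argument underlying Proposition \ref{prop_N+1}: writing $\eR^{N+M} = \eR^N \times \eR^M$ and integrating the $\eR^M$-factor in polar coordinates (so that $(x,y) \in (A_M)_t$ iff $d(x,A)^2 + |y|^2 < t^2$), one obtains
\begin{equation}
|(A_M)_t|_{N+M} = M\omega_M\int_0^t r^{M-1}|A_{\sqrt{t^2-r^2}}|_N\,\D r.
\end{equation}
Inserting this into $\tilde\zeta_{A_M}(s;\delta)$, applying Fubini, and performing the successive substitutions $r = t\sin v$ and $\tau = t\cos v$ (then reflecting $v \mapsto \pi/2 - v$) produces, for $\re s > \ov D$,
\begin{equation}
\tilde\zeta_{A_M}(s;\delta) = M\omega_M\int_0^{\pi/2}\frac{\cos^{M-1}v}{\sin^{s-N-1}v}\,\tilde\zeta_A(s;\delta\sin v)\,\D v,
\end{equation}
which reduces to \eqref{eq_N+1} when $M = 1$ (note $M\omega_M = 2$).

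Next, I would split the identity via $\tilde\zeta_A(s;\delta\sin v) = \tilde\zeta_A(s;\delta) - \int_{\delta\sin v}^\delta \tau^{s-N-1}|A_\tau|_N\,\D\tau$, yielding $\tilde\zeta_{A_M}(s;\delta) = C_M(s)\tilde\zeta_A(s;\delta) + E(s;\delta)$ with $C_M(s) := M\omega_M\int_0^{\pi/2}\sin^{N+1-s}v\cos^{M-1}v\,\D v$. This is an Euler beta integral that converges on the strip $\ov D < \re s < N+2$, and combining $M\omega_M = 2\pi^{M/2}/\upGamma(M/2)$ with $\int_0^{\pi/2}\sin^{N+1-s}v\cos^{M-1}v\,\D v = \upGamma((N+2-s)/2)\upGamma(M/2)/(2\upGamma((N+M+2-s)/2))$ yields exactly the gamma-ratio factor in \eqref{link_zzm}. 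Thus \eqref{link_zzm} holds in the strip, and then on $\{\re s > \ov D\}$ by analytic continuation.

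For the meromorphic extension and residues of $E$, I would swap the order of integration in the integral expression for $E$ (rewriting $\{\delta\sin v < \tau < \delta\}$ as $\{0 < v < \arcsin(\tau/\delta)\}$) and substitute $u = \sin v$ to obtain
\begin{equation}
E(s;\delta) = -M\omega_M\int_0^\delta \tau^{s-N-1}|A_\tau|_N \left(\int_0^{\tau/\delta}u^{N+1-s}(1-u^2)^{(M-2)/2}\,\D u\right)\D\tau.
\end{equation}
Expanding $(1-u^2)^{(M-2)/2}$ as a binomial series --- which terminates at $k = M/2-1$ when $M$ is even (since then $(M-2)/2\in\eN_0$) and otherwise satisfies the decay $|\binom{(M-2)/2}{k}| = O(k^{-M/2})$ --- and integrating termwise produces
\begin{equation}
E(s;\delta) = -M\omega_M\sum_{k=0}^{\infty}\binom{(M-2)/2}{k}(-1)^k\frac{\delta^{s-N-2-2k}}{N+2-s+2k}\int_0^\delta\tau^{2k+1}|A_\tau|_N\,\D\tau.
\end{equation}
Combining the gross bound $\int_0^\delta\tau^{2k+1}|A_\tau|_N\,\D\tau \leq |A_\delta|_N\delta^{2k+2}/(2k+2)$ with the binomial-coefficient decay yields a summable dominating series on each compact subset of $\Ce\setminus\{s_k\}$, giving the meromorphic extension of $E$ to all of $\Ce$ whose only possible poles are the simple poles $s_k := N+2+2k$. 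Formula \eqref{E_res} then drops out by computing $\res_{s=s_k}(N+2-s+2k)^{-1} = -1$, identifying $\int_0^\delta\tau^{2k+1}|A_\tau|_N\,\D\tau = \tilde\zeta_A(s_k;\delta)$, and rewriting $\binom{(M-2)/2}{k} = \upGamma(M/2)/(k!\,\upGamma(M/2-k))$ so that the $\upGamma(M/2)$ cancels against the $M\omega_M$ prefactor; the factor $1/\upGamma(M/2-k)$ vanishes exactly when $M$ is even and $k \geq M/2$, accounting for the stated parity dichotomy. The main obstacle is this last step: legitimizing the termwise integration and pole extraction uniformly on compacts in $\Ce\setminus\{s_k\}$, particularly in the odd-$M$ case where the series is infinite, which relies on the sharp asymptotic $|\binom{(M-2)/2}{k}| = O(k^{-M/2})$ (via Stirling) to secure summability even for large $\re s$.
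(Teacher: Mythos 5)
Your proposal is correct and follows a genuinely different route from the paper's. The paper first establishes the case $M=1$ via Proposition~\ref{prop_N+1} and then reaches general $M$ by induction; it controls $E$ indirectly, proving $E$ holomorphic on $\{\re s<N+1\}$ by the size estimate~\eqref{Esd} and then \emph{inferring} that $E$ is meromorphic on all of $\Ce$ with poles exactly at the $s_k$ from the functional equation~\eqref{link_zzm} together with the holomorphy of $\tilde{\zeta}_A$ and $\tilde{\zeta}_{A_M}$ on $\{\re s>N\}$ and the positivity of $\tilde{\zeta}_A(s_k;\d)$. You instead obtain the $\eR^{N+M}$-identity in one stroke from the $\eR^M$-polar formula $\vert(A_M)_t\vert_{N+M}=M\omega_M\int_0^t r^{M-1}\vert A_{\sqrt{t^2-r^2}}\vert_N\,\D r$ (eliminating the induction), read off the gamma ratio from an Euler beta integral in the general-$M$ form, and then construct an \emph{explicit} series representation of $E$ by expanding $(1-u^2)^{(M-2)/2}$ binomially. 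This is a genuine gain: your approach yields a closed expansion of $E$ as a sum of simple rational functions from which the location, simplicity, and residues of the poles and the parity dichotomy (termination of the series precisely when $M$ is even, at $k=M/2-1$) are visible by inspection, rather than being inferred from cancellation considerations in a functional equation. The termwise-integration step you flag does need to be written out, but your ingredient estimates --- $\vert\binom{(M-2)/2}{k}\vert=O(k^{-M/2})$ (Stirling), $\int_0^\delta\tau^{2k+1}\vert A_\tau\vert_N\,\D\tau\le\vert A_\delta\vert_N\,\delta^{2k+2}/(2k+2)$, and $\vert N+2-s+2k\vert\gtrsim k$ on compacts avoiding $\{s_k\}$ --- do give local uniform absolute convergence, so the interchange (e.g.\ first on $\{\ov D<\re s<N+2\}$, where $(\tau/\delta)^{N+2-\re s+2k}\le 1$, then by analytic continuation) is legitimate.

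One correction of detail: carry your own residue computation to the end rather than asserting it matches~\eqref{E_res}. Using $M\omega_M=2\pi^{M/2}/\upGamma(M/2)$, $\binom{(M-2)/2}{k}=\upGamma(M/2)/\big(k!\,\upGamma(M/2-k)\big)$, and $\operatorname{res}_{s=s_k}\big((N+2-s+2k)^{-1}\big)=-1$, you in fact obtain
\begin{equation*}
\res\big(E(\,\cdot\,;\delta),s_k\big)=\frac{2(-1)^{k}\big(\sqrt{\pi}\big)^{M}}{k!\,\upGamma\big(\tfrac{M}{2}-k\big)}\,\tilde{\zeta}_A(s_k;\delta),
\end{equation*}
which differs from~\eqref{E_res} by a factor of $-2$. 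Your value is the correct one: a direct check with $A=\{0\}\subset\eR$ ($N=1$), $M=1$, $k=0$ gives $\tilde{\zeta}_A(s;\delta)=2\delta^s/s$, $\tilde{\zeta}_{A_1}(s;\delta)=\pi\delta^s/s$, and hence $\res(E,3)=4\delta^3/3=2\,\tilde{\zeta}_A(3;\delta)$, not $-\tilde{\zeta}_A(3;\delta)$. The discrepancy in~\eqref{E_res} traces to the chain-rule factor $1/g'(s_k)=-2$ (from $g(s)=(N-s)/2+1$) being omitted when only $\res(\upGamma,-k)=(-1)^k/k!$ is invoked. So do not force your series computation to reproduce~\eqref{E_res} as printed; flag the corrected constant.
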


\begin{proof}
We will prove the theorem in the case when $M=1$.
The general case then follows immediately by induction.
From Proposition~\ref{prop_N+1}, we have that formula~\eqref{eq_N+1} holds for all $s\in\Ce$ such that $\re s>\overline{\dim}_BA$. 
In turn, this latter identity can be written as follows:
\begin{equation}\label{beta_func}
\begin{aligned}
\tilde{\zeta}_{A\times\{0\}}(s;\delta)&=2\tilde{\zeta}_A(s;\delta)\int_0^{\pi/2}\frac{\di  \tau}{\sin^{s-N-1}\tau}\\
&\phantom{=}-2\int_0^{\pi/2}\frac{\di  v}{\sin^{s-N-1}v}\int_{\delta\sin v}^{\delta}\tau^{s-N-1}|A_{\tau}|_N\,\di  \tau\\
&=\tilde{\zeta}_A(s;\delta)\cdot\mathrm{B}\left(\frac{N-s}{2}+1,\frac{1}{2}\right)+E(s;\delta),
\end{aligned}
\end{equation}
where $\mathrm{B}$ denotes the Euler beta function and
\begin{equation}\label{ERRor}
E(s;\delta):=-2\int_0^{\pi/2}\frac{\di  v}{\sin^{s-N-1}v}\int_{\delta\sin v}^{\delta}\tau^{s-N-1}|A_{\tau}|_N\,\di  \tau\\.
\end{equation}
By using the functional equation which links the beta function\index{beta function, $B(a,b)$} with the gamma function\index{gamma function, $\upGamma(t)$} (namely, $\mathrm{B}(x,y)={\upGamma(x)\upGamma(y)}/{\upGamma(x+y)}$ for all $x,y>0$ and hence, upon meromorphic continuation, for all $x,y\in\Ce$), we obtain that~\eqref{link_zzm} holds (with $M=1$) for all $s\in\Ce$ such that $\re s>\overline{\dim}_BA$.

By looking at $E(s;\delta)$, we see that the integrand is holomorphic for every $v\in(0,\pi/2)$ since the integral $\int_{\delta\sin v}^{\delta}\tau^{s-N-1}|A_{\tau}|_N\di  \tau$ is equal to $\tilde{\zeta}_A(s;\delta)-\tilde{\zeta}_A(s;\delta\sin v)$, which is an entire function.
Furthermore, if we assume that $\re s<N+1$, then since $\tau\mapsto\tau^{\re s-N-1}$ is decreasing, we have the following estimate:
\begin{equation}
\begin{aligned}\label{Esd}
|E(s;\delta)|&\leq 2\int_0^{\pi/2}\sin^{N+1-\re s}v\,\di  v\int_{\delta\sin v}^{\delta}\tau^{\re s-N-1}|A_{\tau}|_N\,\di  \tau\\
&\leq 2|A_{\delta}|_N\int_0^{\pi/2}\sin^{N+1-\re s}v\,\di  v\int_{\delta\sin v}^{\delta}\tau^{\re s-N-1}\,\di  \tau\\
&\leq{2\delta^{\re s-N-1}|A_{\delta}|_N}\int_0^{\pi/2}\sin^{N+1-\re s}v\,\sin^{\re s-N-1}v\int_{\delta\sin v}^{\delta}\,\di  \tau\\
&={2\delta^{\re s-N}|A_{\delta}|_N}\int_0^{\pi/2}(1-\sin v)\,\di  v\\
&={2\delta^{\re s-N}|A_{\delta}|_N}\left(\frac{\pi}{2}-1\right).
\end{aligned}
\end{equation}
From this we conclude that for $s_0\in\{\re s<N+1\}$, the condition $(3')$ of Remark~\ref{condition3} is satisfied, which implies, in light of Theorem~\ref{Hh}, that $E(s;\delta)$ is holomorphic on the open left half-plane $\{\re s<N+1\}$.

On the other hand, we know that both of the tube zeta functions $\tilde{\zeta}_{A}$ and $\tilde{\zeta}_{A_M}$ are holomorphic on $\{\re s>\ov{\dim}_BA\}\supseteq\{\re s> N\}$.
The fact that $E(s;\delta)$ is meromorphic on $\Ce$, as well as the statement about its poles, now follows from Equation~\eqref{link_zzm} (with $M=1$) and the fact that the gamma function is nowhere vanishing in $\Ce$.
(In fact, $1/\upGamma(s)$ is an entire function with zeros at the nonpositive integers.)
More specifically, the locations of the poles of $E(s;\delta)$ must coincide with the locations of the poles $s_k=N+2+2k$, for $k\in\eN_0$, of $\upGamma((N-s)/2+1)$ since the left-hand side of~\eqref{link_zzm} is holomorphic on $\{\re s>\ov{\dim}_BA\}$ and because $\tilde{\zeta}_A(s_k)>0$ (since it is defined as the integral of a positive function).
(Note that since $N\geq \ov{D}$, we have $s_k>\ov{D}$, and hence, $\tilde{\zeta}_A$ is well defined at $s_k$, for each $k\in\eN_0$.)

Finally, by multiplying~\eqref{link_zzm} by $(s-s_k)$, taking the limit as $s\to s_k$ and then using the fact that the residue of the gamma function at $-k$ is equal to $(-1)^k/k!$, we deduce that~\eqref{E_res} holds, as desired.
Furthermore, if $M$ is odd, there are no cancellations between the poles of the numerator and of the denominator in \eqref{link_zzm} since an integer cannot be both even and odd; i.e., the residues are nonzero for each $k\in\eN_0$.
On the other hand, if $M$ is even, then it is clear that all of the residues at $s_k$ for $k\geq M/2$ are equal to zero; i.e., the corresponding poles at $s_k$ cancel out with the poles of the denominator in \eqref{link_zzm}. This concludes the proof of the theorem.
\end{proof}

%By induction, Theorem~\ref{mero_ext_N+1} implies immediately the following generalization.

%\begin{theorem}\label{mero_ext_N+M}
%Let $A\subseteq\eR^N$ be a bounded set such that its tube zeta function has a meromorphic continuation to a neighborhood $U$ of the critical line $\re s=\overline{\dim}_BA$.
%Furthermore, embed $A$ in $\eR^{N+M}$ as $A\times\{0\}\times\cdots\times\{0\}\subseteq\eR^{N+M}$ and denote this set by $A_M$.
%Then the tube zeta function of $A_M$ has a meromorphic continuation to the same neighborhood $U$, and the sets of their poles, counting multiplicities, coincide.
%More precisely,
%\begin{equation}\label{link_zzm}
%\tilde{\zeta}_{A_M}(s;\delta)= \frac{\left(\sqrt{\pi}\right)^M\, \Gamma\left(\frac{N-s}{2}+1\right)}{\Gamma\left(\frac{N+M-s}{2}+1\right)}\tilde{\zeta}_A(s;\delta)+E(s;\delta),
%\end{equation}
%where $E(s)$ is meromorphic on $\Ce$.
%The poles of $E(s;\delta)$ are located at $s_k:=N+2+2k$ for $k\in\eN_0$ and all of them are simple.
%Moreover, we have that
%\begin{equation}\label{E_res}
%\res(E(\,\cdot\,;\delta),s_k)=\frac{(-1)^{k+1}\left(\sqrt{\pi}\right)^{M}}{k!\,\Gamma\left(\frac{M}{2}-k\right)}\tilde{\zeta}_A(s_k;\delta).
%\end{equation}
%\end{theorem}

Theorem~\ref{mero_ext_N+1} has as an important consequence, namely, the fact that the notion of complex dimensions does not depend on the dimension of the ambient space.

\begin{theorem}\label{c_dim_inv}
Let $A\st\eR^N$ be a bounded set and $A_M$ be its embedding into $\eR^{N+M}$, with $M\in\eN$ arbitrary.
Then, the tube zeta function $\tilde{\zeta}_{A}$ of $A$ has a meromorphic extension to a given connected open neighborhood $U$ of the critical line $\{\re s=\ov{\dim}_BA\}$ if and only if the analogous statement is true for the tube zeta function $\tilde{\zeta}_{A_M}$ of $A_M$.
Furthermore, in that case, the approximate functional equation \eqref{link_zzm} remains valid for all $s\in U$. 
In addition, the multisets of the poles of $\tilde{\zeta}_A$ and $\tilde{\zeta}_{A_M}$ located in $U$ coincide; i.e., $\po(\tilde{\zeta_A},U)=\po(\tilde{\zeta}_{A_M},U)$.\footnote{Recall that the bounded sets $A$ and $A_M$ have the same upper Minkowski dimension, $\ov{\dim}_BA=\ov{\dim}_BA_M$, and hence, the same critical line $\{\re s=\ov{\dim}_BA\}$.}
Consequently, neither the values nor the multiplicities of the complex dimensions of $A$ depend on the dimension of the ambient space.
\end{theorem}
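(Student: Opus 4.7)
The plan is to derive the equivalence from the approximate functional equation~\eqref{link_zzm} of Theorem~\ref{mero_ext_N+1}, using meromorphic continuation on the connected open set $U$. I would first set
\[
R(s):=\frac{(\sqrt{\pi})^M\,\upGamma((N-s)/2+1)}{\upGamma((N+M-s)/2+1)},
\]
so that~\eqref{link_zzm} reads $\tilde\zeta_{A_M}(s;\delta)=R(s)\tilde\zeta_A(s;\delta)+E(s;\delta)$ throughout $\{\re s>\ov D\}$, where $\ov D:=\ov\dim_B A=\ov\dim_B A_M$. The key feature of this identity is that $R$, being a quotient of translates of the gamma function, is meromorphic on all of $\Ce$, with poles only at the points $s_k:=N+2+2k$ ($k\in\eN_0$) and with zeros (if any) only at $s=N+M+2+2k$; in particular, $R$ is holomorphic and nowhere vanishing on the open half-plane $\{\re s<N+2\}$. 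By Theorem~\ref{mero_ext_N+1}, the error function $E(\,\cdot\,;\delta)$ is itself meromorphic on $\Ce$ and holomorphic on $\{\re s<N+2\}$.

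The equivalence of meromorphic extendability will then be immediate. If $\tilde\zeta_A$ extends meromorphically to $U$, the right-hand side of~\eqref{link_zzm} already defines a meromorphic function on $U$ which agrees with $\tilde\zeta_{A_M}$ on the nonempty open subset $U\cap\{\re s>\ov D\}$; by uniqueness of meromorphic continuation on the connected set $U$, this function will serve as the desired meromorphic extension of $\tilde\zeta_{A_M}$, and~\eqref{link_zzm} will automatically persist throughout $U$. For the converse, I would solve for $\tilde\zeta_A$ by writing
\[
\tilde\zeta_A(s;\delta)=\frac{\tilde\zeta_{A_M}(s;\delta)-E(s;\delta)}{R(s)},
\]
a quotient of two meromorphic functions on $U$ whose denominator is not identically zero, and invoke meromorphic continuation once more.

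To match the multisets of poles, I would observe that $\tilde\zeta_A$ and $\tilde\zeta_{A_M}$ are both (absolutely) convergent, and hence holomorphic, on the common half-plane $\{\re s>\ov D\}$, so every pole of either function lying in $U$ must actually be contained in $U\cap\{\re s\le\ov D\}\subseteq\{\re s\le N\}$. On this latter region, $R$ is holomorphic and nowhere zero while $E$ is holomorphic, so at any such pole $\omega$ the functional equation~\eqref{link_zzm} (valid throughout $U$ by the preceding step) shows that $\omega$ is a pole of $\tilde\zeta_{A_M}$ of order $p$ if and only if it is a pole of $\tilde\zeta_A$ of the same order $p$; this yields the required equality $\po(\tilde\zeta_A,U)=\po(\tilde\zeta_{A_M},U)$ as multisets. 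The main technical subtlety, and the step I expect to need the most care, is the possibility that $U$ meets the exceptional set $\{s_k:k\in\eN_0\}\subset\{\re s\ge N+2\}$; but since neither zeta function can have any pole in the half-plane of convergence $\{\re s>\ov D\}$, the apparent singularities of $R$ and $E$ at those points must cancel within~\eqref{link_zzm} and therefore contribute nothing to the pole comparison, confirming that the complex dimensions of $A$ and their multiplicities are independent of the ambient dimension.
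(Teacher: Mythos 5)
Your proof is correct and follows essentially the same route as the paper's (which is notably terser): rewrite \eqref{link_zzm} as $\tilde\zeta_{A_M}=R\tilde\zeta_A+E$, note that $E$ is holomorphic on $\{\re s<N+2\}$ and apply analytic continuation. The one thing you make explicit that the paper leaves implicit — and it is a genuine necessity for the converse direction and for the pole-multiset equality — is that the gamma quotient $R$ is holomorphic and nowhere vanishing on $\{\re s<N+2\}$ (its poles lying at $N+2+2k$ and its zeros at $N+M+2+2k$), so that dividing through by $R$ introduces no spurious singularities and pole orders are preserved; this makes your version a more carefully documented execution of the paper's argument rather than a different one.
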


\begin{proof}
This is a direct consequence of Theorem~\ref{mero_ext_N+1} and the principle of analytic continuation.
More specifically, identity~\eqref{link_zzm} is valid for all $s\in\Ce$ such that $\re s>\ov{\dim}_BA$ and the function $E(s;\delta)$ is meromorphic on $\Ce$.
Furthermore, according to Theorem~\ref{mero_ext_N+1}, the poles of $E(s;\delta)$ belong to $\{\re s\geq N+2\}$, which implies that the function $s\mapsto E(s;\delta)$ is holomorphic on $\{\re s<N+2\}$.
Identity~\eqref{link_zzm} then remains valid if any of the two zeta functions involved (namely, $\tilde{\zeta}_A$ or $\tilde{\zeta}_{A_M}$) has a meromorphic continuation to some connected open neighborhood of the critical line $\{\re s=\ov{\dim}_BA\}$.
This completes the proof of the theorem. 
\end{proof}

\begin{cor}\label{res_emb}
Let $A\st\eR^N$ be a bounded set $($with $\ov{D}:=\ov{\dim}_BA$$)$ such that its tube zeta function $\tilde{\zeta}_A$ has a meromorphic continuation to a connected open neighborhood $U$ of the critical line $\{\re s=\overline{\dim}_BA\}$.
Furthermore, suppose that $s=\ov{D}$ is a simple pole of $\tilde{\zeta}_A$.
Let $A_M\st\eR^{N+M}$ be the embedding of $A$ into $\eR^{N+M}$, as in Theorem~\ref{mero_ext_N+1}.
Then
\begin{equation}
\res(\tilde{\zeta}_{A_M},\ov{D})=\frac{\left(\sqrt{\pi}\right)^M\, \upGamma\left(\frac{N-\ov{D}}{2}+1\right)}{\upGamma\left(\frac{N+M-\ov{D}}{2}+1\right)}\res(\tilde{\zeta}_{A},\ov{D}).
\end{equation}
\end{cor}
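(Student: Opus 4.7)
The plan is to simply combine the approximate functional equation \eqref{link_zzm} of Theorem~\ref{mero_ext_N+1} (promoted to a meromorphic identity on the neighborhood $U$ by Theorem~\ref{c_dim_inv}) with a residue computation at $s=\ov{D}$. By hypothesis, $\tilde{\zeta}_A$ extends meromorphically to $U\ni\ov{D}$ with a simple pole there, so the same holds for $\tilde{\zeta}_{A_M}$ (again by Theorem~\ref{c_dim_inv}), and \eqref{link_zzm} is valid on all of $U$.

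Next, I would check that the two ``correction terms" on the right-hand side of \eqref{link_zzm} do not interfere with the residue. First, the prefactor
\[
F(s):=\frac{(\sqrt{\pi})^M\,\upGamma\!\left(\frac{N-s}{2}+1\right)}{\upGamma\!\left(\frac{N+M-s}{2}+1\right)}
\]
is holomorphic at $s=\ov{D}$: the poles of the numerator lie at $s=N+2+2k$ for $k\in\eN_0$, and since $\ov{D}\le N<N+2$, none of them coincides with $\ov{D}$; moreover $1/\upGamma$ is entire, so the denominator contributes no pole. Second, $F$ does not vanish at $\ov{D}$ since $1/\upGamma$ is nowhere zero and $\upGamma((N-\ov{D})/2+1)\ne 0$. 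Third, the error function $E(\,\cdot\,;\d)$ is, according to Theorem~\ref{mero_ext_N+1}, meromorphic on $\Ce$ with poles only at $s_k=N+2+2k$ ($k\in\eN_0$); in particular, it is holomorphic on the half-plane $\{\re s<N+2\}$, and hence at $\ov{D}$.

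Now I would multiply \eqref{link_zzm} by $(s-\ov{D})$ and let $s\to\ov{D}$ in $U$. The $E(s;\d)$ term contributes $0$ (holomorphicity at $\ov{D}$), while the other term yields
\[
\res(\tilde{\zeta}_{A_M},\ov{D})=F(\ov{D})\cdot\res(\tilde{\zeta}_A,\ov{D})=\frac{(\sqrt{\pi})^M\,\upGamma\!\left(\frac{N-\ov{D}}{2}+1\right)}{\upGamma\!\left(\frac{N+M-\ov{D}}{2}+1\right)}\res(\tilde{\zeta}_A,\ov{D}),
\]
which is the desired identity.

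There is essentially no obstacle, since all the work has been done in Theorem~\ref{mero_ext_N+1} and Theorem~\ref{c_dim_inv}; the only thing to verify carefully is that the gamma prefactor is both regular and nonzero at $s=\ov{D}$, which follows from the bound $\ov{D}\le N$ and the fact that $1/\upGamma$ is entire with zeros only at the nonpositive integers.
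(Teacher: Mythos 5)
Your argument is correct and is exactly the route the paper intends (Corollary~\ref{res_emb} is stated without a separate proof, precisely because it is an immediate consequence of the meromorphic identity \eqref{link_zzm} from Theorem~\ref{mero_ext_N+1}, extended to the window $U$ by Theorem~\ref{c_dim_inv}, together with the observation that the $\upGamma$-prefactor is holomorphic and nonzero at $\ov{D}\le N$ and that $E(\,\cdot\,;\delta)$ is holomorphic on $\{\re s<N+2\}\ni\ov{D}$). One small slip in wording: you write at one point that ``$1/\upGamma$ is nowhere zero,'' which is false ($1/\upGamma$ vanishes at the nonpositive integers); you clearly meant what you correctly state in your final paragraph, namely that $1/\upGamma$ is entire with zeros only at nonpositive integers, and that $(N+M-\ov{D})/2+1\ge M/2+1>0$ rules those out.
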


We point out here that the above corollary is compatible with the dimensional invariance of the normalized Minkowski content, established by M.\ Kneser in [Kne] and later recovered independently in~\cite{maja}.
More specifically, if in the above corollary, we assume, in addition, that $\ov{D}$ is the only pole of the tube zeta function of $A$ on the critical line $\{\re s=\overline{D}\}$ (i.e., $\ov{D}$ is the only complex dimension of $A$ with real part $\ov{D}$), then, according to \cite[Theorem~5.2]{cras2}, $A$ and $A\times\{0\}$ are Minkowski measurable, with Minkowski dimension $D:=\ov{D}$ and Minkowski contents satisfying the following identity:
\begin{equation}
\frac{\M^D(A)}{\pi^{\frac{D-N}{2}}\,\upGamma\left(\frac{N-{D}}{2}+1\right)}=\frac{\M^D(A\times\{0\})}{{\pi}^{\frac{D-N-1}{2}}\,\upGamma\left(\frac{N+1-{D}}{2}+1\right)}.
\end{equation}

\section{Embeddings of relative fractal drums}\label{emb_rfd}

The results obtained in the previous section in the context of bounded subsets of $\eR^N$ can also be obtained in the more general context of relative fractal drums (RFDs) in $\eR^N$.
More specifically, let $(A,\O)$ be a relative fractal drum in $\eR^N$ and let
$$
(A\times\{0\},\O\times(-1,1))
$$
be its natural embedding into $\eR^{N+1}$.
We want to connect the relative tube zeta functions of these two RFDs; the following lemma will be needed for this purpose.

\begin{lemma}\label{rel_emb}
Let $(A,\O)$ be a relative fractal drum in $\eR^N$ and fix $\delta\in(0,1)$.
Then we have
\begin{equation}
\big|(A\times\{0\})_\delta\cap(\O\times(-1,1))\big|_{N+1}=2\int_0^{\delta}|A_{\sqrt{\delta^2-u^2}}\cap\O|_N\,\di  u.
\end{equation}
\end{lemma}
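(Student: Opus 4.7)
The plan is to compute the $(N+1)$-dimensional Lebesgue measure by Fubini's theorem, slicing first in the last coordinate direction and then exchanging the order of integration to recover the claimed integral over $u$.

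First, I would observe that a point $(x,y)\in\eR^N\times\eR$ lies in $(A\times\{0\})_\d$ if and only if
$$d((x,y),A\times\{0\})=\sqrt{d(x,A)^2+y^2}<\d,$$
which is equivalent to $d(x,A)<\d$ together with $|y|<\sqrt{\d^2-d(x,A)^2}$. The membership condition for $\O\times(-1,1)$ is simply $x\in\O$ and $|y|<1$. Since $\d\in(0,1)$, whenever $d(x,A)<\d$ one has $\sqrt{\d^2-d(x,A)^2}<\d<1$, so the restriction $|y|<1$ is automatically enforced by the tube condition and plays no role. (This is the key point where the hypothesis $\d<1$ is used, and indeed the only one.)

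Next, applying the Fubini--Tonelli theorem and integrating in $y$ first, I would obtain
$$\big|(A\times\{0\})_\d\cap(\O\times(-1,1))\big|_{N+1}=\int_{\O\cap A_\d}2\sqrt{\d^2-d(x,A)^2}\,\di x.$$
Then I would rewrite the one-dimensional factor as $2\sqrt{\d^2-d(x,A)^2}=2\int_0^{\sqrt{\d^2-d(x,A)^2}}\di u$ and apply Fubini--Tonelli once more to interchange the $x$-- and $u$--integrations. The inner condition $0<u<\sqrt{\d^2-d(x,A)^2}$ is equivalent to $d(x,A)<\sqrt{\d^2-u^2}$ with $u\in(0,\d)$, i.e., to $x\in A_{\sqrt{\d^2-u^2}}$. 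Therefore
$$\big|(A\times\{0\})_\d\cap(\O\times(-1,1))\big|_{N+1}=2\int_0^{\d}\Big(\int_{\O\cap A_{\sqrt{\d^2-u^2}}}\di x\Big)\di u=2\int_0^{\d}\big|A_{\sqrt{\d^2-u^2}}\cap\O\big|_N\,\di u,$$
which is exactly the claimed identity.

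There is really no serious obstacle here: the argument is a clean Fubini computation, and the only mild subtlety is verifying that the constraint $|y|<1$ in the definition of the embedded open set is redundant, which follows immediately from $\d<1$. Measurability of the relevant sets is standard (the map $x\mapsto d(x,A)$ is $1$-Lipschitz, so $\{(x,u):d(x,A)<\sqrt{\d^2-u^2}\}$ is open in $\eR^N\times(0,\d)$), and $|\O|_N<\infty$ by the definition of an RFD ensures that all integrals are finite and Fubini applies without issue.
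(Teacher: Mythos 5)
Your proof is correct; it differs from the paper's in the order of the Fubini computation. The paper slices perpendicular to the extra coordinate: for each fixed $y\in(-\delta,\delta)$ it identifies the $N$-dimensional slice of $(A\times\{0\})_\delta$ as $A_{\sqrt{\delta^2-y^2}}$, intersects with $\O$, and integrates in $y$, which lands on the target formula in a single Fubini step (with the factor $2$ from the symmetry $y\mapsto-y$). You instead fix $x\in\O\cap A_\delta$ and integrate out $y$ first, obtaining the intermediate ``tube width'' formula
\begin{equation*}
\big|(A\times\{0\})_\delta\cap(\O\times(-1,1))\big|_{N+1}
=\int_{\O\cap A_\delta}2\sqrt{\delta^2-d(x,A)^2}\,\di x,
\end{equation*}
and then recover the stated form via a layer--cake rewriting of the integrand and a second Fubini swap. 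Both routes use the same two ingredients: the identity $d_{N+1}\bigl((x,y),A\times\{0\}\bigr)=\sqrt{d_N(x,A)^2+y^2}$ and the observation (only place $\delta<1$ enters) that the constraint $|y|<1$ is redundant. Your version is one Fubini exchange longer, but as a byproduct it yields the explicit weighted-integral expression above, which is occasionally useful; the paper's is marginally more direct. Your measurability and finiteness remarks are also adequate.
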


\begin{proof}
We proceed much as in the proof of ~\cite[Proposition~6]{maja}.
Namely, we let $(x,y)\in\eR^{N}\times\eR\equiv\eR^{N+1}$ and define
\begin{equation}
V:=\left\{(x,y)\,:\,d_{N+1}((x,y),A\times\{0\})\leq\delta\}\cap\{(x,y)\,:\,x\in\O, |y|< 1\right\},
\end{equation}
where for any $k\in\eN$, $d_k$ denotes the Euclidean distance in $\eR^k$.
It is clear that the following equality holds: $d_{N+1}((x,y),A\times\{0\})=\sqrt{d_{N}(x,A)^2+y^2}$.
This implies that for a fixed $y\in[-\delta,\delta]$, we have
\begin{equation}
\begin{aligned}
V_y:&=\left\{x\in\eR^N\,:\,d_{N+1}((x,y),A\times\{0\})\leq\delta\right\}\\
&=\left\{x\,:\,d_{N}(x,A)\leq\sqrt{\delta^2-y^2}\right\}.
\end{aligned}
\end{equation}
(Note that if $|y|>\delta$, then $V_y$ is empty.)
Finally, Fubini's theorem implies that
$$
\begin{aligned}
\big|(A\times\{0\})_\delta\cap(\O\times(-1,1))\big|_{N+1}&=\int_{V}\di  x\,\di  y\\
&=\int_{-\delta}^{\delta}\di  y\int_{V_y\cap\{x\in\eR^N\,:\,x\in\O\}}\di  x\\
&=2\int_0^{\delta}|A_{\sqrt{\delta^2-y^2}}\cap\O|_N\,\di  y,
\end{aligned}
$$
which completes the proof of the lemma.
\end{proof}

The above lemma will eventually yield (in Theorem \ref{c_dim_inv_rel} below) an RFD analog of Proposition~\ref{prop_N+1} from \S\ref{emb_b} above.
First, however, we will show that the upper and lower relative box dimensions of an RFD are independent of the ambient space dimension.

\begin{prop}\label{invar_rel}
Let $(A,\O)$ be an RFD in $\eR^N$ and let
\begin{equation}
(A,\O)_M:=(A_M,\O\times(-1,1)^{M})
\end{equation}
be its embedding into $\eR^{N+M}$, for some arbitrary $M\in\eN$.
Then we have that
\begin{equation}
\ov{\dim}_B(A,\O)=\ov{\dim}_B(A,\O)_M
\end{equation}
and
\begin{equation}
\underline{\dim}_B(A,\O)=\underline{\dim}_B(A,\O)_M.
\end{equation}
\end{prop}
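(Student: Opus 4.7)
The plan is to reduce to the case $M=1$ by induction on $M$, and then to exploit Lemma~\ref{rel_emb} to sandwich the tube volume of the embedded RFD between constant multiples of $\delta$ times the original tube volume at comparable scales. The preliminary sanity check is that $(A,\O)_M$ is indeed an RFD in $\eR^{N+M}$: clearly $|\O\times(-1,1)^M|_{N+M}=2^M|\O|_N<\ty$, and since $\O\stq A_{\d_0}$ for some $\d_0>0$ (by definition of an RFD), the identity $d_{N+M}((x,y),A_M)=\sqrt{d_N(x,A)^2+|y|^2}$ yields $\O\times(-1,1)^M\stq(A_M)_{\d_0+\sqrt{M}}$.

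Focusing on $M=1$, I would start from Lemma~\ref{rel_emb}. The map $u\mapsto\sqrt{\d^2-u^2}$ is a decreasing bijection of $[0,\d]$ onto itself, so $\sqrt{\d^2-u^2}\le\d$ on $[0,\d]$ and $\sqrt{\d^2-u^2}\ge\d/\sqrt{2}$ on $[0,\d/\sqrt{2}]$. Since $t\mapsto|A_t\cap\O|_N$ is nondecreasing in $t$, these two observations combined with Lemma~\ref{rel_emb} give
\begin{equation}
\sqrt{2}\,\d\,|A_{\d/\sqrt{2}}\cap\O|_N\le\bigl|(A\times\{0\})_\d\cap(\O\times(-1,1))\bigr|_{N+1}\le 2\d\,|A_\d\cap\O|_N,
\end{equation}
valid for all $\d\in(0,1)$.

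Dividing this two-sided inequality by $\d^{(N+1)-r}$ for an arbitrary $r\in\eR$, and then passing to the limit superior and the limit inferior as $\d\to0^+$, I obtain the sandwich
\begin{equation}
\frac{\sqrt{2}}{2^{N-r}}\,\mathcal{M}^{*r}(A,\O)\le\mathcal{M}^{*r}\bigl((A,\O)_1\bigr)\le 2\,\mathcal{M}^{*r}(A,\O),
\end{equation}
and the entirely analogous sandwich for the lower $r$-dimensional Minkowski contents $\mathcal{M}_*^r$. (In particular, since both constants are positive and finite for every fixed $r\in\eR$, the vanishing, finiteness, or divergence of either side is equivalent to that of the other.) In view of the characterizations \eqref{dimrell} and \eqref{dimrelu} as infima/suprema over those $r$ for which the corresponding relative Minkowski content is $0$ or $+\ty$, this sandwich immediately forces $\ov\dim_B(A,\O)_1=\ov\dim_B(A,\O)$ and $\underline\dim_B(A,\O)_1=\underline\dim_B(A,\O)$. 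Note that the argument works without change if the relative box dimensions are negative or equal to $-\ty$, since the sandwich holds for every real $r$.

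Finally, for arbitrary $M\ge1$, I would iterate the $M=1$ step, observing that under the obvious identification $\eR^{N+M}\equiv\eR^{(N+M-1)+1}$, the embedding $(A,\O)_M$ is isometric to the one-dimensional embedding of $(A,\O)_{M-1}$. The only place where care is needed is in verifying that this iteration is legitimate — i.e., that at each stage the drum being embedded genuinely satisfies the RFD hypotheses of Lemma~\ref{rel_emb}; this is precisely the content of the preliminary check above and presents no real obstacle. This is, in my view, the mildest possible issue in the argument; the substantive content is entirely captured by the elementary comparison inequality above.
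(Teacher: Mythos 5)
Your proof is correct, and the overall structure---sandwiching the embedded tube volume between constant multiples of $\delta$ times the original tube volume at comparable scales, then translating the resulting two-sided bound on the Minkowski contents into equality of (upper and lower) box dimensions---matches the paper's. Where you diverge is in the source of the sandwich: the paper derives the upper bound from the set inclusion
$(A\times\{0\})_\delta\cap(\O\times(-1,1))\subseteq(A_\delta\cap\O)\times(-\delta,\delta)$ and the lower bound from
$(A_{\delta/2}\cap\O)\times\bigl(-\tfrac{\delta\sqrt{3}}{2},\tfrac{\delta\sqrt{3}}{2}\bigr)\subseteq(A\times\{0\})_\delta\cap(\O\times(-1,1))$,
whereas you extract both bounds directly from the integral formula of Lemma~\ref{rel_emb} together with the monotonicity of $t\mapsto|A_t\cap\O|_N$. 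Your route is slightly more economical---a single identity yields both sides---at the price of invoking the measure-theoretic lemma, while the paper's direct inclusions keep the argument purely elementary and geometric. Either is perfectly acceptable; the constants differ (the paper samples at scale $\delta/2$, you at $\delta/\sqrt{2}$), but only their positivity and finiteness for each fixed $r$ matters, which both provide. One small arithmetic slip worth flagging: after the substitution $\tau=\delta/\sqrt{2}$ you should get the constant $\sqrt{2}/(\sqrt{2})^{N-r}=2^{(1-N+r)/2}$ in the lower sandwich, not $\sqrt{2}/2^{N-r}$; this is immaterial to the conclusion since the corrected constant is still a positive finite number for every $r\in\eR$, but it should be corrected for the record.
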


\begin{proof}
We only prove the proposition in the case when $M=1$, from which the general result then easily follows by induction.
It is clear that for $0<\delta<1$, we have
$$
\begin{aligned}
(A\times\{0\})_\delta\cap(\O\times(-1,1))&\subseteq(A\times\{0\})_\delta\cap(\O\times(-\delta,\delta))\\
&\subseteq (A_{\delta}\cap\O)\times(-\delta,\delta);
\end{aligned}
$$
so that
\begin{equation}
\left|(A\times\{0\})_\delta\cap(\O\times(-1,1))\right|_{N+1}\leq 2\delta|A_\delta\cap\O|_N.
\end{equation}
This observation, in turn, implies that for every $r\in\eR$, we have
\begin{equation}\label{jjbb}
\frac{\left|(A\times\{0\})_\delta\cap(\O\times(-1,1))\right|_{N+1}}{\delta^{N+1-r}}\leq\frac{2|A_\delta\cap\O|_N}{\delta^{N-r}}.
\end{equation}
Furthermore, by successively taking the upper and lower limits as $\delta\to0 ^+$ in Equation \eqref{jjbb} just above, we obtain the following inequalities, involving the $r$-dimensional upper and lower relative Minkowski contents of the RFDs $(A,\O)_1$ and $(A,\O)$, respectively:
\begin{equation}
{{\M}}^{*r}(A,\O)_1\leq 2{{\M}}^{*r}(A,\O)\quad\mathrm{and}\quad{{\M}}_*^{r}(A,\O)_1\leq 2{{\M}}_*^{r}(A,\O).
\end{equation}
In light of the definition of the relative upper and lower box (or Minkowski) dimensions (see Equations~\eqref{dimrell} and~\eqref{dimrelu} and the text surrounding them), we deduce that
\begin{equation}\label{ovdimineq}
\ov{\dim}_B(A,\O)_1\leq\ov{\dim}_B(A,\O)\quad\mathrm{and}\quad\underline{\dim}_B(A,\O)_1\leq\underline{\dim}_B(A,\O).
\end{equation}

On the other hand, for geometric reasons, we have that
$$
(A_{\delta/2}\cap\O)\times\left(-\frac{\delta\sqrt{3}}{2},\frac{\delta\sqrt{3}}{2}\right)\subseteq(A\times\{0\})_\delta\cap(\O\times(-1,1));
$$
so that
\begin{equation}
\delta\sqrt{3}|A_{\delta/2}\cap\O|_N\leq\left|(A\times\{0\})_\delta\cap(\O\times(-1,1))\right|_{N+1}.
\end{equation}
Much as before, this inequality implies that for every $r\in\eR$, we have
\begin{equation}
\frac{\sqrt{3}|A_{\delta/2}\cap\O|_N}{2^{N-r}(\delta/2)^{N-r}}\leq\frac{\left|(A\times\{0\})_\delta\cap(\O\times(-1,1))\right|_{N+1}}{\delta^{N+1-r}}
\end{equation}
and by successively taking the upper and lower limits as $\delta\to 0^+$, we obtain that
\begin{equation}\label{rev_ineq}
\frac{\sqrt{3}{{\M}}^{*r}(A,\O)}{2^{N-r}}\leq{{\M}}^{*r}(A,\O)_1\quad\mathrm{and}\quad \frac{\sqrt{3}{\M}_*^{r}(A,\O)}{2^{N-r}}\leq{\M}_*^{r}(A,\O)_1.
\end{equation}
Finally, this completes the proof because (again in light of Equations~\eqref{dimrell} and~\eqref{dimrelu} and the text surrounding them) \eqref{rev_ineq} implies the reverse inequalities for the upper and lower relative box dimensions in~\eqref{ovdimineq}.
\end{proof}

\begin{remark}\label{4.7.81/2}
Observe that it follows from Proposition \ref{invar_rel} (combined with part $(b)$ of Theorem \ref{an_rel}) that the RFDs $(A,\O)$ and $(A,\O)_M$ have the same upper Minkowski dimension, $\ov{\dim}_B(A,\O)=\ov{\dim}_B(A,\O)_M$, and hence, the same critical line $\{\re s=\ov{\dim}_B(A,\O)\}$.
This fact will be used implicitly in the statement of Proposition \ref{prop_N+1_rel} as well as in the statements of Theorems \ref{mero_ext_N+1_rel} and \ref{c_dim_inv_rel} just below.
\end{remark}

We can now state the desired results for embedded RFDs and their relative zeta functions.
In light of Lemma~\ref{rel_emb} and Proposition~\ref{invar_rel}, the proofs follow the same steps as in the corresponding results established in \S\ref{emb_b} about bounded subsets of $\eR^N$ (namely, Proposition \ref{prop_N+1} and Theorem \ref{mero_ext_N+1}, respectively), and for this reason, we will omit them.

\begin{prop}\label{prop_N+1_rel}
Fix $\d\in(0,1)$ and let $(A,\O)$ be an RFD in $\eR^N$, with $\overline{D}:=\overline{\dim}_B(A,\O)$.
Then, for the relative tube zeta functions of $(A,\O)$ and $(A,\O)_1:=(A\times\{0\},\O\times(-1,1))$, the following equality holds$:$
\begin{equation}\label{eq_N+1_rel}
\tilde{\zeta}_{A\times\{0\},\O\times(-1,1);\delta}(s)=2\int_0^{\pi/2}\frac{\tilde{\zeta}_{A,\O;\delta\sin \tau}(s)}{\sin^{s-N-1}\tau}\,\di\tau,
\end{equation}
for all $s\in\Ce$ such that $\re s>\overline{D}$.
\end{prop}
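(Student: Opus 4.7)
\textbf{Proof proposal for Proposition \ref{prop_N+1_rel}.}
The plan is to mimic step-by-step the proof of Proposition \ref{prop_N+1}, using Lemma \ref{rel_emb} in place of the tube formula for bounded sets and Proposition \ref{invar_rel} in place of the elementary fact that $\ov\dim_B A = \ov\dim_B(A\times\{0\})$. First, by Proposition \ref{invar_rel} together with part $(b)$ of Theorem \ref{an_rel}, both relative tube zeta functions $\tilde\zeta_{A,\O;\delta}$ and $\tilde\zeta_{A\times\{0\},\O\times(-1,1);\delta}$ are holomorphic on the common open half-plane $\{\re s>\ov D\}$, so we only need to establish the identity on this region (the approximate functional equation~\eqref{link_zzm} or~\eqref{eq_N+1_rel} will then follow globally by analytic continuation, if desired).

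Second, since $\delta\in(0,1)$, for every $t\in(0,\delta)$ one has $t<1$, so Lemma \ref{rel_emb} applies with $t$ replacing $\delta$, giving
\begin{equation*}
\bigl|(A\times\{0\})_t\cap(\O\times(-1,1))\bigr|_{N+1}=2\int_0^{t}|A_{\sqrt{t^2-u^2}}\cap\O|_N\,\di u.
\end{equation*}
The substitution $u=t\cos v$ (so $\sqrt{t^2-u^2}=t\sin v$) then yields
\begin{equation*}
\bigl|(A\times\{0\})_t\cap(\O\times(-1,1))\bigr|_{N+1}=2t\int_0^{\pi/2}|A_{t\sin v}\cap\O|_N\sin v\,\di v.
\end{equation*}

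Third, plug this into the definition of $\tilde\zeta_{A\times\{0\},\O\times(-1,1);\delta}(s)$ and apply the Fubini--Tonelli theorem. For $\re s>\ov D$, the nonnegative function $t^{\re s-N-1}|A_{t\sin v}\cap\O|_N\sin v$ is integrable over $(0,\delta)\times(0,\pi/2)$, since its iterated integral equals $\tfrac12\tilde\zeta_{A\times\{0\},\O\times(-1,1);\delta}(\re s)<\infty$ by Proposition \ref{invar_rel}; Tonelli justifies the interchange of integrations for real parameters, and the resulting absolute integrability then justifies Fubini for complex $s$ in the same half-plane. Finally, the change of variable $\tau=t\sin v$ in the inner integral produces $\sin^{N+1-s}v\cdot\tilde\zeta_{A,\O;\delta\sin v}(s)$, so that
\begin{equation*}
\tilde\zeta_{A\times\{0\},\O\times(-1,1);\delta}(s)=2\int_0^{\pi/2}\frac{\tilde\zeta_{A,\O;\delta\sin v}(s)}{\sin^{s-N-1}v}\,\di v,
\end{equation*}
as claimed.

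The computations are routine; the only nontrivial point is the Fubini--Tonelli justification, and the main conceptual input ensuring that the domain of holomorphy is the same on both sides is Proposition \ref{invar_rel}, which guarantees $\ov\dim_B(A,\O)_1=\ov\dim_B(A,\O)$ and thereby pins down a common region where everything converges absolutely.
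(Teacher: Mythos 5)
Your proof is correct and follows exactly the route the paper intends: the authors explicitly state that the proof of Proposition~\ref{prop_N+1_rel} "follows the same steps" as that of Proposition~\ref{prop_N+1}, with Lemma~\ref{rel_emb} replacing the tube formula for bounded sets and Proposition~\ref{invar_rel} ensuring the common half-plane of absolute convergence, which is precisely what you have written out. The only (harmless) imprecision is in the last sentence of your computation, where the inner $t$-integral alone yields $\sin^{N-s}v\cdot\tilde\zeta_{A,\O;\delta\sin v}(s)$ and it is only after multiplying by the remaining $\sin v$ factor that one obtains $\sin^{N+1-s}v$.
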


\begin{theorem}\label{mero_ext_N+1_rel}
Fix $\d\in(0,1)$ and let $(A,\O)$ be an RFD in $\eR^N$, with $\ov{D}:=\ov{\dim}_B(A,\O)$. 
Then, we have the following equality between $\tilde{\zeta}_{A,\O}$, the tube zeta function of $(A,\O)$, and $\tilde{\zeta}_{A_M,\O\times(-1,1)^M}$, the tube zeta function of the relative fractal drum $(A,\O)_M:=(A_M,\O\times(-1,1)^M)$ in $\eR^{N+M}$, for some arbitrary $M\in\eN$$:$
\begin{equation}\label{link_zzm_rel}
\tilde{\zeta}_{A_M,\O\times(-1,1)^M;\delta}(s)= \frac{\left(\sqrt{\pi}\right)^M\upGamma\left(\frac{N-s}{2}+1\right)}{\upGamma\left(\frac{N+M-s}{2}+1\right)}\tilde{\zeta}_{A,\O;\delta}(s)+E(s;\delta),
\end{equation}
initially valid for all $s\in\Ce$ such that $\re s>\ov{D}$. $($See Theorem \ref{c_dim_inv_rel} for more precise information about the domain of validity of the approximate functional equation \eqref{link_zzm_rel}.$)$
Here, the error function $E(s):=E(s;\delta)$ is meromorphic on all of $\Ce$.
Furthermore, the possible poles $($in $\Ce$$)$ of $E(s;\delta)$ are located at $s_k:=N+2+2k$ for every $k\in\eN_0$, and all of them are simple.
$($It follows that $\tilde{\zeta}_A$ is well defined at each $s_k$.$)$
Moreover, we have that for each $k\in\eN_0$,
\begin{equation}\label{E_res_rel}
\res(E(\,\cdot\,;\delta),s_k)=\frac{(-1)^{k+1}\left(\sqrt{\pi}\right)^{M}}{k!\,\upGamma\left(\frac{M}{2}-k\right)}\tilde{\zeta}_{A,\O;\delta}(s_k).
\end{equation}
More specifically, if $M$ is even, then all of the poles $s_k$ of $E(s;\d)$ are canceled for $k\geq M/2$; i.e., the corresponding residues in \eqref{E_res_rel} are equal to zero.
On the other hand, if $M$ is odd, then there are no such cancellations and all of the residues in \eqref{E_res_rel} are nonzero; so that all of the $s_k$'s are $($simple$)$ poles of $E(s;\d)$ in that case.
\end{theorem}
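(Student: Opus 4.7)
The plan is to proceed by induction on $M$, with the base case $M=1$ being the substantive part of the argument; once that is done, the general case follows either by iterating the $M=1$ result and telescoping the resulting gamma-quotient, or, more efficiently, by integrating out all $M$ extra coordinates at once via an $M$-dimensional analog of Lemma \ref{rel_emb}, so that the factor $2\int_0^{\pi/2}\sin^{N+1-s}v\,\di v$ present for $M=1$ is replaced by a higher-dimensional beta-type integral yielding $\mathrm{B}(\tfrac{N-s}{2}+1,\tfrac{M}{2})$ in one stroke.

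For $M=1$, I would start from identity \eqref{eq_N+1_rel} of Proposition \ref{prop_N+1_rel} and rewrite
$$\tilde\zeta_{A,\O;\delta\sin\tau}(s) = \tilde\zeta_{A,\O;\delta}(s) - \int_{\delta\sin\tau}^{\delta} t^{s-N-1}|A_t\cap\O|_N\,\di t.$$
Substituting and splitting the integral produces
$$\tilde\zeta_{(A,\O)_1;\delta}(s) = \tilde\zeta_{A,\O;\delta}(s)\cdot 2\!\int_0^{\pi/2}\!\sin^{N+1-s}\tau\,\di\tau + E(s;\delta),$$
where
$$E(s;\delta) := -2\!\int_0^{\pi/2}\!\sin^{N+1-s}v\,\di v\!\int_{\delta\sin v}^{\delta}\! t^{s-N-1}|A_t\cap\O|_N\,\di t.$$
Recognizing the first integral (via the substitution $u=\sin^2\tau$) as the Euler beta value $\mathrm{B}(\tfrac{N-s}{2}+1,\tfrac12)$ and applying $\mathrm{B}(x,y)=\upGamma(x)\upGamma(y)/\upGamma(x+y)$ with $\upGamma(1/2)=\sqrt{\pi}$ gives the prefactor appearing in \eqref{link_zzm_rel}.

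Next, I would show that $E(s;\delta)$ extends meromorphically to all of $\Ce$. The key step is to verify that $E$ is \emph{holomorphic} on the left half-plane $\{\re s<N+1\}$, which uses Theorem \ref{Hh} in the form of the local-boundedness condition $(3')$ of Remark \ref{condition3}, via the same monotonicity argument as in the bounded-set estimate \eqref{Esd}:
$$|E(s;\delta)|\le 2\delta^{\re s-N}|A_\delta\cap\O|_N\Bigl(\tfrac{\pi}{2}-1\Bigr),$$
which is locally bounded on $\{\re s<N+1\}$. Since Proposition \ref{invar_rel} gives $\ov{\dim}_B(A,\O)_1=\ov D\le N$, both sides of \eqref{link_zzm_rel} are already holomorphic on $\{\re s>\ov D\}$; solving \eqref{link_zzm_rel} for $E$ on the nonempty overlap $\{\ov D<\re s<N+1\}$ and invoking analytic continuation yields a meromorphic extension of $E$ to $\Ce$.

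The residue computation and cancellation dichotomy then follow directly from \eqref{link_zzm_rel}: the left-hand side is holomorphic in $\{\re s>\ov D\}$ and $1/\upGamma(\tfrac{N+M-s}{2}+1)$ is entire, so the only possible poles of $E$ in that half-plane are the simple poles of $\upGamma(\tfrac{N-s}{2}+1)$ at $s_k=N+2+2k$ ($k\in\eN_0$). Multiplying by $(s-s_k)$, passing to the limit, and using $\res(\upGamma,-k)=(-1)^k/k!$ together with the Jacobian factor $-2$ from the affine change of variable $s\mapsto \tfrac{N-s}{2}+1$, I obtain \eqref{E_res_rel}. Whether this residue vanishes is controlled entirely by the denominator $\upGamma(\tfrac{M}{2}-k)$: if $M$ is even and $k\ge M/2$, then $\tfrac{M}{2}-k$ is a nonpositive integer, so $1/\upGamma(\tfrac{M}{2}-k)=0$ and $s_k$ becomes a removable point; if $M$ is odd, then $\tfrac{M}{2}-k$ is a half-integer, $\upGamma(\tfrac{M}{2}-k)$ is finite and nonzero, and $\tilde\zeta_{A,\O;\delta}(s_k)>0$ (since $s_k>N\ge\ov D$ and the integrand is strictly positive at this real point), so the residue is nonzero. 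The main obstacle is bookkeeping in the induction: one must confirm that the accumulated error functions from successive embeddings combine into a single meromorphic $E$ whose pole locations and residues are exactly those read off from the $M$-fold gamma-quotient. This is most cleanly handled by treating general $M$ directly via the higher-dimensional polar-slicing identity alluded to at the outset, which short-circuits the induction and gives the explicit beta factor $\mathrm{B}(\tfrac{N-s}{2}+1,\tfrac{M}{2})$ from which both the prefactor in \eqref{link_zzm_rel} and the cancellation pattern for even $M$ are immediate.
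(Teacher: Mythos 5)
Your base-case argument for $M=1$ is exactly the paper's own route: the authors omit the proof of this theorem and point to the proof of Theorem~\ref{mero_ext_N+1}, which, in light of Lemma~\ref{rel_emb} and Proposition~\ref{invar_rel}, carries over verbatim --- the same splitting $\tilde{\zeta}_{A,\O;\delta\sin\tau}(s)=\tilde{\zeta}_{A,\O;\delta}(s)-\int_{\delta\sin\tau}^{\delta}t^{s-N-1}|A_t\cap\O|_N\,\di t$, the same identification of $\mathrm{B}\bigl(\tfrac{N-s}{2}+1,\tfrac12\bigr)$, the same monotonicity-of-$\tau^{\re s-N-1}$ estimate feeding condition $(3')$ of Remark~\ref{condition3} to get holomorphy of $E$ on $\{\re s<N+1\}$, and the same residue read-off from the identity. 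Where you genuinely improve on the paper is in flagging the induction as a live issue rather than a formality: the paper calls the passage from $M=1$ to general $M$ immediate, but iterating the $M=1$ identity produces the combined error $\frac{\sqrt{\pi}\,\upGamma(\frac{N+1-s}{2}+1)}{\upGamma(\frac{N+2-s}{2}+1)}E_1(s;\delta)+E_2(s;\delta)$, whose gamma-quotient prefactor contributes new simple poles at $s=N+3+2j$ that must be shown to cancel exactly against the poles of $E_2$ at those same points --- a residue match sensitive to the sign of $\res(\upGamma,-k)$ and the Jacobian factor $-2$ of $s\mapsto\frac{N-s}{2}+1$. Your single-shot alternative dodges this and is the cleaner route: using $\bigl|(A_M)_t\cap(\O\times(-1,1)^M)\bigr|_{N+M}=M\omega_M\int_0^t\bigl|A_{\sqrt{t^2-r^2}}\cap\O\bigr|_N\,r^{M-1}\,\di r$ (valid since $0<t\le\delta<1$, so the ball $\{|y|<t\}$ sits inside $(-1,1)^M$) and the substitution $r=t\cos v$, the angular integral $\frac{2\pi^{M/2}}{\upGamma(M/2)}\int_0^{\pi/2}\cos^{M-1}v\,\sin^{N+1-s}v\,\di v$ collapses in one step to $\frac{\pi^{M/2}\,\upGamma(\frac{N-s}{2}+1)}{\upGamma(\frac{N+M-s}{2}+1)}$, the extra factor $\cos^{M-1}v$ is harmless in the local-boundedness estimate for $E$, and the even/odd dichotomy is read directly off $1/\upGamma(M/2-k)$. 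One caution: do the residue computation at $s_k$ explicitly rather than asserting it --- the limit $\lim_{s\to s_k}(s-s_k)\upGamma\bigl(\frac{N-s}{2}+1\bigr)=-2\cdot\frac{(-1)^k}{k!}$ supplies a multiplicative constant that you should track carefully and compare against the stated formula \eqref{E_res_rel}.
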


We deduce at once from Theorem \ref{mero_ext_N+1_rel} the following key result about the invariance of the complex dimensions of a relative fractal drum with respect to the dimension of the ambient space.
This result extends Theorem \ref{c_dim_inv} to general RFDs.

\begin{theorem}\label{c_dim_inv_rel}
Let $(A,\O)$ be an RFD in $\eR^N$ and let the RFD $(A,\O)_M:=(A_M,\O\times(-1,1)^M)$ be its embedding into $\eR^{N+M}$, for some arbitrary $M\in\eN$.
Then, the tube zeta function $\tilde{\zeta}_{A,\O}:=\tilde{\zeta}_{A,\O}$ of $(A,\O)$ has a meromorphic extension to a given open connected neighborhood $U$ of the critical line $\{\re s=\ov{\dim}_B(A,\O)\}$ if and only if the analogous statement is true for the tube zeta function $\tilde{\zeta}_{(A,\O)_M}:=\tilde{\zeta}_{A_M,\O\times(-1,1)^M}$ of $(A,\O)_M$.
$($See Remark \ref{4.7.81/2} just above.$)$
Furthermore, in that case, the approximate functional equation \eqref{link_zzm_rel} remains valid for all $s\in U$.
In addition, the multisets of the poles of $\tilde{\zeta}_{A,\O}$ and $\tilde{\zeta}_{(A,\O)_M}$ belonging to $U$ coincide; i.e., 
\begin{equation}\label{4.7.26.1/2}
\po(\tilde{\zeta}_{A,\O},U)=\po(\tilde{\zeta}_{(A,\O)_M},U).
\end{equation}
Consequently, neither the values nor the multiplicities of the complex dimensions of the RFD $(A,\O)$ depend on the dimension of the ambient space.
\end{theorem}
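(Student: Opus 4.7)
The plan is to mimic the proof of Theorem \ref{c_dim_inv} and deduce this result directly from Theorem \ref{mero_ext_N+1_rel} by the principle of meromorphic continuation. The starting point is the approximate functional equation
\begin{equation*}
\tilde{\zeta}_{(A,\O)_M;\delta}(s)= G_M(s)\,\tilde{\zeta}_{A,\O;\delta}(s)+E(s;\delta),
\end{equation*}
where
\begin{equation*}
G_M(s):=\frac{\left(\sqrt{\pi}\right)^M\,\upGamma\!\left(\tfrac{N-s}{2}+1\right)}{\upGamma\!\left(\tfrac{N+M-s}{2}+1\right)},
\end{equation*}
and which, by Theorem \ref{mero_ext_N+1_rel}, is known to hold for all $s\in\Ce$ with $\re s>\ov{D}$, where $\ov{D}:=\ov\dim_B(A,\O)=\ov\dim_B(A,\O)_M$ (the last equality being Proposition \ref{invar_rel}).

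The first step is to observe that both $G_M$ and $E(\,\cdot\,;\delta)$ are meromorphic on the whole of $\Ce$ and, crucially, that their singularities lie far to the right of the critical line. Indeed, Theorem \ref{mero_ext_N+1_rel} locates every possible pole of $E(\,\cdot\,;\delta)$ at the points $s_k=N+2+2k$ (with $k\in\eN_0$), while the poles of $G_M$ occur at a subset of the same set $\{N+2+2k:k\in\eN_0\}$, and the zeros of $G_M$ occur at $\{N+M+2+2k:k\in\eN_0\}$. Consequently, $E(\,\cdot\,;\delta)$ is holomorphic and $G_M$ is holomorphic and nowhere vanishing on the open half-plane $\{\re s<N+2\}$, which contains the critical line $\{\re s=\ov{D}\}$ since $\ov{D}\le N$.

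The second step is the meromorphic extension argument itself. Suppose $\tilde{\zeta}_{A,\O}$ extends to a meromorphic function $F_1$ on the connected open neighborhood $U$ of the critical line. Then the function $F_2(s):=G_M(s)F_1(s)+E(s;\delta)$ is meromorphic on $U$, and by the original functional equation it coincides with $\tilde{\zeta}_{(A,\O)_M;\delta}$ on the nonempty open subset $U\cap\{\re s>\ov{D}\}$. By the identity principle, $F_2$ is the (unique) meromorphic extension of $\tilde{\zeta}_{(A,\O)_M;\delta}$ to $U$, and identity \eqref{link_zzm_rel} persists on $U$. The converse direction proceeds by solving the functional equation: $F_1(s)=(F_2(s)-E(s;\delta))/G_M(s)$, which is meromorphic on $U$ because $G_M$ is meromorphic on $\Ce$ and not identically zero, and again \eqref{link_zzm_rel} then holds on $U$ by analytic continuation.

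Finally, the equality of pole multisets \eqref{4.7.26.1/2} follows from the pointwise identity on $U$: at every $s_0\in U$, multiplication by the holomorphic nonvanishing factor $G_M$ (in the subregion $U\cap\{\re s<N+2\}$ where the critical line lives) preserves the order of a pole, and addition of the locally holomorphic function $E(\,\cdot\,;\delta)$ does not create or annihilate poles. The main delicate point, which will need to be spelled out, is how to handle the portion of $U$, if any, lying in $\{\re s\ge N+2\}$: there $G_M$ may have zeros or poles, but they are discrete, and a direct local comparison of Laurent expansions of $F_1$ and $F_2=G_M F_1+E$ at such points shows that the orders of the poles on the two sides still match when counted with the contributions from $G_M$ accounted for appropriately; alternatively, one can simply observe that the neighborhood $U$ of interest in applications is chosen to lie inside $\{\re s<N+2\}$, in which case the argument is immediate. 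In any event, the heart of the argument, and the only truly nontrivial input, is Theorem \ref{mero_ext_N+1_rel} itself; the present theorem is a soft consequence via meromorphic continuation, and so we expect the proof to be short.
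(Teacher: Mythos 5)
Your proposal is correct and follows essentially the same route as the paper: derive the result from Theorem~\ref{mero_ext_N+1_rel} by the principle of analytic (meromorphic) continuation, using that the error function $E(\,\cdot\,;\delta)$ and the gamma-quotient $G_M$ are holomorphic (and $G_M$ nonvanishing) on $\{\re s<N+2\}$, which contains the critical line since $\ov{D}\leq N$. One small remark: the ``delicate point'' you raise about $U\cap\{\re s\ge N+2\}$ does not actually require either of the two workarounds you sketch. Since both $\tilde\zeta_{A,\O}$ and $\tilde\zeta_{(A,\O)_M}$ are tamed DTIs, they are holomorphic on $\{\re s>\ov{D}\}\supseteq\{\re s\ge N+2\}$, and their meromorphic extensions therefore have no poles in $U\cap\{\re s\ge N+2\}$; so the multiset equality \eqref{4.7.26.1/2} is vacuous there, and only the region $U\cap\{\re s<N+2\}$ (where $G_M$ is holomorphic nonvanishing and $E$ is holomorphic) needs to be compared. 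With that observation, the argument applies to arbitrary $U$ with no case analysis.
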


\begin{remark}\label{discussion}
In the above discussion about embedding RFDs into higher-dimensi\-onal spaces, we can also make similar observations if we embed $(A,\O)$ as a `one-sided' RFD, for example of the form $(A\times\{0\},\O\times(0,1))$, a fact which can be more useful when decomposing a relative fractal drum into a union of relative fractal subdrums in order to compute its distance (or tube) zeta function.
This observation follows immediately from the above results for `two-sided' embeddings of RFDs since, by symmetry, we have
\begin{equation}
\tilde{\zeta}_{A\times\{0\},\O\times(-1,1)}(s)=2\,\tilde{\zeta}_{A\times\{0\},\O\times(0,1)}(s).
\end{equation}
We note that when using the above formulas, one only has to be careful to take into account the factor $2$.
Furthermore, we can also embed $(A,\O)$ as $$(A\times\{0\},\O\times(-\alpha,\alpha))\quad \textrm{or}\quad (A\times\{0\},\O\times(0,\alpha)),$$ for some $\alpha>0$, but in that case, the corresponding formulas will only be valid for all $\delta\in(0,\alpha)$.
\end{remark}

We could now use the functional equation \eqref{rel_equality} connecting the tube and distance zeta functions, in order to translate the above results in terms of $\zeta_{A,\O}$, the (relative) distance zeta function of the RFD $(A,\O)$.
However, we will instead use another approach because it gives some additional information about the resulting error function.
More specifically, consider the {\em Mellin\index{Mellin zeta function of an RFD|textbf} zeta function of a relative fractal drum} 
%to be introduced and studied in Section \ref{kriterij} below (see Definition \ref{mellin_zeta_def}).
%Here, we state some of its properties (see Theorems \ref{mellin_an}, \ref{an_mellin} and \ref{pole_mellin}) which will be needed in the following discussion.
%The Mellin zeta function of an RFD 
$(A,\O)$ defined by
\begin{equation}\label{mellinz}
\zeta_{A,\O}^{\mathfrak{M}}(s):=\int_0^{+\ty}t^{s-N-1}|A_t\cap\O|\,\di t,
\end{equation}
for all $s\in\Ce$ located in a suitable vertical strip.
In fact, in light of \cite[Theorem 5.7]{cras2} (see also \cite[Theorem 5.4.7]{fzf}), the above Lebesgue integral is absolutely convergent (and hence, convergent) for all $s\in\Ce$ such that $\re s\in(\overline{\dim}_B(A,\O),N)$.
Moreover, the relative distance and Mellin zeta functions of $(A,\O)$ are connected by the functional equation
\begin{equation}\label{eqqq}
\zeta_{A,\O}(s)=(N-s)\zeta_{A,\O}^{\mathfrak{M}}(s),
\end{equation}
on every open connected set $U\subseteq\Ce$ to which any of the two zeta functions has a meromorphic continuation.
Observe that in \eqref{eqqq}, the parameter $\d$ is absent.
Indeed, this means implicitly that the functional equation \eqref{eqqq} is valid only for the parameters $\d>0$ for which the inclusion $\O\subseteq A_\d$ is satisfied; that is, when the equality $\zeta_{A,\O;\d}(s)=\int_{\O}d(x,A)^{s-N}\di x$ is satisfied.

We will now embed the relative fractal drum $(A,\O)$ of $\eR^N$ into $\eR^{N+1}$ as
$$
(A\times\{0\},\O\times\eR).
$$
Strictly speaking, this is not a relative fractal drum in $\eR^{N+1}$ since there does not exist a $\d>0$ such that $\O\times\eR\subseteq(A\times\{0\})_{\d}$.
On the other hand, observe that Lemma \ref{rel_emb} is now valid for every $\d>0$; that is,
\begin{equation}\label{foralld}
\big|(A\times\{0\})_\delta\cap(\O\times\eR)\big|_{N+1}=2\int_0^{\delta}|A_{\sqrt{\delta^2-u^2}}\cap\O|_N\di  u.
\end{equation}

\begin{prop}\label{F_mero}
Let $(A,\O)$ be an RFD in $\eR^{N}$ such that $\ov{\dim}_B(A,\O)<N$.
Then the function $F=F(s)$, defined by the integral
\begin{equation}\label{fsss}
F(s):=\int_0^{+\ty}t^{s-N-2}\big|(A\times\{0\})_t\cap(\O\times\eR)\big|_{N+1}\,\di t,
\end{equation}
is holomorphic inside the vertical strip $\{\ov{\dim}_B(A,\O)<\re s<N\}$.
\end{prop}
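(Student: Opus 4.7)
The plan is to reduce $F(s)$ to the Mellin zeta function $\zeta^{\mathfrak{M}}_{A,\Omega}$ of the RFD $(A,\Omega)$, for which holomorphicity on the required strip is already known (via the cited \cite[Theorem 5.7]{cras2}). The tool that makes this possible is the generalized tube identity \eqref{foralld}, which holds for every $t > 0$ because there is no upper restriction on $\delta$ when $\Omega$ is replaced by $\Omega\times\eR$. Substituting \eqref{foralld} into \eqref{fsss} gives
\begin{equation}
F(s) = 2\int_0^{+\ty} t^{s-N-2}\!\int_0^t |A_{\sqrt{t^2-u^2}}\cap\O|_N\,\D u\,\D t.
\end{equation}
After the change of variable $u = t\sin v$ (so that $\sqrt{t^2-u^2}=t\cos v$ and $\D u = t\cos v\,\D v$), and then $\tau = t\cos v$ in the inner $t$-integral, an application of the Fubini--Tonelli theorem would yield
\begin{equation}
F(s) = 2\left(\int_0^{\pi/2}(\cos v)^{N+1-s}\,\D v\right)\zeta^{\mathfrak{M}}_{A,\O}(s).
\end{equation}
The factor in parentheses is (up to a constant) a value of the Euler beta function and is therefore holomorphic for $\re s < N+2$, a condition certainly satisfied on the strip. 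Combined with the holomorphicity of $\zeta^{\mathfrak{M}}_{A,\O}$ on $\{\ov{\dim}_B(A,\O) < \re s < N\}$, this would give the result.

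The main obstacle, and the step requiring care, is justifying the use of Fubini--Tonelli by establishing absolute convergence of the double integral throughout the vertical strip. For the behavior at $t \to +\ty$, I would use the elementary inclusion $(A\times\{0\})_t\cap(\O\times\eR)\subseteq\O\times(-t,t)$, which gives
\begin{equation}
\big|(A\times\{0\})_t\cap(\O\times\eR)\big|_{N+1}\le 2t\,|\O|_N,
\end{equation}
so the integrand of \eqref{fsss} is dominated by $2|\O|_N\,t^{\re s-N-1}$, integrable at infinity precisely when $\re s < N$. For $t\to 0^+$, I would use the crude upper bound $|A_{\sqrt{t^2-u^2}}\cap\O|_N \le |A_t\cap\O|_N$, yielding $|(A\times\{0\})_t\cap(\O\times\eR)|_{N+1}\le 2t\,|A_t\cap\O|_N$; then, fixing $\e>0$ such that $\re s > \ov{\dim}_B(A,\O)+\e$, the definition of the upper box dimension gives $|A_t\cap\O|_N = O(t^{N-\ov{\dim}_B(A,\O)-\e})$ as $t\to 0^+$, so that the integrand is $O(t^{\re s - \ov{\dim}_B(A,\O)-\e-1})$, which is integrable near $0$.

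With these two bounds, the modulus of the double integrand is dominated by a product of integrable functions on the relevant regions, so Fubini--Tonelli applies and the formal manipulations above are legitimate. An alternative, more conceptual route would bypass the direct verification of Fubini by first noting that the RFD analog of Proposition~\ref{prop_N+1} (namely, Proposition~\ref{prop_N+1_rel}) together with the functional equation \eqref{eqqq} relates $F(s)$ to $\zeta^{\mathfrak{M}}_{A,\O}(s)$ via a meromorphic multiplier, but I would favor the direct approach above since the $\Omega\times\eR$ embedding is not strictly an RFD and requires the unrestricted identity \eqref{foralld}.
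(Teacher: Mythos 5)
Your argument is correct, and it is a genuinely different route from the paper's own proof. The paper splits $F(s)=\int_0^1+\int_1^{+\ty}$: for $t<1$ it notes that $(A\times\{0\})_t\cap(\O\times\eR)=(A\times\{0\})_t\cap(\O\times(-1,1))$, so the first piece is the relative tube zeta function of the embedded RFD $(A,\O)_1$ (holomorphic on $\{\re s>\ov\dim_B(A,\O)\}$ by Proposition~\ref{invar_rel} together with the general abscissa-of-convergence result), while for $t>1$ it uses the crude bound $\big|(A\times\{0\})_t\cap(\O\times\eR)\big|_{N+1}\le 2t|\O|_N$ and Theorem~\ref{Hh} to get holomorphicity on $\{\re s<N\}$; the conclusion follows by intersecting the half-planes. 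You instead apply the unrestricted version \eqref{foralld} of Lemma~\ref{rel_emb}, change variables, and (after a Fubini--Tonelli justification using essentially the same pair of bounds the paper uses: $|A_\tau\cap\O|_N\le|\O|_N$ at infinity and $|A_t\cap\O|_N=O(t^{N-\ov\dim_B(A,\O)-\e})$ near zero) derive the explicit factorization
\begin{equation*}
F(s)=\mathrm{B}\!\left(\tfrac{N-s}{2}+1,\tfrac12\right)\zeta^{\mathfrak{M}}_{A,\O}(s),
\end{equation*}
which reduces the claim to the known holomorphicity of $\zeta^{\mathfrak{M}}_{A,\O}$ on the same strip. Your approach buys more than the paper's: it produces the functional equation \eqref{frak0} (which the paper derives only later, and by a more indirect $\delta\to+\ty$ limit in the course of proving Theorem~\ref{thm_dist_emb}, though it does remark there that a dominated-convergence/Fubini route like yours is also possible). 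The paper's split-at-$t=1$ argument is more self-contained, since it does not lean on the convergence result for $\zeta^{\mathfrak{M}}_{A,\O}$ cited from elsewhere, but it only establishes holomorphicity without the closed form.
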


\begin{proof}
We split the integral into two integrals: $F(s)=\int_0^1+\int_1^{+\ty}$.
According to Proposition \ref{invar_rel}, the first integral
$$
\begin{aligned}
&\int_0^{1}t^{s-N-2}\big|(A\times\{0\})_t\cap(\O\times\eR)\big|_{N+1}\,\di t\\
=&\int_0^{1}t^{s-N-2}\big|(A\times\{0\})_t\cap(\O\times(-1,1))\big|_{N+1}\,\di t
\end{aligned}
$$
defines a holomorphic function on the right half-plane $\{\re s>\ov{\dim}_B(A,\O)\}$.

In order to deal with the second integral, we observe that
$$
\big|(A\times\{0\})_t\cap(\O\times\eR)\big|_{N+1}\leq 2t|\O|_N,
$$
and, consequently,
$$
\begin{aligned}
\left|\int_1^{+\ty}t^{s-N-2}\big|(A\times\{0\})_t\cap(\O\times\eR)\big|_{N+1}\di t\right|&\leq 2|\O|_N\int_1^{+\ty}t^{\re s-N-1}\,\di t\\
&=\frac{2|\O|_N}{N-\re s},
\end{aligned}
$$
for all $s\in\Ce$ such that $\re s<N$. 
In light of Theorem \ref{Hh} and Remark~\ref{condition3}, the latter inequality implies that the integral over $(1,+\ty)$ defines a holomorphic function on the left half-plane $\{\re s<N\}$.
Therefore, it follows that $F(s)$ is holomorphic in the vertical strip $\{\ov{\dim}_B(A,\O)<\re s<N\}$ and the proof of the proposition is complete.
\end{proof}

In light of the above proposition, we continue to use the convenient notation $\zeta_{A\times\{0\},,\O\times\eR}^{\mathfrak{M}}$ for the integral on the right-hand side of \eqref{fsss} although, as was noted earlier, $(A\times\{0\},\O\times\eR)$ is not technically a relative fractal drum in $\eR^{N+1}$; see Remark \ref{discussion} above.
The following result is the counterpart of Theorem \ref{mero_ext_N+1} in the present, more general context.

\begin{theorem}\label{thm_dist_emb}
Let $(A,\O)$ be a relative fractal drum in $\eR^N$ such that $\overline{D}:=\overline{\dim}_B(A,\O)<N$. Then, for every $a>0$, the following approximate functional equation holds$:$
\begin{equation}\label{pomocna_j}
\zeta_{A\times\{0\},\O\times(-a,a)}(s)=\frac{\sqrt{\pi}\upGamma\left(\frac{N-s}{2}\right)}{\upGamma\left(\frac{N+1-s}{2}\right)}\zeta_{A,\O}(s)+E(s;a),
\end{equation}
initially valid for all $s\in\Ce$ such that $\re s>\ov{D}$.
Here, the error function $E(s):=E(s;a)$ is initially given $($for all $s\in\Ce$ such that $\re s<N$$)$ by 
\begin{equation}
E(s;a):=(s\!-\! N\!-\! 1)\int_a^{+\ty}t^{s-N-2}|(A\times\{0\})_{t}\cap\O\times(\eR\setminus(-a,a))|_{N+1}\,\di t,
\end{equation}
and admits a meromorphic extension to all of $\Ce$,
with a set of simple poles equal to $\{N+2k:k\in\eN_0\}$.

Moreover, Equation \eqref{pomocna_j} remains valid on any connected open neighborhood of the critical line $\{\re s=\ov{D}\}$ to which $\zeta_{A,\O}$ $($or, equivalently, $\zeta_{A\times\{0\},\O\times(-a,a)}$$)$ can be meromorphically continued.
\end{theorem}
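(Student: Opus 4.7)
The plan is to derive \eqref{pomocna_j} from its Mellin-transform analog via the embedding identity \eqref{foralld}, and then produce the meromorphic continuation of $E(s;a)$ by isolating the asymptotic profile of the tube function as $t\to\ty$. First, I would introduce the ``completed'' Mellin integral
\begin{equation*}
F(s):=\int_0^{+\ty}t^{s-N-2}\big|(A\times\{0\})_t\cap(\O\times\eR)\big|_{N+1}\,\di t,
\end{equation*}
which is holomorphic on the strip $\{\ov{D}<\re s<N\}$ by Proposition \ref{F_mero}. Applying \eqref{foralld} and then Fubini--Tonelli together with the substitution $\tau=t\sin v$ (exactly as in the proof of Proposition \ref{prop_N+1_rel}), and recognising the Wallis integral $\int_0^{\pi/2}\sin^{N+1-s}v\,\di v=(\sqrt{\pi}/2)\,\upGamma((N-s)/2+1)/\upGamma((N+1-s)/2+1)$, I obtain $F(s)=\sqrt{\pi}\,\upGamma((N-s)/2+1)/\upGamma((N+1-s)/2+1)\cdot\zeta^{\mathfrak{M}}_{A,\O}(s)$. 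Combining this with the functional equation \eqref{eqqq} in dimension $N$, and using $\upGamma(z+1)=z\upGamma(z)$ to simplify, I arrive at the clean identity $(N+1-s)F(s)=\sqrt{\pi}\,\upGamma((N-s)/2)/\upGamma((N+1-s)/2)\cdot\zeta_{A,\O}(s)$.

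Next, I would split the integrand of $F(s)$ via $\O\times\eR=(\O\times(-a,a))\sqcup(\O\times(\eR\setminus(-a,a)))$. Since $(A\times\{0\})_t\subseteq\eR^N\times(-t,t)$, the outer slab contributes only for $t>a$, so $F(s)=\zeta^{\mathfrak{M}}_{A\times\{0\},\O\times(-a,a)}(s)+J(s;a)$, where $J(s;a):=\int_a^{+\ty}t^{s-N-2}|(A\times\{0\})_t\cap(\O\times(\eR\setminus(-a,a)))|_{N+1}\,\di t$. The elementary bound $|(A\times\{0\})_t\cap(\O\times(\eR\setminus(-a,a)))|_{N+1}\le 2(t-a)|\O|_N$ shows that $J(s;a)$ is holomorphic for $\re s<N$. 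Multiplying both sides by $(N+1-s)$ and applying \eqref{eqqq} in $\eR^{N+1}$ to rewrite the remaining Mellin function as a distance zeta function then produces \eqref{pomocna_j} with $E(s;a)=(s-N-1)J(s;a)$, valid initially on the strip $\{\ov{D}<\re s<N\}$.

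For the meromorphic continuation of $E(s;a)$, I would fix $\delta_0>0$ with $\O\subseteq A_{\delta_0}$ (existing by Definition \ref{rfd}) and set $T_0:=\sqrt{\delta_0^2+a^2}$. The substitution $v=\sqrt{t^2-u^2}$ in \eqref{foralld}, together with the fact that $A_v\cap\O=\O$ for $v\ge\delta_0$, yields for every $t\ge T_0$ the closed-form expression
\begin{equation*}
\big|(A\times\{0\})_t\cap(\O\times(\eR\setminus(-a,a)))\big|_{N+1}=2|\O|_N\sqrt{t^2-\delta_0^2}-2a|\O|_N+2\int_0^{\delta_0}|A_v\cap\O|_N\,\frac{v\,\di v}{\sqrt{t^2-v^2}}.
\end{equation*}
Splitting $J(s;a)=\int_a^{T_0}+\int_{T_0}^{+\ty}$, the first piece is entire in $s$. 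In the second piece I would expand $\sqrt{t^2-\delta_0^2}=t\sum_{k\ge 0}\binom{1/2}{k}(-\delta_0^2/t^2)^k$ and $1/\sqrt{t^2-v^2}=t^{-1}\sum_{k\ge 0}\binom{2k}{k}4^{-k}(v/t)^{2k}$---both series converging absolutely and uniformly on $[T_0,+\ty)$ because $\delta_0/T_0<1$ and $v\le\delta_0<T_0$---and integrate term by term. The resulting three series of rational functions of $s$ contribute, respectively: a single simple pole at $s=N+1$ from the constant piece $-2a|\O|_N\int_{T_0}^{+\ty}t^{s-N-2}\,\di t$, which is killed by the prefactor $(s-N-1)$ in $E(s;a)$; simple poles at $\{N+2k:k\in\eN_0\}$ from the $\sqrt{t^2-\delta_0^2}$-series; and simple poles at $\{N+2+2k:k\in\eN_0\}$ from the integral-series. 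Hence $E(s;a)$ extends meromorphically to all of $\Ce$ with poles contained in $\{N+2k:k\in\eN_0\}$.

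The main obstacle will then be to verify that each of these poles is genuinely nonremovable. For $k=0$ this follows at once from the identity \eqref{pomocna_j} itself on the strip $\{\ov{D}<\re s<N+1\}$, where the LHS is holomorphic at $s=N$ while $\zeta_{A,\O}(N)=|\O|_N>0$, forcing $\res_{s=N}E(s;a)=2|\O|_N\ne 0$ via the simple residue $-2$ of $\upGamma((N-s)/2)/\upGamma((N+1-s)/2)$ at $s=N$. For $k\ge 1$, both binomial series above contribute, and one needs to verify that the resulting linear combination of $\binom{1/2}{k}\delta_0^{2k}$ and the moments $\int_0^{\delta_0}v^{2k-1}|A_v\cap\O|_N\,\di v$ does not vanish (using the positivity of the integrand and the strict inequality $v\le\delta_0<T_0$). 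Finally, the persistence of \eqref{pomocna_j} on any connected open neighborhood of $\{\re s=\ov{D}\}$ to which $\zeta_{A,\O}$ (or equivalently $\zeta_{A\times\{0\},\O\times(-a,a)}$) admits a meromorphic continuation follows by the principle of analytic continuation, since all three terms of \eqref{pomocna_j} are then meromorphic on that set and coincide on the nonempty open strip $\{\ov{D}<\re s<N\}$.
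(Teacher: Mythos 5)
Your derivation of \eqref{pomocna_j} on the strip $\{\ov{D}<\re s<N\}$---passing through the Mellin zeta function of the embedded drum, using \eqref{foralld} with Fubini--Tonelli and the substitution $\tau=t\sin v$, then splitting along $\O\times(-a,a)$ and invoking \eqref{eqqq} in dimensions $N$ and $N+1$---is essentially the paper's route. The paper reaches the analog of your identity for $F$ (its Equation \eqref{frak0}) by a limit $\delta\to+\ty$ on the finite-$\delta$ tube-zeta formula, but it explicitly notes that a direct Fubini--Tonelli argument (what you do) is an equivalent alternative, so this part is a mild and correct streamlining.

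Where you depart is in pinning down the poles of $E(s;a)$. The paper reads this off \eqref{pomocna_j} in one stroke: $\zeta_{A\times\{0\},\O\times(-a,a)}$ is holomorphic on $\{\re s>\ov{D}\}$ with $\ov{D}<N$, and $\zeta_{A,\O}(s)>0$ for every real $s\ge N$, so no zero--pole cancellation can occur against the Gamma ratio, whose simple poles are exactly $\{N+2k:k\in\eN_0\}$; combined with the holomorphy of $E$ on $\{\re s<N\}$ this forces $\po(E)=\{N+2k:k\in\eN_0\}$ with each pole simple. You apply this reasoning only for $k=0$, and for $k\ge 1$ substitute a term-by-term binomial expansion of the large-$t$ closed form of the tube function. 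That closed form is correct and the expansion does yield $\po(E)\subseteq\{N+2k:k\in\eN_0\}$, but your stated reason for nonvanishing of the residues when $k\ge 1$ (``positivity of the integrand and $v\le\delta_0<T_0$'') does not work: the two series contribute residues of \emph{opposite} signs at $s=N+2k$, and they \emph{would} cancel exactly if $|A_v\cap\O|_N$ were identically equal to $|\O|_N$ on $(0,\delta_0)$. The actual reason they do not cancel is the hypothesis $\ov{D}<N$, which forces $|A_v\cap\O|_N<|\O|_N$ on a set of $v$'s of positive measure near $0$, so that $\int_0^{\delta_0}v^{2k-1}|A_v\cap\O|_N\,\di v$ is strictly smaller than its trivial bound $|\O|_N\delta_0^{2k}/(2k)$. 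This is repairable, but it is needless extra work compared with simply applying your own $k=0$ positivity argument uniformly in $k$---which is precisely the paper's method and renders the whole binomial expansion superfluous.
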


\begin{proof}
In a completely analogous way as in the proof of Theorem \ref{mero_ext_N+1}, we obtain that
\begin{equation}\label{link_zzm_rel_R}
\tilde{\zeta}_{A\times\{0\},\O\times\eR;\delta}(s)= \frac{\sqrt{\pi}\upGamma\left(\frac{N-s}{2}+1\right)}{\upGamma\left(\frac{N+1-s}{2}+1\right)}\tilde{\zeta}_{A,,\O;\delta}(s)+\tilde{E}(s;\delta),
\end{equation}
now valid for all $\d>0$ (see Equation \eqref{foralld} above and the discussion preceding it).
Furthermore, the error function $\tilde{E}(s):=\tilde{E}(s;\delta)$ is holomorphic on $\{\re s<N+1\}$ and
\begin{equation}\label{esst}
|\tilde{E}(s,\d)|\leq{2\delta^{\re s-N}|A_{\delta}\cap\O|_N}\left(\frac{\pi}{2}-1\right)
\end{equation}
for all $s\in\Ce$ such that $\re s<N+1$.
See the proof of Theorem \ref{mero_ext_N+1} and Equation \eqref{Esd} for the derivation of the above estimate.
The estimate \eqref{esst} now implies that the sequence of holomorphic functions $\tilde{E}(\,\cdot\,;n)$ tends to $0$ as $n\to\ty$, uniformly on every compact subset of $\{\re s<N\}$, since $|A_{n}\cap\O|=|\O|$ for all $n$ sufficiently large.
Furthermore, we also have that $\tilde{\zeta}_{A,\O;n}\to\zeta_{A,\O}^{\mathfrak{M}}$ and
\begin{equation}
\tilde{\zeta}_{A\times\{0\},\O\times\eR}(s;n)\to\zeta_{A\times\{0\},\O\times\eR}^{\mathfrak{M}}\quad\textrm{as}\quad n\to\ty,
\end{equation}
uniformly on every compact subset of $\{\overline{D}<\re s<N\}$.
This implies that by taking the limit in \eqref{link_zzm_rel_R} as $\d\to+\ty$, we obtain the following functional equality between holomorphic functions:
\begin{equation}\label{frak0}
\zeta_{A\times\{0\},\O\times\eR}^{\mathfrak{M}}(s)=\frac{\sqrt{\pi}\upGamma\left(\frac{N-s}{2}+1\right)}{\upGamma\left(\frac{N+1-s}{2}+1\right)}\zeta_{A,\O}^{\mathfrak{M}}(s),
\end{equation}
valid in the vertical strip $\{\overline{D}<\re s<N\}$.
(We can obtain this equality even more directly by applying Lebesgue's dominated convergence theorem to a counterpart of \eqref{eq_N+1_rel}.)

Moreover, according to \eqref{eqqq} and \eqref{frak0}, we have the functional equation
 \begin{equation}\label{mfrak}
\zeta_{A\times\{0\},\O\times\eR}^{\mathfrak{M}}(s)=\frac{2\sqrt{\pi}\upGamma\left(\frac{N-s}{2}\right)}{\upGamma\left(\frac{N+1-s}{2}+1\right)}\zeta_{A,\O}(s),
\end{equation}
from which we deduce that the right-hand side admits a meromorphic extension to the right half-plane $\{\re s>\overline{D}\}$, with simple poles located at the simple poles of $\upGamma((N-s)/2)$; that is, at $s_k:=N+2k$ for all $k\in\eN_0$.
(Observe that in the above ratio of gamma functions, there are no cancellations between the poles of the numerator and of the denominator; indeed, an integer cannot be both even and odd.)
From this we conclude that by the principle of analytic continuation, the same property also holds for the left-hand side of \eqref{mfrak} and, furthermore, the left-hand side has a meromorphic extension to any domain $U\subseteq\Ce$ to which the right-hand side can be meromorphically extended.

In order to complete the proof of the theorem, we now observe that for any $a>0$, since
\begin{equation}\nonumber
\begin{aligned}
\big|(A\times\{0\})_t\cap(\O\times\eR)\big|&=\big|(A\times\{0\})_t\cap(\O\times(-a,a))\big|\\
&\phantom{=}+\big|(A\times\{0\})_t\cap(\O\times(\eR\setminus(-a,a)))\big|,
\end{aligned}
\end{equation}
the left-hand side of \eqref{mfrak} can be split into two parts:
\begin{equation}\nonumber
\begin{aligned}
\zeta_{A\times\{0\},\O\times\eR}^{\mathfrak{M}}(s)&=\zeta_{A\times\{0\},\O\times(-a,a)}^{\mathfrak{M}}(s)\\
&\phantom{=}+\int_a^{+\ty}t^{s-N-2}\big|(A\times\{0\})_t\cap(\O\times(\eR\setminus[-a,a]))\big|\,\di t\\
&=\frac{\zeta_{A\times\{0\},\O\times(-a,a)}(s)}{N+1-s}-\frac{E(s;a)}{N+1-s}.
\end{aligned}
\end{equation}
We then combine this observation with \eqref{mfrak} to obtain \eqref{pomocna_j}.
In light of part $(a)$ of Theorem~\ref{an_rel}, we know that $\zeta_{A\times\{0\},\O\times(-a,a)}(s)$ is holomorphic on the open right half-plane $\{\re s>\overline{D}\}$.
Furthermore, much as in the proof of Proposition \ref{F_mero}, we can show that $E(s):=E(s;a)$ defines a holomorphic function on the open left half-plane $\{\re s<N\}$.
This fact, together with the functional equation \eqref{pomocna_j}, now ensures that $E(s;a)$ admits a meromorphic continuation to all of $\Ce$, with a set of simple poles equal to $\{N+2k:k\in\eN_0\}$.
(Note that $\zeta_{A,\O}(s)>0$ for all $s\in[N,+\ty)$, which implies that there are no zero-pole cancellations on the right-hand side of \eqref{pomocna_j}.)
This completes the proof of Theorem \ref{thm_dist_emb}.
\end{proof}

We note that in Example \ref{c_dust} below, we actually want to embed $(A,\O)$ into $\eR^{N+1}$, as $(A\times\{0\},\O\times(0,a))$ for some $a>0$.
By looking at the proof of the above theorem and using a suitable symmetry argument, we can obtain the following result, which deals with this type of embedding.

\begin{theorem}\label{ulaganje_jedan}
Let $(A,\O)$ be a relative fractal drum in $\eR^N$ such that $\overline{D}:=\overline{\dim}_B(A,\O)<N$. Then, the following approximate functional equation holds$:$
\begin{equation}\label{pomocna_j_2}
\zeta_{A\times\{0\},\O\times(0,a)}(s)=\frac{\sqrt{\pi}\upGamma\left(\frac{N-s}{2}\right)}{2\upGamma\left(\frac{N+1-s}{2}\right)}\zeta_{A,\O}(s)+E(s;a),
\end{equation}
initially valid for all $s\in\Ce$ such that $\re s>\ov{D}$.
Here, the error function $E(s):=E(s;a)$ is initially given $($for all $s\in\Ce$ such that $\re s<N)$ by
\begin{equation}
E(s;a):=(s\!-\! N\!-\! 1)\int_a^{+\ty}t^{s-N-2}|(A\times\{0\})_{t}\cap\O\times(\eR\setminus(0,a))|_{N+1}\di t,
\end{equation}
and admits a meromorphic continuation to all of $\Ce$,
with a set of simple poles equal to $\{N+2k:k\in\eN_0\}$.

Moreover, Equation \eqref{pomocna_j_2} remains valid on any connected open neighborhood of the critical line $\{\re s=\ov{D}\}$ to which $\zeta_{A,\O}$ $($or, equivalently, $\zeta_{A\times\{0\},\O\times(0,a)}$$)$ can be meromorphically continued.
\end{theorem}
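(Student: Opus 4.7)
The plan is to derive Theorem~\ref{ulaganje_jedan} as a direct consequence of Theorem~\ref{thm_dist_emb} by means of the reflection symmetry $y\mapsto -y$ in the last coordinate of $\eR^{N+1}$. This symmetry is natural here because the distance $d((x,y),A\times\{0\})=\sqrt{d(x,A)^2+y^2}$ is an even function of $y$.

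First I would verify the basic symmetry identity
$$
\zeta_{A\times\{0\},\,\O\times(-a,a)}(s)=2\,\zeta_{A\times\{0\},\,\O\times(0,a)}(s),
$$
initially for all $s\in\Ce$ with $\re s>\ov{D}:=\ov\dim_B(A,\O)$ and then on any common domain of meromorphic continuation, simply by applying the change of variable $y\mapsto -y$ in the integral defining the relative distance zeta function. This is the distance-zeta analog of the observation made in Remark~\ref{discussion} for relative tube zeta functions; the two formulations are equivalent in view of the functional equation~\eqref{rel_equality}. Dividing the approximate functional equation~\eqref{pomocna_j} of Theorem~\ref{thm_dist_emb} by $2$ then yields~\eqref{pomocna_j_2} on $\{\re s>\ov{D}\}$, with the new error function $E(s;a)$ equal to one-half of the error function $E_{\mathrm{old}}(s;a)$ appearing in Theorem~\ref{thm_dist_emb}.

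Next, to identify this halved error with the integral representation stated in the theorem, I would apply the same reflection symmetry to the integrand of $E_{\mathrm{old}}(s;a)$: for each fixed $t>0$, the two pieces of $\eR\setminus(-a,a)=(-\infty,-a]\cup[a,+\infty)$ contribute equal amounts to the $(N+1)$-dimensional Lebesgue measure under $y\mapsto -y$, so that
$$
\tfrac{1}{2}\,|(A\times\{0\})_t\cap\O\times(\eR\setminus(-a,a))|_{N+1}=|(A\times\{0\})_t\cap\O\times[a,+\infty)|_{N+1}.
$$
A direct comparison, after the natural set-theoretic identification, then matches $\tfrac{1}{2}E_{\mathrm{old}}(s;a)$ with the integral expression defining $E(s;a)$ in the theorem statement.

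Finally, all remaining assertions---namely, the holomorphy of $E(s;a)$ on $\{\re s<N\}$, its meromorphic continuation to all of $\Ce$ with simple poles exactly at $\{N+2k:k\in\eN_0\}$, and the validity of~\eqref{pomocna_j_2} on any connected open neighborhood of the critical line $\{\re s=\ov{D}\}$ on which $\zeta_{A,\O}$ admits a meromorphic continuation---follow at once from the corresponding statements already established in Theorem~\ref{thm_dist_emb}, since $E$ and $E_{\mathrm{old}}$ differ only by the multiplicative constant $1/2$, together with the principle of analytic continuation applied exactly as in the two-sided case. The main (and essentially only) obstacle is the careful set-theoretic bookkeeping required to match the halved error function with the precise integral form of $E(s;a)$ written in the theorem statement; no new analytic input beyond Theorem~\ref{thm_dist_emb} is required.
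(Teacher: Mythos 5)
Your overall strategy is exactly the one the paper has in mind: it explicitly states that Theorem~\ref{ulaganje_jedan} is obtained from Theorem~\ref{thm_dist_emb} ``by looking at the proof of the above theorem and using a suitable symmetry argument,'' and the reflection $y\mapsto-y$ together with
$\zeta_{A\times\{0\},\O\times(-a,a)}(s)=2\,\zeta_{A\times\{0\},\O\times(0,a)}(s)$
is indeed that symmetry argument. Dividing the two-sided equation \eqref{pomocna_j} by $2$ immediately gives the coefficient $\dfrac{\sqrt{\pi}\,\upGamma\bigl(\frac{N-s}{2}\bigr)}{2\,\upGamma\bigl(\frac{N+1-s}{2}\bigr)}$ and an error term equal to $\tfrac12 E_{\mathrm{old}}(s;a)$; all the analytic conclusions (holomorphy on $\{\re s<N\}$, meromorphic continuation, simple poles at $N+2k$, validity on any common domain of meromorphic continuation) carry over unchanged. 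Up to this point your argument is correct and is the paper's.

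The gap is in the last matching step, precisely the "careful set-theoretic bookkeeping" that you defer. You correctly show via reflection that $\tfrac12\,|(A\times\{0\})_t\cap\O\times(\eR\setminus(-a,a))|_{N+1}=|(A\times\{0\})_t\cap\O\times[a,+\ty)|_{N+1}$, but you then assert that this agrees with the integrand in the theorem's stated $E(s;a)$, which instead involves $\O\times(\eR\setminus(0,a))$. These do not agree: $\eR\setminus(0,a)=(-\ty,0]\cup[a,+\ty)$, so the theorem's integrand additionally contains the nonvanishing contribution $|(A\times\{0\})_t\cap\O\times(-\ty,0]|_{N+1}$ (which, by the same reflection, equals $\tfrac12|(A\times\{0\})_t\cap\O\times\eR|_{N+1}>0$ for all $t>0$). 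Consequently $\tfrac12 E_{\mathrm{old}}(s;a)$ is \emph{not} equal to the $E(s;a)$ printed in the theorem, and a "direct comparison" cannot succeed. What your derivation actually shows is that the correct error function in \eqref{pomocna_j_2} is
\begin{equation*}
E(s;a)=(s-N-1)\int_a^{+\ty}t^{s-N-2}\bigl|(A\times\{0\})_t\cap\bigl(\O\times[a,+\ty)\bigr)\bigr|_{N+1}\,\di t ,
\end{equation*}
i.e.\ with $[a,+\ty)$ and not $\eR\setminus(0,a)$; the formula printed in the statement of Theorem~\ref{ulaganje_jedan} appears to be a slip. You should either carry out the bookkeeping explicitly (which exposes this discrepancy) or state your error term in the form $\tfrac12 E_{\mathrm{old}}(s;a)$ and note that it does not coincide with the displayed integrand as written.
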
 

\begin{example}\label{c_dust} ({\em Complex dimensions of the Cantor dust RFD}).\index{Cantor dust|textbf}
In this example, we will consider the relative fractal drum consisting of the Cantor dust contained in $[0,1]^2$ and compute its distance zeta function.
More precisely, let $A:=C^{(1/3)}\times C^{(1/3)}$ be the Cantor dust (i.e., the Cartesian product of the ternary Cantor set $C:=C^{1/3}$ by itself) and let $\O:=(0,1)^2$.
We will not obtain for $\zeta_A$ an explicit formula in a closed form but we will instead use Theorem~\ref{ulaganje_jedan} in order to deduce that the distance zeta function of the Cantor dust has a meromorphic continuation to all of $\Ce$.

More interestingly, we will also show that the set of complex dimensions of the Cantor dust is the union of (a nontrivial subset of) a periodic set contained in the critical line $\{\re s=\log_34\}$ and the set of complex dimensions of the Cantor set (which is a periodic set contained in the critical line $\{\re s=\log_32\}$).
{\em This fact is significant because it shows that in this case, the distance $($or tube$)$ zeta function also detects the `lower-dimensional' fractal nature of the Cantor dust.}

Note that, as is well known, the Minkowski dimension of the RFD (or Cantor string) $(C,(0,1))$ is given by $\dim_B(C,(0,1))=\log_32$ (see, e.g., \cite[\S1.2.2]{lapidusfrank}).
Furthermore, it will follow from the discussion below that, as might be expected since $(A,\O)=(C,(0,1))\times(C,(0,1))$), 
\begin{equation}\label{4.7.391/2}
\dim_B(A,\O)=2\dim_B(C,(0,1))=\log_34.
\end{equation}
Consequently, the critical line of the RFD $(C,(0,1))$ in $\eR$ (the {\em Cantor string RFD})\index{Cantor string!relative fractal drum|textbf} is the vertical line $\{\re s=\log_32\}$, while the critical line of the 
Cantor dust, viewed as the RFD $(A,\O)$ in $\eR^2$,
 is the vertical line $\{\re s=\log_34\}$, as was stated in the previous paragraph.

The construction of the RFD $(A,\O)$ can be carried out by beginning with the unit square and removing the open middle-third `cross', and then iterating this procedure ad infinitum.
This procedure implies that we can subdivide the Cantor dust into a countable union of RFDs which are scaled down versions of two base (or generating) RFDs, $(A_1,\O_1)$ and $(A_2,\O_2)$.
The first one of these base RFDs, $(A_1,\O_1)$, is defined by $\O_1:=(0,1/3)^2$ and by $A_1$ being the union of the four vertices of the closure of $\O_1$ (namely, of the square $[0,1/3]^2$).
Furthermore, the second base RFD is defined by $\O_2:=(0,1/3)\times(0,1/6)$ and by $A_2$ being the ternary Cantor set contained in $[0,1/3]\times\{0\}$.

At the $n$-th step of the iteration, we have exactly $4^{n-1}$ RFDs of the type $(a_nA_1,a_n\O_1)$ and $8\cdot 4^{n-1}$ RFDs of the type $(a_nA_2,a_n\O_2)$, where $a_n:=3^{-n}$ for each $n\in\eN$.
This observation, together with the scaling property of the relative distance zeta function (see Theorem \ref{scaling}), yields successively (for all $s\in\Ce$ with $\re s$ sufficiently large)$:$
\begin{equation}
\begin{aligned}
\zeta_{A,\O}(s)&=\sum_{n=1}^{\ty}4^{n-1}\zeta_{a_nA_1,a_n\O_1}(s)+8\sum_{n=1}^{\ty}4^{n-1}\zeta_{a_nA_2,a_n\O_2}(s)\\
&=\left(\zeta_{A_1,\O_1}(s)+8\zeta_{A_2,\O_2}(s)\right)\sum_{n=1}^{\ty}4^{n-1}\cdot 3^{-ns}\\
&=\frac{1}{3^s-4}\left(\zeta_{A_1,\O_1}(s)+8\zeta_{A_2,\O_2}(s)\right).
\end{aligned}
\end{equation}
Moreover, for the relative distance zeta function of $(A_1,\O_1)$, we have
\begin{equation}
\begin{aligned}
\zeta_{A_1,\O_1}(s)&=8\int_0^{1/6}\di  x\int_0^{x}\left(\sqrt{x^2+y^2}\right)^{s-2}\,\di  y\\
&=8\int_0^{\pi/4}\di \varphi\int_0^{1/6\cos\varphi}r^{s-1}\,\di  r\\
&=\frac{8}{6^ss}\int_0^{\pi/4}\cos^{-s}\varphi\,\di \varphi=\frac{8I(s)}{6^ss},
\end{aligned}
\end{equation}
where $I(s):=\int_0^{\pi/4}\cos^{-s}\varphi\,\di \varphi$ and is easily seen to be an entire function. (In fact, $I(s)=2^{-1}\mathrm{B}_{{1}/{2}}\left({1}/{2},{(1-s)}/{2}\right)$, where $\mathrm{B}_x(a,b):=\int_0^{x}t^{a-1}(1-t)^{b-1}\di  t$ is the incomplete beta function.\label{incbeta})\index{beta function, $B(a,b)$!incomplete beta function, $\mathrm{B}_x(a,b)$}
Consequently, $\zeta_{A,\O}$ admits a meromorphic continuation to all of $\Ce$ and we have
\begin{equation}
\zeta_{A,\O}(s)=\frac{8}{3^s-4}\left(\frac{I(s)}{6^ss}+\zeta_{A_2,\O_2}(s)\right),
\end{equation}
for all $s\in\Ce$.
Furthermore, let $\zeta_{C,,(0,1)}$ be the relative distance zeta function of the Cantor middle-third set constructed inside $[0,1]$; see \cite[Example 6.3]{cras2}. Alternatively, use the relation
$\zeta_{C,(0,1)}(s)=\zeta_{{\mathcal L}_{CS}}(s)(2^{1-s}/s)$, where ${\mathcal L}_{CS}$ is the Cantor string and (by \cite[Eq.\ (1.29), p.\ 22]{lapidusfrank12}) $\zeta_{{\mathcal L}_{CS}}(s)=1/(3^s-2)$, for all $s\in\Ce$.
From Theorem~\ref{ulaganje_jedan} and the scaling property of the relative distance zeta function (Theorem \ref{scaling}), we now deduce that
\begin{equation}\label{deno}
\begin{aligned}
\zeta_{A_2,\O_2}(s)&=\frac{\sqrt{\pi}\mathrm{\Gamma}\left(\frac{1-s}{2}\right)}{2\mathrm{\Gamma}\left(\frac{2-s}{2}\right)}\zeta_{3^{-1}C,3^{-1}(0,1)}(s)+E(s;6^{-1})\\
&=\frac{\mathrm{\Gamma}\left(\frac{1-s}{2}\right)}{\mathrm{\Gamma}\left(\frac{2-s}{2}\right)}\frac{\sqrt{\pi}}{6^ss(3^s-2)}+E(s;6^{-1}),
\end{aligned}
\end{equation}
where $E(s;6^{-1})$ is meromorphic on all of $\Ce$ with a set of simple poles equal to $\{2k+1:k\in\eN_0\}$; so that for all $s\in\Ce$, we have
\begin{equation}\label{form}
\zeta_{A,\O}(s)=\frac{8}{s(3^s-4)}\left(\frac{I(s)}{6^s}+\frac{\upGamma\left(\frac{1-s}{2}\right)}{\upGamma\left(\frac{2-s}{2}\right)}\frac{\sqrt{\pi}}{6^ss(3^s-2)}+E(s;6^{-1})\right).
\end{equation}
Formula~\eqref{form} implies that $\po(\zeta_{A,\O}$, the set of all complex dimensions (in $\Ce$) of the `relative' Cantor dust, is a subset of
\begin{equation}
\left(\log_34+\frac{2\pi}{\log3}\I\Ze\right)\cup\left(\log_32+\frac{2\pi}{\log3}\I\Ze\right)\cup\{0\}
\end{equation}
and consists of simple poles of $\zeta_{A,\O}$.
Of course, we know that $\log_34\in\po(\zeta_{A,\O})$, but we can only conjecture that the other poles on the critical line $\{\re s=\log_34\}$ are in $\po(\zeta_{A,\O})$ since it may happen that there are some zero-pole cancellations in~\eqref{form}.
On the other hand, since it is known that the Cantor dust is not Minkowski measurable (see~\cite{FaZe}), we can deduce from 
 [LapRa\v Zu6, Theorem 4.2]
%\cite[Theorem~5.2]{cras2} 
that there must exist at least two other (necessarily nonreal) poles $s_{\pm k_0}=\log_34\pm\frac{2k_0\pi\I}{\log 3}$ of $\zeta_{A,\O}$, for some $k_0\in\eN$. (Indeed, according to the Minkowski measurability criterion established in  [LapRa\v Zu6, Theo\-rem~4.2] (see also [LapRa\v Zu1, Theorem~5.4.20]), $D:=\log_34$ cannot be the only complex dimension of $(A,\O)$ on the critical line $\{\re s=D\}$ since otherwise, the Cantor dust would be Minkowski measurable, which is a contradiction.)
Based on~\eqref{form}, we cannot even claim that $0\in\po(\zeta_{A,\O})$ for sure, but we can see that all of the principal complex dimensions of the Cantor set are elements of $\po(\zeta_{A,\O})$; i.e., $\log_32+\frac{2\pi}{\log3}\I\Ze\subseteq\po(\zeta_{A,\O}))$.
We conjecture that we also have $\log_34+\frac{2\pi}{\log3}\I\Ze\subseteq\po(\zeta_{A,\O})$; that is, we conjecture that $\po_c(\zeta_{A,\O})=\log_34+\frac{2\pi}{\log3}\I\Ze$.
\end{example}

The above example can be easily generalized to the case of Cartesian products of any finite number of generalized Cantor sets (as given by Definition \ref{Cma}), in which case we conjecture that the set of complex dimensions of the product is contained in the union of the sets of complex dimensions of each of the factors, modulo any zero-pole cancellations which may occur.
In light of this and other similar examples, it would be interesting to obtain some results about zero-free regions for fractal zeta functions.
We leave this problem as a possible subject for future investigations.

\chapter{Relative fractal sprays and principal complex dimensions of any multiplicity}\label{zfsprays}

%\chaptermark{Principal complex dimensions of any multiplicity}

In this chapter, we consider a special type of RFDs, called {\em relative fractal sprays}, and study their distance zeta functions. We then illustrate our results by computing the corresponding complex dimensions of relative  Sierpi\'nski sprays. More specifically, we determine the complex dimensions (as well as the associated residues) of the relative Sierpi\'nski gasket (Example \ref{6.15}) and of the relative Sierpi\'nski carpet (Example \ref{sierpinski_carpetr}); we also consider higher-dimensional analogs of these examples, namely, the inhomogeneous Sierpi\'nski $N$-gasket RFD (Example \ref{Ngasket}) and the Sierpi\'nski $N$-carpet RFD (Example \ref{carpetN}).

\section{Relative fractal sprays in $\eR^N$}\label{dsprays}
We now introduce the definition of relative fractal spray, which is very similar to (but more general than) the notion of fractal spray (see \cite{lapiduspom3}, \cite[Definition 13.2]{lapidusfrank12},
%\cite{lappe1,lappe2}
[{LapPe1--2}]
and
%\cite{lappewi1,LaPeWi2}
[{LapPeWi1--2}]), itself a generalization of the notion of (ordinary) fractal string
%\cite{LaPo1,lapiduspom,Lap1,Lap2,Lap3, lapidusfrank12}
[{LapPo1--2}, {Lap1--3}, {Lap-vFr3}].

\begin{defn}\label{spray}
Let $(\pa\Omega_0,\Omega_0)$ be a fixed relative fractal drum in $\eR^N$ (which we call the 
{\em base relative fractal drum}, or {\em generating
relative fractal drum}),\index{relative fractal drum, RFD!generating}\index{generating relative fractal drum}  $(\lambda_j)_{j\ge0}$ a decreasing 
sequence of positive numbers (scaling factors), converging to zero, and $(b_j)_{j\ge0}$ a given 
 sequence of positive integers (multiplicities). The associated {\em relative fractal 
 spray}\index{relative fractal spray|textbf} is 
a relative fractal drum $(A,\O)$ obtained as the disjoint union
of a sequence of RFDs, $\mathcal{F}:=\{(\pa\Omega_i,\Omega_i):i\in\eN_0\}$, where $\eN_0:=\eN\cup\{0\}$, such that each $\Omega_i$ can be obtained
from $\lambda_j\Omega_0$ by a rigid motion in $\eR^N$, and for each $j\in\eN_0$ there are precisely $b_j$ RFDs
 in the family $\mathcal{F}$ that can be obtained from $\lambda_j\Omega_0$ by a rigid motion.
Any relative fractal spray $(A,\Omega)$, generated by the base relative fractal drum (or `basic shape') $\Omega_0$ and the sequences of `scales'  $(\lambda_j)_{j\ge0}$ with associated `multiplicities' $(b_j)_{\ge0}$, is denoted by 
\begin{equation}
(A,\Omega):=\operatorname{Spray}(\Omega_0,(\lambda_j)_{j\ge0},(b_j)_{\ge0}).
\end{equation}
The family $\mathcal{F}$ is called the {\em skeleton of the spray}.
The distance zeta function $\zeta_{A,\O}$ of the relative fractal spray $(A,\O)$ is computed in Theorem~\ref{sprayz} below. 

If there exist $\lambda\in(0,1)$ 
and an integer $b\ge2$ such that $\lambda_j=\lambda^j$
and $b_j=b^j$, for all $j\in\eN_0$,  then we simply write
$$
(A,\Omega)=\operatorname{Spray}(\Omega_0,\lambda,b).
$$
\end{defn}

\medskip

\begin{defn}\label{tensor}
The relative fractal spray $(A,\Omega)=\operatorname{Spray}(\Omega_0,(\lambda_j)_{j\ge0},$ $(b_j)_{j\ge0})$ can be viewed as a relative fractal drum
generated by $(\pa\Omega_0,\Omega_0)$ and a fractal string $\mathcal{L}=(\ell_j)_{j\ge0}$, consisting of the decreasing sequence $(\lambda_j)_{j\geq 0}$ of positive real numbers, in which each $\lambda_j$
has multiplicity $b_j$ for every $j\ge0$. Thus, we can write $(A,\Omega)=\operatorname{Spray}(\Omega_0,\mathcal{L})$. It is also convenient to view the construction of $(A,\O)$
in Definition~\ref{spray} as the {\em tensor product}\label{tensor1}\index{tensor product!$(\pa\O_0,\O_0)\otimes\mathcal{L}$|textbf} of the base relative fractal drum $(A_0,\O_0)$ and the fractal string $\mathcal{L}$:
\begin{equation}
(A,\O)=(\pa\O_0,\O_0)\otimes\mathcal{L}.
\end{equation}
We can also define the {\em tensor product of two $($possibly unbounded$)$ fractal strings}\index{tensor product!of fractal strings, ${\mathcal L}_1\otimes{\mathcal L}_2$|textbf}
$\mathcal{L}_1=(\ell_{1j})_{j\ge1}$ and $\mathcal{L}_2=(\ell_{2k})_{k\ge1}$ as
the following fractal string (note that here, $\mathcal{L}_1$ and $\mathcal{L}_2$ are viewed as nonincreasing sequences of positive numbers tending to zero, but that we may have $\sum_{j=1}^\ty\ell_{1j}=+\ty$ or $\sum_{k=1}^\ty\ell_{2k}=+\ty$):
\begin{equation}\label{otimes}
\mathcal{L}_1\otimes\mathcal{L}_2:=(\ell_{1j}\ell_{2k})_{j,k\ge1}.
\end{equation}
By construction, the multiplicity of any $l\in \mathcal{L}_1\otimes\mathcal{L}_2$  is equal to the number of ordered pairs of $(\ell_{1j},\ell_{2k})$ in the Cartesian product $\mathcal{L}_1\times\mathcal{L}_2$ of multisets such that $l=\ell_{1j}\ell_{2k}$.
\end{defn}

\begin{example}\label{tensorE}
Let $\O_0:=B_1(0)$ be the open unit disk in the Euclidean plane $\eR^2$ and $A_0:=\pa\O_0$ the unit circle. Let $\mathcal L:=(\ell_j)_{j\ge0}$ be the Cantor string.\index{Cantor string|textbf} In other words, $\mathcal L$ is the multiset consisting of $l_k=3^{-k-1}$ with multiplicity $2^k$ for each $k\ge0$. As in Definition \ref{tensor}, we define the RFD $(A,\O)$ as the tensor product
\begin{equation}
(A,\O):=(A_0,\O_0)\otimes{\mathcal L}.
\end{equation} 
 Then,
\begin{equation}\label{last_ex}
\begin{aligned}
\zeta_{A,\O}(s)&=\sum_{k=0}^\ty2^{k}\zeta_{3^{-k-1}(A_0,\O_0)}(s)\\
&=3^{-s}\sum_{k=0}^\ty2^{k}3^{-sk}\zeta_{A_0,\O_0}(s)=\frac{3^{-s}\zeta_{A_0,\O_0}(s)}{1-2\cdot3^{-s}}\\
&=\frac{2\pi(s-2)3^{-s}}{s(s-1)(1-2\cdot3^{-s})},
\end{aligned}
\end{equation}
where in the last equality we have used Equation \eqref{AON} with $N=2$. 
It follows that $\zeta_{A,\O}$ has a meromorphic continuation to all of $\Ce$ given by the last expression of \eqref{last_ex}.
Therefore, we deduce that
\begin{equation}
\po(\zeta_{A,\O})=\{0\}\cup\Big(\log_32+\frac{2\pi}{\log3}\I\Ze\Big)\cup\{1\},\q\dim_{PC}(\tilde A,\tilde\O)=\{1\},
\end{equation}
and all the complex dimensions are simple.
The RFD $(A,\O)$ has a vertical sequence of equidistant complex dimensions (namely, $\{\log_32+\frac{2\pi}{\log3}\I k\}_{k\in\Ze}$), while there is only one principal complex dimension (namely, $s=1$), and it is simple. Using the Minkowski measurability criterion obtained in [LapRa\v Zu6, Theorem 4.2] (see also [LapRa\v Zu1, Theorem 5.4.20]), we conclude that the RFD $(A,\O)$ is Minkowski measurable.
\end{example}

%More general tensor products, involving two relative fractal drums, could also be considered.

\medskip

We point out, however, that one can also show 
(by using the results and techniques of [{LapRa\v Zu5}] and \cite[Chapter 5]{fzf}) 
that the RFD $(A,\O)$ is ({\em strictly}) {\em subcritically Minkowski nonmeasurable\index{subcritically Minkowski nonmeasurable RFD} in dimension $d:=\log_32$}, in a sense specified in the just mentioned references.
Heuristically, this means that it has {\em geometric oscillations of lower order} $d=\log_32$, but none of leading order $D=1$.
\medskip

%\begin{remark}
We can easily modify the notion of relative fractal spray 
in Definition~\ref{spray} in order to deal with a finite collection of $K$ basic RFDs
(or generating RFDs) $(\pa\Omega_{01},\Omega_{01})$,\dots,$(\pa\Omega_{0K},\Omega_{0K})$, where $K$ is and integer $\geq 1$, similarly as in [{LapPo3}], \cite[Definition~13.2]{lapidusfrank12} (and
%\cite{lappe1,lappe2,lappewi1,LaPeWi2}
[{LapPe1--2}, {LapPeWi1--2}]). A slightly more general notion would consist in replacing $(\partial\O_0,\O_0)$ by any relative fractal drum $(A_0,\O_0)$; see \cite{fzf}.
%\end{remark}

\bigskip

%\begin{remark}
It is important to stress that, from our point of view, the sets $\Omega_i$ in the definition of a relative fractal spray (Definition~\ref{spray})
do not have to be `densely packed'. In fact, in general, they cannot be `densely packed', as indicated by Example~\ref{spraye}$(c)$ below.
They can just be viewed as the union of the disjoint family $\{(\pa\Omega_i,\Omega_i)\}_{i\ge0}$ of RFDs in $\eR^N$. The corresponding disjoint 
family of open sets $\{\Omega_i\}_{i\ge0}$ can even be unbounded in $\eR^N$,
since its union does not have to be of finite $N$-dimensional Lebesgue measure. 
%\end{remark}

\medskip

The following simple lemma provides necessary and sufficient conditions for a relative fractal spray $(A,\O)$ to be such that $|\O|<\ty$.

\begin{lemma}\label{sprayv}
Assume that
$(A,\Omega):=\operatorname{Spray}(\Omega_0,(\lambda_j)_{j\ge0},(b_j)_{\ge0})$ in $\eR^N$ is a relative fractal spray. Then $|\O|<\ty$ if and only if $|\O_0|<\ty$ and
\begin{equation}\label{spraysum}
\sum_{j=0}^\ty b_j\g_j^N<\ty.
\end{equation}
In that case, we have
\begin{equation}\label{3.1.371/2}
|\O|=|\O_0|\sum_{j=0}^\ty b_j\g_j^N.
\end{equation}
In particular, the relative fractal drum $(A,\O)$ is well defined and
$\ov\dim_B(A,\O)\le N$.
\end{lemma}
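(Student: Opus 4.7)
The plan is to establish the claim by a direct computation using the countable additivity of the Lebesgue measure and its invariance under rigid motions, together with the scaling law $|\lambda E|_N = \lambda^N |E|_N$ for $\lambda > 0$ and any measurable set $E \subseteq \eR^N$.

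First, I would unpack the definition of the spray. By Definition~\ref{spray}, $\Omega$ is the disjoint union $\bigsqcup_{i \geq 0} \Omega_i$, where the family $\{\Omega_i\}_{i \geq 0}$ is partitioned according to scale: for each $j \in \eN_0$, exactly $b_j$ of the sets $\Omega_i$ are congruent (via a rigid motion) to $\lambda_j \Omega_0$. Since rigid motions preserve the $N$-dimensional Lebesgue measure, each such $\Omega_i$ satisfies $|\Omega_i|_N = |\lambda_j \Omega_0|_N = \lambda_j^N |\Omega_0|_N$. By countable additivity, it follows that
\[
|\Omega|_N \;=\; \sum_{i \geq 0} |\Omega_i|_N \;=\; \Big( \sum_{j=0}^{\infty} b_j \lambda_j^N \Big) \, |\Omega_0|_N,
\]
with both sides being possibly $+\infty$. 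This identity immediately yields both directions of the stated equivalence as well as the formula~\eqref{3.1.371/2}.

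Next, assuming $|\Omega| < \infty$, I would verify the remaining technical condition~\eqref{delta} required for $(A,\Omega)$ to be a bona fide RFD. Since $(\partial \Omega_0, \Omega_0)$ is itself an RFD, there exists $\delta_0 > 0$ such that $\Omega_0 \subseteq (\partial \Omega_0)_{\delta_0}$. Applying a rigid motion and scaling by $\lambda_j$, the corresponding piece $\Omega_i$ satisfies $\Omega_i \subseteq (\partial \Omega_i)_{\lambda_j \delta_0} \subseteq A_{\lambda_j \delta_0}$, since $\partial \Omega_i \subseteq A$. Because $(\lambda_j)_{j \geq 0}$ is decreasing with largest term $\lambda_0$, we obtain $\Omega \subseteq A_{\delta}$ for $\delta := \lambda_0 \delta_0$, which establishes~\eqref{delta}.

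Finally, the bound $\overline{\dim}_B(A,\Omega) \le N$ is automatic once $(A,\Omega)$ is verified to be an RFD, in light of the general inequality $-\infty \leq \underline{\dim}_B(A,\Omega) \leq \overline{\dim}_B(A,\Omega) \leq N$ recorded in Section~\ref{notation}. None of these steps is technically deep; the main (and essentially the only) point requiring care is the bookkeeping of the skeleton $\mathcal{F}$ so that the disjointness of the $\Omega_i$'s is correctly used when invoking countable additivity, and that the rigid-motion invariance together with the homothety factor $\lambda_j^N$ is correctly accounted for with the multiplicities $b_j$.
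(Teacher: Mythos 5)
Your proposal is correct and follows essentially the same route as the paper: the core of both is the countable additivity of Lebesgue measure, rigid-motion invariance, and the scaling law $|\lambda_j \Omega_0|_N = \lambda_j^N|\Omega_0|_N$, which together give the identity $|\Omega| = |\Omega_0|\sum_{j\geq 0}b_j\lambda_j^N$ (valid in $[0,+\infty]$), from which both directions of the equivalence and formula~\eqref{3.1.371/2} follow at once. The paper only writes out the sufficiency direction and declares the necessity part easy; your observation that the single identity handles both directions simultaneously is a slight tidying of the same argument. You also go further than the paper in one respect: you explicitly verify condition~\eqref{delta} by propagating the $\delta_0$-neighborhood inclusion for the base RFD $(\partial\Omega_0,\Omega_0)$ through the rigid motions and scalings, using $\partial\Omega_i\subseteq A$ and the bound $\lambda_j \leq \lambda_0$, which the paper leaves entirely implicit when it asserts that $(A,\Omega)$ is a well-defined RFD. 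That step is correct and is worth having on record.
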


\begin{proof} Let us prove the sufficiency part.
For $\O_j=\g_j\O_0$ we have $|\O_j|=|\g_j \O_0|=\g_j^N|\O_0|$, and therefore,
$$
|\O|=\sum_{j=0}^\ty|\O_j|=\sum_{j=0}^\ty b_j|\g_j \O_0|=|\O_0|\sum_{j=0}^\ty b_j\g_j^N.
$$
The proof of the necessity part is also easy and is therefore omitted.
\end{proof}

%Here, the {\em tensor product ${\mathcal L}_1\otimes{\mathcal L}_2$ of fractal strings}\index[index]{tensor product!of fractal strings, ${\mathcal L}_1\otimes{\mathcal L}_2$} $\mathcal L_1$ and $\mathcal L_2$ is defined as the multiset of all possible products $\g\cdot\mu$, with $\g\in\mathcal L_1$ and $\mu\in\mathcal L_2$, counting their multiplicity.

\medskip

\begin{example}\label{spraye}
Here, we provide a few simple examples of relative fractal sprays: 

\bigskip

$(a)$ The ternary Cantor set can be viewed as a relative fractal drum 
$$
(A,\Omega)=\operatorname{Spray}(\Omega_0,
1/3,2)
$$ 
(called the {\em Cantor relative fractal drum},\index{Cantor relative fractal drum|textbf}\index{relative fractal drum, RFD!Cantor|textbf} or the {\em relative Cantor fractal spray}),\index{relative Cantor fractal spray|textbf} generated by
$$
(\pa\Omega_0,\Omega_0)=(\{1/3,2/3\}\,,\,(1/3,2/3))
$$ 
as the  base  relative fractal drum, $\lambda=1/3$ and $b=2$. Its relative box dimension exists and is given by $D=\log_32$. 
Of course, this is just an example of ordinary fractal string, namely, the well-known Cantor string; see \cite[\S1.1.2]{lapidusfrank12}.
\bigskip

$(b)$ The Sierpi\'nski gasket can be viewed as a relative fractal drum 
(called the {\em Sierpi\'nski relative fractal drum},\label{sierpinski_drum}\index{Sierpi\'nski relative fractal drum (or spray)|textbf}\index{relative fractal drum, RFD!Sierpi\'nski|textbf}
 or {\em Sierpi\'nski relative fractal spray}), 
generated by $(\pa\Omega_0,\Omega_0)$
as the basic relative fractal drum,  where $\Omega_0$ is an open equilateral triangle of sides of length $1/2$,
$\lambda=1/2$ and $b=3$.
(See the left part of Figure \ref{Sierpinski_slika}.)
Its relative box dimension exists and is given by $D=\log_23$.
\bigskip

$(c)$ If $\Omega_0$ is any bounded open set in $\eR^2$ (say, an open disk), $\lambda=1/2$ and $b=3$, we obtain a fractal
spray $(A,\Omega)=\operatorname{Spray}(\Omega_0,1/2,3)$, in the sense of Definition~\ref{spray}. In Theorem~\ref{sprayz}, we shall see that if $\Omega_0$ has a Lipschitz boundary, then the set of poles of the relative zeta function of this fractal spray
(which is a relative fractal drum), as well as the multiplicities of the poles,
do not depend on the choice of $\Omega_0$. In this sense, examples $(b)$ and $(c)$ are equivalent. In particular, the box dimension
of the {\em generalized Sierpi\'nski relative fractal drum} is constant, and equal to $D=\log_23$.
\end{example}

\begin{figure}[t]
\begin{center}
\includegraphics[trim= 1cm 0cm 1cm 0cm,clip,width=5.8cm]{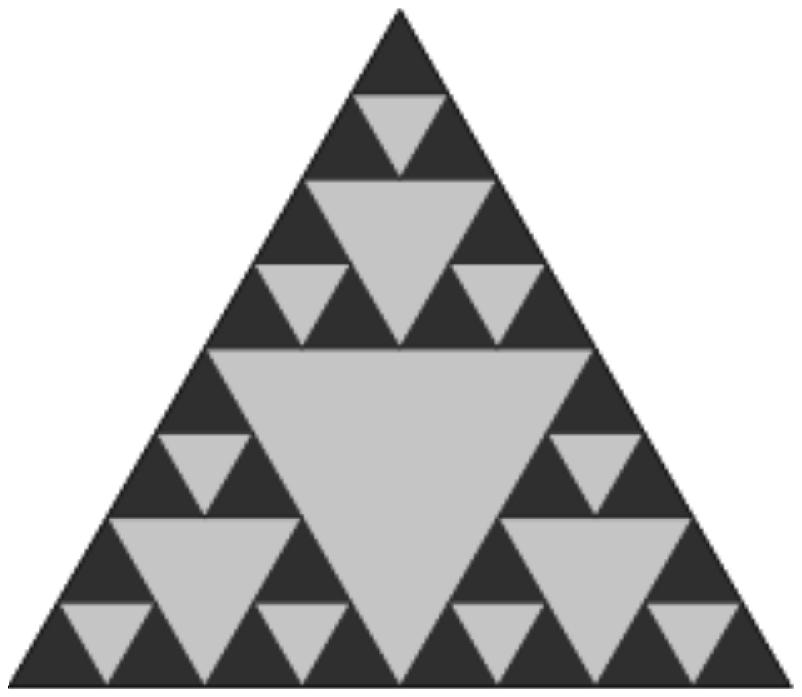}
\includegraphics[trim= 1cm 0cm 1cm 0cm,clip,width=5.8cm]{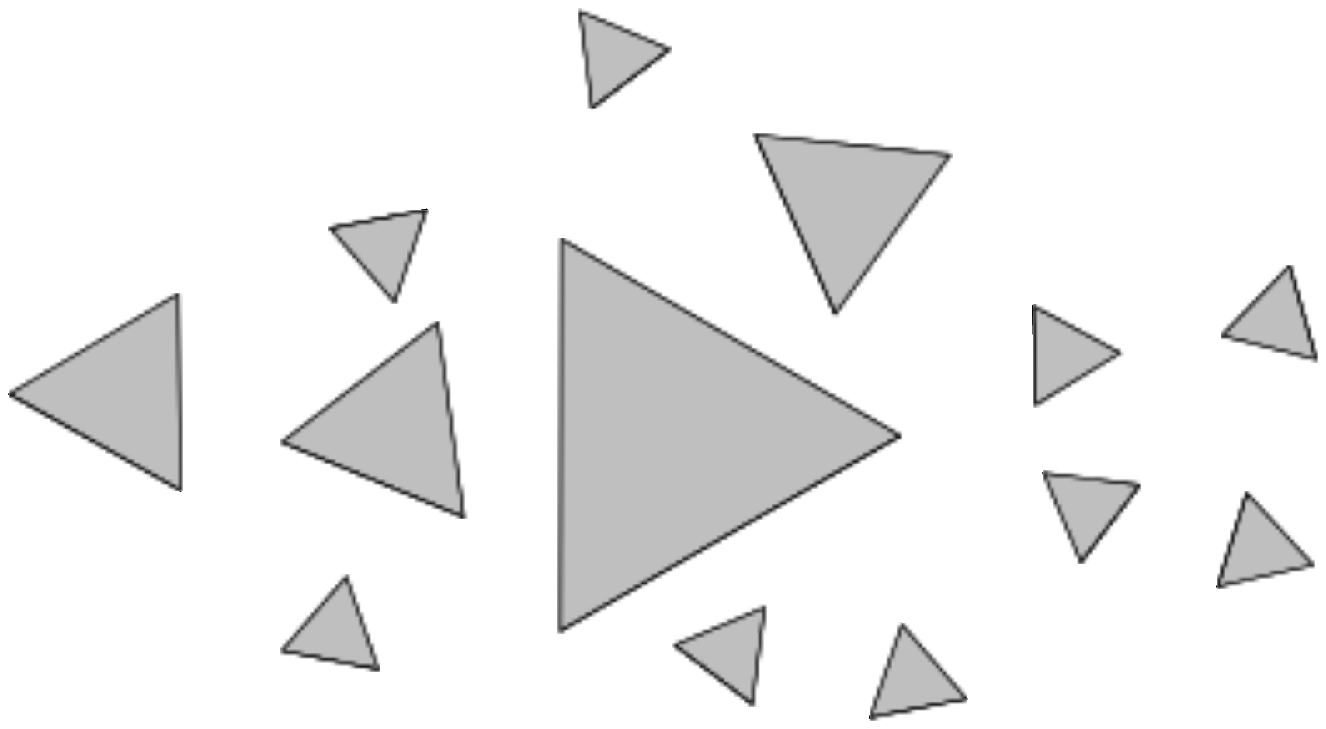}
\caption{\small Left: The Sierpi\'nski gasket $A$, viewed as a relative fractal drum $(A,\O)$, with $\O$ being the countable disjoint union of open triangles contained in the
unit triangle $\O_0$.
Right: An equivalent interpretation of the Sierpi\'nski gasket drum $(A,\O)$.
Here, $\O$ is a countable disjoint union of open equilateral triangles, and $A=\partial\O$.
(There are $3^j$ triangles with sides $2^{-j-1}$ in the union, $j\in\eN_0$.)
Both pictures depict the first three iterations of the construction. We can also view the standard Sierpi\'nski gasket $A$ as a relative fractal drum $(A,\O)$, in which $\O$ is just
the open unit triangle in the left picture.}
\label{Sierpinski_slika}
\end{center}
\end{figure}

In other words, the Sierpi\'nski gasket $(A,\Omega)=\operatorname{Spray}(\Omega_0,1/2,3)$, appearing in Example~\ref{spraye}$(b)$, can be viewed
 as {\em any} countable disjoint collection of open triangles in the plane (which can even be an unbounded collection) and their bounding triangles, of sizes $\lambda_j=2^{-j-1}$ and multiplicities $b_j=3^{j}$, $j\in\eN_0$, and not just as the standard disjoint collection of open triangles, densely packed inside the unit open triangle; see the right part of Figure~\ref{Sierpinski_slika}.

\medskip

By using the scaling property stated in Theorem~\ref{scaling}, it is easy to explicitly compute the distance zeta function of relative fractal sprays.
Note that the zeta function involves the Dirichlet series $f(s)=\sum_{j=0}^\ty b_j \lambda_j^s$. Theorem~\ref{sprayz} just below can be considered as an extension of Theorem~\ref{scaling}.

\begin{theorem}[Distance zeta function of relative fractal sprays]\label{sprayz}
Let 
$$
(A,\Omega)=\operatorname{Spray}(\Omega_0,(\lambda_j)_{j\ge0},(b_j)_{j\ge0})
$$ 
be a relative fractal spray in $\eR^N$, in the sense of Definition~\ref{spray}, and such that $|\O_0|<\ty$. Assume that condition \eqref{spraysum} of Lemma~\ref{sprayv} is satisfied; that is, $|\O|<\ty$. Let $\Omega$
be the $($countable, disjoint$)$ union of all the 
open sets appearing in the skeleton, corresponding to the fractal spray. $($In other words, $\O$ is the disjoint union of the open sets $\O_j$, each repeated with the multiplicity $b_j$ for $j\in\eN_0$.$)$ 
Let $f(s):=\sum_{j=0}^\ty b_j \lambda_j^{s}$. $($Note that according to~\eqref{spraysum}, this Dirichlet series\index{Dirichlet series} converges absolutely for $\re s\geq N$; hence, $D(f)\le N$.$)$ 
Then, for all $s\in\Ce$ with $\re s>\max\{\ov{\dim}_B(A,\O),D(f)\}$, the distance zeta function
of the relative fractal spray $(A,\Omega)$ is given by the {\rm factorization formula}
\begin{equation}\label{dirichlete}
\zeta_{A,\Omega}(s)=f(s)\cdot\zeta_{\pa\Omega_0,\Omega_0}(s),
\end{equation}
and
\begin{equation}\label{dirichletd}
\ov\dim_B(A,\Omega)=\max\{\ov\dim_B(\pa\Omega_0,\Omega_0),D(f)\}.
\end{equation}
\end{theorem}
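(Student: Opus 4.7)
The plan is to derive the factorization formula \eqref{dirichlete} by combining the countable additivity of $\zeta_{A,\Omega}$ (Theorem \ref{unionz}) with its scaling property (Theorem \ref{scaling}) and the elementary invariance of the relative distance zeta function under Euclidean rigid motions. I would then read off the identity \eqref{dirichletd} from \eqref{dirichlete} together with part $(b)$ of Theorem \ref{an_rel}.

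More precisely, I would first write the spray as the disjoint union
\begin{equation*}
(A,\Omega)=\bigsqcup_{j=0}^{\infty}\bigsqcup_{k=1}^{b_j}(\partial\Omega_{j,k},\Omega_{j,k}),
\end{equation*}
where, for each $j\ge 0$ and $1\le k\le b_j$, the set $\Omega_{j,k}$ is the image of $\lambda_j\Omega_0$ under a rigid motion $T_{j,k}$ of $\eR^N$. Since each generating set in the union is of the form $(\partial\Omega_{j,k},\Omega_{j,k})$, condition \eqref{dj} of Theorem \ref{unionz} is automatically satisfied, so for $\re s>\overline{\dim}_B(A,\Omega)$,
\begin{equation*}
\zeta_{A,\Omega}(s)=\sum_{j=0}^{\infty}\sum_{k=1}^{b_j}\zeta_{\partial\Omega_{j,k},\Omega_{j,k}}(s).
\end{equation*}
Next, because the Euclidean distance is invariant under $T_{j,k}$, a direct change of variables $y:=T_{j,k}^{-1}(x)$ in the defining integral \eqref{rel_dist_zeta} gives $\zeta_{\partial\Omega_{j,k},\Omega_{j,k}}(s)=\zeta_{\partial(\lambda_j\Omega_0),\lambda_j\Omega_0}(s)$. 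Applying Theorem \ref{scaling} with $\lambda:=\lambda_j$, this in turn equals $\lambda_j^s\,\zeta_{\partial\Omega_0,\Omega_0}(s)$. Summing over $k$ and $j$ then yields
\begin{equation*}
\zeta_{A,\Omega}(s)=\Bigl(\sum_{j=0}^{\infty}b_j\lambda_j^s\Bigr)\zeta_{\partial\Omega_0,\Omega_0}(s)=f(s)\,\zeta_{\partial\Omega_0,\Omega_0}(s),
\end{equation*}
valid whenever both $f(s)$ converges absolutely and $\zeta_{\partial\Omega_0,\Omega_0}(s)$ is defined, i.e., for $\re s>\max\{\overline{\dim}_B(\partial\Omega_0,\Omega_0),D(f)\}$.

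To obtain \eqref{dirichletd}, I combine \eqref{dirichlete} with part $(b)$ of Theorem \ref{an_rel}, which asserts that $D(\zeta_{A,\Omega})=\overline{\dim}_B(A,\Omega)$ and, applied to the base RFD, $D(\zeta_{\partial\Omega_0,\Omega_0})=\overline{\dim}_B(\partial\Omega_0,\Omega_0)$. The inequality $\overline{\dim}_B(A,\Omega)\le\max\{\overline{\dim}_B(\partial\Omega_0,\Omega_0),D(f)\}$ is immediate from the factorization, since both factors are holomorphic on the half-plane $\{\re s>\max\{\overline{\dim}_B(\partial\Omega_0,\Omega_0),D(f)\}\}$. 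For the reverse inequality, I would exploit positivity: for real $s$, both $f(s)$ and $\zeta_{\partial\Omega_0,\Omega_0}(s)$ are positive (and nonzero) as they are, respectively, a series and a Lebesgue integral of strictly positive functions, so if either one diverges at some real $s_0$, then so does $\zeta_{A,\Omega}(s_0)$, forcing $s_0\le D(\zeta_{A,\Omega})=\overline{\dim}_B(A,\Omega)$.

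The main obstacle is not the algebra but rather the bookkeeping needed to justify the countable interchange of the sum $\sum_{j,k}$ with the integral defining $\zeta_{A,\Omega}$ and to ensure that the assumptions of Theorem \ref{unionz} genuinely apply; in particular, that the family $(\Omega_{j,k})$ is measure-disjoint, that $|\Omega|<\infty$ (which is guaranteed by hypothesis \eqref{spraysum} via Lemma \ref{sprayv}), and that the identity $d(x,A)=d(x,\partial\Omega_{j,k})$ holds for $x\in\Omega_{j,k}$ — the latter is exactly the content of condition \eqref{dj} and holds here because $A=\bigcup_{j,k}\partial\Omega_{j,k}$ with the $\Omega_{j,k}$ pairwise disjoint open sets. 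Once these verifications are in place, the computation described above goes through cleanly.
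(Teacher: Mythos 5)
Your proposal is correct and takes essentially the same route as the paper's own (very terse) proof, which simply cites Theorems~\ref{scaling} and \ref{unionz}. You have spelled out the details the paper leaves implicit — the rigid-motion invariance of the relative distance zeta function, the verification of condition~\eqref{dj}, and the positivity argument used to read off~\eqref{dirichletd} — but the underlying strategy is identical.
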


\begin{proof} Clearly, it follows from~\eqref{spraysum} that $f(N)<\ty$. Hence, $D(f)\le N$; so that $\ov\dim_B(A,\O)\le N$.
Each open set of the skeleton of the relative fractal spray is obtained by a rigid motion of sets of the form $\lambda_j\Omega_0$, and for 
any fixed $j\in\eN_0$, there are precisely $b_j$
such sets.
Identity (\ref{dirichlete}) then follows immediately from Theorems~\ref{scaling} and \ref{unionz}. The remaining claims are easily derived by using this identity.  
\end{proof}

%\begin{remark}
It follows from Definition~\ref{tensor} and relation (\ref{dirichlete})
that the distance zeta function of the tensor product is equal to the product of the zeta functions of its components:
\begin{equation}
\zeta_{(\pa\O_0,\O_0)\otimes\mathcal{L}}(s)=\zeta_{\mathcal{L}}(s)\cdot\zeta_{\pa\O_0,\O_0}(s).
\end{equation}
Equation (\ref{dirichletd}) can therefore be written as follows:
\begin{equation}
\ov\dim_B((\pa\O_0,\O_0)\otimes\mathcal{L})=\max\{\ov\dim (\pa\O_0,\O_0),\ov\dim_B\mathcal{L}\}.
\end{equation}
%\end{remark}

\medskip

\begin{defn}\label{zetas}
The Dirichlet series $f(s):=\sum_{j=1}^\ty b_j\g_j^s$ (or, more generally, its meromorphic extension to a connected open subset $U$ of $\Ce$, when it exists), is called the {\em scaling zeta function}\label{scalingzf}\index{scaling zeta function, $\zeta_{\mathfrak S}$}\index{fractal zeta function!scaling zeta function, $\zeta_{\mathfrak S}$|textbf} of the relative fractal spray $(A,\O)$ and is denoted by $\zeta_{\mathfrak S}(s)$; hence, the factorization formula \eqref{dirichlete} can also be rewritten as follows:
\begin{equation}
\zeta_{A,\O}(s)=\zeta_{\mathfrak S}(s)\cdot\zeta_{\pa\O_0,\O_0}(s).
\end{equation}
\end{defn}

\medskip

\begin{theorem}\label{sprayzc}
Assume that a relative fractal spray $(A,\O)=\operatorname{Spray}(\Omega_0,\lambda,b)$, as introduced at the end of Definition~\ref{spray}, is such that $|\O_0|<\ty$, $\lambda\in(0,1)$,  $b\ge2$ is an integer, and 
 $b\g^N<1$.
 Then, for $\re s>\max\{\ov{\dim}_B(\partial\O_0,\O_0),\log_{1/\lambda}b\})$, we have
\begin{equation}\label{10151/4c}
\zeta_{A,\Omega}(s)=\frac{\zeta_{\pa\Omega_0,\Omega_0}(s)}{1-b\lambda^s},
\end{equation}
and the lower bound for $\re s$ is optimal.
In particular, it is equal to $D(\zeta_{A,\O})$, and hence,
$$
\ov\dim_B(A,\O)=D(\zeta_{A,\O})=\max\{\ov{\dim}_B(\partial\O_0,\O_0),\log_{1/\lambda}b\}.
$$

If, in addition, $\O_0$ is bounded and has a Lipschitz boundary $\pa\O_0$ that can be described by finitely many Lipschitz charts, then $\dim_B(A,\Omega)$ exists and 
\begin{equation}\label{10151/2}
\dim_B(A,\Omega)=\max\{N-1,\log_{1/\lambda} b\}.
\end{equation}
Therefore, if we assume that $\log_{1/\lambda}b\in(N-1,N)$, then the set $\dim_{PC}(A,\O)=\po_c(\zeta_{A,\O})$ of principal complex dimensions of the relative fractal spray $(A,\O)$ is given by
\begin{equation}\label{rfs_pcd}
\dim_{PC}(A,\O)=\log_{1/\lambda}b+\frac{2\pi}{\log(1/\lambda)}{\I}\Ze.
\end{equation}
\end{theorem}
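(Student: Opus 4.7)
The plan is to apply Theorem \ref{sprayz} directly with the choice $\lambda_j := \lambda^j$ and $b_j := b^j$. Then the associated scaling Dirichlet series is the geometric series $f(s) = \sum_{j=0}^{\ty}(b\lambda^s)^j$, which converges absolutely on $\{\re s > \log_{1/\lambda}b\}$ and sums to $(1-b\lambda^s)^{-1}$ there; consequently $D(f) = \log_{1/\lambda}b$, and the convergence condition $b\lambda^N < 1$ from Lemma \ref{sprayv} (ensuring $|\O| < \ty$) is exactly our standing hypothesis. Substituting into the factorization formula \eqref{dirichlete} produces \eqref{10151/4c}, while \eqref{dirichletd} gives
\[
\ov\dim_B(A,\O) \;=\; \max\{\ov\dim_B(\pa\O_0,\O_0),\,\log_{1/\lambda}b\}.
\]
The optimality of this lower bound, i.e.\ the identity $D(\zeta_{A,\O}) = \ov\dim_B(A,\O)$, is immediate from part $(b)$ of Theorem \ref{an_rel}.

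For \eqref{10151/2}, when $\O_0$ is bounded with a Lipschitz boundary described by finitely many Lipschitz charts, I would invoke the standard asymptotic $|(\pa\O_0)_t \cap \O_0| \sim c\,t$ as $t \to 0^+$ with $c > 0$, which follows from a routine computation in each local chart and a partition-of-unity argument. This shows that $(\pa\O_0,\O_0)$ is Minkowski nondegenerate with $\dim_B(\pa\O_0,\O_0) = N-1$, so both upper and lower relative box dimensions of $(A,\O)$ agree and equal $\max\{N-1,\log_{1/\lambda}b\}$, as claimed.

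For the principal complex dimensions in the case $D := \log_{1/\lambda}b \in (N-1,N)$, solving $b\lambda^s = 1$ shows that $(1 - b\lambda^s)^{-1}$ is meromorphic on all of $\Ce$ with simple poles exactly on the discrete set $D + \frac{2\pi}{\log(1/\lambda)}\I\Ze$. Since every such pole has real part $D > N-1$, and since $\zeta_{\pa\O_0,\O_0}$ is holomorphic on $\{\re s > N-1\}$ by part $(a)$ of Theorem \ref{an_rel} together with the previous paragraph, the product in \eqref{10151/4c} provides a meromorphic extension of $\zeta_{A,\O}$ across the critical line, and the multiset of principal complex dimensions of $(A,\O)$ is \emph{contained} in $D + \frac{2\pi}{\log(1/\lambda)}\I\Ze$.

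The main obstacle will be promoting this inclusion to an equality, i.e.\ ruling out cancellations of poles of the scaling factor by potential zeros of $\zeta_{\pa\O_0,\O_0}$ on the line $\{\re s = D\}$. For $k=0$ this is automatic, since $\zeta_{\pa\O_0,\O_0}(D)$ is the integral of a strictly positive function (as $D > N-1$ is real). For $k \ne 0$, I would use the functional equation \eqref{rel_equality} together with the refined asymptotic $|(\pa\O_0)_t \cap \O_0| = c\,t + O(t^\alpha)$ with some $\alpha > 1$ (valid for Lipschitz boundaries) to write $\zeta_{\pa\O_0,\O_0}(s) = \delta^{s-N}|(\pa\O_0)_\delta\cap\O_0| + (N-s)\tilde\zeta_{\pa\O_0,\O_0}(s)$ with the tube zeta function explicit enough to bound its oscillation on the vertical line $\{\re s = D\}$ and thereby preclude zeros there (alternatively, one may argue that the leading term $\delta^{s-N}|(\pa\O_0)_\delta \cap \O_0|$ dominates for $\delta$ small enough). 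This non-vanishing step is the only delicate point; the rest is routine given Theorems \ref{sprayz} and \ref{an_rel}.
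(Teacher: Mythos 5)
Your overall structure mirrors the paper's proof closely: the first two paragraphs reproduce exactly what the authors do, namely apply Theorem~\ref{sprayz} with $\lambda_j=\lambda^j$ and $b_j=b^j$, sum the geometric scaling series, and reduce \eqref{10151/2} to the statement $\dim_B(\pa\O_0,\O_0)=N-1$. For the latter point the paper cites \cite[Lemma 3]{zuzup2} together with the finite stability of upper box dimension, whereas you propose to re-derive a tube asymptotic; that is a legitimate alternative, but your asserted asymptotic $|(\pa\O_0)_t\cap\O_0|\sim c\,t$ (one-sided Minkowski \emph{measurability}) is strictly stronger than what is true for a general Lipschitz boundary, and stronger than what is needed. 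What actually holds, and what suffices, is the two-sided estimate $c_1 t\le |(\pa\O_0)_t\cap\O_0|\le c_2 t$ for small $t$, which already forces $\dim_B(\pa\O_0,\O_0)=N-1$. You should weaken that claim.

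The more significant issue is your final paragraph on \eqref{rfs_pcd}. You are right that the assertion $\dim_{PC}(A,\O)=\log_{1/\lambda}b+\frac{2\pi}{\log(1/\lambda)}\I\Ze$, rather than a mere inclusion, requires ruling out cancellation of the poles of $(1-b\lambda^s)^{-1}$ by hypothetical zeros of $\zeta_{\pa\O_0,\O_0}$ on the line $\{\re s=\log_{1/\lambda}b\}$; this is precisely the caveat the authors state explicitly in the more general Theorem~\ref{ss_spray_zeta}, and the paper's own proof of Theorem~\ref{sprayzc} in fact says nothing about it either. However, your proposed resolution does not work as written. In the functional equation \eqref{rel_equality}, both the term $\delta^{s-N}|(\pa\O_0)_\delta\cap\O_0|$ and the tube integral $\tilde\zeta$ depend on $\delta$, and the left-hand side $\zeta_{\pa\O_0,(\pa\O_0)_\delta\cap\O_0}$ changes with $\delta$ too; on the line $\re s=D\in(N-1,N)$ one finds $|\delta^{s-N}|(\pa\O_0)_\delta\cap\O_0||\asymp\delta^{D-N+1}\to 0$ as $\delta\to 0^+$, so this supposedly ``leading'' term tends to zero and does not dominate anything. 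A short expansion actually gives $\zeta_{\pa\O_0,(\pa\O_0)_\delta\cap\O_0}(s)=\frac{c\,\delta^{s-N+1}}{s-(N-1)}+O(\delta^{s-N+1+\alpha})$, with the error comparable to the main term as $\delta$ varies, so no choice of $\delta$ forces nonvanishing of $\zeta_{\pa\O_0,\O_0}$ at a given point $s_k$ off the real axis. The nonvanishing at $k=0$ is fine for the reason you give. For $k\ne 0$ your argument has a genuine gap; if you want to fill it you must either impose further hypotheses on $\O_0$ (e.g.\ that its distance zeta function is explicitly computable and visibly zero-free, as in the paper's concrete examples) or replace the equality in \eqref{rfs_pcd} by an inclusion as in~\eqref{ss_spray_po}.
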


\begin{proof}  
If $\lambda_j:=\lambda$ and $b_j:=b^{j}$ for all $j\in\eN_0$, with $b\g^N<1$, then
$\sum_{j=0}^\ty b^{j}\g^{jN}=\frac{1}{1-b\lambda^N}<\ty$; so that $|\O|<\ty$, as desired. Identity (\ref{10151/4c}) follows immediately from (\ref{dirichlete}),
using the fact that for $\O_0$ with a Lipschitz boundary satisfying the stated assumption, we have $\dim_B(\pa\O_0,\O_0)=\dim_B\pa\O_0=N-1$ (this follows, for example, from \cite[Lemma~3]{zuzup2}; see also [Lap1]), together with the property of finite stability of the upper box dimension; see, e.g., \cite[p.\ 44]{falc}.
%In particular, \eqref{10151/2c} follows because if $\O_0$ is open and $\partial\O_0$ is Lipschitz, then $\ov{\dim}_B(\pa\O_0,\O_0)=N-1$ (in other words, the Lipschitz manifold is a hypersurface in %$\eR^N$ of codimension one). See, e.g., \cite{falc}. 
\end{proof}

\begin{example}
Here, we construct a relative fractal spray 
$$
(A,\Omega)=\operatorname{Spray}(\Omega_0,(\lambda_j)_{j\ge1},(b_j)_{\ge1})
$$ 
in $\eR^2$ such that $|\O_0|<\ty$, $b_j\equiv1$, $\sum_{j=1}^\ty \g_j^2<\ty$ 
(hence, $|\O|<\ty$ by Lemma~\ref{sprayv}), and such that the base set $\O_0$ is {\em unbounded}, as well as its boundary
$\pa\O_0$. Let $\O_0$ be any unbounded Borel set of finite $2$-dimensional Lebesgue
measure,
such that both $\O_0$ and $\pa\O_0$ are unbounded, and $\O_0$ is contained in the horizontal strip 
$$
V_1:=\{(x,y)\in\eR^2:0<y<1\}.
$$ 
We can explicitly construct such a set as follows: 
$$
\O_0:=\{(x,y)\in\eR^2:0<y<x^{-\a},\,\,x>1\},
$$ 
where $\a>1$, so that $|\O_0|<\ty$. 

Furthermore, let $(V_j)_{j\ge 1}$ be a countable, disjoint sequence of horizontal strips in the plane, defined for each $j\in\eN$ by
$V_j=V_1+(0,j)$, the Minkowski sum of $V_j$ and $(0,j)$. Let $(\g_j)_{j\ge1}$ be a sequence of real numbers in $(0,1)$ such that $\sum_{j=1}^\ty \g_j^2<\ty$.
It is clear that for any $\g_j$, $j\ge2$, the set $\g_j\O_0$ is congruent (up to a rigid motion) to the subset $\O_j:=\g_j\O_0+(0,j)$ of $V_j$.
 Then, the fractal spray 
$$
(A,\O)=\bigcup_{j=1}^\ty(\pa\O_j,\O_j)
$$ 
has the desired properties.
\end{example}

\section{Relative Sierpi\'nski sprays and their complex dimensions}\label{relative_other}
We provide two examples of relative fractal sprays, dealing with the {\em relative Sierpi\'nski gasket} and the {\em relative Sierpi\'nski carpet}, respectively. In the sequel, it will be useful to introduce the following definition.

\begin{defn}\label{congruentd} We say that {\em two given relative fractal drums $(A_1,\O_1)$ and $(A_2,\O_2)$ in $\eR^N$ are 
congruent}\index{congruent RFDs} if there exists an
isometry $f:\eR^N\to\eR^N$ such that $A_2=f(A_1)$ and $\O_2=f(\O_1)$. 
\end{defn}

It is easy to see that the congruence of RFDs is an equivalence relation.  

The following lemma states, in particular, that any two congruent RFDs have equal distance zeta functions. We leave its proof as a simple exercise for the interested reader.

\begin{lemma}\label{congruentl}
Let $(A_1,\O_1)$ and $(A_2,\O_2)$ be two congruent RFDs in $\eR^N$. Then, for any $r\in\eR$, we have
\begin{equation}
\M_*^r(A_1,\O_1)=\M_*^r(A_2,\O_2),\q\M^{*r}(A_1,\O_1)=\M^{*r}(A_2,\O_2)
\end{equation}
and
\begin{equation}\label{ovD12}
\underline\dim_B(A_1,\O_1)=\underline\dim_B(A_2,\O_2),\q\ov\dim_B(A_1,\O_1)=\ov\dim_B(A_2,\O_2)=:\ov D.
\end{equation}
Furthermore, 
\begin{equation}\label{zeta12c}
\zeta_{A_1,\O_1}(s)=\zeta_{A_2,\O_2}(s)
\end{equation}
for any $s\in\Ce$ with $\re s>\ov\dim_B(A_1,\O_1)$.
\end{lemma}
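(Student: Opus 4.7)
The key observation is that isometries of $\eR^N$ simultaneously preserve the Euclidean distance function and the $N$-dimensional Lebesgue measure; the whole lemma reduces to this principle. Let $f:\eR^N\to\eR^N$ be an isometry such that $A_2=f(A_1)$ and $\O_2=f(\O_1)$.

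First, I would record two elementary facts that will drive the rest of the argument.
(i) For every $t>0$, $(f(A_1))_t=f((A_1)_t)$; this follows from the identity $d(f(x),f(A_1))=d(x,A_1)$, valid for all $x\in\eR^N$ because $f$ preserves Euclidean distances.
(ii) For every Lebesgue measurable set $E\st\eR^N$, $|f(E)|_N=|E|_N$; this is a standard consequence of the change-of-variables formula, since any isometry of $\eR^N$ is an affine map whose linear part is orthogonal, hence with Jacobian determinant of absolute value~$1$.

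Combining (i) with $\O_2=f(\O_1)$, I obtain $(A_2)_t\cap\O_2=f\bigl((A_1)_t\cap\O_1\bigr)$ for every $t>0$, and then (ii) yields
\begin{equation*}
|(A_2)_t\cap\O_2|=|(A_1)_t\cap\O_1|, \qquad \text{for all }t>0.
\end{equation*}
Dividing both sides by $t^{N-r}$ and taking, successively, the $\liminf$ and $\limsup$ as $t\to 0^+$ (see the definitions in \eqref{minkrel}), I immediately conclude that $\M_*^r(A_1,\O_1)=\M_*^r(A_2,\O_2)$ and $\M^{*r}(A_1,\O_1)=\M^{*r}(A_2,\O_2)$ for every $r\in\eR$. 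The equality \eqref{ovD12} of both the upper and lower relative box dimensions then follows at once from the variational characterizations \eqref{dimrell} and \eqref{dimrelu}.

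For the equality \eqref{zeta12c} of the distance zeta functions, I would simply apply the change of variable $y=f(x)$ in the integral defining $\zeta_{A_2,\O_2}(s)$: on the half-plane $\{\re s>\ov\dim_B(A_1,\O_1)=\ov\dim_B(A_2,\O_2)\}$ (where both integrals are absolutely convergent by part~(a) of Theorem~\ref{an_rel}), we have
\begin{equation*}
\zeta_{A_2,\O_2}(s)=\int_{f(\O_1)} d(y,f(A_1))^{s-N}\,\di y=\int_{\O_1} d(f(x),f(A_1))^{s-N}\,\di x=\zeta_{A_1,\O_1}(s),
\end{equation*}
where in the second equality I have used (ii) (the absolute value of the Jacobian of $f$ is~$1$), and in the third equality I have used $d(f(x),f(A_1))=d(x,A_1)$. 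There is no substantive obstacle here; the only point requiring a moment's care is the justification of the change of variables, which is routine since $f$ is a $C^\ty$-diffeomorphism of $\eR^N$ onto itself with constant unit Jacobian determinant (in absolute value), and the integrand is nonnegative for real $s$ large, so Tonelli's theorem legitimizes the manipulation before extending by holomorphy to the full half-plane $\{\re s>\ov\dim_B(A_1,\O_1)\}$.
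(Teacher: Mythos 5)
Your proof is correct and is clearly the argument the authors intend, since the paper explicitly leaves this lemma ``as a simple exercise for the interested reader.'' Observing that an isometry $f$ satisfies $d(f(x),f(A_1))=d(x,A_1)$ and preserves Lebesgue measure, so that $(A_2)_t\cap\O_2=f\bigl((A_1)_t\cap\O_1\bigr)$ with equal measure for every $t>0$, yields at once the equalities of Minkowski contents and box dimensions, and the change of variables $y=f(x)$ gives \eqref{zeta12c} on the common half-plane of absolute convergence; every step is sound.
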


It follows from \eqref{zeta12c} that under the hypotheses of Lemma \ref{congruentl} and given a connected open set $U\stq\Ce$ (containing the critical line $\{\re s=\ov D\}$ of the RFDs $(A_1,\O_1)$ and $(A_2,\O_2)$; see Equation~\eqref{ovD12} above),
$\zeta_{A_1,\O_1}$ and $\zeta_{A_2,\O_2}$ have the exact same mermomorphic continuation to $U$, and therefore the same poles in $U$
and associated residues (or more generally, principal parts in the case of multiple poles). In particular, two congruent RFDs have the same (visible) complex dimensions.

\begin{figure}
\psset{unit=1.3}
\begin{pspicture}(-2.6,-1)(6,4)
\pstTriangle[PointName=none](-2,0){A}(2,0){B}(0,3.46){C}
%\psset{PointName=none}
\pstMiddleAB[PointName=none]{A}{B}{C'}
\pstMiddleAB[PointName=none]{C}{A}{B'}
\pstMiddleAB[PointName=none]{B}{C}{A'}
%\psset{PointName=none}
\pstLineAB[linestyle=dashed]{A}{A'}
\pstLineAB[linestyle=dashed]{B}{B'}
\pstLineAB[linestyle=dashed]{C}{C'}
%\psset{PointName=none}
\put(-2.1,-0.4){\small$0$}
\put(-0.2,-0.4){\small$1/4$}
\put(1.9,-0.4){\small$1/2$}
\put(1,2){\small$(\pa\O_0,\O_0)$}
\put(-0.5,2){\small$\O_0$}
\put(-1.9,1.3){\small$\pa\O_0$}

\pstGeonode[PointName=none](3,0){D}(5,0){E}(5,1.153){F}
%\psset{PointName=none}
\pstLineAB{D}{E}
\pstLineAB[linestyle=dashed]{E}{F}
\pstLineAB[linestyle=dashed]{D}{F}
\put(2.9,-0.4){\small$0$}
\put(4.9,-0.4){\small$1/4$}
\put(3.5,1.1){\small$y=x/\sqrt3$}
\put(5,0){\vector(1,0){0.5}}
\put(3,0){\vector(0,1){1.7}}
\put(5.5,-0.3){\small$x$}
\put(2.8,1.7){\small$y$}
\put(4,-0.4){\small$A'$}
\put(4.4,0.4){\small$\O'$}
\put(5.2,0.6){\small$(A',\O')$}
\end{pspicture}
\caption{\small On the left is depicted the base relative fractal drum $(\pa\O_0,\O_0)$ of the relative Sierpi\'nski gasket, where $\O_0$ is the associated (open) equilateral triangle with sides $1/2$. It can be viewed as the
(disjoint) union of six RFDs, all of which are congruent to the relative fractal drum $(A',\O')$ on the right.
This figure explains Equation \eqref{sierpinski6} appearing in Example \ref{6.15}; see Lemma \ref{congruentl}.}
\label{sierpinski_triangle}
\end{figure}
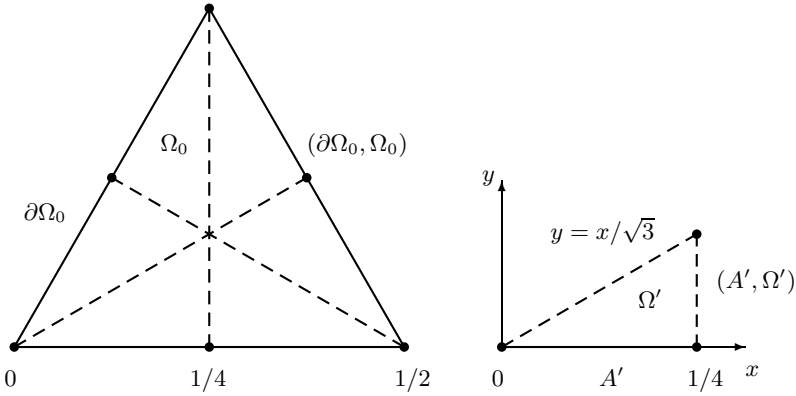

\begin{example} ({\em Relative Sierpi\'nski gasket}).\label{6.15}\index{relative Sierpi\'nski!gasket|textbf}
Let $A$ be the Sierpi\'nski gasket in $\eR^2$, the outer boundary of which is an equilateral triangle with unit sides. 
Consider the countable family of all open triangles in the standard construction of the gasket. (Namely, these are the open triangles which are deleted at each stage of the construction.)
If $\O$ is the largest open triangle (with unit sides),
then the {\em relative Sierpi\'nski gasket}\index{relative Sierpi\'nski gasket|textbf} is defined as the ordered pair $(A,\O)$.
The distance zeta function $\zeta_{A,\O}$ of the relative Sierpi\'nski gasket $(A,\O)$ can be
computed as the distance zeta function of the following relative fractal spray (see the end of Definition \ref{spray}):
$$
\operatorname{Spray}(\O_0, \g=1/2,b=3),
$$
where $\O_0$ is the first deleted open triangle with sides $1/2$. It suffices to apply Equation \eqref{10151/4c} from Theorem \ref{sprayzc}.
Decomposing $\O_0$ into the union of six congruent right triangles (determined by the heights of the triangle $\O_0$, see Figure \ref{sierpinski_triangle}) with disjoint interiors, we have that 
\begin{equation}\label{sierpinski6}
\begin{aligned}
\zeta_{\pa\O_0,\O_0}(s)&=6\,\zeta_{A',\O'}(s)=6\,\int_{\O'}d((x,y),A')^{s-2}\D x\,\D y\\
&=6\int_0^{1/4}\D x\int_0^{x/\sqrt3} y^{s-2}\D y=6\frac{(\sqrt3)^{1-s}2^{-s}}{s(s-1)},
\end{aligned}
\end{equation}
for all $s\in\Ce$ with $\re s>1$.
Using Equation \eqref{10151/4c} and appealing to Lemma \ref{congruentl}, we deduce that the distance zeta function of the relative Sierpi\'nski gasket $(A,\O)$ satisfies
\begin{equation}\label{gaszeta}
\zeta_{A,\O}(s)=\frac{6(\sqrt3)^{1-s}2^{-s}}{s(s-1)(1-3\cdot2^{-s})}\sim\frac1{1-3\cdot2^{-s}},
\end{equation}
where the equality holds for all $s\in\Ce$ with $\re s>\log_23$ and the equivalence $\sim$ holds in the sense of Definition \ref{equ}.
Therefore, by the principle of analytic continuation, it follows that $\zeta_{A,\O}$ 
has a meromorphic extension to the entire complex plane, given by the same closed form as in Equation \eqref{gaszeta}.
More specifically,
\begin{equation}\label{gaszeta2}
\zeta_{A,\O}(s)=\frac{6(\sqrt3)^{1-s}2^{-s}}{s(s-1)(1-3\cdot2^{-s})},\q\mbox{for all}\q s\in\Ce.
\end{equation}
Hence, the set of all of the complex dimensions (in $\Ce$) of the relative Sierpi\'nski gasket is given by
\begin{equation}\label{gaszeta1}
\po(\zeta_{A,\Omega})=\Big(\log_23+\frac{2\pi}{\log2}{\I}\Ze\Big)\cup\{0,1\}.
\end{equation}
%It follows that the set of principal complex dimensions of the relative Sierpi\'nski gasket $(A,\O)$ is given by
%\begin{equation}\label{gaszetapc}
%\dim_{PC}(A,\O)=\log_23+\frac{2\pi}{\log2}{\I}\Ze.
%\end{equation}
Each of these complex dimensions in \eqref{gaszeta1} is simple (i.e., is a simple pole of $\zeta_{A,\O}$). Note that here, $\{0,1\}$ can be interpreted as the set of {\em integer dimensions} of $A$, in the sense of [{LapPe2--3}] and [{LapPeWi1}].
In particular, we deduce from \eqref{gaszeta1} that $D(\zeta_{A,\O})=\log_23$, and we thus recover a well-known result. Namely, the set 
$\dim_{PC} (A,\O):=\po_c(\zeta_{A,\Omega})$ of principal complex dimensions of the relative Sierpi\'nski gasket $(A,\O)$ is given by 
\begin{equation}\label{gaszetapc}
\dim_{PC} (A,\O)=\log_23+\mathbf{p}{\I}\Ze,
\end{equation} 
where $\mathbf{p}={2\pi}/{\log 2}$ is the oscillatory period\index{oscillatory period!of the Sierpi\'nski gasket} of the Sierpi\'nski gasket; see \cite[\S6.6.1]{lapidusfrank12}. 
\medskip

%\begin{remark}\label{6.145}
Note, however, that in 
[{Lap-vFr1--3}], %\cite{lapidusfrank, lapidusfrank06, lapidusfrank12}, 
the complex dimensions are obtained in a completely different manner (via an associated spectral zeta function) and not geometrically. 
In addition, all of the complex dimensions of the Sierpi\'nski gasket $A$ are shown to be principal (that is, to be located on the vertical line 
$\re s=\log_23=\dim_BA$), a conclusion which is slightly different from (\ref{gaszeta1}) above.\footnote{Analogously, for a fractal string RFD $(A,\O)$, we have $\po(\zeta_{A,\O})=\po(\zeta_{\mathcal{L}})\cup\{0\}$; here, of course, $N=1$ instead of $N=2$.}
We also refer to \cite{ChrIvLa} and [{LapSar}], as well as to
[{LapPe2--3}] % \cite{lappe2, lappe3} 
and
%\cite{lappewi1,LaPeWi2}
[{LapPeWi1--2}], for different approaches (via a spectral zeta function associated to a suitable geometric Dirac operator and via a 
self-similar tiling associated with $A$, respectively) leading to the same conclusion.
%\end{remark}

In light of \eqref{gaszeta2}, the residue of the distance zeta function $\zeta_{A,\O}$ of the relative Sierpi\'nski gasket computed at any principal pole $s_k:=\log_23+\mathbf{p}k\I$, $k\in\Ze$, is given by
$$
\res(\zeta_{A,\O},s_k)=\frac{6(\sqrt3)^{1-s_k}}{2^{s_k}(\log2)s_k(s_k-1)}.
$$
In particular,
$$
|\res(\zeta_{A,\O},s_k)|\sim \frac{6(\sqrt3)^{1-D}}{D\,2^D\log2}\,k^{-2}\q\mbox{as\q$k\to\pm\ty$,}
$$
where $D:=\log_23$.
\end{example}\label{6.15end}

The following proposition shows that the relative Sierpi\'nski gasket can be viewed as the relative fractal spray generated by the relative fractal drum $(A',\O')$ appearing on the right-hand side of Figure \ref{sierpinski_triangle}.

\begin{prop}[{Relative Sierpi\'nski gasket}]\label{s_gasket} Let $(A',\O')$ be the relative fractal drum defined on the right part of  Figure \ref{sierpinski_triangle}. Let $(A,\O)$ be the relative fractal spray generated by the base relative fractal drum $(A',\O')$, with scaling ratio $\g=1/2$ and with multiplicities $m_k=6\cdot 3^{k-1}$, for any positive integer~$k$:
\begin{equation}
(A,\O)=\operatorname{Spray}((A',\O'),\,\,\g=1/2,\,\,m_k=6\cdot 3^{k-1}\,\,\mbox{\rm for $k\in\eN$}).
\end{equation}
 $($Note that we assume here that the base relative fractal drum $(A',\O')$ has a multiplicity equal to $8$.$)$ Then, the relative distance zeta function of the relative fractal spray $(A,\O)$ coincides with the relative distance zeta function of the relative Sierpi\'nski gasket; see Equation \eqref{gaszeta2}.
\end{prop}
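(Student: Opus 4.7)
The plan is to compute $\zeta_{A,\Omega}$ directly from the spray data using the factorization formula of Theorem~\ref{sprayz}, and to verify that the resulting meromorphic function coincides with the expression~\eqref{gaszeta2} already derived in Example~\ref{6.15} for the relative Sierpi\'nski gasket. Since two RFDs having identical distance zeta functions share all of the complex-dimensional information relevant here, this identification suffices.

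First, I would reindex the spray into the form of Definition~\ref{spray} by setting $\lambda_j := 2^{-j}$ and $b_j := 6\cdot 3^{j}$ for $j\in\eN_0$, so that the present $m_k$ corresponds to $b_{k-1}$. The associated scaling Dirichlet series is then
\begin{equation*}
\zeta_{\mathfrak{S}}(s) \,=\, \sum_{j=0}^{\infty} b_j \lambda_j^{s} \,=\, 6\sum_{j=0}^{\infty} 3^{j}\, 2^{-js} \,=\, \frac{6}{1-3\cdot 2^{-s}},
\end{equation*}
which converges absolutely in the half-plane $\{\re s>\log_2 3\}$ and extends meromorphically to all of $\Ce$. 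The base zeta function $\zeta_{A',\Omega'}$ was already computed in~\eqref{sierpinski6} and equals $(\sqrt 3)^{1-s}2^{-s}/(s(s-1))$. Invoking Theorem~\ref{sprayz} and Definition~\ref{zetas}, multiplying the two factors yields
\begin{equation*}
\zeta_{A,\Omega}(s) \,=\, \zeta_{\mathfrak{S}}(s)\cdot\zeta_{A',\Omega'}(s) \,=\, \frac{6(\sqrt 3)^{1-s}2^{-s}}{s(s-1)(1-3\cdot 2^{-s})},
\end{equation*}
which is identical to~\eqref{gaszeta2}, as desired.

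The only remaining content is the geometric verification that the spray described in the statement does indeed realize the relative Sierpi\'nski gasket $(A,\Omega)$ of Example~\ref{6.15}. At stage $k\in\eN$ of the Sierpi\'nski construction, one removes $3^{k-1}$ disjoint open equilateral triangles of side $2^{-k}$; the three altitudes of each such deleted triangle partition it into six congruent right subtriangles, each an isometric copy of the scaled base $\gamma^{k-1}\Omega'$ with $\gamma^{k-1}=2^{-(k-1)}$. This yields precisely $6\cdot 3^{k-1}=m_k$ scaled-isometric copies of $(A',\Omega')$ at scale $\gamma^{k-1}$ at stage $k$, which matches the spray data. Moreover, since $A'$ lies on a side of the ambient deleted triangle (itself contained in $A$) and the altitudes separate the six subpieces, for any $x$ in one of the subpieces one has $d(x,A)=d(x,\gamma^{k-1}A')$; this is condition~\eqref{dj} of Theorem~\ref{unionz}, so countable additivity applies, and Lemma~\ref{congruentl} lets me replace each isometric copy by $\gamma^{k-1}(A',\Omega')$ before extracting the scaling via Theorem~\ref{scaling}.

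The only genuine obstacle, if any, is the geometric bookkeeping: I need to confirm rigorously that condition~\eqref{dj} holds for the chosen decomposition (so that distances to the full gasket $A$ reduce to distances to the local boundary piece $\gamma^{k-1}A'$), and to check carefully that the six right subtriangles of each deleted triangle are truly congruent to $(A',\Omega')$ up to orientation-preserving or orientation-reversing isometries --- either case being harmless, since Lemma~\ref{congruentl} yields the same zeta function. Once this is verified, the factorization formula delivers the conclusion immediately.
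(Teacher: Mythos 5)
Your proposal is correct and follows the route the paper implicitly takes (Proposition~\ref{s_gasket} is stated without proof, but the argument is exactly the one underlying the computation in Example~\ref{6.15}): reindex the spray data, form the scaling Dirichlet series $\zeta_{\mathfrak{S}}(s)=6/(1-3\cdot 2^{-s})$, multiply by $\zeta_{A',\O'}$ from~\eqref{sierpinski6} via the factorization formula of Theorem~\ref{sprayz}, and recover~\eqref{gaszeta2}. You are also right to single out condition~\eqref{dj} as the point needing attention here, since Theorem~\ref{sprayz} is stated for base RFDs of the form $(\pa\O_0,\O_0)$ while $A'$ is only one side of the ambient deleted triangle; the observation that the altitudes partition each deleted triangle into nearest-side Voronoi cells, so that $d(x,A)=d(x,A')$ on each piece, is precisely what makes Lemma~\ref{congruentl} and Theorem~\ref{unionz} applicable and is what the paper is appealing to silently through Example~\ref{6.15}. (As a side remark, the parenthetical ``multiplicity equal to $8$'' in the statement appears to be a slip carried over from the Sierpi\'nski carpet analogue; the correct value is $6$, which is what you use.)
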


\begin{example} ({\em Inhomogeneous Sierpi\'nski $N$-gasket RFD}).\label{Ngasket}\index{inhomogeneous!Sierpi\'nski $N$-gasket RFD}\index{Sierpi\'nski $N$-gasket RFD!inhomogeneous}
The usual Sierpi\'nski gasket is contained in the unit triangle in the plane. Its analog in $\eR^3$, which we call the {\em inhomogeneous Sierpi\'nski $3$-gasket} or {\em inhomogeneous tetrahedral gasket},\index{tetrahedral inhomogeneous gasket, $A_3$|textbf} and denote by $A_3$, is obtained by deleting the middle open octahedron (from the initial compact, convex unit tetrahedron), defined as the interior of the convex hull of the midpoints of each of the six edges of the initial tetrahedron (thus, four sub-tetrahedrons are left after the first step), and so on.
Such sets, along with their higher-dimensional counterparts, are analogous to, but not identical with, the (homogeneous) self-similar $N$-gaskets discussed, for example, in \cite{KiLa}.

\begin{figure}[t]
\begin{center}
\includegraphics[width=9cm]{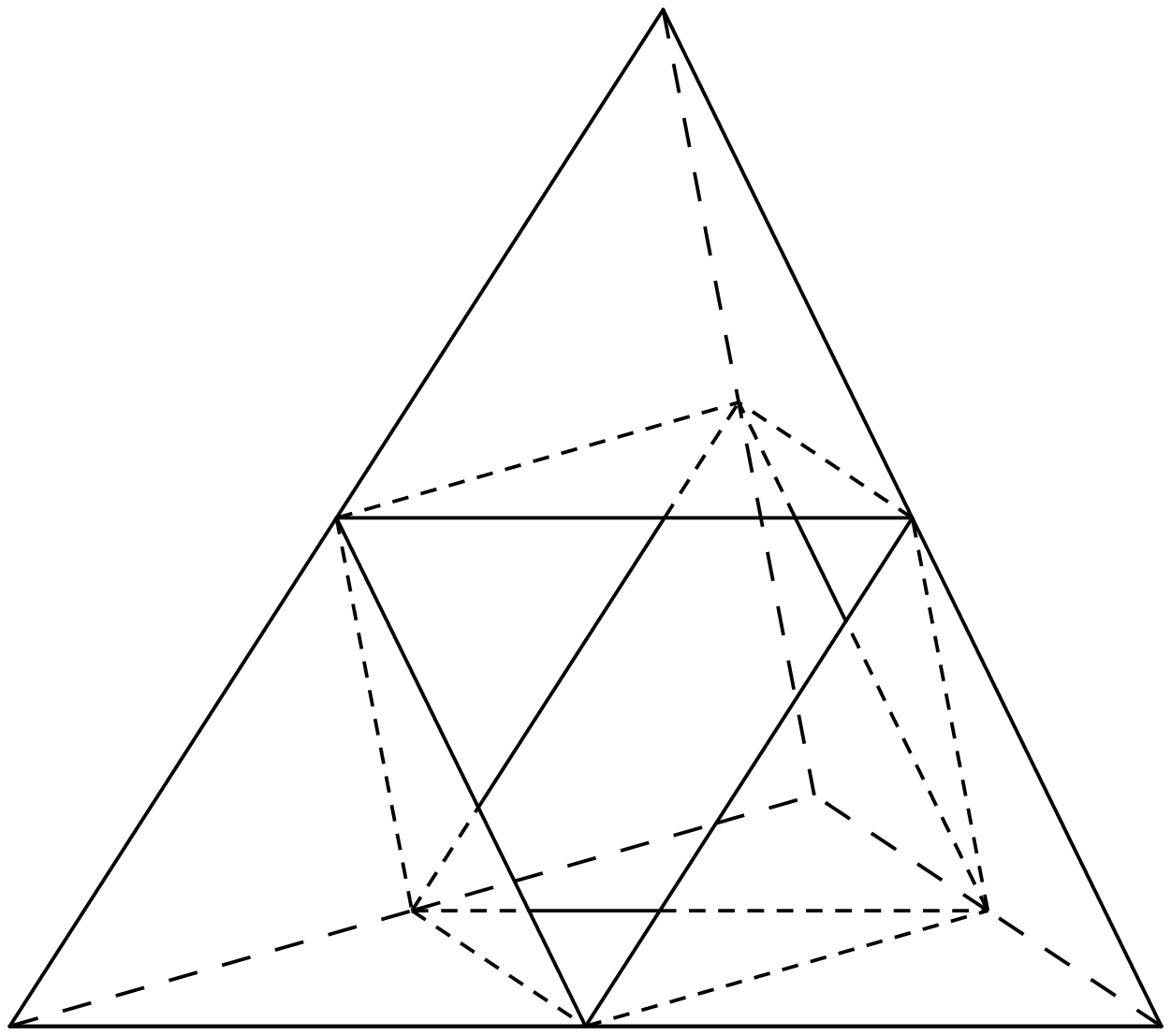}
\caption{\small The open octahedron $\O_{3,0}$ inscribed into the largest (compact) tetrahedron $\O_3$, surrounded with $4$ smaller (compact) tetrahedra scaled by the factor $1/2$. Each of them contains analogous scaled open octahedra, etc. The countable family of all open octahedra (viewed jointly with their boundaries) constitutes the tetrahedral gasket RFD or the Sierpi\'nski $3$-gasket RFD. The complement of the union of all open octahedra, with respect to the initial tetrahedron $\O_3$, is called the {\em inhomogeneous Sierpi\'nski $3$-gasket}.\newline
\hbox{}\q Unlike the classic Sierpi\'nski $3$-gasket (also known as the Sierpi\'nski pyramid or tetrahedron) $S_3$, which is a (homogeneous or standard) self-similar set in $\eR^3$ and satisfies the usual fixed point equation, $S=\cup_{j=1}^4\Phi_j(S)$, where $\{\Phi_j\}_{j=1}^4$ are suitable contractive similitudes of $\eR^3$ with respective fixed points $\{P_j\}_{j=1}^4$ and scaling ratios $\{r_j\}_{j=1}^4$ of common value $1/2$, the inhomogeneous Sierpi\'nski $3$-gasket $A_3$ is {\em not} a self-similar set.
Instead, it is an {\em inhomogeneous self-similar set} (in the sense of \cite{bd}, see also Equation \eqref{KB} below and the discussion surrounding it).
More specifically, $A:=A_3$ satisfies the following {\em inhomogeneous}\index{inhomogeneous!fixed point equation} 
fixed point equation (of which it is the unique solution in the class of all nonempty compact subsets of $\eR^3$),
$
A=\bigcup_{j=1}^4\Phi_j(A)\cup B,
$
where $B$ is the boundary of the first octahedron $\Omega_{3,0}$ (in fact, $B$ can simply be taken as the union of four middle triangles on the boundary of the outer tetrahedron~$\O_3$).}\label{octahedron}
\end{center}\index{inhomogeneous!Sierpi\'nski $3$-gasket}\index{inhomogeneous!self-similar set|textbf}\index{fixed point equation}
\end{figure}

More generally, for any integer $N\ge2$, the {\em inhomogeneous Sierpi\'nski $N$-gasket} $A_N$,\label{inh-gasket} contained in $\eR^N$, can be defined as follows. 
Let $V_N:=\{P_1,P_2,\dots,P_{N+1}\}$ be a set of $N$ points in $\eR^N$ such that the mutual distance of any two points from the set is equal to $1$.

The set $V_N$, where $N\ge2$, with the indicated property, can be constructed inductively as follows. For $N=2$, we take $V_2$ to be the set of vertices of any unit triangle in $\eR^2$.
We then reason by induction. Given $N\ge2$, we assume that
 the set $V_N$ of $N+1$ points in $R^N$ has been constructed. Note that the set $V_N$ is contained in a sphere, whose center is denoted by $O$. Let us consider the line of $\eR^{N+1}=\eR^N\times\eR$ through the point $O$ and perpendicular to the hyperplane $\eR^N=\eR^N\times\{0\}$ in $\eR^{N+1}$. There exists a unique point $P_{N+2}$ in the half-plane $\{x_{N+1}>0\}$ of $\eR^{N+1}$, which is a unit distance from all of the $N$ vertices of $V_N$. (Here, we identify $V_N$ with $V_N\times\{0\}\st\eR^{N+1}$.) We then define $V_{N+1}$ by $V_{N+1}:=V_N\cup\{P_{N+2}\}$.

Let us define $\O_N$ as the convex hull of the set $V_N$. As usual, we call it the {\em $N$-simplex}.\index{simplex@$N$-simplex|textbf} Let $\O_{N,0}$, called the {\em $N$-plex},\index{plex@$N$-plex|textbf} be the open set defined as the interior of the convex hull of the set of midpoints of all of the $\binom {N+1}2$ edges of the $N$-simplex $\O_N$. [For example, for $N=2$, the set $\O_{2,0}$ (that is, the $2$-plex) is an open equilateral triangle in $\eR^2$ of side lengths equal to $1/2$, while for $N=3$, the set $\O_{3,0}$ (that is, the $3$-plex) is an open octahedron in $\eR^3$ of side lengths equal to $1/2$.] The set $\ov\O_N\setminus\O_{N,0}$ is equal to the union of $N+1$ congruent $N$-simplices with disjoint interiors, having all of their side lengths equal to $1/2$. This is the first step of the construction. We proceed analogously with each of the $N+1$ compact $N$-simplices. The compact set $A_N$ obtained in this way is called the {\em inhomogeneous Sierpi\'nski $N$-gasket}.\index{inhomogeneous!Sierpi\'nski $N$-gasket|textbf} It can be identified with the relative fractal spray $(A_N,\O_N)$ in $\eR^N$, called the {\em inhomogeneous Sierpi\'nski $N$-gasket RFD}\index{Sierpi\'nski $N$-gasket RFD!inhomogeneous|textbf} (and, in short, the {\em inhomogeneous $N$-gasket\label{inh-gasketRFD} RFD}),\index{inhomogeneous!$N$-gasket RFD|textbf} defined by
\begin{equation}
(A_N,\O_N)=\operatorname{Spray}\,((\pa\O_{N,0},\O_{N,0}),\g=1/2,b=N+1).
\end{equation}
(See the end of Definition~\ref{spray}.)
According to Theorem \ref{sprayzc}, we have
\begin{equation}\label{Ngasketzf}
\zeta_{A_N,\O_N}(s)=\zeta_{\mathfrak S}(s)\cdot\zeta_{\pa\O_{N,0},\O_{N,0}}(s),
\end{equation}
where the {\em scaling zeta function}\index{scaling zeta function, $\zeta_{\mathfrak S}$}\index{fractal zeta function!scaling zeta function, $\zeta_{\mathfrak S}$} $\zeta_{\mathfrak S}(s)$ of the $N$-gasket RFD is given
for all $s\in\Ce$ such that $\re s>\log_2(N+1)$ by
\begin{equation}\label{Ngasketzf2}
%\zeta_{\mathfrak S}(s)=\sum_{k=1}^\ty(N+1)^{k-1}(2^{-k})^s=\frac{2^{-s}}{1-(N+1)2^{-s}}=\frac1{2^s-(N+1)}.
\zeta_{\mathfrak S}(s)=\sum_{k=0}^\ty(N+1)^{k}(2^{-k})^s=\frac1{1-(N+1)2^{-s}}.
\end{equation}
Upon analytic continuation, it follows that $\zeta_{\mathfrak S}$ can be meromorphically continued to the whole of $\Ce$ and is given by
\begin{equation}\label{Ngasketzf3}
\zeta_{\mathfrak S}(s)=\frac1{1-(N+1)2^{-s}},\mbox{\q for all $s\in\Ce$.}
\end{equation}
Since (by \eqref{Ngasketzf3}) the set of poles of $\zeta_{\mathfrak S}$ is given by
\begin{equation}
\po(\zeta_{\mathfrak S})=\log_2(N+1)+\frac{2\pi}{\log2}\,\I\Ze
\end{equation}
and the set of poles of the distance zeta function of the {\em relative $N$-plex} $(\pa\O_{N,0},\O_{N,0})$ is given by
\begin{equation}\label{AN0p}
\po(\zeta_{\pa\O_{N,0},\O_{N,0}})=\{0,1,\dots,N-1\},
\end{equation}
and $\zeta_{\pa\O_{N,0},\O_{N,0}}(s)\ne0$ for all $s\in\Ce\setminus \po(\zeta_{\pa\O_{N,0},\O_{N,0}})$,
we conclude that the set of poles (complex dimensions) of the {\em relative Sierpi\'nski $N$-gasket}\index{relative Sierpi\'nski!gasket@$N$-gasket|textbf}  $(A_N,\O_N)$ is given by
\begin{equation}\label{poNg}
\po(\zeta_{A_N,\O_N})=\{0,1,\dots,N-1\}\cup\Big\{\log_2(N+1)+\frac{2\pi}{\log2}\,\I\Ze\Big\},
\end{equation}
where each complex dimension is simple.
In particular, the set of principal complex dimensions of the RFD $(A_N,\O_N)$ is given by\footnote{Recall that, by definition, $\dim_{PC}(A_N,\O_N)=\po_c(\zeta_{A_N,\O_N})$.}
\begin{equation}\label{ANON}
\dim_{PC}(A_N,\O_N)=
\begin{cases}
\log_23+\frac{2\pi}{\log2}\,\I\Ze&\mbox{ for $N=2$,}\\
2+\frac{2\pi}{\log2}\,\I\Ze&\mbox{ for $N=3$,}\\
\{N-1\}&\mbox{ for $N\ge4$,}
\end{cases}
\end{equation}
and
\begin{equation}\label{givvv}
\ov\dim_B(A_N,\O_N)=
\begin{cases}
\log_23&\mbox{ for $N=2$,}\\
N-1&\mbox{ for $N\ge3$,}
\end{cases}
\end{equation}
which extends the well-known results for $N=2$ and $3$, corresponding to the usual Sierpi\'nski gasket in $\eR^2$ and the tetrahedral gasket in $\eR^3$, respectively. (Namely, their respective relative box dimensions are equal to $\log_23$ and $2$).

It can be shown that in this case, $\dim_B(A_N,\O_N)$ and $\dim_BA_N$ exist and
\begin{equation}\label{BH}
\dim_B(A_N,\O_N)=\dim_BA_N=\dim_HA_N,
\end{equation}
as given by the right-hand side of \eqref{givvv}, where (as before) $\dim_H(\,\cdot\,)$ denotes the Hausdorff dimension.\index{fractal dimension!Hausdorff dimension} Furthermore, it is easy to see that $\dim_{PC}(A_N,\O_N)=\dim_{PC}A_N$.

 The relative distance zeta function $\zeta_{\pa\O_{N,0},\O_{N,0}}$ of the $N$-plex RFD can be explicitly computed as follows, in the case when $N=3$. It is easy to see that the octahedral RFD $(\pa\O_{3,0},\O_{3,0})$ can be identified with sixteen copies of disjoint RFDs, each of which is congruent to the pyramidal RFD $(T,\O')$ in $\eR^3$, where $\O'$  is the open (irregular) pyramid with vertices at $O(0,0,0)$, $A(1/4,0,0)$, $B(1/4,1/4,0)$ and $C(0,0,1/(2\sqrt2))$, while the triangle $T=\operatorname{conv}\,(A,B,C)$ is a face of the pyramid. Since for any $(x,y,z)\in\O'$ we have
\begin{equation}
d((x,y,z),T)=\frac1{\sqrt3}\left(\frac{1}{2\sqrt2}-\sqrt2 x-z\right),
\end{equation}
we deduce that
\begin{equation}\label{iiint}
\begin{aligned}
\zeta_{\pa\O_{3,0},\O_{3,0}}(s)&=16\zeta_{T,\O'}(s)\\
&=16\,\iiint_{\O'}d((x,y,z),T)^{s-3}\D x\,\D y\,\D z\\
&=16\int_{0}^{1/4}\D x\int_{0}^x\D y\int_{0}^{\frac1{2\sqrt2}-\sqrt2\,x} \left(\frac{\frac{1}{2\sqrt2}-\sqrt2 x-z}{\sqrt3}\right)^{s-3} \D z.
\end{aligned}
\end{equation}
Evaluating the last integral in \eqref{iiint}, we obtain by a direct computation that
\begin{equation}\label{30}
\begin{aligned}
\zeta_{\pa\O_{3,0},\O_{3,0}}(s)&=16\frac{(\sqrt3)^{3-s}}{s-2}\int_0^{1/4}\Big(\frac1{2\sqrt2}-\sqrt2\,x\Big)^{s-2}x\,\D x\\
&=8\frac{(\sqrt3)^{3-s}}{s-2}\int_0^{1/(2\sqrt2)}u^{s-2}\Big(\frac1{2\sqrt2}-u\Big)\,\D u\\
&=\frac{8(\sqrt3)^{3-s}(2\sqrt2)^{-s}}{s(s-1)(s-2)},
\end{aligned}
\end{equation}
for any complex number $s$ such that $\re s>2$. Therefore, we deduce from \eqref{Ngasketzf} that the distance zeta function of the thetrahedral RFD in $\eR^3$ can be meromorphically extended to the whole complex plane and is given for all $s\in\Ce$ by
\begin{equation}\label{zeta3sg}
\zeta_{A_3,\O_3}(s)=\frac{8(\sqrt3)^{3-s}(2\sqrt2)^{-s}}{s(s-1)(s-2)(1-4\cdot2^{-s})}.
\end{equation}
It is worth noting that $s=2$ is the only pole of order $2$, since $s=2$ is the simple pole of both $(s-2)^{-1}$ and $(1-4\cdot2^{-s})^{-1}$. More specifically, since the derivative of $1-4\cdot2^{-s}$ computed at $s=2$ is nonzero (and, in fact, is equal to $4\log2$), then $s=2$ is a simple zero of $1-4\cdot2^{-s}$; that is, it is a simple pole of $1/(1-4\cdot2^{-s})$. 

Moreover, it immediately follows from Equation \eqref{zeta3sg} that  
\begin{equation}
\zeta_{A_3,\O_3}(s)\sim\frac1{(s-2)(1-4\cdot2^{-s})}.
\end{equation}
In particular, as we have already seen in Equation \eqref{ANON} (recall that $N:=3$ here), we have
\begin{equation}
\dim_{PC}(A_3,\O_3)=2+\frac{2\pi}{\log2}\,\I\Ze.
\end{equation}
Since $D=2$ is a simple pole of both $1/(s-2)$ and $1/(1-4\cdot2^{-s})$, we conclude that $D=2$ is the only complex dimension of order two of the RFD $(A_3,\O_3)$.
Consequently, the case of the relative Sierpi\'nski $3$-gasket $(A_3,\O_3)$ reveals a new phenomenon: its relative box dimension $D=2$ is a complex dimension of order (i.e., multiplicity) two, while all the other complex dimensions of the relative Sierpi\'nski $3$-gasket (including the double sequence of nonreal complex dimensions on the critical line of convergence $\{\re s=2\}$) are simple.
\smallskip

By using arguments similar to those used when $N=3$, one can show that for any $N\ge3$, the distance zeta function of the relative $N$-plex $(\pa\O_{N,0},\O_{N,0})$ is of the form
\begin{equation}\label{zetaNsg}
\zeta_{\pa\O_{N,0},\O_{N,0}}(s)=\frac{g(s)}{s(s-1)\dots(s-(N-1))},
\end{equation}
where $g(s)$ is a nonvanishing entire function.
In the special case when $N=3$, this is in accordance with Equation \eqref{30} above.
Therefore, from Equations \eqref{Ngasketzf} and \eqref{Ngasketzf3} above, we conclude that
\begin{equation}\label{zetaNsg2}
\zeta_{A_N,\O_N}(s)=\frac{g(s)}{s(s-1)\dots(s-(N-1))(1-(N+1)2^{-s})}.
\end{equation}
This extends Equation \eqref{zeta3sg} to any $N\ge3$.

In the case when $N\ge4$, $D=N-1$ is the only principal complex dimension of the relative Sierpi\'nski $N$-gasket. (Indeed, for $N\ge4$ we have that $\log_2(N+1)<N-1$ (i.e., $N+1<2^{N-1}$), which can be easily proved, for example, by using mathematical induction on~$N$.) Also, all the other complex dimensions are simple. Furthermore,
we immediately deduce from Equation \eqref{zetaNsg2} that
\begin{equation}
\zeta_{A_N,\O_N}(s)\sim\frac1{s-(N-1)}.
\end{equation}

Moreover, if $N\ge 4$ is of the form $N=2^k-1$ for some integer $k\ge3$, then $s=k$ (note that it is smaller than $D=N-1$) is the only complex dimension of order two (since it is a simple pole of both $(s-k)^{-1}$ and $(1-(N+1)2^{-s})^{-1}$), while all the other complex dimensions are simple.

On the other hand, if $N\ge 4$ is not of the form $N=2^k-1$ for any integer $k\ge3$, then all of the complex dimensions of the relative Sierpi\'nski $N$-gasket are simple.

Roughly speaking, in the case when $N=3$, the fact that $s=2$ has multiplicity two can be explained geometrically as follows: firstly, $s=2$ is a simple pole arising from the self-similarity of the RFD $(A_3,\O_3)$,\footnote{Indeed, the similarity dimension of the $3$-gasket $A_3$ is equal to $2$.} while at the same time, $s=2$ is a simple pole arising from the geometry of the boundary of the first (deleted) octahedron, which is also 2-dimensional.

In the case of the ordinary Sierpi\'nski gasket, i.e, of the relative Sierpi\'nski $2$-gasket, the value of $s=\log_2 3$ (which is the simple pole arising from the self-similarity of $(A_2,\O_2)$) is strictly larger than the dimension $s=1$ of the boundary of the deleted triangle (i.e., of the $2$-plex $\O_{2,0}$). Moreover, the relative Sierpi\'nski $2$-gasket is Minkowski nondegenerate and Minkowski nonmeasurable, while the relative Sierpi\'nski $3$-gasket is Minkowski degenerate, with its $2$-dimensional Minkowski content being equal to $+\ty$.

On the other hand, when $N\ge4$, the dimension $N-1$ of the boundary of the $N$-plex $\O_{N,0}$ is larger than the similarity dimension $\log_2(N+1)$ arising from ``fractality''. Since $D=\log_2(N+1)$ is the only complex dimension on the critical line (and it is simple), we conclude that for $N\ge4$, the RFD $(A_N,\O_N)$ is Minkowski measurable (see \cite{cras2}). Thus, the case when $N=3$ is indeed very special among all relative Sierpi\'nski $N$-gaskets.
\end{example}\label{Ngasketend}

\medskip

We refer the interested reader to \cite{cras2} and \cite{mm} (as well as to the relevant part of \cite[\S5.5]{fzf}) for a detailed discussion of the property of Minkowski measurability (or of Minkowski nonmeasurability) of the $N$-gasket RFD $(A_N,\O_N)$, for any $N\ge2$ and for the corresponding fractal tube formulas.\index{fractal tube formula} Let us simply mention here that for $N=3$, a suitable gauge function can be found with respect to which $A_3$ is not only Minkowski nondegenerate but is also Minkowki measurable. (Note that for $N\ne3$, $A_N$ is Minkowski nondegenerate in the usual sense, that is, relative to the trivial gauge function obeying the standard power law.)

\medskip

Let $\s_0$ be the common {\em similarity dimension}\index{similarity dimension} of the inhomogneous Sierpi\'nski $N$-gasket $A_N$,
 the relative Sierpi\'nski $N$-gasket $(A_N,\O_N)$ (where the latter is viewed as a self-similar fractal spray or RFD)
and the classic Sierpi\'nski $N$-gasket $S_N$ (to be discussed below). 
Since the corresponding scaling ratios $\{r_j\}_{j=1}^{N+1}$ satisfy $r_1=\dots=r_{N+1}=1/2$, 
the similarity dimension $\s_0$, defined as being the unique real solution $s$ of the Moran equation $\sum_{j=1}^{N+1}r_j^s=1$ (i.e., here, $2^s=N+1$, $s\in\eR$), is given by
\begin{equation}
\s_0=\log_2(N+1).
\end{equation}
In light of Equation \eqref{dirichletd} and since $\dim_B(A_{N,0},\O_{N,0})=N-1$ (by Equation \eqref{AN0p}),
we see that $\dim_B(A_N,\O_N)=\s_0$ for $N=2$ or $N=3$, that
\begin{equation}\label{s01}
\s_0=\dim_B(A_N,\O_N)>\dim_B(A_{N,0},\O_{N,0})
\end{equation}
for $N=2$,
\begin{equation}\label{s02}
\s_0=\dim_B(A_N,\O_N)=\dim_B(A_{N,0},\O_{N,0})\,\,\,(=2)
\end{equation}
for $N=3$, whereas for every $N\ge4$, we have that 
\begin{equation}\label{s03}
\s_0=\log_2(N+1)<\dim_B(A_N,\O_N)=\dim_B(A_{N,0},\O_{N,0})=N-1.
\end{equation}
(Recall from \eqref{BH} that $\dim_BA_N=\dim_B(A_N,\O_N)$.)
On the other hand, if $S_N$ denotes the {\em classic Sierpi\'nski $N$-gasket}\label{classicNgasket}\index{classic Sierpi\'nski $N$-gasket, $S_N$}\index{Sierpi\'nski $N$-gasket!classic, $S_N$} in $\eR^N$
(to be further discussed below), then
for every $N\ge2$, we have that
\begin{equation}\label{s04}
\dim_B S_N\,\,(=\dim_HS_N)\,\,=\s_0=\log_2(N+1).
\end{equation}
The latter statement follows from a classic result of Hutchinson in \cite{hutchinson}
for self-similar sets satisfying the open set condition (which is the case of $S_N$ for every $N\ge2$)
and extending to higher dimensions the basic result of Moran \cite{mora} for one-dimensional self-similar sets.\footnote{Note that $S_1\st\eR$ is just the unit interval, viewed as a self-similar set with scaling ratios $r_1=r_2=1/2$. However, in the present discussion, we consider the more interesting case when $N\ge2$.} (See \cite[Theorem 9.3]{falc} for the statement and a detailed proof of this theorem.)

\medskip

We close this discussion of the $N$-gasket RFD $(A_N,\O_N)$ by explaining the discrepancy between the results obtained in \eqref{s01}, \eqref{s02} and, especially, \eqref{s03} for the self-similar spray $(A_N,\O_N)$ and the usual result \eqref{s04} for the self-similar set $S_N$, the classic Sierpi\'nski $N$-gasket.

First of all, note that in light of \eqref{Ngasketzf} and \eqref{Ngasketzf3} (see also Theorem \ref{sprayz}), we must have
\begin{equation}\label{s05}
\begin{aligned}
\ov\dim_B(A_N,\O_N)&\,\,(=\ov\dim_BA_N)\\
&=\max\{\s_0,\ov\dim_B(A_{N,0},\O_{N,0})\}\\
&=\max\{\log_2(N+1),N-1\},
\end{aligned}
\end{equation}
and that, in the present case, the upper Minkowski dimensions can be replaced by the  Minkowski dimensions in Equation \eqref{s05}.

Indeed, by \eqref{Ngasketzf}, we have
\begin{equation}
D(\zeta_{A_N,\O_N})=\max\{D(\zeta_{\mathfrak S}),D(\zeta_{A_{N,0},\O_{N,}})\}
\end{equation}
and by part ($b$) of Theorem \ref{an_rel}, we have
$$
D(\zeta_{A_N,\O_N})=\ov\dim_BD(A_N,\O_N)
$$
and
$$
D(\zeta_{A_{N,0},\O_{N,0}})=\ov\dim_BD(A_{N,0},\O_{N,0}),
$$
from which \eqref{s05} follows since $\s_0=\log_2(N+1)$. (See Theorem \ref{sprayz}.)

Identity \eqref{s05} explains why \eqref{s01}, \eqref{s02} and \eqref{s03} hold.
Indeed, if we let $D_G:=\dim_B(A_{N,0},\O_{N,0})$ (the Minkowski dimension of the base RFD $(A_{N,0},\O_{N,0})$ generating the self-similar RFD $(A_N,\O_N)$) and $D:=\dim_B(A_N,\O_N)$, we deduce from \eqref{s05} and an elementary computation that $D=\s_0$ if $N=2$, $D=\s_0=D_G$ if $N=3$, whereas $D=D_G$ if $N\ge4$, in agreement with \eqref{s01}, \eqref{s02} and \eqref{s03}, respectively.

From the geometric point of view, the difference between $A_N$ and $S_N$ can be explained as follows. As is well known (see, e.g., [{KiLap}] and the relevant references therein), the (homogeneous) Sierpi\'nski $N$-gasket $S_N$ is a self-similar set (satisfying the open set condition), associated with the iterated function system (IFS) $\{\Phi_j\}_{j=1}^{N+1}$, where (for $j=1,\dots,N+1$) each $\Phi_j$ is a contractive similitude of $\eR^N$ with fixed point $P_j$ and scaling ratio $r_j=1/2$; i.e., the associated scaling ratio list $\{r_j\}_{j=1}^{N+1}$ of $\{\Phi_j\}_{j=1}^{N+1}$ is given by $r_1=\cdots=r_{N+1}=1/2$. More specifically, $S_N$ is the unique (nonempty) compact subset $K$ of $\eR^N$ which is the solution of the (homogeneous) fixed point equation
\begin{equation}\label{PhiN}
K={\mathbf\Phi}(K):=\bigcup_{j=1}^{N+1}\Phi_j(K).
\end{equation}

On the other hand, unless $N=2$, the inhomogeneous Sierpi\'nski $N$-gasket $A_N$ is {\em not} a self-similar set in the classic sense of \cite{hutchinson} (see also \cite[Chapter 9]{falc}). (For $N=2$, $A_2$ coincides with the usual Sierpi\'nski gasket $S_2$.) However, interestingly, it is an {\em inhomogeneous} self-similar set, in the sense of Barnsley and Demko \cite{bd} (see also \cite{bd,fraser} and the relevant references therein for further results about such sets).
More specifically, $A_N$ is the unique (nonempty) solution $K$ of the {\em inhomogeneous} fixed point equation\index{fixed point equation|textbf}
\begin{equation}\label{KB}
K={\mathbf\Phi}(K)\cup B,
\end{equation}
where ${\mathbf\Phi}$ is defined as in Equation~\eqref{PhiN} above and $B$ is a suitable compact subset of $\eR^N$.
For $N=2$, the set $A_2=S_2$ is both homogeneous and inhomogeneous, since it satisfies Equation \eqref{KB}
both for $B=\emptyset$ and $B=\pa A_{2,0}$ (the boundary of the unit triangle).
By contrast, when $N\ge 3$, the compact set $B$ is {\em nonempty} and hence, $A_N$ is not self-similar for this IFS
$\{\Phi_j\}_{j=1}^{N+1}$.
%or by \eqref{s03} and since for a self-similar set (satisfying or not the open set condition), we always have $\ov\dim A_N\le\s_0$ (see the comments following the proof of Theorem 9.4 in \cite{falc}), 
 For $N=3$, a description of several possible choices for $B$ can be found in the caption of Figure \ref{octahedron}. When $N\ge3$, let us simply state that we can choose $B$ to be the boundary of $\O_{N,0}$: $B=\pa\O_{N,0}$. (Other choices are possible, however.)

\medskip

\begin{example}\label{sierpinski_carpetr} ({\em Relative Sierpi\'nski carpet}).\index{relative Sierpi\'nski!carpet|textbf}
Let $A$ be the Sierpi\'nski carpet contained in the unit square $\O$. Let $(A,\O)$ be the corresponding {\em relative Sierpi\'nski carpet}, with $\Omega$ being the unit square. Its distance zeta function $\zeta_{A,\Omega}$ coincides with the distance zeta function of the following relative fractal spray
 (see the end of Definition \ref{spray}):
$$
\operatorname{Spray}(\O_0, \g=1/3,b=8),
$$
where $\O_0$ is the first deleted open square with sides $1/3$.
Similarly as in Example \ref{6.15}, using Theorem \ref{sprayzc} and Lemma \ref{congruentl}, we obtain that $\zeta_{A,\Omega}$, the relative distance zeta functions of $(A,\O)$, has a meromorphic continuation to the entire complex plane given for all $s\in\Ce$ by
\begin{equation}\label{10.111/4}
\zeta_{A,\Omega}(s)=\frac{8\cdot 6^{-s}}{s(s-1)(1-8\cdot3^{-s})}.
\end{equation}

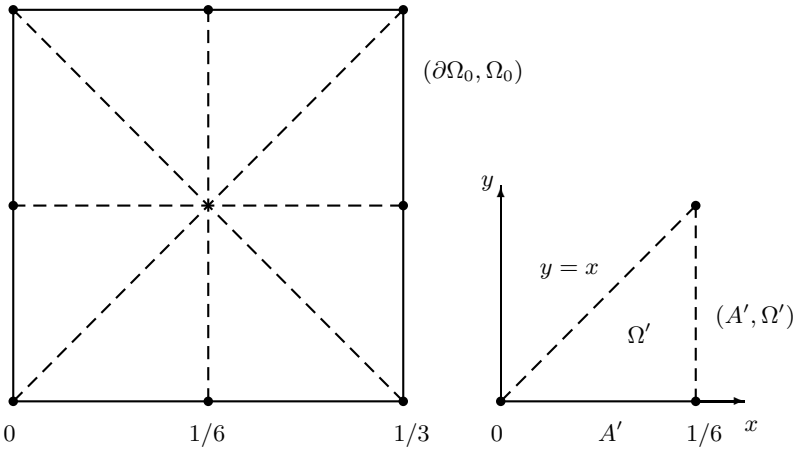
\begin{figure}[t]
\psset{unit=1.3}
\begin{pspicture}(-2.6,-1)(6,4.3)

\pstGeonode[PointName=none](-2,0){A}(2,0){B}(2,4){C}(-2,4){D}
\pstMiddleAB[PointName=none]{A}{B}{E}
\pstMiddleAB[PointName=none]{B}{C}{F}
\pstMiddleAB[PointName=none]{C}{D}{G}
\pstMiddleAB[PointName=none]{D}{A}{H}
\pstLineAB{A}{B}
\pstLineAB{B}{C}
\pstLineAB{C}{D}
\pstLineAB{D}{A}
\put(-2.1,-0.4){\small$0$}
\put(-0.2,-0.4){\small$1/6$}
\put(1.9,-0.4){\small$1/3$}
\put(2.2,3.3){\small$(\pa\O_0,\O_0)$}

\pstLineAB[linestyle=dashed]{A}{C}
\pstLineAB[linestyle=dashed]{B}{D}
\pstLineAB[linestyle=dashed]{E}{G}
\pstLineAB[linestyle=dashed]{F}{H}

\pstGeonode[PointName=none](3,0){I}(5,0){J}(5,2){K}
%\psset{PointName=none}
\pstLineAB{I}{J}
\pstLineAB[linestyle=dashed]{J}{K}
\pstLineAB[linestyle=dashed]{I}{K}
\put(2.9,-0.4){\small$0$}
\put(4.9,-0.4){\small$1/6$}
\put(3.4,1.3){\small$y=x$}
\put(5,0){\vector(1,0){0.5}}
\put(3,0){\vector(0,1){2.2}}
\put(5.5,-0.3){\small$x$}
\put(2.8,2.2){\small$y$}
\put(4,-0.4){\small$A'$}
\put(4.3,0.6){\small$\O'$}
\put(5.2,0.8){\small$(A',\O')$}

\end{pspicture}
\caption{\small On the left is the base relative fractal drum $(\pa\O_0,\O_0)$ of the relative Sierpi\'nski carpet $(A,\O)$ described in Example \ref{sierpinski_carpetr}, where $\O_0$ is the associated (open) square with sides $1/3$. The base relative fractal drum $(\pa\O_0,\O_0)$ can be viewed as the
(disjoint) union of eight RFDs, all of which are congruent to the relative fractal drum $(A',\O')$ on the right.
This figure explains Equation \eqref{sierpinski8}; see Lemma \ref{congruentl}.}
\label{sierpinski_carpetr2}
\end{figure}

% #1
Indeed, clearly, the base relative fractal drum $(\pa\O_0,\O_0)$ is the (disjoint) union of eight relative fractal drums, each of which is congruent to a relative fractal drum $(A',\O')$, where $\O'$ is an appropriate isosceles right triangle; see Figure \ref{sierpinski_carpetr2}. We then deduce from Lemma \ref{congruentl} that 
\begin{equation}\label{sierpinski8}
\begin{aligned}
\zeta_{\pa\O_0,\O_0}(s)&=8\,\zeta_{A',\O'}(s)=8\,\int_{\O'}d((x,y),A')^{s-2}\D x\,\D y\\
&=8\int_0^{1/6}\D x\int_0^x y^{s-2}\D y=\frac{8\cdot 6^{-s}}{s(s-1)}
\end{aligned}
\end{equation}
for all $s\in \Ce$ with $\re s>1$, and hence, in light of Theorem \ref{sprayzc}, that $\zeta_{A,\O}(s)$ is given by \eqref{10.111/4}.
Note that, after analytic continuation, we also have
\begin{equation}\label{4.2.39.1/2}
\zeta_{\pa\O_0,\O_0}(s)=\frac{8\cdot 6^{-s}}{s(s-1)},\q\mbox{for all $s\in\Ce$.}
\end{equation}

Since by \eqref{4.2.39.1/2},
$$
\zeta_{A,\Omega}(s)\sim\frac1{1-8\cdot3^{-s}},
$$
one deduces from this equivalence that the abscissa of convergence of $\zeta_{A,\Omega}$ is given by $D=\log_38=\dim_B(A,\Omega)$, where the equality follows from Theorem \ref{an_rel}($b$).

Here, the relative box dimension of $A$ coincides with its usual box dimension, namely, $\log_38$. Moreover, the set $\po_c(\zeta_{A,\Omega})$ of relative principal complex dimensions of the Sierpi\'nski carpet $A$ with respect to the unit square $\O$ is given by 
\begin{equation}\label{sierpinski_rfd}
\dim_{PC}(A,\O)=\log_38+\mathbf{p}{\I}\Ze,
\end{equation}
where 
$\mathbf{p}:=2\pi/\log 3$ 
is the oscillatory period\index{oscillatory period!of the Sierpi\'nski carpet} of the Sierpi\'nski carpet $A$.

Observe that it follows immediately from $(\ref{10.111/4})$ that the set $\po(\zeta_{A,\Omega})$ of all relative complex dimensions of the Sierpi\'nski carpet $A$ (with respect to the unit square $\Omega$) is given by 
$$
\po(\zeta_{A,\Omega})=\dim_{PC}A\cup\{0,1\}=(\log_38+\mathbf{p}{\I}\Ze)\cup\{0,1\},
$$
where $\{0,1\}$ can be viewed as the set of `integer dimensions' of $A$ (in the sense of 
[{LapPe2--3}] %\cite{lappe2,lappe3} 
and \cite{lappewi1}, see also \cite[\S13.1]{lapidusfrank12}). Furthermore, each of these relative complex dimensions is simple (i.e., is a simple pole of $\zeta_{A,\O}$). Interestingly, these are exactly the complex dimensions which one would expect to be associated with $A$, according to the theory developed in 
[{LapPe2--3}] %\cite{lappe2,lappe3} 
 and
%\cite{lappewi1,LaPeWi2}
[{LapPeWi1--2}] (as described in \cite[\S13.1]{lapidusfrank12})
via self-similar tilings (or sprays) and associated tubular zeta functions. 
%(See also \cite{pe2}, \cite{pewi} and \cite[Section~13.1]{lapidusfrank12}.) 

In light of \eqref{10.111/4},
the residue of the distance zeta function of the relative Sierpi\'nski carpet $(A,\O)$ computed at any principal pole $s_k:=\log_38+\mathbf{p}{\I}k$, $k\in\Ze$ is given by
$$
\res(\zeta_{A,\O},s_k)=\frac{2^{-s_k}}{(\log3)s_k(s_k-1)}.
$$
In particular,
$$
|\res(\zeta_{A,\O},s_k)|\sim \frac{2^{-D}}{D\log3}\,k^{-2}\q\mbox{as\q$k\to\pm\ty$,}
$$
where $D:=\log_38$.
\end{example}\label{sierp_end}

\medskip

Similarly as in the case of the relative Sierpi\'nski gasket (see Proposition \ref{s_gasket}), the relative Sierpi\'nski carpet can be viewed as a fractal spray generated by the base RFD appearing in Figure \ref{sierpinski_carpetr2} on the right.

\begin{prop}[{Relative Sierpi\'nski carpet}] Let $(A',\O')$ be the RFD defined on the right-hand side of Figure \ref{sierpinski_carpetr2}. Let $(A,\O)$ be the relative fractal spray generated by the base relative fractal drum $(A',\O')$, with scaling ratio $\g=1/3$ and with multiplicities $m_k=8^k$ for any positive integer $k$:
\begin{equation}
(A,\O)=\operatorname{Spray}((A',\O'),\,\,\g=1/3,\,m_k=8^k\,\,\mbox{for $k\in\eN$}).
\end{equation}
$($Note that we assume here that the base relative fractal drum $(A',\O')$ has a multiplicity equal to $8$.$)$ Then, the relative distance zeta function of the relative fractal spray $(A,\O)$ coincides with the relative distance zeta function of the relative Sierpi\'nski carpet. $($See Equation \eqref{10.111/4}$.)$
\end{prop}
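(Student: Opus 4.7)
The plan is to compute directly the relative distance zeta function of the fractal spray $(A,\O)$ and to verify that it agrees with the closed form \eqref{10.111/4} obtained for the relative Sierpi\'nski carpet in Example~\ref{sierpinski_carpetr}. The essential input is the factorization formula provided by Theorem~\ref{sprayz} (suitably applied to a base RFD whose boundary has been replaced by the segment $A'$), together with the scaling property of Theorem~\ref{scaling} and the countable additivity statement of Theorem~\ref{unionz}.

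First, I would justify geometrically that the spray really reproduces the open set $\O$ appearing in the relative Sierpi\'nski carpet. At generation $k\ge1$ of the carpet's construction there are exactly $8^{k-1}$ deleted open squares of side $3^{-k}$, each of which is decomposed (by its two diagonals) into $8$ congruent isosceles right triangles, all congruent to $3^{-(k-1)}\O'$. Hence the spray with base $(A',\O')$, scaling ratio $\gamma=1/3$ and multiplicities $m_k=8^k$ (at scale $\gamma^{k-1}=3^{-(k-1)}$) produces exactly the disjoint family of right triangles tiling all the deleted squares of the carpet, so that its open set indeed coincides with $\O$. Next, I would verify the hypothesis \eqref{dj} of Theorem~\ref{unionz} for this decomposition. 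For a point $x$ lying in one of the deleted open squares $\O_\square$ of the carpet construction, $\pa\O_\square\subseteq A$ and any other component of $A$ lies outside~$\O_\square$ and thus strictly further from $x$; so $d(x,A)=d(x,\pa\O_\square)$. Moreover, if $x$ belongs to one of the $8$ right triangles that tile $\O_\square$, the side of $\O_\square$ to which $x$ is closest is the one containing the hypotenuse of that triangle, i.e.\ the scaled copy $A_j$ of $A'$. Consequently, $d(x,A)=d(x,A_j)$, as required.

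With \eqref{dj} established, the scaling property (Theorem~\ref{scaling}) and Theorem~\ref{unionz} give, for all $s\in\Ce$ with $\re s>\log_38$,
\begin{equation}
\zeta_{A,\O}(s)\;=\;\sum_{k=1}^{\ty}m_k\,\gamma^{(k-1)s}\,\zeta_{A',\O'}(s)\;=\;\zeta_{\mathfrak S}(s)\cdot\zeta_{A',\O'}(s),
\end{equation}
where the scaling zeta function is evaluated as a geometric series:
\begin{equation}
\zeta_{\mathfrak S}(s)\;=\;\sum_{k=1}^{\ty}8^{k}\,3^{-(k-1)s}\;=\;\frac{8}{1-8\cdot 3^{-s}}.
\end{equation}
On the other hand, the computation of $\zeta_{A',\O'}(s)$ has already been carried out (it is the ``per-piece'' factor in \eqref{sierpinski8}): with $A'=[0,1/6]\times\{0\}$ and $\O'=\{(x,y):0<y<x<1/6\}$, one has $d((x,y),A')=y$ for $(x,y)\in\O'$, whence
\begin{equation}
\zeta_{A',\O'}(s)\;=\;\int_0^{1/6}\!\D x\int_0^{x}y^{s-2}\,\D y\;=\;\frac{6^{-s}}{s(s-1)},
\end{equation}
initially for $\re s>1$ and then meromorphically on all of~$\Ce$. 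Multiplying the two factors gives $\zeta_{A,\O}(s)=\frac{8\cdot 6^{-s}}{s(s-1)(1-8\cdot 3^{-s})}$, which is exactly \eqref{10.111/4}. By the principle of analytic continuation, the two zeta functions agree as meromorphic functions on~$\Ce$.

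The only nonroutine step is the geometric verification of condition \eqref{dj} of Theorem~\ref{unionz}; everything else is a direct application of the scaling and additivity properties together with an explicit geometric series summation. Once one observes that inside each deleted square of the carpet the nearest point of $A$ to any interior point $x$ is on the boundary of that very square, and that within each of the $8$ triangular pieces $\O_j$ the nearest such boundary side contains $A_j$, the proof reduces to the elementary computation displayed above.
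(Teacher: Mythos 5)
Your proof is correct and follows the route the paper tacitly relies on: decompose the spray's open set into the right triangles tiling the deleted squares of the carpet construction, verify condition~\eqref{dj}, factorize via the scaling and additivity properties, and sum the geometric series, which yields exactly Equation~\eqref{10.111/4}. Two small slips are worth correcting: $A'$ is a \emph{leg} of the right triangle $\O'$ (the half-side of the deleted square lying on that square's boundary), not the hypotenuse $\{y=x\}$ --- this is consistent with your use of $d((x,y),A')=y$ in the integral, so only the wording needs fixing; and your verification of condition~\eqref{dj} takes $A$ to be the full Sierpi\'nski carpet rather than the spray's own set $\bigcup_k A_k$, but the gap closes immediately since $A_j\subseteq\bigcup_k A_k\subseteq A_{\mathrm{carpet}}$ forces $d(x,A_j)\ge d\bigl(x,\bigcup_k A_k\bigr)\ge d(x,A_{\mathrm{carpet}})=d(x,A_j)$ for every $x\in\O_j$.
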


\begin{example} ({\em Sierpi\'nski $N$-carpet}).\label{carpetN}\index{Sierpi\'nski $N$-carpet|textbf}
It is easy to generalize the example of the standard Sierpi\'nski carpet (which is a compact subset of the unit square $[0,1]^2\st\eR^2$, see Example \ref{sierpinski_carpetr} above), to the {\em Sierpi\'nski $N$-carpet} (or {\em$N$-carpet}, for short), defined analogously as a compact subset $A$ of the unit $N$-dimensional cube $[0,1]^N\st\eR^N$. More precisely, we divide $[0,1]^N$ into the union of $3^N$ congruent $N$-dimensional subcubes of length $1/3$ and with disjoint interiors and then remove the middle open subcube. The remaining compact set is denoted by $F_1$. We then remove the middle open $N$-dimensional cubes of length $1/3^2$ from the remaining $3^N-1$ subcubes. The resulting compact subset is denoted by $F_2$. Proceeding analogously ad infinitum, we obtain a decreasing sequence of compact subsets $F_k$ of $[0,1]^N$, for $k\ge1$. 
The Sierpi\'nski $N$-carpet $A$ is then defined by
\begin{equation}
A:=\bigcap_{k=1}^\ty F_k.
\end{equation}
(Note that the Sierpi\'nski $1$-carpet coincides with the usual ternary Cantor set, while the Sierpi\'n\-ski $2$-carpet coincides with the usual Sierpi\'nski carpet discussed in Example \ref{sierpinski_carpetr}; furthermore, the Sierpi\'nski $3$-carpet is discussed in \cite[Example 2]{brezish}.)

It is clear that the {\em Sierpi\'nski $N$-carpet RFD}\index{Sierpi\'nski $N$-carpet!relative fractal drum} $(A,\O)$, where $A$ is the Sierpi\'nski $N$-carpet and $\O:=(0,1)^N$ is the open unit cube of $\eR^N$, can be viewed as the following relative fractal spray; see the end of Definition~\ref{spray}:
\begin{equation}
(A,\O)=\operatorname{Spray}\,((\pa\O_0,\O_0),\g=1/3,b=3^N-1).
\end{equation}
(Here, the cube $\O_0=(0,1/3)^N$ is obtained by a suitable translation of the middle open subcube from the first step of the construction of the set $A$.)
According to Theorem \ref{sprayzc}, we then have that
\begin{equation}\label{Ncarpetz}
\begin{aligned}
\zeta_{A,\O}(s)&=f(s)\cdot\zeta_{\pa\O_0,\O_0}(s)\\
&=\frac{\zeta_{\pa\O_0,\O_0}(s)}{1-(3^N-1)3^{-s}} \sim
\frac1{1-(3^N-1)3^{-s}}.
\end{aligned}
\end{equation}
Since $\O_0$ has a Lipschitz boundary and $\log_{1/\g}b=\log_3(3^N-1)\in(N-1,N)$, we deduce from \eqref{rfs_pcd} in Theorem \ref{sprayzc} that the set of principal complex dimensions of the relative Sierpi\'nski $N$-carpet spray is given by
\begin{equation}\label{Ncarpet_pcd}
\dim_{PC}(A,\O)=\log_3(3^N-1)+\frac{2\pi}{\log 3}\,\I\Ze
\end{equation}
and hence,
$$
\dim_{PC}(A,\O)\st\{\re s=\log_3(3^N-1)\}\st \{N-1<\re s<N\}.
$$
In particular, according to Theorem \ref{an_rel}$(b)$, we have that
\begin{equation}
\ov\dim_B(A,\O)=\log_3(3^N-1).
\end{equation}
Furthermore, it can be shown that in the present case of the Sierpi\'nski $N$-carpet RFD, we have that
$\dim_BA$ and $\dim_B(A,\O)$ exist and
\begin{equation}\label{crfd}
\ov\dim_B(A,\O)=\dim_B(A,\O)=\dim_BA=\log_3(3^N-1).
\end{equation}

It is easy to see that the set of principal complex dimensions $\dim_{PC}A$ of the Sierpi\'nski $N$-carpet $A$ in $\eR^N$ coincides with the set $\dim_{PC}(A,\O)$ appearing in Equation \eqref{Ncarpet_pcd}. As simple special cases, we obtain the set of principal complex dimensions of the ternary Cantor set or of the usual Sierpi\'nski carpet appearing in Equation \eqref{sierpinski_rfd}, for $N=1$ or $N=2$, respectively.

Since the set of all complex dimensions of the RFD $(\pa\O_0,\O_0)$ is equal to $\{0,1,\dots, N-1\}$,\footnote{Note that the relative zeta function $\zeta_{A,\O}$ appearing in Equation \eqref{Ncarpetz} can be meromophically extended in a unique way to the whole complex plane $\Ce$ since the same can be done with $\zeta_{\pa\O_0,,\O_0}$. See, for example, Equation \eqref{4.2.39.1/2} dealing with the case when $N=2$.} it follows from Equation \eqref{Ncarpetz} that the set of all complex dimensions of the Sierpi\'nski $N$-carpet relative fractal spray $(A,\O)$ is given by
\begin{equation}
\begin{aligned}
\po(\zeta_{A,\Omega})&=\dim_{PC}(A,\O)\cup\{0,1,\dots, N-1\}\\
&=\Big(\log_3(3^N-1)+\frac{2\pi}{\log 3}\,{\I}\Ze\Big)\cup\{0,1,\dots,N-1\}.
\end{aligned}
\end{equation}

This concludes our study of the relative fractal drum $(A,\Omega)$ naturally associated with
the $N$-dimensional Sierpi\'nski carpet.
\end{example}

\section{Self-similar sprays and RFDs}\label{goran}

Let us now recall the definition of a self-similar spray or tiling (see [{LapPe2--3}], %\cite{lappe2,lappe3} 
%\cite{lappewi1,LaPeWi2}
[{LapPeWi1--2}], \cite[\S13.1]{lapidusfrank12}).
More precisely, let us state this definition slightly more generally and in the context of relative fractal drums.

\begin{defn}({\em Self-similar spray or tiling}).\label{ss_spray}\index{self-similar!spray (or tiling)|textbf}
%A {\em self-similar spray} (or {\em tiling}) is defined as a collection 
Let $G$ be a given open subset ({\em base set} or {\em generator})\index{base set (or generator)|textbf}\index{generator (or base set)|textbf} of $\eR^N$ of finite $N$-dimensional Lebesgue measure and let $\{r_1,r_2,\ldots,r_J\}$ be a finite multiset (also called a {\em ratio list})\index{ratio list of a self-similar spray (or tiling)|textbf} of positive real numbers such that $J\in\eN$, $J\geq 2$ and
\begin{equation}\label{r_J^N}
\sum_{j=1}^{J}r_j^N<1.
\end{equation}
Furthermore, let $\Lambda$ be the multiset consisting of all the possible `words' of multiples of the scaling factors $r_1,\ldots,r_J$; that is, let
\begin{equation}\label{LAMBDA}
\begin{aligned}
\Lambda&:=\{1,r_1,\dots,r_J,r_1r_1,\ldots,r_1r_J,r_2r_1,\ldots,r_2r_J,\ldots,r_Jr_1,\ldots,r_Jr_J,\\
&\phantom{:=\{}\,r_1r_1r_1,\ldots,r_1r_1r_J,\ldots\}
\end{aligned}
\end{equation}
and arrange all of the elements of the multiset $\Lambda$ into a {\em scaling sequence}\index{scaling sequence|textbf} $(\lambda_i)_{i\geq0}$, where $\lambda_0:=1$.
(Note that $0<\lambda_i<1$, for every $i\geq 1$.)

A {\em self-similar spray}\index{self-similar!spray (or tiling)|textbf} (or {\em tiling}), generated by the base set $G$ and the ratio list $\{r_1,r_2,\ldots,r_J\}$ is an RFD $(\pa\O,\O)$ in $\eR^N$, where $\O$ is a disjoint union of open sets $G_{i}$; i.e.,
\begin{equation}\label{Galfa}
\O:=\bigsqcup_{i=0}^{\ty} G_{i},
\end{equation}
such that each $G_{i}$ is congruent to $\lambda_i G$, for every $i\geq 0$.
Here, the disjoint union $\sqcup$ can be understood as the disjoint union of RFDs given in Definition \ref{union}, with $(A_i,\O_i):=(\pa G_i,G_i)$ for each $i\geq 0$, in the notation of that definition.
In the sequel, $(\pa G, G)$ is also referred to as a {\em self-similar RFD}.\index{self-similar!RFD|textbf}
 %, where
%\begin{equation}
%\lambda_{\a}:=r_1^{\a_1}r_2^{\a_2}\dots r_J^{\a_J}
%\end{equation}
%for every multi-index $\a\in\eN_0^J$.
%is obtained from the ratio list by building all possible words of multiples, i.e., all monomials appearing in the infinite sum
%\begin{equation}
%\sum_{n=0}^{\ty}(r_1+\dots +r_J)^n,
%\end{equation}
%counting their multiplicities.
%The sequence $(\lambda_{\lambda})_{\lambda\in\Lambda}$ is called the associated {\em scaling sequence of the self-similar spray}.
%
%
%defined as a collection 
%$(G_k)_{k\in\eN}$ of pairwise disjoint open sets $G_k\subset\eR^N$ with $G_0:=G$ with $|G|<\ty$ and such that $G_k$ is a scaled copy of $G$ by some factor $\lambda_k>0$.
%The sequence $(\lambda_k)_{k\in\eN}$ is called the associated scaling sequence of the spray and is obtained from a ``ratio list'' $\{r_1,r_2,\ldots,r_J\}$ with $0<r_j<1$ and such that $\sum_{k=1}^{J}r_k^N<1$,\footnote{These two conditions ensure that the collection $(G_k)_{k\in\eN}$ has finite Lebesgue measure.} by building all possible words of multiples of the ratios $r_j$.
\end{defn}

\medskip

\begin{remark}
Note that in the above definition, the scaling sequence $(\lambda_{i})_{i\geq 0}$  consists of all the products of ratios $r_1,\ldots,r_J$ appearing in the infinite sum
\begin{equation}
\sum_{n=0}^{\ty}\bigg(\sum_{j=1}^{J}r_j\bigg)^n,
\end{equation}
after expanding the powers and counted with their multiplicities.
More precisely, we have that for every multi-index $\a=(\a_1,\ldots,\a_J)\in\eN_0^J$, the multiplicity of $r_1^{\a_1}r_2^{\a_2}\dots r_J^{\a_J}$ in the multiset $\Lambda$ is equal to the multinomial coefficient
\begin{equation}\label{multinom_l}
\binom{|\a|}{\a_1,\a_2,\ldots,\a_J}=\frac{|\a|!}{\a_1!\,\a_2!\cdots\a_J!},
\end{equation}
where $|\a|:=\sum_{j=1}^J\a_j$.
Of course, depending on the specific values of the ratios $r_1,\ldots,r_J$, some of the numbers $r_1^{\a_1}r_2^{\a_2}\dots r_J^{\a_J}$ may be equal for different multi-indices $\a\in\eN_0^J$.

Furthermore, the condition \eqref{r_J^N} ensures that the set $\O=\sqcup_{i\geq 0}G_{i}$ has finite $N$-dimensional Lebesgue measure.
Indeed, we have
\begin{equation}
\begin{aligned}
|\O|&=\sum_{i=0}^{\ty}|G_{i}|=\sum_{i=0}^{\ty}|\g_i G|=|G|\sum_{i=0}^{\ty}\lambda_{i}^N\\
&=|G|\sum_{n=0}^{\ty}\bigg(\sum_{j=1}^{J}r_j^N\bigg)^n=\frac{|G|}{1-\Big(\sum_{j=1}^{J}r_j^N\Big)^n},
\end{aligned}
\end{equation}
since \eqref{r_J^N} is satisfied.
Note that the second to last equality above follows from the construction of the scaling sequence $(\lambda_{i})_{i\geq 0}$. 
\end{remark}

Consider now a self-similar spray as a relative fractal drum $(A,\O)$; that is, let $A:=\partial\O$ and $\O:=\sqcup_{i\geq 0}G_{i}$ (see Definition~\ref{ss_spray}). % with $\ov{\dim}_B(\partial G,G)<N$.
The `self-similarity' of $(A,\O)$ is nicely exhibited by the scaling relation \eqref{ssss} given in the following lemma.

\begin{lemma}\label{self_lemma}
Let  $(A,\O)$ be a self-similar spray in $\eR^N$, as in Definition \ref{ss_spray}.
Then, the relative fractal drum $(A,\O)$ satisfies  the following  {\rm `self-similar identity'}$:$\index{self-similar!identity of an RFD $(A,\O)$|textbf}
\begin{equation}\label{ssss}
(A,\O)=(\partial G,G)\sqcup \bigsqcup_{j=1}^J r_j(A,\O),
\end{equation}
where $($with the exception of the first term on the right-hand side of \eqref{ssss}$)$ the symbol $\sqcup_{j=1}^J$ indicates that this represents a disjoint union of copies of $(A,\O)$ scaled by factors $r_1,\ldots,r_J$ and displaced by isometries of $\eR^N$.
\end{lemma}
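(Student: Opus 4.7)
The plan is to derive the identity \eqref{ssss} by decomposing the scaling multiset $\Lambda$ according to the first ``letter'' of each word. First, observe that by the construction in \eqref{LAMBDA}, every element of $\Lambda$ is either the empty word (corresponding to $\lambda_0=1$) or begins with one of the ratios $r_1,\dots,r_J$. Hence, as a multiset, $\Lambda$ admits the disjoint decomposition
$$
\Lambda = \{1\} \sqcup \bigsqcup_{j=1}^J r_j\Lambda,
$$
where $r_j\Lambda := \{r_j\lambda : \lambda\in\Lambda\}$ carries the same multiplicities as $\Lambda$ itself. (In terms of the multinomial counting \eqref{multinom_l}, this simply reflects the identity $\binom{|\alpha|}{\alpha_1,\ldots,\alpha_J}=\sum_{j=1}^{J}\binom{|\alpha|-1}{\alpha_1,\ldots,\alpha_j-1,\ldots,\alpha_J}$ for $|\alpha|\ge 1$.) Correspondingly, the index set $\eN_0$ partitions as $\{0\}\sqcup I_1\sqcup\cdots\sqcup I_J$, where, for each $j=1,\dots,J$, the subfamily $(\lambda_i)_{i\in I_j}$ equals, as a multiset, the scaled sequence $(r_j\lambda_k)_{k\ge 0}$.

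Next, I would translate this combinatorial decomposition into a geometric one for the open set $\Omega$. By the definition of a self-similar spray (Definition~\ref{ss_spray}), the sets $\{G_i\}_{i\ge 0}$ in \eqref{Galfa} are pairwise disjoint and each $G_i$ is congruent to $\lambda_i G$. Therefore, setting
$$
\Omega^{(j)} := \bigsqcup_{i\in I_j} G_i \qquad (j=1,\dots,J),
$$
we obtain the disjoint decomposition $\Omega = G_0 \sqcup \bigsqcup_{j=1}^J \Omega^{(j)}$, with $G_0$ congruent to $G$ and each $\Omega^{(j)}$ congruent (via a single rigid motion of $\eR^N$) to the union $\sqcup_{k\ge 0}(r_j G_k)$, which is precisely $r_j\Omega$ up to isometry.

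Taking topological boundaries and invoking Definition~\ref{union}, this yields the identity of RFDs
$$
(A,\Omega) = (\partial G_0, G_0) \sqcup \bigsqcup_{j=1}^J (\partial\Omega^{(j)}, \Omega^{(j)}),
$$
and since $(\partial G_0, G_0)$ is congruent to $(\partial G, G)$ while $(\partial\Omega^{(j)}, \Omega^{(j)})$ is congruent to $r_j(A,\Omega)$ for each $j$, the desired relation \eqref{ssss} follows at once.

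The only genuine technical point is the bookkeeping of the multiplicities in $\Lambda$ under the ``first-letter'' partition; this is most cleanly handled by encoding $\Lambda$ as the set of all finite words on the alphabet $\{1,\dots,J\}$, in which case the decomposition becomes tautological. One should also check that the hypotheses of Definition~\ref{union} are met: the $G_i$ (and hence their groupings $G_0$ and $\Omega^{(j)}$) are pairwise disjoint open sets, the total measure $|\Omega|$ is finite by condition \eqref{r_J^N}, and the boundary relation $A_i=\partial G_i$ ensures that condition \eqref{dj} of Theorem~\ref{unionz} holds. None of these poses any real difficulty, so the lemma reduces to the prefix-decomposition of the scaling multiset described above.
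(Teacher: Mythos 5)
Your proof is correct and takes essentially the same approach as the paper's: both re-index the scaling multiset $\Lambda$ by finite words on the alphabet $\{1,\dots,J\}$ and then partition the non-empty words by their first letter, pulling out the factor $r_j$ to identify each subfamily with a scaled copy of $(A,\O)$ (up to congruence of the individual pieces). The paper simply carries out the bookkeeping directly at the level of the disjoint union $\bigsqcup_{\alpha\in I}\lambda_\alpha(\partial G,G)$ rather than first writing down the multiset identity $\Lambda=\{1\}\sqcup\bigsqcup_{j}r_j\Lambda$, but the combinatorial content is identical.
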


\begin{proof}
Let us re-index the scaling sequence $(\lambda_{i})_{i\geq 0}$ in a way that keeps track of the actual construction of the numbers $\lambda_i$ out of the scaling ratios $r_1,\ldots,r_J$; see Equation~\eqref{LAMBDA} above.
We let
\begin{equation}
I:=\{\emptyset\}\cup\bigcup_{m=1}^{\ty}\{1,\ldots,J\}^{m}
\end{equation}
be the set of all finite sequences consisting of numbers $1,\ldots,J$ (or, equivalently, of all finite words based on the alphabet $\{1,\ldots,J\}$).
Furthermore, for every $\a\in I$, define
\begin{equation}
\lambda_{\a}:=\begin{cases}
1,&\a=\emptyset\\
r_{\a_1}r_{\a_2}\cdots r_{\a_m},&\a\neq\emptyset.
\end{cases}
\end{equation}
We then deduce from the construction of $(A,\O)$ that
$$
\begin{aligned}
(A,\O)&=\bigsqcup_{i=0}^{\ty}(\pa G_{i},G_{i})=\bigsqcup_{i=0}^{\ty}\lambda_{i}(\pa G,G)\\
&=\bigsqcup_{\a\in I}\lambda_{\a}(\pa G,G)=(\pa G,G)\sqcup\bigsqcup_{\a\in I\setminus\{\emptyset\}}\lambda_{\a}(\pa G,G).\\
\end{aligned}
$$
Observe now that in the last disjoint union above, every $\a\in\{1,\ldots,J\}^{m}$ can be written as $\{j\}\times\{1,\ldots,J\}^{m-1}$, for some $j\in\{1,\ldots,J\}$, if we identify $\{j\}$ with $\{j\}\times\{\emptyset\}$ when $m=1$.
Note that this identification is consistent with the definition of $\lambda_\a$, in the sense that $\lambda_{\{j\}\times\beta}=r_j\lambda_{\beta}$ for all $j\in\{1.\ldots,J\}$ and $\b\in I$.
In light of this, we can next partition the last union above with respect to which number $j\in\{1,\ldots,J\}$ the sequence $\a$ begins with:
$$
\begin{aligned}
(A,\O)&=(\pa G,G)\sqcup\bigsqcup_{j=1}^{J}\bigsqcup_{\a\in \{j\}\times I}\lambda_{\a}(\pa G,G)=(\pa G,G)\sqcup\bigsqcup_{j=1}^{J}\bigsqcup_{\b\in I}r_j\lambda_{\b}(\pa G,G)\\
&=(\pa G,G)\sqcup\bigsqcup_{j=1}^{J}r_j\bigg(\bigsqcup_{\b\in I}\lambda_{\b}(\pa G,G)\bigg)=(\pa G,G)\sqcup\bigsqcup_{j=1}^{J}r_j(A,\O).
\end{aligned}
$$
This completes the proof of the lemma.
%
%$$
%\begin{aligned}
%(A,\O)&=\bigsqcup_{i=0}^{\ty}(\pa G_{i},G_{i})=\bigsqcup_{i=0}^{\ty}\lambda_{i}(\pa G,G)\\
%&=(\pa G,G)\sqcup\bigsqcup_{n=1}^{\ty}\bigsqcup_{\{\a_1+\dots+\a_J=n\}}r_1^{\a_1}\dots r_J^{\a_J}(\pa G,G)\\
%&=(\pa G,G)\sqcup\bigsqcup_{n=1}^{\ty}\bigsqcup_{k=1}^{J}r_k\left(\bigsqcup_{\{\a_1+\dots+\a_J=n-1\}}r_1^{\a_1}\dots r_J^{\a_J}(\pa G,G)\right)\\
%&=(\pa G,G)\sqcup\bigsqcup_{k=1}^{J}r_k\left(\bigsqcup_{n=1}^{\ty}\bigsqcup_{\{\a_1+\dots+\a_J=n-1\}}r_1^{\a_1}\dots r_J^{\a_J}(\pa G,G)\right)\\
%&=(\pa G,G)\sqcup\bigsqcup_{k=1}^{J}r_k\bigsqcup_{\a\in\eN_0^J}\lambda_{\a}(\pa G,G)\\
%&=(\pa G,G)\sqcup\bigsqcup_{k=1}^{J}r_k(A,\O).
%\end{aligned}
%$$
%The equality in the third row above can be easily seen to hold if we observe that 
\end{proof}

In light of the identity \eqref{ssss} and a very special case of Theorem~\ref{unionz}, it is now clear that the distance zeta function of $(A,\O)$ satisfies the following functional equation, which itself can be considered as a {\em self-similar identity}:
\begin{equation}\label{self-sim_eq}
\zeta_{A,\O}(s)=\zeta_{\partial G,G}(s)+\sum_{j=1}^J\zeta_{r_j(A,\O)}(s),
\end{equation}
for all $s\in\Ce$ with $\re s$ sufficiently large.\footnote{For instance, it suffices to assume that $\re s>N$ since, by Theorem~\ref{an_rel}, all of the zeta functions appearing in \eqref{self-sim_eq} are holomorphic on the right half-plane $\{\re s>N\}$.}
Furthermore, for such $s$, by using the scaling property of the relative distance zeta function (Theorem \ref{scaling}), we deduce that the above equation then becomes
\begin{equation}
\zeta_{A,\O}(s)=\zeta_{\partial G,G}(s)+\sum_{j=1}^Jr_j^s\zeta_{A,\O}(s).
\end{equation}
Finally, this last identity together with an application of the principle of analytic continuation now yields the following theorem.

\begin{theorem}\label{ss_spray_zeta}
Let $G$ be the generator of a self-similar spray in $\eR^N$, and let $\{r_1,r_2,\ldots,r_J\}$, with $r_j>0$ $($for $j=1,\ldots,J$, $J\geq 2$$)$ and such that $\sum_{j=1}^{J}r_j^N<1$, be its scaling ratios.
Furthermore, let $(A,\O):=(\pa\O,\O)$ be the self-similar spray generated by $G$, as in Definition \ref{ss_spray}.
Then, the distance zeta function of $(A,\O)$ is given by
\begin{equation}\label{ss_spray_form}
\zeta_{A,\O}(s)=\frac{\zeta_{\partial G,G}(s)}{1-\sum_{j=1}^{J}r_j^s},
\end{equation}
for all $s\in\Ce$ with $\re s$ sufficiently large.
In addition,
\begin{equation} 
D(\zeta_{A,\O})=\max\{\ov{\dim}_B(\pa G,G),D\},
\end{equation}
where $D>0$ is the unique real solution of $\sum_{j=1}^{J}r_j^D=1$ $($i.e., $D$ is the {\rm similarity dimension}\index{similarity dimension|textbf} of the self-similar spray $(\pa\O,\O)$$)$.

More specifically, given a connected open neighborhood $U$ of the critical line $\{\re s=D\}$, $\zeta_{A,\O}$ has a meromorphic continuation to $U$ if and only if $\zeta_{\pa G,G}$ does, and in that case, $\zeta_{A,\O}(s)$ is given by \eqref{ss_spray_form} for all $s\in U$.
Consequently, the visible complex dimensions of $(A,\O)$ satisfy
\begin{equation}\label{ss_spray_po}
\po(\zeta_{A,\O},U)\subseteq(\mathfrak{D}\cap U)\cup\po(\zeta_{\partial G,G},U),
\end{equation}
where $\mathfrak{D}$ is the set of all the complex solutions of the Moran equation $\sum_{j=1}^{J}r_j^s=1$ $($i.e., the {\rm scaling complex dimensions} of the fractal spray$)$; see Remark \ref{fractality0} for detailed information about $\mathfrak{D}$.
Finally, if there are no zero-pole cancellations in \eqref{ss_spray_form}, then we have an equality in \eqref{ss_spray_po}.
\end{theorem}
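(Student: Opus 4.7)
The plan is to derive the closed-form identity \eqref{ss_spray_form} by applying the self-similar decomposition of Lemma~\ref{self_lemma} together with the additivity of Theorem~\ref{unionz} and the scaling property of Theorem~\ref{scaling}, and then to read off all the remaining assertions directly from this factorization. More precisely, I would apply Theorem~\ref{unionz} to
\begin{equation*}
(A,\O) = (\pa G, G) \sqcup \bigsqcup_{j=1}^{J} r_j(A,\O),
\end{equation*}
after noting that the technical hypothesis \eqref{dj} is automatic here: each constituent is an RFD of the form $(\pa\Omega',\Omega')$, because $A=\pa\O$ and the scaled copies satisfy $r_j(A,\O)=(\pa(r_j\O),r_j\O)$; this is precisely the sufficient condition singled out at the end of Theorem~\ref{unionz}. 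Combined with $\zeta_{r_j(A,\O)}(s)=r_j^{s}\zeta_{A,\O}(s)$ from Theorem~\ref{scaling}, this yields the functional equation
\begin{equation*}
\zeta_{A,\O}(s) = \zeta_{\pa G, G}(s) + \Big(\textstyle\sum_{j=1}^{J} r_j^{s}\Big)\,\zeta_{A,\O}(s),
\end{equation*}
valid for $\re s > N$ (where absolute convergence of all series is guaranteed by the summability condition \eqref{r_J^N}). Solving for $\zeta_{A,\O}(s)$ in the region where $\bigl|\sum r_j^s\bigr|<1$ (which contains $\{\re s>D\}$) produces the claimed formula \eqref{ss_spray_form} in a suitable half-plane.

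Next I would identify the abscissa of convergence. The function $(1-\sum r_j^s)^{-1}$ is meromorphic on all of $\Ce$, with poles contained in $\mathfrak{D}$ and, in particular, with a real pole at $s=D$; whereas $\zeta_{\pa G, G}$ is holomorphic in the half-plane $\{\re s>\ov\dim_B(\pa G,G)\}$ by Theorem~\ref{an_rel}(a). Hence \eqref{ss_spray_form} already shows $D(\zeta_{A,\O})\le\max\{\ov\dim_B(\pa G, G),D\}$. For the reverse inequality, I would argue by cases: if the maximum equals $D$, then $\zeta_{\pa G, G}(D)>0$ (the integral of a positive function), so $s=D$ remains a genuine singularity of the product \eqref{ss_spray_form} and $\zeta_{A,\O}$ cannot be holomorphic there; if instead the maximum equals $\ov\dim_B(\pa G, G)$, then monotonicity of the upper box dimension with respect to the ``containing'' RFD gives $\ov\dim_B(A,\O)\ge\ov\dim_B(\pa G,G)$ (since one copy of $(\pa G,G)$ is a sub-RFD of $(A,\O)$), and Theorem~\ref{an_rel}(b) applied to $\zeta_{A,\O}$ completes the argument. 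For the meromorphic continuation statement, the key observation is that \eqref{ss_spray_form} is an identity of meromorphic functions wherever both sides are defined, and the factor $1-\sum r_j^s$ is a nonzero entire function; thus the identity propagates meromorphy in both directions by the principle of analytic continuation, and the inclusion \eqref{ss_spray_po} is immediate, with equality possible unless zero-pole cancellations occur between $\zeta_{\pa G, G}$ and $(1-\sum r_j^s)^{-1}$.

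The main obstacle, as I see it, is the careful justification of the countable additivity step, because \eqref{ssss} is recursively self-referential and the resulting numerical series of zeta functions is a priori only a formal identity. The cleanest route around this is to avoid the recursion altogether: Lemma~\ref{self_lemma} already expresses $(A,\O)$ as the unrolled disjoint union $\bigsqcup_{i\ge 0}(\pa G_i, G_i)$ with $G_i\cong\lambda_i G$, so I would apply Theorem~\ref{unionz} directly to this countable disjoint union and invoke the scaling property term by term, obtaining
\begin{equation*}
\zeta_{A,\O}(s) \;=\; \zeta_{\pa G,G}(s)\sum_{i=0}^{\ty}\lambda_i^{s} \;=\; \zeta_{\pa G,G}(s)\sum_{n=0}^{\ty}\Big(\textstyle\sum_{j=1}^{J} r_j^{s}\Big)^{n}
\end{equation*}
in the half-plane of absolute convergence (by the multinomial organization of $\Lambda$). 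Summing the geometric series then gives \eqref{ss_spray_form} directly, and the earlier identification of $D(\zeta_{A,\O})$ ensures that this derivation is valid in a nonempty region to which the principle of analytic (meromorphic) continuation may be applied.
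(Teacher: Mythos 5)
Your proposal is correct and follows essentially the same line of argument as the paper: it derives the functional equation $\zeta_{A,\O}(s)=\zeta_{\pa G,G}(s)+\bigl(\sum_{j=1}^J r_j^s\bigr)\zeta_{A,\O}(s)$ from Lemma~\ref{self_lemma}, Theorem~\ref{unionz} and Theorem~\ref{scaling}, solves it to obtain \eqref{ss_spray_form}, and then propagates the identity via the principle of analytic continuation. One minor remark: your concern that the recursive form \eqref{ssss} makes the identity ``only formal'' is unfounded, since \eqref{ssss} is a \emph{finite} disjoint union, so Theorem~\ref{unionz} immediately yields a finite linear equation in the (already well-defined) quantity $\zeta_{A,\O}(s)$ for $\re s>N$; your unrolled alternative is equally valid but not strictly necessary.
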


%%% Goran pocetak

\begin{remark}\label{5.151/2} ({\em Complex dimensions and the definition of fractality}).\index{fractality and complex dimensions}
In [{Lap-vFr1--3}], 
%\cite{lapidusfrank,lapidusfrank06,lapidusfrank12}
a geometric object is said to be ``fractal'' if the associated zeta function has at least one nonreal complex dimension (with positive real part).
(See \cite[\S12.1 and \S12.2]{lapidusfrank12} for a detailed discussion.)
In \cite{lapidusfrank06,lapidusfrank12}, in order, in particular, to take into account some possible situations pertaining to random fractals (see \cite{HamLa}, partly described in \cite[\S13.4]{lapidusfrank12}), the definition of fractality (within the context of the theory of complex dimensions) was extended so as to allow for the case described in part $(i)$ of Definition~\ref{hyperfractal} just above, namely, the existence of a (meromorphic) natural boundary along a screen.
(See \cite[\S13.4.3]{lapidusfrank12}.)

We note that in \cite{lapidusfrank12} (and the other aforementioned references), the term ``hyperfractal'' was not used to refer to case $(i)$ (or to any other situation).
More important, except for fractal strings and in very special higher-dimensional situations (such as suitable fractal sprays), one did not have to our disposal (as we now do), %thanks to the general theory developed in this memoir and in [{LapRa\v Zu1--8}]) 
a general definition of ``fractal zeta function'' associated with an arbitrary bounded subset of $\eR^N$, for every $N\geq 1$.
Therefore, we can now define the ``fractality'' of any bounded subset of $\eR^N$ (including Julia sets and the Mandelbrot set) and, more generally, of any relative fractal drum, by the presence of a nonreal complex dimension or else by the ``hyperfractality'' (in the sense of part $(i)$ of Definition~\ref{hyperfractal}) of the geometric object under consideration.
Here, ``complex dimension'' is understood as a (visible) pole of the associated fractal zeta function (the distance or tube zeta function of a bounded subset or a relative fractal drum of $\eR^N$, or else, as was the case in most of \cite{lapidusfrank12}, the geometric zeta function of a fractal string).

Much as in [{Lap-vFr1--3}] and [{Lap3--8}], this terminology (concerning fractality, hyperfractality, and complex dimensions), can be extended to `virtual geometries', as well as to (absolute or) relative fractal drums, noncommutative geometries, dynamical systems, and arithmetic geometries, via suitably associated `fractal zeta functions', be they absolute or relative distance or tube zeta functions, spectral zeta functions, dynamical zeta functions, or arithmetic zeta functions (or their logarithmic derivatives thereof).
\end{remark}

We will return to the discussion of the notion of fractality in the closing chapter of this paper, namely,~Chapter \ref{fractality}; see also the next remark.
\medskip

\begin{remark}\label{fractality0}
The multiset $\mathfrak{D}$ of scaling complex dimensions of the self-similar spray $(A,\O)$ is analyzed in detail in [Lap-vFr3, Chapter 3; esp., Theorem 3.6].\footnote{See also [Lap-vFr3, Chapter 2; esp., Theorem 2.16] for the one-dimensional case, corresponding to self-similar strings.} Accordingly, there is a natural lattice\,/\,nonlattice dichotomy defined as follows: $(A,\O)$ is {\em lattice}\index{self-similar!spray (or tiling)!lattice|textbf} if the multiplicative subgroup $G$ of $(0,+\ty)$ generated by the distinct values of the scaling ratios $r_1,\dots,r_J$ is of rank~$1$ (i.e., is of the form $r^\Ze$ for some unique real number $r\in(0,1)$, called the {\em multiplicative generator} of the spray). It is {\em nonlattice},\index{self-similar!spray (or tiling)!nonlattice|textbf} otherwise (i.e., if the above group is of rank $>1$), and {\em generic nonlattice} if $G$ is of maximal rank $>1$ (i.e., of rank $J'$, the number of distinct elements in the ratio list $r_1,\dots, r_J$, and $J'>1$).

Then, according to [Lap-vFr3, Theorem~3.6], in the lattice case, all of the scaling complex dimensions are periodically distributed along finitely many vertical lines (the right most of which is the vertical line $\{\re s=D\}$) with the same period $T:=2\pi/\log(r^{-1})$, called the {\em oscillatory period} of the lattice self-similar spray.\footnote{On each of these vertical lines, the corresponding scaling complex dimensions all have the same multiplicity. In particular, along the vertical line $\{\re s=D\}$, they are all simple.} On the other hand, in the nonlatice case,  $\mathfrak{D}$ is simple and is the only principal scaling complex dimension (i.e., the only scaling complex dimension with real part $D$ located on the vertical line $\{\re s=D\}$). However, there is an infinite sequence of distinct scaling complex dimensions converging from the left to (but not touching) the vertical line $\{\re s=D\}$.

Moreover, it was conjectured in [Lap-vFr2,3, \S3.7] (and especially, in reference [Lap-vF7] of [Lap-vFr3]) that in the generic nonlattice case, the set of real parts of the scaling dimensions is dense in a compact interval $[\s_l,D]$, with $\s_l\in\eR$ and $\s_l<D$; i.e., the {\em set of ``fractality''} (that is, the closure of the above set of real parts, as defined in [Lap-vFr2,3]) is equal to $[\s_l,D]$, in striking contrast to the lattice case where it is a finite set. 
This conjecture has recently been proved in [MorSep], where it was also shown that in the nonlattice (but not necessarily, generic nonlattice) case, the set of ``fractality'' is equal to a finite (and nonempty) disjoint union of nonempty compact intervals.  

Finally, via Diophantine approximation techniques, the scaling complex dimensions of a nonlattice self-similar spray can be approximated by those of a sequence of lattice sprays with larger and larger periods. (See [Lap-vFr3, \S3.4, esp., Theorem 3.19].) Accordingly, in the nonlattice case, the scaling complex dimensions exhibit a quasiperiodic pattern (studied in detail both numerically and theoretically in [Lap-vFr3, Chapter~3]).
\end{remark}
\medskip

\begin{example}({\em The $1/2$-square fractal}).\label{kvadrat0.5}\index{square@$1/2$-square fractal|textbf}
In this planar example, we will further investigate and illustrate the new interesting phenomenon which occurs in the case of the Sierpi\'nski $3$-gasket RFD discussed in Example \ref{Ngasket}.
Namely, we start with the closed unit square $I=[0,1]^2$ in $\eR^2$ and subdivide it into $4$ smaller squares by taking the centerlines of its sides.
We then remove the two diagonal open smaller squares, denoted by $G_1$ and $G_2$ in Figure \ref{kv_0.5}, so that $G:=G_1\cup G_2$ is our generator in the sense of Definition \ref{ss_spray}.
Next, we repeat this step with the remaining two closed smaller squares and continue this process, ad infinitum.
The $1/2$-square fractal is then defined as the set $A$ which remains at the end of the process; see Figure \ref{kv_0.5}, where the first 6 iterations are shown.
More precisely, the set $A$ is the closure of the union of the boundaries of the disjoint family of open squares appearing in Figure \ref{kv_0.5} and packed in the unit square $I$.
If we now let $\O:=(0,1)^2$, we have that $(A,\O)$ is an example of a self-similar spray (or tiling), in the sense of Definition \ref{ss_spray}, with generator $G=G_1\cup G_2$ and scaling ratios $r_1=r_2=1/2$.
Note, however, that $A$ is not a ({\em homogeneous}) self-similar set in the usual sense (see, e.g., \cite{falc,hutchinson}), defined via iterated function systems (or, in short, IFS), but it is an {\em inhomogeneous}\index{homogeneous self-similar set} self-similar set.\label{IFS_label}

More specifically, the set $A$ is the unique nonempty compact subset of $\eR^2$ which is the solution of the inhomogeneous fixed point equation\index{fixed point equation!inhomogeneous} 
\begin{equation}\label{4.2.102.1/2EE}
A=\bigcup_{j=1}^2\Phi_j(A)\cup B,
\end{equation}
where $\Phi_1$ and $\Phi_2$ are contractive similitudes of $\eR^2$ with fixed points located at the lower left vertex and the upper right vertex of the unit square, respectively, and with a common scaling ratio equal to $1/2$ (i.e., $r_1=r_2=1/2$, where $\{r_j\}_{j=1}^2$ are the scaling ratios of the self-similar RFD $(A,\O)$).
Furthermore, the nonempty compact set $B$ in Equation \eqref{4.2.102.1/2EE} is the union of the left and upper sides of the square $G_1$ and the right and lower sides of the square $G_2$; see Figure \ref{kv_0.5}.
We note that here, the corresponding (classic or homogeneous) self-similar set (i.e., the unique nonempty compact subset $C$ of $\eR^2$ which is the solution of the homogeneous fixed point equation, $C=\cup_{j=1}^2\Phi_j(C)$), is the diagonal $C$ of the unit square connecting the lower left and the upper right vertices of the unit square.

\begin{figure}[ht]
\begin{center}
\includegraphics[width=9cm]{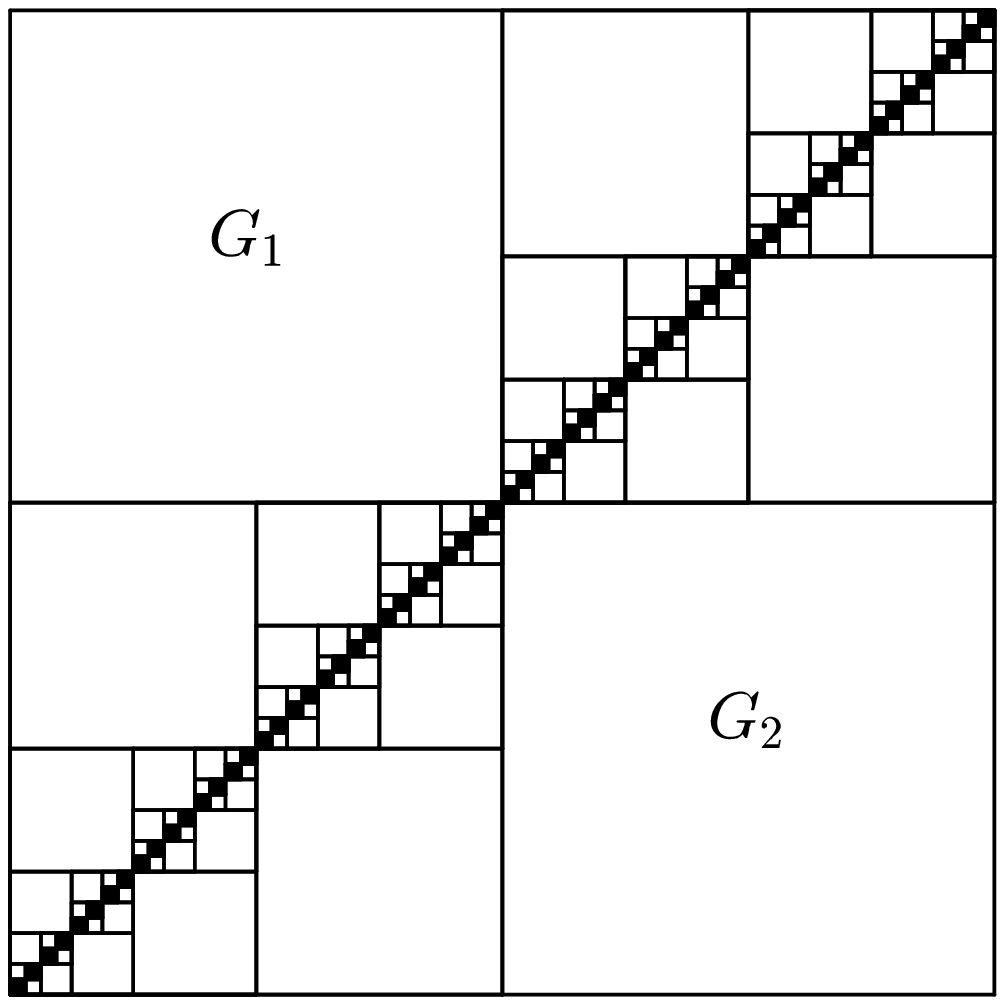}
\caption{The $1/2$-square fractal $A$ from Example \ref{kvadrat0.5}. The first 6 iterations are depicted. Here, $G:=G_1\cup G_2$ is the single generator of the corresponding self-similar spray or RFD $(A,\O)$, in the sense of Definition \ref{ss_spray}. The set $A$ is equal to the complement of the union of the disjoint family of all open squares, with respect to $\O=(0,1)^2$. Equivalently, the set $A$ coincides with the closure of the union of the boundaries of all the open squares.}\label{kv_0.5}
\end{center}
\end{figure}

Let us now compute the distance zeta function $\zeta_A$ of the $1/2$-square fractal.
Without loss of generality, we may assume that $\delta>1/4$; so that we have
\begin{equation}\label{eqq1}
\zeta_{A}(s)=\zeta_{A,\O}(s)+\zeta_{I}(s),
\end{equation}
where, intuitively, $\zeta_{I}$ denotes the distance zeta function corresponding to the `outer' $\d$-neigh\-bor\-hood of $A$.
Clearly, $\zeta_I$ is equal to the distance zeta function of the unit square $I:=[0,1]^2$; it is straightforward to compute it and show that it has a meromorphic extension to all of $\Ce$ given by
%\footnote{See also the proof of Proposition \ref{sierpinski_carpet0} where this computation was performed.}
\begin{equation}\label{eqq2}
\zeta_{I}(s)=\frac{4\delta^{s-1}}{s-1}+\frac{2\pi\delta^s}{s},
\end{equation}
for all $s\in\Ce$.

Furthermore, by using Theorem \ref{ss_spray_zeta}, we obtain that
\begin{equation}\label{eqq3}
\zeta_{A,\O}(s)=\frac{\zeta_{\pa G,G}(s)}{1-2\cdot 2^{-s}}=\frac{2^s\zeta_{\pa G,G}(s)}{2^s-2},
\end{equation}
for all $s\in\Ce$ with $\re s$ sufficiently large.
Next, we compute the distance zeta function of $(\pa G,G)$ by subdividing $G=G_1\cup G_2$ into 16 congruent triangles (see also Figure \ref{sierpinski_carpetr2}, which describes the way we subdivide both $G_1$ and $G_2$) and by using local Cartesian coordinates $(x,y)\in\eR^2$ to deduce that
\begin{equation}\nonumber
\zeta_{\pa G,G}(s)=16\int_0^{1/4}\di x\int_0^{x}y^{s-2}\di y=\frac{4^{-s}}{s(s-1)},
\end{equation}
for all $s\in\Ce$ with $\re s>1$.
Hence,
\begin{equation}\label{eqq4}
\zeta_{\pa G,G}(s)=\frac{4^{-s}}{s(s-1)},
\end{equation}
an identity valid initially for all $s\in\Ce$ such that $\re s>1$, and then, after meromorphic continuation, for all $s\in\Ce$.
Finally, by combining Equations \eqref{eqq1}--\eqref{eqq4}, we conclude that the distance zeta function $\zeta_A$ is meromorphic on all of $\Ce$ and is given by
\begin{equation}\label{1/2-square_dist}
\zeta_A(s)=\frac{2^{-s}}{s(s-1)(2^s-2)}+\frac{4\delta^{s-1}}{s-1}+\frac{2\pi\delta^s}{s},
\end{equation}
for all $s\in\Ce$.

Consequently, we have that $\dim_BA$ exists,\footnote{The existence of $\dim_BA$ in Example \ref{kvadrat0.5} (as well as in Examples \ref{kvadrat_0.33} and \ref{ss_fractal_nest} below) follows from \cite[Theorem 5.4.30]{fzf} (see also \cite[Theorem 4.2]{mm}.)
}
\begin{equation}\label{4.2.108E}
\begin{gathered}
D(\zeta_A)=\dim_BA=1,\\
\po(\zeta_A):=\po(\zeta_A,\Ce)=\{0\}\cup\left(1+\mathbf{p}\I\Ze\right)
\end{gathered}
\end{equation}
and
\begin{equation}\label{PCAsq}
\dim_{PC}A:=\po_c(\zeta_A)=1+\mathbf{p}\I\Ze,
\end{equation}
where the oscillatory period\index{oscillatory period!of the $1/2$-square fractal} $\mathbf{p}$ of $A$ is given by $\mathbf{p}:=\frac{2\pi}{\log 2}$.
All of the complex dimensions in $\po(\zeta_A)$ are simple except for $\omega=1$, which is a double pole of $\zeta_A$.
%We will revisit this example in Chapter \ref{ch_ftf} where we will use the distance zeta function of $A$ given by \eqref{1/2-square_dist}, in order to derive its fractal tube formula (see Example \ref{1/2-tube_formula} in Subsection \ref{subsec_self_similar_sp}).
%For now, we simply mention that it will follow from the results of Chapter \ref{ch_ftf} that $\dim_BA=D(\zeta_A)=1$ and that $A$ is not Minkowski measurable because of the presence of the double pole of $\zeta_A$ at $\omega=1$.
%On the other hand, $A$ is $h$-Minkowski measurable, where the gauge function $h$ is given by $h(t):=\log t^{-1}$ for all $t\in(0,1)$, and by Theorem \ref{log-mink}, the corresponding $h$-Minkowski content is given by
%\begin{equation}
%\mathcal{M}^{1}(A,h)={\zeta_A[1]_{-2}}=\frac{1}{4\log 2},
%\end{equation}
%where ${\zeta_A[1]_{-2}}$ is the $(-2)$-nd coefficient in the Laurent series expansion of $\zeta_A$ around $s=1$.
Finally, we note that in light of Equation \eqref{4.2.108E} (and hence, in light of the presence of nonreal complex dimensions), the set $A$ is indeed {\em fractal} according to our proposed definition of fractality given in Remark \ref{5.151/2} and further discussed in Chapter \ref{fractality} below. In fact, according to Equation \eqref{PCAsq}, it is {\em critically fractal}\index{critical fractality|textbf} (i.e., fractal in dimension $d:=1=\dim_BA$, in the sense of \S\ref{sub_rfd}).
\end{example}

\begin{example}({\em The $1/3$-square fractal}).\label{kvadrat_0.33}\index{squaree@$1/3$-square fractal|textbf}
In the present planar example, we illustrate a situation which is similar to that of the inhomogeneous Sierpi\'nski $N$-gasket RFD discussed in Example \ref{Ngasket} for $N\geq 4$.
Again, we start with the closed unit square $I=[0,1]^2$ in $\eR^2$ and subdivide it into $9$ smaller congruent squares (similarly as in the case of the Sierpi\'nski carpet).
Next, we remove 7 of those smaller squares; that is, we only leave the lower left and the upper right squares (see Figure \ref{kv_0.33}).
In other words, our generator $G$ (in the sense of Definition \ref{ss_spray}) is the (nonconvex) open polygon depicted in Figure \ref{kv_0.33}.

As usual, we proceed by iterating this procedure with the two remaining closed squares and then continue this process ad infinitum.
(The first 4 iterations are depicted in Figure \ref{kv_0.33}.)
The $1/3$-square fractal is then defined as the set $A$ which remains at the end of the process.
We now let $\O:=(0,1)^2$, which makes the RFD $(A,\O)$ a self-similar spray (or tiling), in the sense of Definition \ref{ss_spray}, with generator $G$ and scaling ratios $\{r_j\}_{j=1}^2$ such that $r_1=r_2=1/3$.
Again, the set $A$ is not a homogeneous self-similar set, but is instead an inhomogeneous self-similar set.

More specifically, the set $A$ is the unique nonempty compact subset of $\eR^2$ which is the solution of the inhomogeneous equation
\begin{equation}\label{4.2.110.1/2EE}
A=\bigcup_{j=1}^2\Phi_j(A)\cup B,
\end{equation}
where $\Phi_1$ and $\Phi_2$ are contractive similitudes of $\eR^2$ with fixed points located at the lower left vertex and the upper right vertex of the unit square, respectively, and with a common scaling ratio equal to $1/3$.
Furthermore, the set $B$ in Equation \eqref{4.2.110.1/2EE} is equal to the boundary of $G$ without the part belonging to the boundary of the two smaller squares which are left behind in the first iteration; see Figure \ref{kv_0.33}.
We also observe that here, the corresponding (classic or homogeneous) self-similar set generated by the IFS consisting of $\Phi_1$ and $\Phi_2$, is the ternary Cantor set located along the diagonal of the unit square.

\begin{figure}[ht]
\begin{center}
\includegraphics[width=9cm]{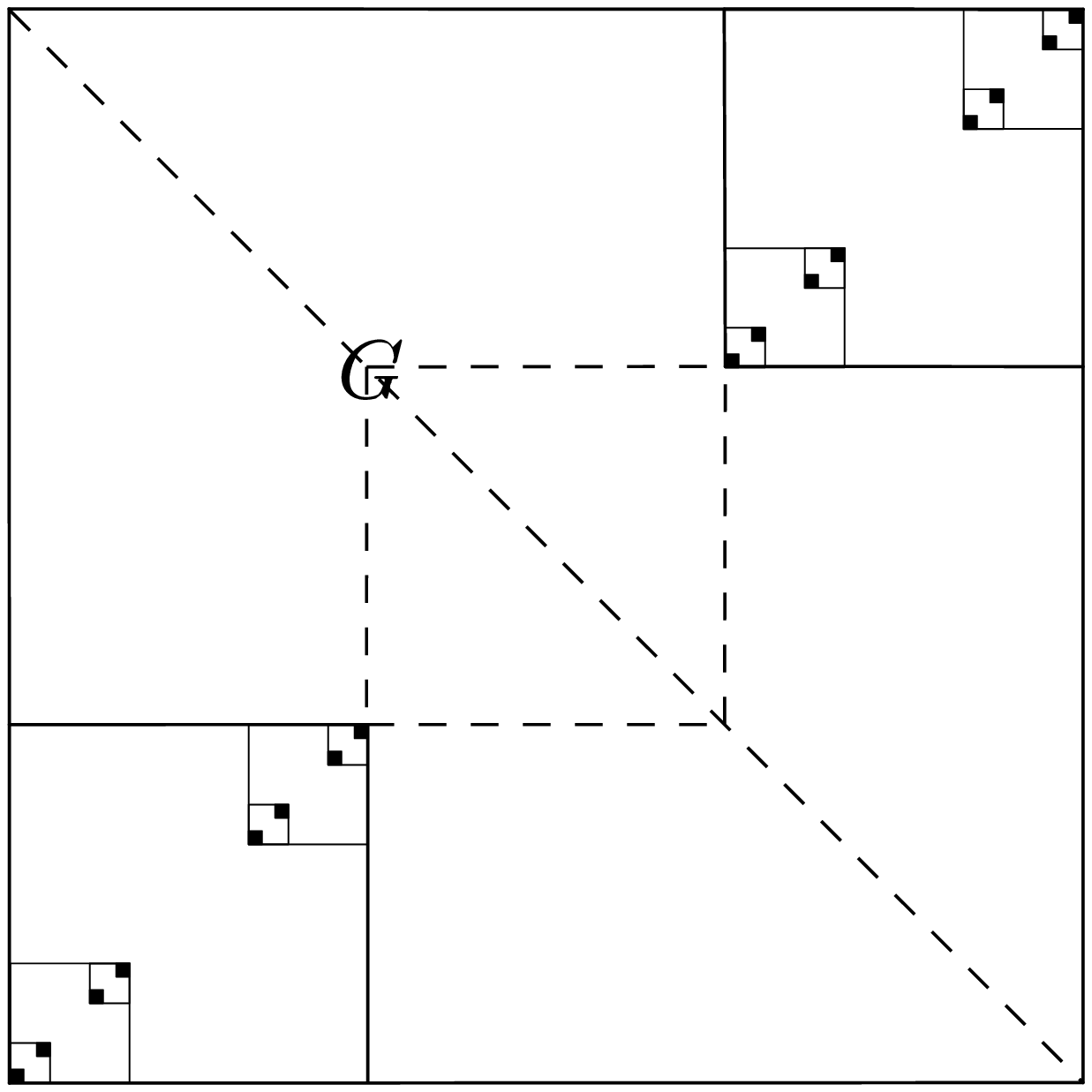}
\caption{The $1/3$-square fractal $A$ from Example \ref{kvadrat_0.33}. The first 4 iterations are depicted. Here, $G$ is the single generator of the corresponding self-similar spray or RFD $(A,\O)$, in the sense of Definition \ref{ss_spray}. The set $A$ is equal to the complement of the union of the disjoint family of all the open $8$-gons, with respect to the open square $\O=(0,1)^2$. The largest $8$-gon is equal to the union of two open squares indicated by dashed sides of length $2/3$, while each of the next two smaller $8$-gons is obtained by scaling the first one by the factor $1/3$.  Any of the $2^k$ $8$-gons of the $k$-th generation is obtained by scaling the first one by the factor $1/3^{k-1}$, for any $k\in\eN$. Equivalently, $A$ coincides with the closure of the union of the boundaries of all the $8$-gons.}\label{kv_0.33}
\end{center}
\end{figure}

We now proceed by computing the distance zeta function $\zeta_A$ of the $1/3$-square fractal.
Without loss of generality, we may assume that $\delta>1/4$; so that we have
\begin{equation}\label{eqq13}
\zeta_{A}(s)=\zeta_{A,\O}(s)+\zeta_{I}(s),
\end{equation}
where, as before in Example \ref{kvadrat0.5}, $\zeta_{I}$ denotes the distance zeta function corresponding to the `outer' $\d$-neighborhood of $A$ and coincides with the distance zeta function of the unit square $I:=[0,1]^2$.
Recall that $\zeta_{I}$ was computed in Example \ref{kvadrat0.5} and is given by Equation \eqref{eqq2}.

Furthermore, by using Theorem \ref{ss_spray_zeta}, we obtain that
\begin{equation}\label{eqq33}
\zeta_{A,\O}(s)=\frac{\zeta_{\pa G,G}(s)}{1-2\cdot 3^{-s}}=\frac{3^s\zeta_{\pa G,G}(s)}{3^s-2},
\end{equation}
for all $s\in\Ce$ with $\re s$ sufficiently large.

Next, we compute the distance zeta function of $(\pa G,G)$ by subdividing $G$ into 14 congruent triangles denoted by $G_i$, for $i=1,\ldots,14$ (see Figure \ref{kv_0.33}).
Therefore, by symmetry, we obtain the following functional equation:
\begin{equation}\label{eqq43}
\zeta_{\pa G,G}(s)=12\zeta_{\pa G,G_1}(s)+2\zeta_{\pa G,G_{13}},
\end{equation}
valid initially for all $s\in\Ce$ such that $\re s$ is sufficiently large.

We use local Cartesian coordinates $(x,y)\in\eR^2$ in order to compute $\zeta_{\pa G,G_1}$ and obtain that
\begin{equation}\nonumber
\zeta_{\pa G,G_1}=\int_{0}^{1/3}\di x\int_0^{x}y^{s-2}\di y=\frac{3^{-s}}{s(s-1)}.
\end{equation}
Hence,
\begin{equation}\label{eqq53}
\zeta_{\pa G,G_1}=\frac{3^{-s}}{s(s-1)},
\end{equation}
an identity valid initially for all $s\in\Ce$ such that $\re s>1$, and then, after meromorphic continuation, for all $s\in\Ce$.
In order to compute $\zeta_{\pa G,G_{13}}$, we use local polar coordinates $(r,\theta)$ and deduce that
\begin{equation}\label{eqq63}
\begin{aligned}
\zeta_{\pa G,G_{13}}(s)&=\int_0^{\pi/2}\di\theta\int_0^{3^{-1}(\sin\theta+\cos\theta)^{-1}}r^{s-1}\di r\\
&=\frac{3^{-s}}{s}\int_0^{\pi/2}(\cos\theta+\sin\theta)^{-s}\di\theta,
\end{aligned}
\end{equation}
valid, initially, for all $s\in\Ce$ such that $\re s>0$ and then, after meromorphic continuation, for all $s\in\Ce$.
%Note that by using Theorem \ref{an2}, 
It is easy to check that
\begin{equation}\label{eqq73}
Z(s):=\int_0^{\pi/2}(\cos\theta+\sin\theta)^{-s}\di\theta
\end{equation}
is an entire function, since it is a generalized DTI $f(s):=\int_E \f(\theta)^s\di\mu(\theta)$, where $E:=[0,\pi/2]$,  $\f(\theta):=(\cos\theta+\sin\theta)^{-1}$ for all $\theta\in E$ is uniformly bounded by positive constants both from above and below, and $\di\mu(\theta):=\di\theta$.

Finally, by combining Equation \eqref{eqq2} and Equations \eqref{eqq13}--\eqref{eqq73}, we obtain that $\zeta_A$ is given by
\begin{equation}\label{zeta_1/3_square}
\zeta_{A}(s)=\frac{2}{s(3^s-2)}\left(\frac{6}{s-1}+Z(s)\right)+\frac{4\delta^{s-1}}{s-1}+\frac{2\pi\delta^s}{s},
\end{equation}
an identity valid initially for all $s\in\Ce$ with $\re s>1$ and then, after meromorphic continuation, for all $s\in\Ce$.

Consequently, we deduce that $\dim_BA$ exists,
\begin{equation}\label{c_dim_A1/3}
\begin{gathered}
D(\zeta_A)=\dim_BA=1,\\
\po(\zeta_A):=\po(\zeta_A,\Ce)\subseteq\{0\}\cup\left(\log_32+\mathbf{p}\I\Ze\right)\cup\{1\}
\end{gathered}
\end{equation}
and
\begin{equation}\label{PCA1}
\dim_{PC}A:=\po_c(\zeta_A)=\{1\},
\end{equation}
where the oscillatory period\index{oscillatory period!of the $1/3$-square fractal} $\mathbf{p}$ of $A$ is given by $\mathbf{p}:=\frac{2\pi}{\log 3}$.
In Equation \eqref{c_dim_A1/3}, we only have an inclusion since, in principle, some of the complex dimensions with real part $\log_32$ may be canceled by the zeros of $6/(s-1)+Z(s)$.
However, it can be checked numerically that $\log_32\in\po(\zeta_A)$ and that there also exist nonreal complex dimensions with real part $\log_32$ in $\po(\zeta_A)$. 
All of the complex dimensions in $\po(\zeta_A)$ are simple.
%We will revisit this example in Subsection \ref{subsec_self_similar_sp} (see Example \ref{1/3-tube_formula}) where we will obtain a fractal tube formula for the set $A$ from Equation \eqref{zeta_1/3_square}.
%For now, we simply mention that, $\dim_BA=1$ and that, by Theorem \ref{mink_char}, $A$ is Minkowski measurable with Minkowski content given by
%\begin{equation}
%\mathcal{M}^1(A)=\res(\zeta_A,1)=16.
%\end{equation} 
We also note that $A$ is indeed {\em fractal},\index{fractality and complex dimensions} according to our proposed definition of fractality (see Remark \ref{5.151/2} above and Chapter \ref{fractality} below).
More precisely, in light of Equations \eqref{c_dim_A1/3} and \eqref{PCA1}, it is {\em strictly subcritically fractal}\index{strictly subcritical fractal|textbf} and {\em fractal in dimension} $d=\log_32$, in the sense of~\S\ref{sub_rfd}.
%In closing, we also mention that the set $A$ is rectifiable and that its `length' (i.e., its $1$-dimensional Hausdorff measure) is given by
%\begin{equation}\label{duljina_A_1/3}
%H^1(A)=\frac{\mathcal{M}^1(A)}{\omega_1}=8,
%\end{equation}
%which can, of course, be easily checked directly.
%Here, $\omega_1=2$ is the volume of the $1$-dimensional ball of radius $1$.
\end{example}

\begin{example}({\em A self-similar fractal nest}).\label{ss_fractal_nest}\index{fractal nest|textbf}\index{self-similar!fractal nest|textbf}
In the final planar example of this section, we investigate the case of a self-similar fractal nest.\footnote{As we shall see, throughout this example, the use of the adjective ``self-similar'' is somewhat abusive since only one similarity transformation is involved.}
The set $A$ which we now define is an inhomogeneous self-similar set.
Similarly as in Example \ref{kvadrat_0.33}, the set $A$ will be {\em fractal}\index{fractality and complex dimensions} in the sense of our proposed definition of fractality given in Remark \ref{5.151/2} and, moreover, will be {\em strictly subcritically fractal}\index{strictly subcritically fractal|textbf} in the sense of~\S\ref{sub_rfd}.
%Remark \ref{5.5.9.1/2}.

Let $a\in(0,1)$ be a real parameter.
We define the set $A$ as the union of concentric circles with center at the origin and of radius $a^k$ for $k\in\eN_0$ (see Figure \ref{nest_0.8}).
Furthermore, let $G$ be the open annulus such that $\pa G$ consists of the circles of radius $1$ and $a$, as depicted in Figure \ref{nest_0.8}, and let $\O:=B_1(0)$.
We can now consider the RFD $(A,\O)$ as a self-similar spray with generator $G$, in the sense of Definition \ref{ss_spray}.

We note that even though $(A,\O)$ is a fractal spray, with a single generator $G$, it is not (strictly speaking) self-similar in the traditional sense because it only has one scaling ratio $r=a$ (associated with a single contractive similitude).
However, we will continue using this abuse of language throughout this example.
Also, a moment's reflection reveals that this fact does not affect any of the conclusions relevant to the distance zeta function of such an RFD.
\begin{figure}[ht]
\begin{center}
\includegraphics[width=9cm]{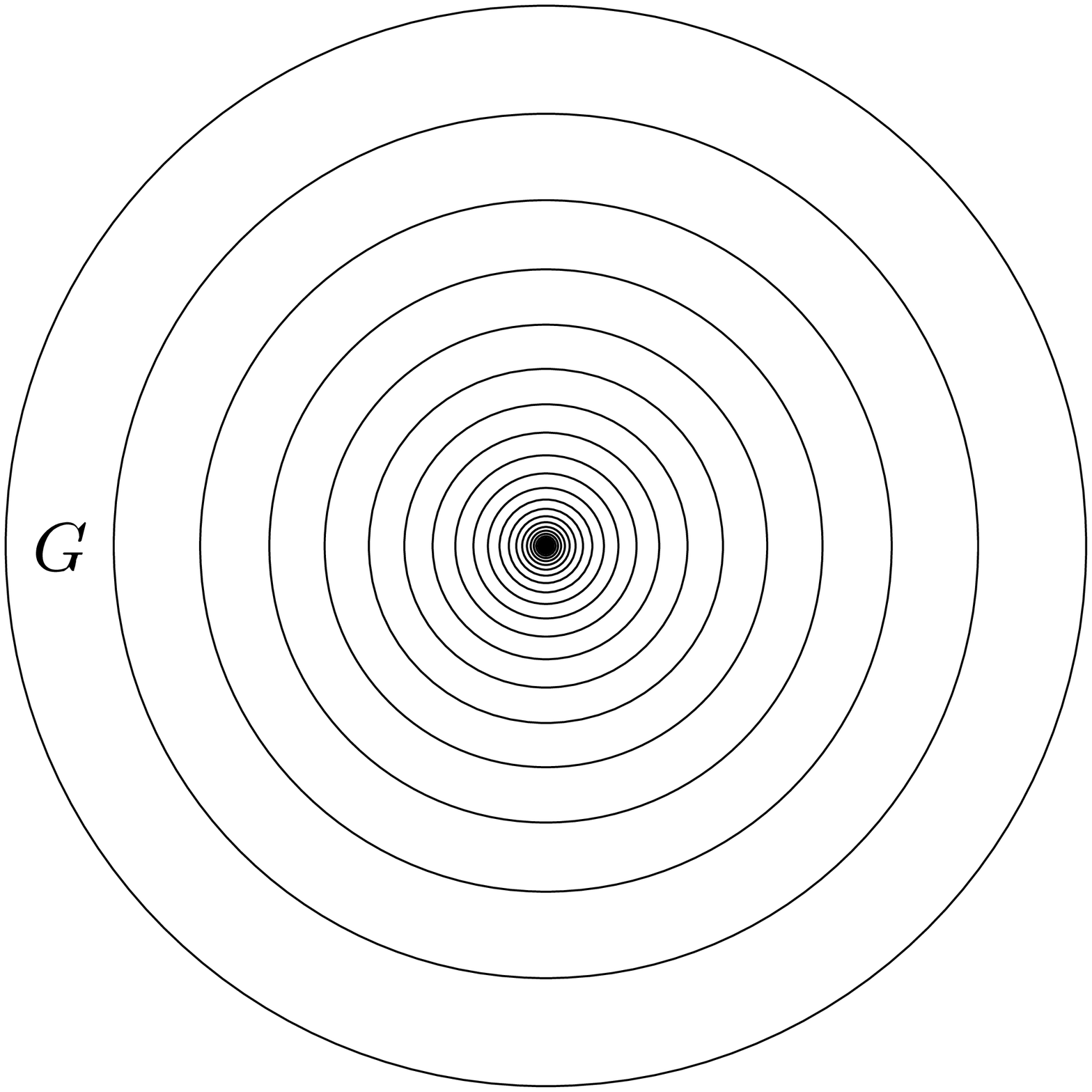}
\caption{\small The self-similar fractal nest from Example \ref{ss_fractal_nest}.}\label{nest_0.8}
\end{center}
\end{figure}
Namely, we obviously have that
\begin{equation}
(A,\O)=(\pa G,G)\sqcup a(A,\O);
\end{equation}
so that
\begin{equation}
\zeta_{A,\O}(s)=\zeta_{\pa G,G}(s)+\zeta_{a(A,\O)}(s),
\end{equation}
for all $s\in\Ce$ such that $\re s$ is sufficiently large.
Furthermore, by using the scaling property of the relative distance zeta function (see Theorem \ref{scaling}), we conclude that
\begin{equation}\label{eqA1}
\zeta_{A,\O}(s)=\frac{\zeta_{\pa G,G}(s)}{1-a^s},
\end{equation}
again, for all $s\in\Ce$ such that $\re s$ is sufficiently large.

Next, we compute the distance zeta function of the generator by using polar coordinates $(r,\theta)$:
\begin{equation}\label{eqA2}
\begin{aligned}
\zeta_{\pa G,G}(s)&=\int_0^{2\pi}\di\theta\int_a^{(1+a)/2}(r-a)^{s-2}r\,\di r\\
&\phantom{=}+\int_0^{2\pi}\di\theta\int_{(1+a)/2}^{1}(1-r)^{s-2}r\,\di r\\
&=\frac{2^{2-s}\pi(1+a)(1-a)^{s-1}}{s-1},
\end{aligned}
\end{equation}
an identity valid, after meromorphic continuation, for all $s\in\Ce$.

Equation \eqref{eqA2} combined with Equation \eqref{eqA1} now yields that $\zeta_{A,\O}$ is meromorphic on all of $\Ce$ and is given for all $s\in\Ce$ by
\begin{equation}
\zeta_{A,\O}(s)=\frac{2^{2-s}\pi(1+a)(1-a)^{s-1}}{(s-1)(1-a^s)}.
\end{equation}

Finally, we fix an arbitrary $\d>(1-a)/2$ and observe that for the distance zeta function of $A$, we have
\begin{equation}\label{eqA3}
\zeta_{A}(s)=\zeta_{A,\O}(s)+\zeta_{A,B_{1+\d}(0)\setminus\O}(s),
\end{equation}
for all $s\in\Ce$ with $\re s$ sufficiently large.
Furthermore, we have that
\begin{equation}\label{eqA4}
\zeta_{A,B_{1+\d}(0)\setminus\O}(s)=\int_0^{2\pi}\di\theta\int_1^{1+\d}(r-1)^{s-2}r\,\di r=\frac{2\pi\d^{s-1}}{s-1}+\frac{2\pi\d^s}{s},
\end{equation}
where the last equality is valid, initially, for all $s\in\Ce$ such that $\re s>1$, and then, after meromorphic continuation, for all $s\in\Ce$.

Combining now the above equation with \eqref{eqA3}, we finally obtain that $\zeta_A$ is meromorphic on all of $\Ce$ and is given by
\begin{equation}\label{ss_nest_zeta}
\zeta_{A}(s)=\frac{2^{2-s}\pi(1+a)(1-a)^{s-1}}{(s-1)(1-a^s)}+\frac{2\pi\d^{s-1}}{s-1}+\frac{2\pi\d^s}{s},
\end{equation}
for all $s\in\Ce$.

Consequently, we have that $\dim_BA$ exists,
\begin{equation}\label{pizA}
\begin{gathered}
D(\zeta_A)=\dim_BA=1\\
\po(\zeta_A):=\po(\zeta_A,\Ce)=\mathbf{p}\I\Ze\cup\{1\}
\end{gathered}
\end{equation}
and
\begin{equation}
\dim_{PC}(A):=\po_c(\zeta_A)=\{1\},
\end{equation}
where the oscillatory period\index{oscillatory period!of a self-similar fractal nest} $\mathbf{p}$ of $A$ is given by $\mathbf{p}:=\frac{2\pi}{\log a^{-1}}$ and all of the complex dimensions in $\po(\zeta_A)$ are simple.
%We will also revisit this example in Subsection \ref{subsec_self_similar_sp} (see Example \ref{ss_nest_tube_formula} below) where its fractal tube formula will be derived directly from Equation \eqref{ss_nest_zeta} and the results of Chapter \ref{ch_ftf}.
%Here, we simply mention that $\dim_BA=1$ and  that, according to Theorem \ref{mink_char}, $A$ is Minkowski measurable, with Minkowski content given by
%\begin{equation}
%\mathcal{M}^1(A)=\res(\zeta_A,1)=\frac{4\pi}{1-a}.
%\end{equation}
%We also note that the set $A$ is rectifiable and that its `length' (really, its $1$-dimensional Hausdorff measure) is given by
%\begin{equation}\label{duljina_A}
%H^1(A)=\frac{\mathcal{M}^1(A)}{\omega_1}=\frac{2\pi}{1-a},
%\end{equation}
%where, as before, $\omega_1=2$ is the volume of the $1$-dimensional ball of radius $1$.
%Of course, formula \eqref{duljina_A} can also be easily recovered via a direct computation.

In closing, we mention that $A$ is indeed {\em fractal}\index{fractality and complex dimensions} according to our proposed definition of fractality (see Remark \ref{5.151/2} and Chapter \ref{fractality}).
More specifically, in light of Equation \eqref{pizA}, $A$ is {\em strictly subcritically fractal}\index{strictly subcritically fractal|textbf} and {\em fractal in dimension} $d:=0$, in the sense of~\S\ref{sub_rfd} below.
%Remark \ref{5.5.9.1/2}. 
\end{example}

%%% Goran kraj

\section{Generating complex dimensions of RFDs of any multiplicity}
\label{multiplicity}

A key tool in generating (principal) complex dimensions of higher multiplicities is the tensor product of bounded fractal strings, which we now briefly define; see \cite{mezf} for more details. If $\mathcal L_1:=(\ell_{1j})_{j\ge1}$ and $\mathcal L_2:=(\ell_{2k})_{k\ge1}$ are two given bounded fractal strings, then the {\em tensor product}\index{tensor product!of fractal strings, ${\mathcal L}_1\otimes{\mathcal L}_2$|textbf} ${\mathcal L_1}\otimes{\mathcal L_2}$ is defined as the multiset consisting of all possible products of the form $\ell_{1j}\ell_{1k}$  for all ordered pairs $(j,k)\in\eN^2$; hence, we take into account the multiplicities. It is easy to see that the tensor product ${\mathcal L_1}\otimes{\mathcal L_2}$ is also a bounded fractal string. Furthermore, we have that the geometric zeta function of the tensor product is equal to the product of each of the component geometric zeta functions. More specifically, we have that
\begin{equation}\label{ztens}
\zeta_{{\mathcal L_1}\otimes{\mathcal L_2}}(s)=\zeta_{\mathcal L_1}(s)\cdot\zeta_{\mathcal L_2}(s)
\end{equation}
for all $s\in\Ce$ with $\re s>\max\{D(\zeta_{\mathcal L_1}),D(\zeta_{\mathcal L_2})\}$ and
$$
D(\zeta_{{\mathcal L_1}\otimes{\mathcal L_2}})=\max\{D(\zeta_{\mathcal L_1}),D(\zeta_{\mathcal L_2})\};
$$
see \cite[Lemma 4.13]{mezf}. 

The following example provides a class of bounded relative fractal drums, generated by an $a$-string, which illustrates Theorem \ref{rel_measurable} above. Note that here, we have a unique, nonsimple, principal complex dimension, $D$, on the critical line, and that its multiplicity is equal to an arbitrarily prescribed positive integer $m$. We shall need the notion of the {\em disjoint union}\index{disjoint union!of bounded fractal strings, $\sqcup_{m=1}^\ty\mathcal L_m$} (of an at most countable family) {\em of bounded fractal strings} $\mathcal L_m=(\ell_{mj})_{j\ge1}$ for $m\in\eN$:
\begin{equation}\label{union_s}
\mathcal L:=\bigsqcup_{m=1}^\ty\mathcal L_m,
\end{equation}
defined as the multiset $\mathcal L$ consisting of elements of the union of fractal strings, counting their multiplicities.
 Assuming, additionally, that $\ell_{m1}\to0^+$ as $m\to\ty$, it is easy to see that the mulitplicity of each of the elements of the multiset $\mathcal L:=\sqcup_{m=1}^\ty\mathcal L_m$ is finite, so that $\mathcal L$ is indeed a bounded fractal string.

\begin{example} ({\em $m$-th order $a$-string})\label{Lmloga}\index{a@$a$-string of higher order|textbf} 
Let $\mathcal L(a):=\{\ell_k:=k^{-a}-(k+1)^{-a}\}_{k=1}^{\ty}$ be the $a$-string, where $a>0$ (see \cite[Example 5.1]{Lap1} and \cite[\S6.5.1]{lapidusfrank12}), and let $m$ be a positive integer. Let 
$\mathcal L_m(a)$ be defined by $\mathcal L_1(a):=\mathcal L(a)$ for $m=1$ and as the $(m-1)$-fold tensor product for $m\ge2$; that is,
\begin{equation}\label{Lma}
\mathcal L_m(a):=
\begin{cases}
\mathcal L(a)&\mbox{for $m=1$},\\
\mathcal L(a)\otimes\cdots\otimes\mathcal L(a)&\mbox{for $m\ge2$},
\end{cases}
\end{equation}
which we call the {\em $m$-th order $a$-string}. 
Then $D=1/(1+a)$ is the only principal complex dimension of $\mathcal L_m(a)$, and it is of multiplicity $m$, since,
in light of Equation \eqref{ztens}, we have that
\begin{equation}\label{zLma}
\zeta_{\mathcal L_m(a)}(s)=[\zeta_{\mathcal L(a)}(s)]^m=\left(\sum_{k=1}^\ty\ell_k^s\right)^m,
\end{equation}
for all $s\in\Ce$ with $\re s>1/(1+a)$.
Defining $h(t):=(\log t^{-1})^{m-1}$ for all $t\in(0,1)$, and using \cite[Theorem 5.4 and Example 3.7]{mm}, we deduce that the fractal string $\mathcal L_m(a)$ is $h$-Minkowski measurable. Moreover, also according to \cite[Theorem 5.4]{mm} 
%\cite[Example 6.8]{cras2} 
(see also \cite[Theorem 5.4.27 and Example 5.5.10]{fzf}), we have that $\mathcal L_m(a)$ has the following tube asymptotics:
\begin{equation}
|A_t|=t^{1-D}h(t)\,(\mathcal M+o(1))\q\mbox{as $t\to0^+$,}
\end{equation}
where $\mathcal M\in(0,+\ty)$ is the {\em $h$-Minkowski content} of $\mathcal L_m(a)$ and can be explicitly computed in terms of the $-m$-th coefficient $c_{-m}$ of the Laurent expansion\index{Laurent expansion} of the tube zeta function $\tilde\zeta_{A}$ around $s=D$, as follows: ${\mathcal M}=c_{-m}/(m-1)!$; see \cite[Theorem 5.4]{mm}. In particular, 
$\mathcal L_m(a)$ is {\em $h$-Minkowski measurable}. 
%Furthermore, again according to \cite[Theorem 6.13]{cras2}, for any positive real number $d$, we have that
%\begin{equation}
%\mathcal M:=\mathcal M(A, A_\d,h)=\frac{\tilde\zeta_{A}[D]_{-m}}{(m-1)!},
%\end{equation}
%where $\tilde\zeta_{A}[D]_{-m}$ denotes the $-m$-th coefficent corresponding to the Laurent expansion
%of $\tilde\zeta_{A}$ around $s=D$.
\end{example}
\medskip

In \cite[\S4.4]{mezf}, we have constructed a (Cantor-type) bounded fractal string $\mathcal L_m$ which has infinitely many principal complex dimensions of arbitrary prescribed multiplicity $m\ge2$. 
The bounded fractal string was obtained by taking $m-1$ consecutive tensor products of the usual Cantor string $\mathcal L_{CS}$; i.e., the $(m-1)$-fold tensor product:
\begin{equation}\label{Lm}
{\mathcal L}_m:=
\begin{cases}
{\mathcal L}_{CS}&\mbox{for $m=1$},\\
{\mathcal L}_{CS}\otimes\dots\otimes{\mathcal L}_{CS}&\mbox{for $m\ge2$},
\end{cases}
\end{equation}
which we call {\em $m$-th order Cantor string}\index{Cantor string!of higher order ($m$-Cantor string)|textbf} or the {\em $m$-Cantor string}, in short.
The corresponding multiset of principal complex dimensions is
\begin{equation}\label{CSm}
\dim_{PC}{\mathcal L}_m=\log_32+\frac{2\pi}{\log 3}\I\Ze,
\end{equation}
and each of its elements has multiplicity $m$. Note that by \cite[Theorem 3.1]{mm} (see also
\cite[Theorem 5.4.20]{fzf}),
the $m$-th order Cantor string ${\mathcal L}_m$ is not Minkowski measurable for $m\ge2$.
%but that by \cite[Theorem ???]{cras2}
%(or \cite[Theorem 5.4.27]{fzf}), 
%\cite[Theorem 5.4.27]{fzf}.
%Instead, it can be shown that as a consequence of \cite[Theorem 5.3.16]{fzf} or \cite[Theorem 5.16]{cras2}, it is $h$-Minkowski nondegenerate with $h(t):=(\log t^{-1})^{m-1}$ for all $t\in(0,1)$.

%is languidity preserved under tensor products?

Furthermore, by letting
\begin{equation}\label{Lty}
{\mathcal L}_\ty:=\bigsqcup_{m=1}^{\ty} \frac{3^{-m}}{m!}{\mathcal L}_m,
\end{equation}
we obtain a bounded fractal string ${\mathcal L}_\ty$, called the {\em Cantor string of infinite order}\index{Cantor string!of infinite order ($\ty$-Cantor string)|textbf} or the {\em $\ty$-Cantor string}, such that its geometric zeta function $\zeta_{{\mathcal L}_\ty}$
has an infinite sequence of {\em essential}\index{essential singularities on the critical line} singularities along the critical line $\{\re s=D\}$, 
located at each of the points $D+\I{\mathbf p}k$ (with $k\in\Ze$, $D:=\log_32$ and ${\mathbf p}:=2\pi/\log3$)
of the periodic set
defined by the right-hand side of \eqref{CSm}. 
%Furthermore, using \cite[Theorem 16.4]{cras2}, we deduce that the bounded fractal string $({\mathcal L}_\ty$ is $h$-Minkowski measurable with respect to gauge function $h(t):=\sum_{m=2}^\ty\frac{3^{-m}}{m!}(\log t^{-1})^{m-1}$, where $t\in(0,1)$.
\medskip

\begin{example}
Let $m$ be a fixed positive integer and let $a$ be a positive real number chosen small enough, so that $D:=1/(1+a)>\log_32$. Consider the following bounded fractal string $\mathcal L$ defined by
\begin{equation}\label{Lmixed}
{\mathcal L}:=\mathcal L_m(a)\sqcup\mathcal L_\ty,
\end{equation}
where the bounded fractal strings ${\mathcal L}_m(a)$ ($m$-th order $a$-string) and $\mathcal L_\ty$ ($\ty$-Cantor string) are defined by Equations \eqref{Lma} and \eqref{Lty}, respectively, and generated by tensor products of $a$-strings and Cantor strings, respectively. Here, we have that $D_{\rm mer}(\zeta_{\mathcal L})=\log_32$, since the geometric zeta function
\begin{equation}
\zeta_{\mathcal L}(s)=\zeta_{\mathcal L_m(a)}(s)+\zeta_{\mathcal L_\ty}(s)
\end{equation}
 is holomorphic on the connected open set $\{\re s>0\}\setminus\big(\{D\}\cup(\log_32+\frac{2\pi}{\log3}\I\Ze)\big)$, where $D=\dim_B\mathcal L$ and is the (unique) pole of $\zeta_{\mathcal L}$ of order $m$ in the open right half-plane $\{\re s>0\}$, while $\log_32+\frac{2\pi}{\log3}\I\Ze$ is the set of essential singularities\index{essential singularities on the critical line} of $\zeta_{\mathcal L}$ in $\{\re s>0\}$. Denoting by $A_{\mathcal L}:=\{a_k:=\sum_{j=k}^\ty\ell_j:k\in\eN\}$ the canonical representation of the fractal string $\mathcal L:=\{\ell_j\}_{j=1}^\ty$, and applying \cite[Theorem 5.4.27]{fzf} to the RFD $(A_{\mathcal L},(A_{\mathcal L})_\d)$ (for any fixed positive real number $\d$),\footnote{The open right half-plane $\{\re s>D_{\rm mer}(\zeta_{\mathcal L})\}$ does not contain any other poles of $\zeta_{\mathcal L}$, except for $s=D$.} we obtain the following asymptotic expansion of the tube function of the set $A_{\mathcal L}$:
\begin{equation}
|(A_{\mathcal L})_t|=t^{1-D}h(t)\big(\mathcal M+O(t^{D-D_{\rm mer}(\zeta_{\mathcal L})-\e}\big)\q\mbox{as $t\to0^+$,}
\end{equation} 
for any $\e>0$, where $h(t):=(\log t^{-1})^{m-1}$ for all $t\in(0,1)$; i.e.,
\begin{equation}\label{tubeLmixed}
|(A_{\mathcal L})_t|=t^{a/(1-a)}h(t)\big(\mathcal M+O(t^{\frac1{1+a}-\log_32-\e}\big)\q\mbox{as $t\to0^+$,}
\end{equation}
 where $\mathcal M$ is a positive real number (the $h$-Minkowski content) and can be computed (see \cite[Theorem 4.5]{mm} or \cite[Theorem 5.4.27]{fzf}). According to \cite[Theorem 5.6]{mm} (or \cite[Theorem 5.4.29]{fzf}), the exponent $\frac1{1+a}-\log_32$ appearing on the right-hand side of Equation \eqref{tubeLmixed}, is optimal; i.e., it cannot be replaced by a larger exponent.
\end{example}
\medskip

In Examples \ref{tensorr} and \ref{tensorm} below, we construct {\em Minkowski measurable}\index{Minkowski measurable RFD} RFDs which possess infinitely many complex dimensions of arbitrary multiplicity $m$, with $m\ge1$, or even essential singularities. 

\begin{example}\label{tensorr}
Let us first define
the unit square RFD $(A_0,\O_0)$ by $\O_0:=[0,1]^2$ and $A_0:=\pa\O_0$. We introduce the RFD
\begin{equation}
(A_m',\O_m'):=(A_0,\O_0)\sqcup\mathcal L_m,
\end{equation}
where we embed $\mathcal L_m$ via its {\em canonical geometric representation}\index{canonical geometric representation $A_{\mathcal L}$ of a bounded fractal string $\mathcal L$} $A_{\mathcal L_m}$ into the $x$-axis of the $2$-dimensional plane $\eR^2$.
Since $\zeta_{A_m',\O_m'}(s)=\zeta_{A_0,\O_0}(s)+\zeta_{\mathcal L_m}(s)$, we have that
\begin{equation}
\po(\tilde\zeta_{A_m',\O_m'})=\{0,1\}\cup{\mathcal P}',
\end{equation}
where $\mathcal P':=\log_32+\frac{2\pi}{\log3}\I\Ze$ and each of the complex dimensions
$D+\I{\mathbf p}k$ (with $k\in\Ze$, $D:=\log_32$) of $\mathcal P'$
is of multiplicity $m$.
On the other hand, the only principal complex dimension of $(A_m',\O_m')$ is $1$, and it is simple (i.e.,  of multiplicity $1$). 
Therefore, according to \cite[Theorem 5.2]{cras2}, the RFD $(A_m',\O_m')$ is Minkowski measurable. 
\end{example}

\begin{example}\label{tensorm} Let us again define
the unit square RFD $(A_0,\O_0)$ by $\O_0:=[0,1]^2$ and $A:=\pa\O$. We introduce the RFD 
\begin{equation}
(A_m,\O_m):=(A_0,\O_0)\otimes\mathcal L_m,
\end{equation}
where $\mathcal L_m$ is the $m$-th order Cantor string defined by \eqref{Lm}, and the tensor product
of the RFD $(A_0,\O_0)$ and $\mathcal L_m$  is defined analogously as in Example \ref{tensorE} above.
Using Equation \eqref{dirichlete} from Theorem \ref{sprayz} above (see also Equation \eqref{10151/4c} from Theorem \ref{sprayzc}), we obtain that
\begin{equation}
\zeta_{A_m,\O_m}(s)=\zeta_{A_0,\O_0}(s)\cdot\zeta_{\mathcal L_m}(s)=\frac{g(s)}{s(s-1)(3^s-2)^m}.
\end{equation}
Here, $g(s)$ is an entire function without zeros at $0$, $1$ or at any point of the arithmetic set 
\begin{equation}\label{P'}
{\mathcal P}':=\log_32+\frac{2\pi}{\log3}\I\Ze.
\end{equation}
 In other words, 
\begin{equation}
\dim_{PC}(A_m,\O_m)=\{1\}, \q\po(A_m,\O_m)=\{0,1\}\cup{\mathcal P}',
\end{equation}
and each complex dimension of $(A,\O)$ lying in the arithmetic set ${\mathcal P}'$ has multiplicity $m$.
The value of $D:=\ov\dim_B(A_m,\O_m)=1$ is the only complex dimension located on the critical line $\{\re s=1\}$, while the infinite set ${\mathcal P}'$ is contained in the vertical line $\{\re s=\log_32\}$ located strictly to the left of the critical line.
It follows from \cite[Theorem 5.4]{mm} (or \cite[Theorem 5.4.29]{fzf}) that the RFD $(A,\O)$ is $h$-Minkowski measurable with respect to the gauge function $h(t):=(\log t^{-1})^{m-1}$, for all $t\in(0,1)$.

We can further define the RFD
\begin{equation}
(A_\ty,\O_\ty):=\bigsqcup_{m=2}^\ty\frac{3^{-m}}{m!}\cdot(A_m,\O_m).
\end{equation}
Similarly as above, we have that (with ${\mathcal P}'$ given by \eqref{P'})
\begin{equation}
\dim_{PC}(A_\ty,\O_\ty)=\{1\}, \q\po(A_\ty,\O_\ty)=\{0,1\}\cup{\mathcal P}',
\end{equation}
and each complex dimension of $(A_\ty,\O_\ty)$ lying in the arithmetic set ${\mathcal P}'=\log_32+\frac{2\pi}{\log3}\I\Ze$ is an essential singularity\index{essential singularities on the critical line} of $\zeta_{A_\ty,\O_\ty}$.
%Furthermore, again using \cite[Theorem 16.4]{cras2} (or \cite[]{fzf}), we deduce that the RFD $(A_\ty,\O_\ty)$ is $h$-Minkowski measurable with respect to the gauge function $h(t):=\sum_{m=2}^\ty\frac{3^{-m}}{m!}(\log t^{-1})^{m-1}$ for all $t\in(0,1)$.
\end{example}

%GAUGE FUNCTIONS powers of log's and their finite and infinite sums

%\begin{example}
%The modified $a$-set $A:=(a_k)_{k\ge1}$ is defined by $a_k:=(\log k)^mk^{-a}$, where $a>0$ is fixed and $k\in\eN$. Then $D:\dim_BA=1/(a+1)$. For $m\ne0$, the set $A$ is Minkowski degenerate. More specifically, if $m>0$ then $\M^D(A)=+\ty$, while for $m<0$ we have $\M^D(A)=0$. In both cases, the gauge function can be taken as $h(t):=(\log(1/t)^m$ (CHECK).
%\end{example}

\chapter{Fractality, complex dimensions and singularities}\label{fractality}

We close this article by specifying, within the general theory of fractal zeta functions developed
here and in [{LapRa\v Zu1--8}], the elusive notion of ``fractality''. Much as in [{Lap-vFr1--3}]
(see, especially, \cite[\S12.1 and \S13.4.3]{lapidusfrank12}), but now using the general higher-dimensional notion of fractal zeta function and associated notion of complex dimensions, we say that a bounded set $A$ (or, more generally, an RFD $(A,\O)$) in 
$\eR^N$ is {\em fractal}\index{fractality and complex dimensions|textbf} if it has at least one nonreal (visible) complex dimension
(i.e., a nonreal pole for its associated fractal zeta function),\footnote{Provided $D:=\ov\dim_BA$ (resp., $\ov\dim_B(A,\O)$) $<N$, it does not matter whether we use $\zeta_A$ or $\tilde\zeta_A$
(resp., $\zeta_{A,\O}$ or $\tilde\zeta_{A,\O}$) throughout this definition.} relative to some screen $\bm S$, or else if there exists a screen $\bm S$ which is a (meromorphic) natural boundary for its fractal zeta function (i.e., such that the fractal zeta function cannot be meromorphically extended to the left of $\bm S$). In the latter situation, $A$ (or, more generally, $(A,\O)$) is said to be {\em hyperfractal}.\index{hyperfractal} In particular, it is said to be {\em strictly hyperfractal}\index{strict hyperfractal|textbf}\index{hyperfractal!strict|textbf} if we may choose $\bm{S}=\{\re s=D\}$, and {\em maximally hyperfractal}\index{hyperfractal!maximal|textbf} if the critical line $\bm{S}=\{\re s=D\}$ consists entirely of nonremovable singularities of the fractal zeta function; see Definition \ref{hyperfractal}. Here, as before, we let $D:=\ov\dim_BA$ (or $D:=\ov\dim_B(A,\O)$). Recall that in Theorem \ref{qp}, we have constructed a family of maximally hyperfractal RFDs.

\section{Fractal and subcritically fractal RFDs}\label{sub_rfd}

In this work, we have seen many examples of fractals (that are not hyperfractal), for instance,
the Cantor string or set, the relative Sierpi\'nski gasket and carpet (Examples \ref{6.15} and \ref{sierpinski_carpetr}) or, more generally, the relative $N$-gasket RFD and the $N$-carpet RFD (Examples \ref{Ngasket} and \ref{carpetN}), along with the examples discussed in \S\ref{goran}.
Among these examples, some have nonreal complex dimensions located on the critical line (such as, for instance, the Cantor string,
the inhomogeneous Sierpi\'nski gasket and carpet RFDs, the $N$-carpet RFD for any $N\ge2$, as well as the inhomogeneous $N$-gasket $(A_N,\O_N)$  when $N=2$ or $3$). These are called {\em critically fractal}.
Yet others only have nonreal complex dimensions with real parts strictly less than $D$. The latter are called {\em subcritically fractal}.
In addition, {\em strictly subcritical fractals} are subcritical fractals which do not have any nonreal principal complex dimensions (i.e., complex dimensions with real part $D$).
Examples of strictly subcritical fractals include the inhomogeneous Sierpi\'nski $N$-gasket RFD when $N\ge4$ (Example \ref{6.15} above) , the $1/3$-square fractal (Example \ref{kvadrat_0.33}), a self-similar fractal nest (Example~\ref{ss_fractal_nest}), as well as the modified devil's staircase (or Cantor graph) RFD to be discussed in Example \ref{stair} below.

Finally, we complete this list of definitions
by stating that, given $d\in\eR$, the bounded set $A$ (or, more generally, the RFD $(A,\O)$)
is {\em fractal in dimension} $d$ if it has nonreal complex dimensions of real parts $d$. (In light of Theorem \ref{an_rel}, we must then necessarily have $d\le N$.)
Hence, a critical fractal is such that $d:=D$, while a strictly subcritical fractal is such that $d<D$.
For instance, with the notation of Example \ref{Ngasket}, for $N\ge4$, the Sierpi\'nski $N$-gasket RFD $(A_N,\O_N)$ is fractal in dimension 
\begin{equation}
d=\s_0=\log_2(N+1)<D=N-1=\dim_B(A_{N,0},\O_{N,0})
\end{equation}
but not in dimension $D$, and therefore, it is strictly subcritically fractal. By contrast, when $N=2$ or $3$, it is critically fractal 
(indeed, in those cases, it is fractal in dimension $d:=D=\s_0$, the similarity dimension).

%We close these concluding comments by fractal strings in [Lap-vFr1--3] (see, especially, \cite[Chapter 8]{lapidusfrank12}) and for fractal sprays.

We point out that, much as as was the case in the one-dimensional situation in \cite[Chapter 12]{lapidusfrank12}, based on the general explicit formulas and fractal tube formulas\index{fractal tube formula} obtained in [{Lap-Fr1--3}] (see, especially, \cite[Chapters 5 and 8]{lapidusfrank12}), the definitions of fractality, critical fractality and (strict) subcritical fractality are justified in part by the general fractal tube formulas obtained in \cite{cras2} (see also [LapRa\v Zu4] and \cite[Chapter 5]{fzf}).\footnote{These fractal tube formulas generalize to any $N\ge1$ and to arbitrary bounded sets $A$ (or, more generally, RFDs) in $\eR^N$ the ones obtained for fractal strings (i.e., when $N=1$) in [{Lap-vFr1--3}]
(see, especially, \cite[Chapter 8]{lapidusfrank12}), as well as for the very special but important higher-dimensional case of fractal sprays, in [{LapPe2--3}] and, more generally, in [{LapPeWi1--2}] (see \cite[\S13.1]{lapidusfrank12} for an exposition).} 
Indeed, the latter tube formulas show that, under mild assumptions, the presence of nonreal complex dimensions of real part $d\in\eR$ corresponds to oscillations of order $d$ in the geometry of $A$ (or of $(A,\O)$). Similarly, roughly speaking, critical fractality (along with the simplicity of $D$) corresponds to the Minkowski nonmeasurability of $A$ (or $(A,\O)$), while strict subcritical fractality  (still assuming the simplicity of $D$) not only corresponds to (critical) Minkowski measurability but also to (strictly subcritical) Minkowski nonmeasurability in dimension $d<D$. (See also [LapRa\v Zu6].)
This is the case, for instance, for the inhomogeneous Sierpi\'nski $N$-gasket RFD\index{Sierpi\'nski $N$-gasket RFD} (see Example \ref{6.15}) whenever $N\ge4$ (and avoiding nongeneric values of $N$), for the RFDs of Examples \ref{kvadrat_0.33} and \ref{ss_fractal_nest}, as well as for the (modified) devil's staircase RFD, which we discuss in Example \ref{stair} just below.

%Finally, we mention that in reference [{Lap-vF6}] of [{Lap-vFr2--3}], %\cite{lapidusfrank06,lapidusfrank12}, 
%it was conjectured that a generic nonlattice self-similar string is fractal in a countably infinite set of dimensions which is dense in some compact interval $[D_{\mathrm{min}},D]$, where $D_{\mathrm{min}}<D$.
%(See also \S3.7 of [{Lap-vFr2--3}].)
%\cite{lapidusfrank06,lapidusfrank12}.)
%This conjecture has now been proved in \cite{MorSepVi}, along with the fact (also observed independently in \cite[\S3.7.1]{lapidusfrank12}) that in the nongeneric case, the above set of fractality is dense in finitely many compact intervals (rather than in a single one).
%According to the results of \cite{lappe2} and \cite{lappewi1} (or, more generally, in light of Theorem \ref{ss_spray_zeta} of \S4.3 in the present work), the same conclusions hold for (generic and nongeneric) self-similar sprays and tilings (see Definition \ref{ss_spray}).
%More generally, we expect that under suitable mild assumptions, self-similar sets and RFDs (satisfying the open set condition) enjoy the same properties.

%\medskip

Finally, we note that it follows from the discussion in Remark \ref{fractality0} (and from Theorem~\ref{ss_spray_zeta}) above that, under mild assumptions on their generators,\footnote{It suffices to assume that the base RFD $(\pa G,G)$ is ``nonfractal'' (so that it does not have any nonreal complex dimensions) and ``sufficiently nice'' (so that $\zeta_{\pa G,G}$ has a meromorphic continuation to all of $\Ce$). Both conditions are satisfied, for instance, if $G$ is the interior of a convex polytope, which is the case for essentially all of the classical examples.} self-similar sprays (in the sense of Definition~\ref{ss_spray}) are fractal in dimension $d$ for only a finite (but nonempty) set of values of $d$ in the lattice case, whereas they are fractal in dimension $d$ for an infinite countable and dense set of values of $d$ in the nonlattice case. More specifically, the set of $d$'s for which nonlattice (respectively, generic nonlattice) self-similar RFDs are fractal in dimension $d$ is dense in finitely many nonempty compact intervals (respectively, in a single compact interval of the form $[D_l,D]$, where $D_l\in\eR$ and $D_l<D$).\footnote{We refer to Remark \ref{fractality0} for the definitions of the terms ``lattice'', ``nonlattice'' and ``generic nonlattice'', as well as for the appropriate references.}
More generally, we conjecture that under suitable mild hypotheses, self-similar RFDs and sets satisfying the open set condition enjoy the same properties.

%Recall that in the introduction, we have discussed the Cantor graph (or devil's staircase) in relation with the notion of fractality. We end this chapter by considering a closely related example, namely, the Cantor graph RFD.
\medskip

\section{The Cantor graph relative fractal drum}\label{cg_rfd}

Recall that in the introduction (i.e., in \S1), we have discussed the Cantor graph (or devil's staircase)\index{Cantor graph (full)|(} in relation with the notion of fractality. We end this chapter by considering a closely related example, namely, the Cantor graph RFD.

\begin{example}\label{stair}({\em The Cantor graph RFD}).\index{Cantor graph RFD|textbf}
In this example, we compute the distance zeta function of the RFD $(A,\O)$ in $\eR^{2}$, where $A$ is the graph of the Cantor function and $\O$ is the union of triangles $\triangle_k$ that lie above and the triangles $\tilde{\triangle}_k$ that lie below each of the horizontal parts of the graph denoted by $B_k$.
(At each step of the construction there are $2^{k-1}$ mutually congruent triangles $\triangle_k$ and $\tilde{\triangle}_k$.)
Each of these triangles is isosceles, has for one of its sides a horizontal part of the Cantor function graph, and has a right angle at the left end of $B_k$, in the case of $\triangle_k$, or at the right end of $B_k$, in the case of $\tilde{\triangle}_k$.
(See Figure~\ref{devil_sl}.)

\begin{figure}[ht]
\begin{center}
%\psfrag{N+1}{$N+1$}
\includegraphics[trim=6cm 3cm 9cm 2cm,clip=true,width=8cm]{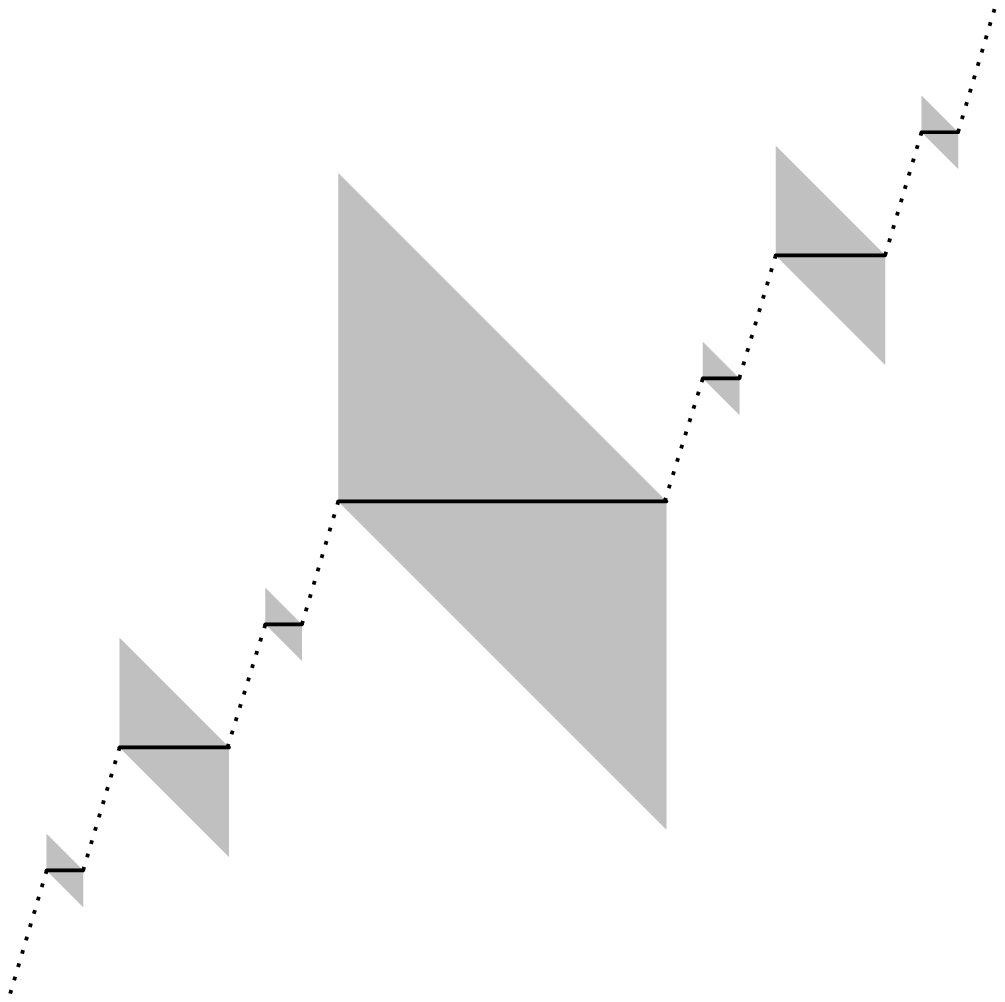}
\end{center}
\caption{\small The third step in the construction of the Cantor graph relative fractal drum $(A,\O)$ from Example~\ref{stair}.
One can see, in particular, the sets $B_k$, $\triangle_k$ and $\tilde{\triangle}_k$ for $k=1,2,3$.}
\label{devil_sl}
\end{figure}

For obvious geometric reasons and by using the scaling property of the relative distance zeta function of the resulting RFD $(A,\O)$ (see Theorem~\ref{scaling}), we then have the following identity:
\begin{equation}\label{eq5.5.52}
\begin{aligned}
\zeta_{A,\O}(s)&=\sum_{k=1}^{\ty}2^{k}\zeta_{B_k,\triangle_k}(s)
=\sum_{k=1}^{\ty}2^{k}\zeta_{3^{-k}B_1,3^{-k}\triangle_1}(s)\\
&=\zeta_{B_1,\triangle_1}(s)\sum_{k=1}^{\ty}\frac{2^{k}}{3^{ks}}=\frac{2\zeta_{B_1,\triangle_1}(s)}{3^s-2},
\end{aligned}
\end{equation}
valid for all $s\in\Ce$ with $\re s$ sufficiently large. 
Here, $(B_1,\triangle_1)$ is the relative fractal drum described above with two perpendicular sides of length equal to 1.
It is straightforward to compute its relative distance zeta function:
\begin{equation}
\zeta_{B_1,\triangle_1}(s)=\int_0^1\di x\int_0^xy^{s-2}\di y=\frac{1}{s(s-1)},
\end{equation}
valid, initially, for all $s\in\Ce$ such that $\re s>1$ and then, upon meromorphic continuation, for all $s\in\Ce$.
This fact, combined with (the last equality of) Equation \eqref{eq5.5.52}, yields the distance zeta function of $(A,\O)$, which is clearly meromorphic on all of $\Ce$:
\begin{equation}\label{zeta_devil_stair}
\zeta_{A,\O}(s)=\frac{2}{s(3^s-2)(s-1)},\quad\textrm{for all}\ s\in\Ce.
\end{equation}
We therefore deduce that the set of complex dimensions of the RFD $(A,\O)$ is given by
\begin{equation}\label{devil_dim}
\po(\zeta_{A,\O}):=\po({\zeta}_{A,\O},\Ce)=\{0,1\}\cup\left(\log_32+\frac{2\pi}{\log3}\I\Ze\right),
\end{equation}
with each complex dimension being simple.
Hence, its set of principal complex dimensions is given by
\begin{equation}\label{dimAO1}
\dim_{PC}(A,\O):=\po_c(\zeta_{A,\O})=\{1\}.
\end{equation}

We conclude from part ($b$) of Theorem~\ref{an_rel} that $\dim_B(A,\O)=1$ and that the RFD $(A,\O)$ is Minkowski measurable.
Moreover, one also deduces from \cite[Theorem~4.2]{mm} 
%(or \cite[Theorem 5.4.2]{fzf}) 
that the (one-dimensional) Minkowski content of $(A,\O)$ is given by 
\begin{equation}\label{devil_mink}
\mathcal{M}^{1}(A,\O)=\frac{\res(\zeta_{A,\O},1)}{2-1}=2,
\end{equation}
which coincides with the length of the Cantor graph (i.e., the graph of the Cantor function, also called the devil's staircase in \cite{Man}).

In the sequel, we associate the RFD $(A,A_{1/3})$ in $\eR^2$ to the classic Cantor graph.
We do not know if the right-hand side of~\eqref{devil_dim} coincides with the set of complex dimensions of the `full' graph of the Cantor function (i.e., the original devil's staircase), or equivalently, the RFD $(A,A_{1/3})$, but we expect that this is indeed the case since $(A,\O)$ is a `relative fractal subdrum' of $(A,A_{1/3})$.
Moreover, it clearly follows from the construction of $(A,\O)$ that for the distance zeta function of the RFD $(A,A_{1/3})$ associated with the graph of the Cantor function, we have
\begin{equation}\label{pole-pole}
{\zeta}_{A,A_{1/3}}(s)=\zeta_{A,\O}(s)+\zeta_{A,A_{1/3}\setminus\O}(s).
\end{equation}
In order to prove that $\po(\zeta_{A,\O})$, given by~\eqref{devil_dim}, is a subset of the set of complex dimensions of the `full' Cantor graph, it would therefore remain to show that $\zeta_{A,A_{1/3}\setminus\O}$ has a meromorphic continuation to some connected open neighborhood $U$ of the critical line $\{\re s=1\}$ such that $U$ contains the set of complex dimensions of $(A,\O)$, as given by \eqref{devil_dim}, and that there are no pole-pole cancellation in the right-hand side of~\eqref{pole-pole}.
\end{example}

We now return to the RFD $(A,\O)$ (that is, the Cantor graph relative fractal drum).
It follows from \eqref{devil_dim} that $(A,\O)$ is fractal, in our sense.
More specifically, in light of \eqref{dimAO1}, {\em it is not critically fractal} (because its only complex dimension of real part $D_{CG}\ (=\ov{D}=\dim_{B}(A,\O))=1$ is $1$ itself, the Minkowski dimension of the Cantor graph RFD, and it is simple) {\em but it is strictly subcritically fractal}.
In fact, it is subcritically fractal in a single dimension, namely, in dimension $d:=D_{CS}=\log_32$, the Minkowski dimension of the Cantor set.

We expect the exact same statements to be true for the devil's staircase itself (i.e., the `full' graph of the Cantor function), represented by the RFD $(A,A_{1/3})$ and of which $(A,\O)$ is a `relative fractal subdrum', as was explained above.
Clearly, in light of \eqref{pole-pole} and \eqref{devil_dim}, we have the following inclusions (between multisets):
\begin{equation}\label{5.5.34.4/5}
\begin{aligned}
\po(\zeta_{A,A_{1/3}})&\subseteq\po(\zeta_{A,\Omega})\cup\po(\zeta_{A,A_{1/3}\setminus\O})\\
&\subseteq\{0,1\}\cup\left\{D_{CS}+\frac{2\pi}{\log 3}\I\Ze\right\}.
\end{aligned}
\end{equation}

Also, we know for a fact that $\dim_B(A,A_{1/3})$ exists and 
\begin{equation}\label{DA1/3}
D(\zeta_{A,A_{1/3}})=\dim_B(A,A_{1/3})=1,
\end{equation}
so that
\begin{equation}\label{DA1/3PC}
\dim_{PC}(A,A_{1/3}):=\po_c(\zeta_{A,A_{1/3}})=\{1\}.
\end{equation}
(Thus, we have that $\{1\}\stq \po(\zeta_{A,A_{1/3}})$ in \eqref{5.5.34.4/5}.) Note that \eqref{DA1/3} (and hence, \eqref{DA1/3PC}) follows from the rectifiability of the devil's staircase, combined with a well-known result in \cite{federer} and with part $(b)$ of Theorem \ref{an_rel}.

As was conjectured in \cite[\S12.1.2 and \S12.3.2]{lapidusfrank12}, %r (see also [{Lap-vFr1--2}]), 
based on an `approximate tube formula', we expect that $\po(\zeta_{A,A_{1/3}})=\po(\zeta_{A,\O})$, as given by \eqref{devil_dim}, and hence, that we actually have equalities instead of inclusions in \eqref{5.5.34.4/5}, even equalities between multisets.
If so, then the `full' Cantor graph $(A,A_{1/3})$ is fractal, not critically fractal, but (strictly) subcritically fractal in the single dimension $d:=D_{CS}=\log_32$.

In his celebrated book, {\em The Fractal Geometry of Nature} [Man], Mandelbrot reluctantly defined ``fractality'' by the property that a geometric object has Hausdorff dimension strictly greater than (i.e., different from) its topological dimension; see [Man, p.\ 15]. However, he was aware of an obvious counterexample to his definition; namely, the Cantor graph (or devil's staircase, depicted in [Man, plate 83, p.\ 83]), for which all the notions of fractal dimensions (Hausdorff, Minkowski, etc.) coincide with the topological dimension (i.e., one). In this regard, he stated in [Man, p.\ 82] about the devil's staircase that ``{\em one would love to call the present curve a fractal, but to achieve this goal, we would have to define fractals less stringently, on the basis of notions other than {\rm [the Hausdorff dimension]} alone}.''

The above paradox has puzzled the first author from the very beginning and was one of the key motivations for the development of the mathematical theory of complex dimensions, eventually in [Lap-vFr1--3], for fractal strings (i.e., when $N=1$), and now (in higher dimensions) in [LapRa\v Zu1--8]. If we replace the (full) Cantor graph $(A,A_{1/3})$ by the Cantor graph RFD $(A,\O)$, then the paradox is completely resolved since, as was discussed above, $(A,\O)$ is ``fractal'', in the sense of the theory of complex dimensions. Nevertheless, the fact that $(A,\O)$ is strictly subcritically fractal (i.e., does not have nonreal principal complex dimensions, that is, with real part $1$, but has nonreal complex dimensions with real part $<1$, namely, on the vertical line $\{\re s=\log_32\}$) shows that the issue at hand is rather subtle. 

Since, according to the above discussion, the (full) Cantor graph (or devil's staircase) is also expected to be fractal (as well as strictly subcritically fractal), the original paradox should itself be completely resolved in the future within the present theory of complex dimensions of relative fractal drums. Naturally, we expect that many other apparent paradoxes can be similarly resolved within the theory developed in this paper and in [LapRa\v Zu1--8].\index{Cantor graph (full)|)}

\backmatter
%-----------------------------------------------------------------------------
% Beginning of biblio.tex
%-----------------------------------------------------------------------------

\bibliographystyle{amsalpha}

%-----------------------------------------------------------------------------
% End of biblio.tex
%-----------------------------------------------------------------------------

%\renewcommand{\indexname}{Author Index}
%\printindex[aindex]
\renewcommand{\indexname}{Subject Index}
%\printindex[index] 

%\Printindex{aindex}{Author Index}
%\Printindex{index}{Subject Index}

\printindex

%\addcontentsline{toc}{chapter}{Index} 

%\printindex[index]

%\include{index}
\end{document}